\newcommand{\problem}[6][ ]{
\begin{prob}
Compute a function $#2$ with the following properties:\\
\begin{tabular}{@{}p{0.125\linewidth}@{}p{0.875\linewidth}@{}}
Input & #3.\\
Output & A labeling coset $#2#4=\Lambda\leq\lab(V)$ such that:\\
(CL1) & $#2#4=\phi#2#5$ for all $\phi\in\iso(V;V')$.\\
(CL2) & $#2#4=#6\pi$#1for some (and thus for all) $\pi\in\Lambda$.
\end{tabular}
\end{prob}
}
\theoremstyle{plain}
\newtheorem{theo}{Theorem}
\newtheorem*{theo*}{Theorem}
\crefname{theo}{Theorem}{Theorems}
\newtheorem{lem}[theo]{Lemma}
\crefname{lem}{Lemma}{Lemmata}
\newtheorem{prop}[theo]{Proposition}
\crefname{prop}{Proposition}{Propositions}
\newtheorem{cor}[theo]{Corollary}
\newtheorem*{cor*}{Corollary}
\crefname{cor}{Corollary}{Corollarys}
\theoremstyle{definition}
\newtheorem{defn}[theo]{Definition}
\newtheorem{exa}[theo]{Example}
\crefname{exa}{Example}{Examples}
\newtheorem{prob}[theo]{Problem}
\crefname{prob}{Problem}{Problems}
\theoremstyle{remark}
\crefname{section}{Section}{Sections}
\crefname{appendix}{Appendix}{Appendices}
\crefname{figure}{Figure}{Figures}
\crefname{equation}{}{}
\newenvironment{cl1}{\vspace{0.1cm}\noindent(\textit{CL1.})}{}
\newenvironment{cl2}{\vspace{0.1cm}\noindent(\textit{CL2.})}{}
\newenvironment{runtime}{\vspace{0.1cm}\noindent(\textit{Running time.})}{}
\newcommand{\case}[1]{\item[\itshape\mdseries If {#1}:]}
\newenvironment{cs}{\begin{description}}{\end{description}}
\newenvironment{for}[1]{\begin{description}\item[\itshape\mdseries For each {#1} do:]~\\}{\end{description}}
\newcommand{\algorithmDAN}[1]{\vspace{0.1cm}\noindent
\underline{An algorithm for $#1$:}}
\newcommand{\comment}[1]{\textcolor{gray}{$\vartriangleright$}~\begin{minipage}[t]{\linewidth-0.38cm}\emph{#1.}\end{minipage}~}
\definecolor{myBlue}{rgb}{0.0, 0.5, 1.0}
\definecolor{myRed}{rgb}{1.0, 0.5, 0}
\definecolor{myGreen}{rgb}{0.0, 0.5, 0.0}
\definecolor{myYellow}{rgb}{1.0, 1.0, 0.0}
\newcommand{\RN}[1]{
  \textup{\uppercase\expandafter{\romannumeral#1}}%
}
\newcommand{\xx}{\mathfrak{x}}
\renewcommand{\phi}{\varphi}
\renewcommand{\epsilon}{\varepsilon}
\newcommand{\NN}{{\mathbb N}}
\newcommand{\QQ}{{\mathbb Q}}
\newcommand{\PP}{\mathbb{P}}
\newcommand{\C}{\mathbb{C}}
\newcommand{\JJ}{\mathbb{J}}
\newcommand{\LA}{\textsf{\upshape A}}
\newcommand{\CB}{{\mathcal B}}
\newcommand{\CC}{{\mathcal C}}
\newcommand{\CH}{{\mathcal H}}
\newcommand{\CJ}{{\mathcal J}}
\newcommand{\CL}{{\mathcal L}}
\newcommand{\CO}{{\mathcal O}}
\newcommand{\CP}{{\mathcal P}}
\newcommand{\CQ}{{\mathcal Q}}
\newcommand{\CS}{{\mathcal S}}
\newcommand{\CX}{{\mathcal X}}
\let\downarrowOld\downarrow
\renewcommand{\downarrow}{\mathord{\downarrowOld}}
\let\uparrowOld\uparrow
\renewcommand{\uparrow}{\mathord{\uparrowOld}}
\renewcommand{\hat}{\widehat}
\renewcommand{\tilde}{\widetilde}
\renewcommand{\hat}{\widehat}
\renewcommand{\bar}{\overline}
\newcommand{\sym}{\operatorname{Sym}}
\newcommand{\alt}{\operatorname{Alt}}
\newcommand{\aut}{\operatorname{Aut}}
\newcommand{\iso}{\operatorname{Iso}}
\newcommand{\stab}{\operatorname{Stab}}
\newcommand{\lab}{{\operatorname{Label}}}
\newcommand{\tranCl}{\operatorname{TClosure}}
\renewcommand{\ker}{\operatorname{ker}}
\newcommand{\polylog}{\operatorname{polylog}}
\newcommand{\tw}{\operatorname{tw}}
\newcommand{\cw}{\operatorname{cw}}
\newcommand{\sep}{\operatorname{sep}}
\newcommand{\mindeg}{\operatorname{minDeg}}
\newcommand{\pid}{{\operatorname{rb}}}
\newcommand{\pidb}{{\operatorname{b}}}
\newcommand{\ind}{{\operatorname{ind}}}
\newcommand{\orb}{{\operatorname{orb}}}
\newcommand{\set}{{\operatorname{Set}}}
\newcommand{\obj}{\operatorname{Objects}}
\newcommand{\hfs}{\operatorname{HFS}}
\newcommand{\ord}{{\operatorname{Can}}}
\newcommand{\recPart}{\textsc{recurseOnPartition}}
\newcommand{\toGroup}{\textsc{reduceToSubgroup}}
\newcommand{\toJohnson}{\textsc{reduceToJohnson}}
\newcommand{\processJohnson}{\textsc{produceCertificates}}
\newcommand{\processAut}{\textsc{aggregateCertificates}}
\newcommand{\isoBasic}{\operatorname{Iso}_{\operatorname{Basic}}}
\newcommand{\isoTree}{\operatorname{Iso}_{\operatorname{Tree}}}
\newcommand{\can}{\operatorname{CL}}
\newcommand{\canInt}{\operatorname{CL}_{\operatorname{Int}}}
\newcommand{\canSet}{\operatorname{CL}_{\operatorname{Set}}}
\newcommand{\canSetHyper}{\operatorname{CL}_{\operatorname{SetHyper}}}
\newcommand{\canHyper}{\operatorname{CL}_{\operatorname{Hyper}}}
\newcommand{\canSetSet}{\operatorname{CL}_{\operatorname{SetSet}}}
\newcommand{\canObj}{\operatorname{CL}_{\operatorname{Object}}}
\newcommand{\canGraph}{\operatorname{CL}_{\operatorname{Graph}}}
\newcommand{\canRel}{\operatorname{CL}_{\operatorname{Rel}}}
\begin{document}

\title{Graph isomorphism in quasipolynomial time parameterized by treewidth}

\author{
Daniel Wiebking\\
RWTH Aachen University\\
\texttt{wiebking@informatik.rwth-aachen.de}}

\date{}

\KOMAoptions{abstract=true}

\maketitle

\begin{abstract}
We extend Babai's quasipolynomial-time graph isomorphism test (STOC 2016)
and develop a quasipolynomial-time algorithm for the multiple-coset isomorphism problem.
The algorithm for the multiple-coset isomorphism problem allows to
exploit graph decompositions of
the given input graphs within Babai's group-theoretic framework.

We use it to develop a graph isomorphism test that runs
in time $n^{\polylog(k)}$
where
$n$ is the number of vertices
and $k$ is the minimum treewidth of the given graphs
and $\polylog(k)$ is some polynomial in $\log(k)$.
Our result generalizes Babai's quasipolynomial-time graph isomorphism test.
\end{abstract}

\section{Introduction}

The graph isomorphism problem asks for a structure preserving bijection between two given graphs $G$ and $H$, i.e.,
a bijection $\phi:V(G)\to V(H)$ such that $vw\in E(G)$ if and only if $\phi(v)\phi(w)\in E(H)$.
One central open problem in theoretical computer science is the question whether
the graph isomorphism problem can be solved in polynomial time.
There are a few evidences that the problem might not be NP-hard.
For example, NP-hardness of the problem implies a collapse of the polynomial hierarchy
\cite{DBLP:journals/jcss/Schoning88}.
Moreover, NP-hardness of the graph isomorphism problem
would refute the exponential time hypothesis since
the problem can be decided in quasipolynomial time \cite{DBLP:conf/stoc/Babai16}.

The research of the graph isomorphism problem started with two fundamental graph classes,
i.e., the class of trees and the class of planar graphs.
In 1970, Zemlyachenko gave a polynomial-time isomorphism algorithm for trees \cite{zemlyachenko1970canonical}.
One year later, Hopcroft and Tarjan extended a result of Weinberg and designed a polynomial-time isomorphism algorithm for planar graphs 
\cite{DBLP:journals/ipl/HopcroftT71},\cite{weinberg1966simple}.
In 1980, Filotti, Mayer and Miller extended the polynomial-time algorithm to graphs of bounded genus
\cite{DBLP:conf/stoc/Miller80},\cite{DBLP:conf/stoc/FilottiM80}\footnote{Myrvold and Kocay pointed out an error in Filotti's techniques \cite{DBLP:journals/jcss/MyrvoldK11}.
However, different algorithms have been given which show that
the graph isomorphism problem for graphs of bounded genus is indeed decidable in polynomial time
\cite{DBLP:journals/iandc/Miller83a,DBLP:conf/stoc/Grohe00,DBLP:journals/corr/Kawarabayashi15a}.}.
The genus is a graph parameter that measures how far away the graph is from being planar.

In Luks's pioneering work in 1982, he gave a polynomial-time isomorphism algorithm
for graphs of bounded degree \cite{DBLP:journals/jcss/Luks82}.
His group-theoretic approach laid the foundation of many other algorithms that were developed ever since.
It turns out that the research in the graph isomorphism problem for restricted graph classes
was a promising approach in tackling the graph isomorphism problem in general.
Shortly after Luks's result,
a combinatorial partitioning lemma by Zemlyachenko was combined with Luks's framework.
This resulted in an isomorphism algorithm for graphs with $n$ vertices in general that runs in time $2^{\CO(\sqrt{n\log n})}$
\cite{zemlyachenko1985graph},\cite{DBLP:conf/stoc/BabaiL83}.
This algorithm was the fastest for decades.

In 1983, the seminal work of Robertson and Seymour in graph minors started a new era of graph theory \cite{DBLP:journals/jct/RobertsonS83}.
At the same time, Miller extended Luks's group-theoretic framework to hypergraphs
\cite{DBLP:journals/iandc/Miller83a}.
It turned out that the study of general structures such as hypergraphs
was also a promising approach in tackling the graph isomorphism problem.
In 1991, Ponomarenko could in fact use Miller's hypergraph algorithm
to design a polynomial-time isomorphism algorithm for graphs
excluding a minor \cite{Ponomarenko1991}.

The work of Robertson and Seymour also rediscovered the notion of treewidth \cite{DBLP:books/daglib/0030488}, a graph parameter
that measures how far away the graph is from being a tree.
The treewidth parameter was reborn and has been studied ever since.
So, researchers went back to the roots and studied the isomorphism problem for graphs of bounded treewidth.
In 1990, Bodlaender gave a simple isomorphism-algorithm for graphs of treewidth $k$ with $n$ vertices that runs in time $n^{\CO(k)}$ \cite{DBLP:journals/jal/Bodlaender90}.
However, no FPT-algorithm was known, i.e., an isomorphism algorithm with a running time of the form $f(k)\cdot n^{\CO(1)}$.
The search of a FPT-algorithm occupied researchers over years
and this open problem was explicitly stated by several authors
\cite{DBLP:journals/algorithmica/YamazakiBFT99,bodlaender2006open,DBLP:conf/stoc/KawarabayashiM08,
DBLP:conf/swat/KratschS10,DBLP:conf/isaac/Otachi12,DBLP:conf/iwpec/BoulandDK12,
DBLP:series/txcs/DowneyF13,DBLP:journals/siamcomp/GroheM15}.
In 2017, Lokshtanov, Pilipczuk, Pilipczuk
and Saurabh finally solved this open problem and designed a FPT-algorithm for the graph isomorphism problem \cite{DBLP:journals/siamcomp/LokshtanovPPS17}.
Their algorithm runs in time $2^{\CO(k^5\log k)}n^{\CO(1)}$ where
$n$ is the number of vertices and
$k$ is the minimum treewidth of the given graphs.

At the same time, Babai made a breakthrough and designed a quasipolynomial-time
algorithm for the graph isomorphism problem in general \cite{DBLP:conf/stoc/Babai16}.
His algorithm runs in time $n^{\polylog(n)}$ where $n$ is the number of vertices
and $\polylog(n)$ is some polynomial in $\log(n)$ (according to Helfgott's analysis
the function $\polylog(n)$ can chosen to be quadratic in $\log(n)$ \cite{helfgott2017isomorphismes}).
To achieve this result, Babai built on Luks's group-theoretic framework,
which actually solves the more general string isomorphism problem.
One of the main questions is how to combine Babai's group-theoretic algorithm
with the graph-theoretic techniques that have been developed.
For example, it is unclear how to exploit a decomposition of the given graphs
within Babai's framework since his algorithm actually processes strings rather than graphs.

Recently, Grohe, Neuen and Schweitzer were able to extend Babai's algorithm
to graphs of maximum degree $d$ and an isomorphism algorithm was developed that runs in time $n^{\polylog(d)}$ \cite{DBLP:conf/focs/GroheNS18}.
They suggest that their techniques might be useful
also for graphs parameterized by treewidth and
conjectured that the isomorphism problem for graphs of treewidth $k$ can be decided in time $n^{\polylog(k)}$.

In \cite{DBLP:conf/icalp/GroheNSW18},
the graph-theoretic FPT-algorithm of Lokshtanov et al.~was improved by
using Babai's group-theoretic algorithm and the extension
given by Grohe et al.~as a black box.
They decomposed a graph of bounded treewidth into subgraphs
with particular properties.
They were able to design a faster algorithm that computes the isomorphisms between these subgraphs.
However, they pointed out a central problem that arises when dealing with graph decompositions:
When the isomorphisms between these subgraphs are already computed,
how can they be efficiently merged in order to compute the isomorphisms
between the entire graphs?
This problem was named as \emph{multiple-coset isomorphism problem} and is formally defined as follows.
Given two sets $J=\{\rho_1\Delta_1^\ord,\ldots,\rho_t\Delta^\ord_t\}$ and $J'=\{\rho_1'{\Delta'}_1^\ord,\ldots,\rho_t'{\Delta'}_t^\ord\}$
where $\rho_i:V\to n,\rho_i':V'\to n$ are bijections and $\Delta_i^\ord,{\Delta'}_i^\ord\leq\sym([n])$
are permutation groups for all $i\in[t]$,
the problem is to decide whether there are bijections $\phi:V\to V',\pi:[t]\to [t]$
such that $\Delta_i^\ord={\Delta'}_{\pi(i)}^\ord$ and $\phi\in\rho_i\Delta_i^\ord\rho_{\pi(i)}'$ for all $i\in[t]$.
By applying the group-theoretic black box algorithms, they
achieved an improved isomorphism test
for graphs of treewidth $k$
that runs in time
$2^{k\cdot\polylog(k)}n^{\CO(1)}$.
However, for further improvements, it did not
seem to be enough to use the group-theoretic algorithms as a black box only.
The question of an isomorphism algorithm that runs in time $n^{\polylog(k)}$ remained open.

In \cite{DBLP:conf/stoc/SchweitzerW19},
the study of the multiple-coset isomorphism problem continued.
Rather than using group-theoretic algorithms as a black box,
they were able to extend Luks's group-theoretic framework to
the multiple-coset isomorphism problem.
In order to facilitate their recursion, they
introduced the class of combinatorial objects.
Their class of combinatorial objects contains hypergraphs, colored graphs, relational structures,
explicitly given codes and more.
However, the key idea in order to handle the involved structures recursively,
was to add so-called labeling cosets to their structures.
By doing so, they could combine combinatorial decomposition techniques
with Luks's group-theoretic framework.
This led to a simply-exponential time algorithm for the
multiple-coset isomorphism problem.
Although the achieved running time was far away from being
quasipolynomial, their result led to improvements of several algorithms.
For example, it led to the currently best algorithm for the normalizer problem
(a central problem in computational group theory) \cite{DBLP:conf/soda/Wiebking20}.
However, they were not able to extend also Babai's techniques to
their framework
and the question of a graph isomorphism algorithm
running in time $n^{\polylog(k)}$ remained open.

\paragraph{Our Contribution}

In this paper, we give a quasipolynomial-time algorithm for the
multiple-coset isomorphism problem.
This leads to an answer
of the conjecture in \cite{DBLP:conf/focs/GroheNS18} mentioned above.

\begin{theo*}[\cref{theo:isoGraph}]
The graph isomorphism problem can be decided in time $n^{\polylog(k)}$
where $n$ is the number of vertices and $k$ is the minimum treewidth of the input graphs.
\end{theo*}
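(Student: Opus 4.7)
The plan is to leverage the quasipolynomial-time multiple-coset isomorphism algorithm, which is the main technical contribution of this paper, together with a tree-decomposition based reduction. Given two $n$-vertex graphs $G$ and $H$ of treewidth at most $k$, I would first compute a tree decomposition of width $\CO(k)$ for each graph. Because arbitrary tree decompositions are not isomorphism-invariant, I would use an isomorphism-invariant (canonical) construction in the spirit of \cite{DBLP:journals/siamcomp/LokshtanovPPS17} and \cite{DBLP:conf/icalp/GroheNSW18}, so that two isomorphic graphs receive decompositions that are themselves related by the graph isomorphism.

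Next, I would encode each graph together with its canonical decomposition as a combinatorial object in the sense of \cite{DBLP:conf/stoc/SchweitzerW19}. Each bag of size $\CO(k)$ becomes a component carrying a labeling coset $\Lambda \leq \lab(V)$ describing the admissible orderings of its vertices, while adjacency information within and across bags, together with the tree structure itself, becomes relational data on top. This reformulation turns the question whether $G$ and $H$ are isomorphic into a multiple-coset isomorphism question on two combinatorial objects of total size polynomial in $n$ and $k$.

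I would then invoke the quasipolynomial-time multiple-coset isomorphism algorithm developed in the body of the paper. Since each bag has only $\CO(k)$ vertices, the local symmetric groups that appear have order quasipolynomial in $k$, and the multiple-coset framework lets this quasipolynomial cost propagate globally through the tree in a controlled way, without ever naively enumerating orderings of bags. A careful accounting should then yield a total running time of $(nk)^{\polylog(k)} = n^{\polylog(k)}$.

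The main obstacle, and the reason earlier work was limited to $2^{k \cdot \polylog(k)} n^{\CO(1)}$, is the interplay between the combinatorial decomposition and the group-theoretic recursion. Using the group-theoretic subroutines only as a black box on each bag separately is not enough; the decomposition has to be genuinely integrated into the recursion so that labeling-coset information can be passed between bags without being flattened into an exponential enumeration. I expect the bulk of the difficulty to lie precisely in making this integration canonical and in verifying that the recursion depth and the bag-wise quasipolynomial factors compose to the claimed bound, rather than in the tree-decomposition step or the final invocation itself.
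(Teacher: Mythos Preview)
Your high-level plan---an isomorphism-invariant decomposition, followed by a bottom-up merge via the multiple-coset machinery---is the right shape, but there is a genuine gap in the decomposition step, and a crucial ingredient is missing.

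You assume you can compute, in time compatible with the target bound, an isomorphism-invariant tree decomposition whose bags have size $\CO(k)$. The constructions you cite (Lokshtanov--Pilipczuk--Pilipczuk--Saurabh and Grohe--Neuen--Schweitzer--Wiebking) do produce such invariant decompositions, but computing them costs $2^{k\cdot\polylog(k)}n^{\CO(1)}$ or worse; that factor alone already exceeds $n^{\polylog(k)}$ once $k$ is, say, $\polylog(n)$. No polynomial-time isomorphism-invariant decomposition with $\CO(k)$-size bags is known, so this step as written does not go through.

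The paper sidesteps this by using a \emph{different} decomposition: after $k$-improving the graph (so that separability is at most $k$), it takes the clique-separator decomposition of Leimer/Elberfeld--Schweitzer, which is computable in polynomial time and is isomorphism-invariant. The bags here are \emph{not} of size $\CO(k)$; they are arbitrary clique-separator-free subgraphs, potentially containing most of the vertices. Handling these large bags is exactly where the second ingredient you omit enters: the Grohe--Neuen--Schweitzer bounded-composition-width string-isomorphism algorithm (\cref{theo:boundedDegree}). Via an incremental separator-growing procedure (\cref{lem:isoBasic,lem:isoBasicClique}), each clique-separator-free bag is processed in time $n^{\polylog(k)}$ using that algorithm, because $k$-improvement bounds the relevant composition widths by $k$. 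Only the \emph{adhesion sets} are cliques of size at most $k+1$, and it is at those small interfaces that the multiple-coset / coset-labeled-hypergraph machinery (\cref{cor:canObj}, \cref{lem:cosetHyper}) is actually invoked with $|V|\le k+1$ and $|J|\le n$.

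In short, the ``bulk of the difficulty'' is not only the integration you flag; it also lies in choosing a decomposition that is both cheap to compute invariantly \emph{and} has parts that can themselves be canonized in $n^{\polylog(k)}$ time---which forces the detour through $k$-improvement, clique-separator-free pieces, and the bounded-degree algorithm.
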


When $k=\polylog(n)$, our algorithm runs in time $n^{\CO(\log(\log n)^c)}$ (for some constant $c$)
and is significantly faster
than Babai's algorithm and existing FPT-algorithms for graphs parameterized by treewidth.

For the present work,
we exploit the fact that Babai's algorithm was recently extended to canonization
\cite{DBLP:conf/stoc/Babai19}.
A canonical labeling of a graph is a function that labels the vertices $V$ of the graph with integers $1,\ldots,|V|$
in such a way that the labeled versions of two isomorphic graphs are equal (rather than isomorphic).
The computation of canonical forms and labelings, rather than isomorphism testing, is an important task in the
area of graph isomorphism and is especially useful for practical applications.
Also the framework given in \cite{DBLP:conf/stoc/SchweitzerW19}
is actually designed for the canonization problem.
The present paper is based on these works
and our algorithms provide canonical labelings as well.
Only the algorithm given in the last section depends on the bounded-degree isomorphism algorithm
of Grohe et al.~for which no adequate canonization version is known.

The first necessary algorithm that we provide in our work
is a simple canonization algorithm for hypergraphs.

\begin{theo*}[\cref{theo:canHyper}]
Canonical labelings for hypergraphs $(V,H)$ can be computed in time $(|V|+|H|)^{\polylog|V|}$.
\end{theo*}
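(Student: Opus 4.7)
The plan is to view the hypergraph $(V,H)$ as a combinatorial object of type ``set of sets over $V$'' in the sense of \cite{DBLP:conf/stoc/SchweitzerW19} and feed it into the paper's quasipolynomial-time canonization framework for the multiple-coset isomorphism problem. In this framing, the natural symmetric group acts on the base set $V$, so the targeted running time $(|V|+|H|)^{\polylog|V|}$ is exactly what such a framework should yield: the $\polylog$ factor comes only from the domain on which the group acts, whereas the size of the object (governed by $|H|$) contributes only a polynomial factor.

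First, I would regard the hypergraph as an object $\xx := H$, a set of subsets of $V$. The group $\sym(V)$ acts on $\xx$ by permuting the base set and inducing a permutation of the subsets of $V$. Any canonical labeling coset $\Lambda\leq\lab(V)$ for $\xx$ satisfies properties (CL1) and (CL2) for the hypergraph as well, so it suffices to canonize $\xx$.

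Second, I would apply canonization recursively in two nested layers. The inner layer processes an individual hyperedge $e\in H$: given a tentative labeling of $V$, a set is canonized by sorting its labeled elements. The outer layer canonizes the collection of canonized hyperedges as a set of strings under $\sym(V)$; this is the nontrivial step, as it must produce a labeling coset of $\lab(V)$ that simultaneously orders all $|H|$ hyperedges canonically.

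Third, I would bound the running time. The outer-layer canonization is an instance of the multiple-coset isomorphism problem over the base set $V$. Invoking the paper's quasipolynomial-time algorithm on this instance costs $|V|^{\polylog|V|}$, and the input, preprocessing, and bookkeeping over the $|H|$ hyperedges add only a polynomial factor in $|V|+|H|$. The total time is therefore $(|V|+|H|)^{\polylog|V|}$. The main obstacle, which is precisely what the paper's multiple-coset algorithm addresses, is the efficient aggregation of the labeling cosets of $V$ arising from the individual hyperedges: each hyperedge pins down $V$ only up to a potentially large coset, and there can be $|H|$ such cosets that must be intersected and merged. Without the quasipolynomial-time multiple-coset machinery, one only obtains simply-exponential bounds, as in \cite{DBLP:conf/stoc/SchweitzerW19}.
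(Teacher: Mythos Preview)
Your reduction is sound in spirit: canonizing each hyperedge $S_i$ yields a labeling coset $\Delta_i\rho_i\leq\lab(V)$ (with $\Delta_i=\sym(S_i)\times\sym(V\setminus S_i)$), and after a preliminary partition of $H$ by hyperedge size the Object Replacement Lemma lets you replace $H$ by the set $J=\{\Delta_i\rho_i\}$ of these cosets. The gap is in the time bound you claim for the multiple-coset step. With $|J|=|H|$ cosets, the only black box that gives $(|V|+|J|)^{\polylog|V|}$ is the paper's main \cref{theo:canSet}; the elementary \cref{lem:canSet0} yields only $(|V|+|H|)^{\polylog(|V|+|H|)}$, which is too weak. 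But \cref{theo:canSet} is proved \emph{using} the hypergraph algorithm: the subroutine $\processJohnson$ (\cref{lem:processJohnson}) calls $\canSetHyper$ (\cref{lem:canSetHyper}), whose first step is $\canHyper$ (\cref{theo:canHyper}). So your argument is circular within the paper's development.

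The paper therefore proves \cref{theo:canHyper} directly, relying only on the weak black box \cref{cor:canObj0}. The key is a covering recursion on $H$ itself: the cover $\{C_v\}_{v\in V}$ with $C_v=\{S\in H\mid v\in S\}$ has only $|V|$ parts; replacing each $C_v$ of size $>|H|/2$ by its complement $H\setminus C_v$ yields either a nontrivial ordered bipartition of $H$ or a sparse cover whose parts all have size $\leq|H|/2$. One recurses on the parts and merges the resulting labeling cosets via \cref{cor:canObj0}, applied to an object whose largest set has size at most $|V|$ and hence runs in $2^{\polylog|V|}$. The recursion has branching $\leq|V|$ and halves $|H|$ at each level, giving at most $|H|^{O(\log|V|)}$ calls and the stated bound---without touching the main multiple-coset theorem.
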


There is a simple argument why this algorithm is indeed necessary for our main result.
It is well-known that a hypergraph $X=(V,H)$ can be encoded as a bipartite graph $G_X=(V\cupdot H,E)$
(the bipartite graph $G_X$ has an edge $(v,S)\in E$, if and only if $v\in S$).
It is not hard to show that the treewidth $k$ of this bipartite graph $G_X$ is at most $|V|$.
The bipartite graph $G_X$ uniquely encodes the hypergraph $X$,
in particular, two hypergraphs are isomorphic if and only if their corresponding bipartite graphs
are isomorphic.
This means that an isomorphism algorithm for graphs of treewidth $k$ running in time
$n^{\polylog(k)}$
would imply an isomorphism algorithm for hypergraphs running in time $(|V|+|H|)^{\polylog|V|}$.
However, applying Babai's algorithm to the bipartite graph would lead to a running time of
$(|V|+|H|)^{\polylog(|V|+|H|)}$.
Instead of applying Babai's algorithm to the bipartite graph
directly, we decompose the hypergraph and canonize the substructures recursively.
To merge
the canonical labelings of all subhypergraphs, we use a canonical
version of the multiple-coset isomorphism problem.
However, for the hypergraph algorithm, it suffices to use Babai's
algorithm as a black box only.

Our decomposition technique for hypergraphs can also be used to
design a simple
canonization algorithm for $k$-ary relations.

\begin{theo*}[\cref{theo:canRel}]
Canonical labelings for $k$-ary relations $R\subseteq V^k$ can be computed in time $2^{\polylog|V|}|R|^{\CO(1)}$.
\end{theo*}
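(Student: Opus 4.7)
The plan is to reduce the canonization of $k$-ary relations to hypergraph canonization (\cref{theo:canHyper}) by first controlling the arity through a pattern grouping.

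I would begin with an arity reduction. For a tuple $t=(t_1,\dots,t_k)\in V^k$, define its \emph{pattern} as the equivalence relation $\pi_t$ on $[k]$ with $i\sim_t j \iff t_i=t_j$. Because the diagonal action of $\sym(V)$ on $V^k$ preserves the pattern of each tuple, the partition $R=\bigsqcup_\pi R_\pi$ by pattern is $\sym(V)$-invariant and hence canonical. For each pattern $\pi$, let $m_\pi\leq|V|$ denote its number of blocks; then $R_\pi$ is in canonical bijection with a set $S_\pi$ of $m_\pi$-tuples in $V$ whose entries are pairwise distinct (one coordinate per block of $\pi$, blocks enumerated canonically by least element). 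This reduces the problem to canonizing sets of injective $\ell$-tuples with $\ell\leq|V|$.

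Next, I would canonize each $S_\pi$ by encoding it as a hypergraph and invoking \cref{theo:canHyper}. One natural encoding turns each injective tuple $t=(v_1,\dots,v_\ell)\in S_\pi$ into the chain of hyperedges $\{v_1\}\subset\{v_1,v_2\}\subset\dots\subset\{v_1,\dots,v_\ell\}$ on the vertex set $V$, marked by adjoining an auxiliary ``tuple vertex'' $u_t$ to every hyperedge of its own chain; this yields a hypergraph on vertex set $V\cupdot\{u_t:t\in S_\pi\}$ in which the vertices $u_t$ are rigid (each $u_t$ belongs to exactly one chain). Canonizing this hypergraph determines a canonical labeling of $S_\pi$, and aggregating across patterns produces a canonical labeling of $R$.

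The main obstacle I expect is the \emph{polynomial} dependence on $|R|$. A black-box application of \cref{theo:canHyper} to the encoding above gives only $(|V|+|R|)^{\polylog|V|}$, which is quasi-polynomial, not polynomial, in $|R|$. To achieve the sharper bound $2^{\polylog|V|}|R|^{\CO(1)}$, I would have to exploit the rigidity of the tuple vertices: since each $u_t$ lies in exactly one chain, once the $u_t$ are individualized the recursion on $V$ inside the hypergraph algorithm decouples from the iteration over the tuples in $S_\pi$. Concretely, I would inline the decomposition technique used in the proof of \cref{theo:canHyper} rather than treating it as a black box, and argue that at each of its $\polylog|V|$ recursive levels the total work across all tuples is polynomial in $|R|$, with the $2^{\polylog|V|}$ factor coming only from the base cases handled by Babai's canonization on subsets of $V$.
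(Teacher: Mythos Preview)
Your reduction to hypergraph canonization has a genuine gap at the step where you hope to recover the polynomial dependence on $|R|$. The covering recursion inside \cref{theo:canHyper} has depth $\log_2|H|$ (each level halves the number of hyperedges) and branching factor equal to the size of the cover. In your encoding $|H|=\Theta(\ell\cdot|S_\pi|)$, so the recursion depth is $\Theta(\log|R|+\log|V|)$, not $\polylog|V|$ as you assert. Even if you restrict the cover to the $|V|$ original vertices (which does still cover $H$, since every hyperedge meets $V$), the number of leaves is $|V|^{\Theta(\log|H|)}=|R|^{\Theta(\log|V|)}$, not $|R|^{\CO(1)}$. The rigidity of the $u_t$ does not help here: it constrains which automorphisms exist, not how many subproblems the cover recursion spawns. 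And this is not an artefact of your particular encoding---the paper explicitly notes that improving hypergraph canonization to $2^{\polylog|V|}|H|^{\CO(1)}$ is an open problem, so any route that forgets the tuple order and passes through \cref{theo:canHyper} is stuck at $|R|^{\polylog|V|}$ at best.

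The paper's proof is much more direct and exploits precisely the feature your encoding discards: the coordinates of a tuple are linearly ordered. It finds the first coordinate $r$ at which two tuples of $R$ disagree and partitions $R=\bigcupdot_{v\in V}P_v$ with $P_v=\{x\in R:x_r=v\}$. Because this is a genuine \emph{partition} (disjoint parts summing to $|R|$, at least two of them nonempty), the recurrence $N(R)=1+\sum_v N(P_v)$ solves to $N(R)\leq|R|^2$, whereas a cover recurrence only gives $|H|^{\Theta(\log|V|)}$. Each node then merges at most $|V|$ canonical labelings via \cref{cor:canObj0} at cost $2^{\polylog|V|}$. Your chain-of-prefixes encoding turns an ordered tuple into an unordered collection of hyperedges and then tries to recover the polynomial bound from the rigidity of the auxiliary vertices; the paper's point is that the coordinate order already hands you a partition, and you should never let go of it.
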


The algorithm
improves the currently best algorithm from \cite{DBLP:conf/focs/GroheNS18}.
As graphs can be seen as binary relations, our algorithm
generalizes the quasipolynomial-time bound
for graphs.
The achieved running time is the best one can hope for as long as the
graph isomorphism problem has no solution better than quasipolynomial time.

Our main algorithm finally solves the multiple-coset isomorphism problem.
In fact, the algorithm computes canonical labelings as well.

\begin{theo*}[\cref{theo:canSet}]
Canonical labelings for a set $J$ consisting of labeling cosets can be computed in time $(|V|+|J|)^{\polylog|V|}$.
\end{theo*}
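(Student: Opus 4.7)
The plan is to build on the multiple-coset canonization framework of Schweitzer--Wiebking \cite{DBLP:conf/stoc/SchweitzerW19} and replace its Luks-style inner recursion with a Babai-style quasipolynomial recursion, using the canonization version of Babai's algorithm from \cite{DBLP:conf/stoc/Babai19}. First, I would invoke the reductions for combinatorial objects with attached labeling cosets from \cite{DBLP:conf/stoc/SchweitzerW19} to normalize the instance: replace $(V,J)$ by a group $\Gamma\leq\sym(V)$ together with a $\Gamma$-invariant set of cosets, so that at the top level the canonicity properties CL1 and CL2 are assembled from the canonical labelings returned by the recursive subcalls.

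Second, I would follow the skeleton of Babai's quasipolynomial canonization algorithm: if $\Gamma$ is intransitive, canonize each orbit separately and merge via the multi-coset merge routine; if $\Gamma$ is transitive and imprimitive, descend to a block system and recurse. The main work happens in the transitive primitive case, where Babai's machinery invokes Cameron's classification to either produce a Johnson scheme or reduce to a tame group where canonization is direct. In each branch, the restriction of $J$ to the smaller structure is computed canonically, recursed upon, and the resulting canonical labelings are merged using the canonical multi-coset merge, now employed as a subroutine rather than as the outer driver.

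Third, the key adaptation is Babai's Local Certificates and Split-or-Johnson: these originally operate on a $\Gamma$-invariant string or relational structure on $V$, whereas our input is a set $J$ of cosets in $\lab(V)$. I would extract a canonical $\Gamma$-invariant relational proxy of $J$---for instance via the hypergraph encoding underlying \cref{theo:canHyper} composed with the canonical merge used in \cref{theo:canRel}---and run Split-or-Johnson on that proxy. Each branch returns either a balanced canonical partition of $V$ or a canonical Johnson scheme on a subset of $V$; in either case the coset data is pulled back along the canonical map to the smaller structure on which we recurse. Standard Babai-style accounting then gives recursion depth $\polylog|V|$ with each level contributing a factor of $|V|^{\polylog|V|}$, yielding the bound $(|V|+|J|)^{\polylog|V|}$.

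The main obstacle will be making Split-or-Johnson and the local-certificate aggregation compatible with the coset-valued input while preserving canonicity. In particular, the relational proxy of $J$ must be rich enough to drive the split-or-Johnson dichotomy, yet its canonical labelings must faithfully refine canonical labelings of $J$ itself; the aggregation of local certificates must commute with the canonization maps so that CL1 is preserved across recursive levels, and the Johnson-case reduction must lift back from set-tuples to cosets in $\lab(V)$ without losing the canonicity guarantees provided by the Schweitzer--Wiebking framework.
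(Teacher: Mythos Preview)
Your proposal has a genuine gap at the ``relational proxy'' step, and it is exactly the step the paper has to work hardest to avoid. There are only two ways such a proxy can live. If it lives on a blown-up domain that also carries the index set of $J$ (the obvious encoding), then running Babai on it gives $(|V|+|J|)^{\polylog(|V|+|J|)}$; this is precisely \cref{lem:canSet0}, and it is too weak. If instead the proxy lives on $V$ alone, it cannot carry the information you need: after the natural normalization (the paper's Property~(B)), all cosets $\Delta_i\rho_i\in J$ have the \emph{same} ordered group $\Delta^\ord=\rho_i^{-1}\Delta_i\rho_i$, so from the point of view of any $\Gamma$-invariant relational structure on $V$ they are indistinguishable. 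The information that separates them sits in the labelings $\rho_i$, not in $V$, and Split-or-Johnson run on a proxy over $V$ will never see it. Your ``pull back along the canonical map'' sentence hides exactly this difficulty.

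What the paper actually does is extend Babai's machinery to act on $J$ directly, with several new pieces that your outline does not contain. First, a subgroup reduction $\toGroup$ (\cref{lem:toGroup}) that, given $\Psi^\ord\leq\Delta^\ord$, simultaneously refines every $\Delta_i\rho_i$ into $\Psi^\ord$-subcosets and then \emph{re-partitions} the resulting multiset so that each part still has size at most $|J|$ and still satisfies the invariant~(A); this is where the relative minimal base size $\pid(\Delta^\ord,\Psi^\ord)$ enters. Second, a partition-family technique (\cref{lem:recPart}) that turns isomorphism-invariant families of partitions of $J$ into progress of type ``$|J|\to |J|/p$ at multiplicative cost roughly $p\cdot 2^{\polylog|V|}$''. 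Third, local certificates are reformulated as \emph{certificates of fullness}: subgroups $G\leq\aut(J)$ (not automorphisms of a proxy) whose ordered image is still giant under $g^\ord$; producing them (\cref{lem:processJohnson}) requires analysing, for each affected set $S$, whether the \emph{local restrictions} of the cosets in $J$ to $S$ coincide, differ, or split $J$ nontrivially. Fourth, aggregation (\cref{lem:processAut}) exploits that such a $G$ acts transitively on a canonical refinement $\hat J$ of $J$, so one may pick a single class $\hat J_0$ for free and recurse on it with the kernel. Finally, the running-time bookkeeping is not ``depth $\polylog|V|$ on $V$'': progress is measured jointly in $|J|$, in $|A|$, in the largest $\Delta^\ord$-orbit on $A^\ord$, and in whether $g^\ord$ is defined (the function~\eqref{runtime}), and each subroutine is shown to achieve one of five specific progress types. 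Your accounting does not track $|J|$ at all, which is where the exponent $\polylog|V|$ (rather than $\polylog(|V|+|J|)$) is won.
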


This result is actually of independent interest as it also implies
a faster canonization algorithm for the entire class of combinatorial objects (\cref{cor:canObj}).

To solve this problem, the simple hypergraph canonization algorithm can be used as a subroutine in some places.
However, we do not longer use Babai's and Luks's techniques as a black box only.
To extend their methods, we follow the route of \cite{DBLP:conf/stoc/SchweitzerW19}
and consider combinatorial objects
that allows to combine combinatorial structures with permutation group theory.
In particular, we can extend Luks's subgroup reduction and Babai's method and aggregation of local certificates to our framework.
All these methods were designed for the string isomorphism problem and need non-trivial extensions when dealing with a set of labeling cosets
rather than a string.

\paragraph{Related Work}
Another extension of Babai's quasipolynomial time algorithm has been
independently proposed by Daniel Neuen \cite{NeuenHyper}
who provided another algorithm for the isomorphism problem
of hypergraphs.
However, Neuen can exploit groups with restricted composition factors
that are given as additional input in order to speed up his algorithm.
This can be exploited in the setting of graphs of bounded Euler genus.
He provides a graph isomorphism algorithm that runs in time $n^{\polylog(g)}$
where $n$ is the number of vertices
and $g$ is the minimum genus of the given graphs.

On the other hand, his algorithm is not able to handle labeling cosets occurring in the combinatorial structures.
In particular, his algorithm is not able to solve the
multiple-coset isomorphism problem in the desired time bound,
which we require for
our isomorphism algorithm for graphs parameterized by treewidth.
Moreover, his techniques do not provide canonical labelings.

We hope that both algorithms can be combined
to give a faster isomorphism test for the large class of graphs
excluding a topological subgraph.
This large class of graphs
includes the graphs of bounded treewidth, graphs
of bounded genus, graphs of bounded degree and
graphs excluding a minor.
In fact,
Grohe and Marx provide a structure theorem
which shows that
the graph classes mentioned above
also characterize graphs excluding a topological subgraph.
Informally, they showed that
graphs
excluding a topological subgraph
can be decomposed into almost bounded-degree parts
and minor-free parts which in turn can be decomposed into almost-embeddable parts \cite{DBLP:journals/siamcomp/GroheM15}.
Therefore, we hope that the improved algorithms for
the isomorphism problem for bounded-degree graphs
and bounded-genus graphs can be combined with our algorithm
to exploit the occurring graph decomposition.

\paragraph{Organization of the Paper}

In \cref{sec:smallObj}, we show how the multiple-coset isomorphism problem
and its canonical version can be
reduced to a string canonization problem which in turn can be processed with Babai's algorithm.
However, this reduction does not lead to the desired time bound and only works efficiently
when the instance is small enough.
\Cref{sec:rel} deals with $k$-ary relations $R\subseteq V^k$ over a vertex set $V$.
We demonstrate how a partitioning technique can be used to
reduce the canonization problem of a $k$-ary relation to instances of small size in each decomposition level.
Since we only need to handle small instances at each decomposition level, we can make use of
the subroutines given in the previous section.
As a result, we obtain a canonization algorithm for $k$-ary relations that runs in time $2^{\polylog|V|}|R|^{\CO(1)}$.
In \cref{sec:hyper}, we extend our technique to hypergraphs and so-called coset-labeled hypergraphs.
The algorithm for coset-labeled hypergraphs is used as a subroutine
in our main algorithm given in the next section.
In \cref{sec:sets}, we finally present our main algorithm which canonizes a set of labeling cosets
and solves the multiple-coset isomorphism problem.
Our main algorithm is divided into five subroutines.
In the first subroutine, we extend the partitioning technique to families of partitions.
The second subroutine extends Luks's subgroup reduction to our framework.
The third subroutine reduces to the barrier configuration
which can be characterized by a giant representation.
The fourth and fifth subroutine extend Babai's method and aggregation of local certificates to our framework.
In \cref{sec:treewidth}, a straightforward application of the multiple-coset isomorphism problem
leads to an isomorphism algorithm that runs in time $n^{\polylog(k)}$
where $n$ is the number of vertices and $k$ is the treewidth of the given graphs.

\section{Preliminaries}
We recall the framework given in \cite{DBLP:conf/stoc/SchweitzerW19}.

\paragraph{Set Theory}
For an integer $t$, we write $[t]$ for $\{1,\ldots,t\}$.
For a set $S$ and an integer $k$, we write $\binom{S}{k}$ for the $k$-element
subsets of $S$ and $2^S$ for the power set of $S$.

\paragraph{Group Theory}
The composition of two functions $f:V\to U$ and $g:U\to W$ is denoted by
$fg$ and is defined as the function that first applies $f$ and then applies $g$.
The symmetric group on a set $V$ is denoted by $\sym(V)$
and the symmetric group of degree $t\in\NN$ is denoted by $\sym(t)$.
In the following, let $G\leq\sym(V)$ be a group.
The index of a subgroup $H\leq G$ is denoted by $(G:H)$.
The setwise
stabilizer of $A\subseteq V$ in $G$ is denoted by
$\stab_G(A):=\{g\in G\mid g(a)\in A
\text{ for all }a\in A\}$.
The pointwise stabilizer of $A\subseteq V$ in $G$
is denoted by $G_{(A)}:=\{g\in G\mid g(a)=a\text{ for all }a\in A\}$.
Analogously, the stabilizer of a vertex $v\in V$ in $G$
is denoted by $G_{(v)}:=G_{(\{v\})}$.
A set $A\subseteq V$ is called $G$-invariant
if $\stab_G(A)=G$.
A set $v^G:=\{g(v)\mid g\in G\}$ is called $G$-orbit of $v\in V$.
The $G$-orbit partition of $V$ is the partition of
$V$ in which each part is a $G$-orbit (for some $v\in V$).
partition.
A group $G\leq\sym(V)$ is called transitive if $V$ is one single $G$-orbit.
A partition of $V=V_1\cupdot\ldots\cupdot V_t$ is called $G$-invariant
if each part $V_i$ and each $g\in G$ it holds $V_i^g:=\{g(v)\mid v\in V_i\}\in G$.
For transitive groups $G$, the $G$-invariant partitions of $V$ are also
called block systems for $G$.
A group $G\leq\sym(V)$ is called primitive if there are no non-trivial block systems for $G$.

\paragraph{Labeling Cosets}
A \emph{labeling} of a set $V$ is a bijection $\rho:V\to\{1,\ldots,|V|\}$.
A \emph{labeling coset} of a set $V$ 
is a set of bijections $\Lambda$
such that
$\Lambda=\Delta\rho=\{\delta\rho\mid \delta\in\Delta\}$
for some subgroup $\Delta\leq\sym(V)$
and some labeling $\rho:V\to\{1,\ldots,|V|\}$.
We write $\lab(V)$ to denote
the labeling coset $\sym(V)\rho=
\{\sigma\rho\mid \sigma\in\sym(V)\}$
where $\rho$ is an arbitrary labeling of $V$.
Analogous to subgroups, a set $\Theta\tau$ is called a \emph{labeling subcoset}
of $\Delta\rho$, written $\Theta\tau\leq\Delta\rho$,
if the labeling coset $\Theta\tau$ is a subset of $\Delta\rho$.

\paragraph{Hereditarily Finite Sets and Combinatorial Objects}
Inductively, we define \emph{hereditarily finite sets}, denoted by $\hfs(V)$,
over a ground set $V$.
\begin{itemize}
  \item A vertex $v\in V$ is an atom and a hereditarily finite set $v\in\hfs(V)$,
  \item a labeling coset $\Delta\rho\leq\lab(V)$ is an atom and a hereditarily finite set $\Delta\rho\in\hfs(V)$,
  \item if $X_1,\ldots,X_t\in\hfs(V)$, then also $\CX=\{X_1,\ldots,X_t\}\in\hfs(V)$ where $t\in\NN\cup\{0\}$, and
  \item if $X_1,\ldots,X_t\in\hfs(V)$, then also $\CX=(X_1,\ldots,X_t)\in\hfs(V)$ where $t\in\NN\cup\{0\}$.
\end{itemize}
A \emph{(combinatorial) object} is a pair $(V,\CX)$ consisting of a ground set $V$ and a
hereditarily finite set $\CX\in\hfs(V)$.
The ground set $V$ is usually apparent
from context and
the combinatorial object $(V,\CX)$ is identified with
the hereditarily finite set $\CX$.
The set $\obj(V)$ denotes the set of all (combinatorial)
objects over $V$. 
The \emph{transitive closure} of an object $\CX$, denoted by $\tranCl(\CX)$, is
defined as all objects that recursively occur in $\CX$.
All labeling cosets that occur in $\CX$ are succinctly represented via generating sets.
The encoding size of an object $\CX$ can be chosen polynomial in $|\tranCl(\CX)|+|V|+t_{\max}$ where $t_{\max}$ is the maximal length of
a tuple in $\tranCl(\CX)$.

\paragraph{Ordered Objects}
An object is called \emph{ordered} if the ground set $V$ is linearly
ordered.
The linearly ordered ground sets that we consider are always
subsets of natural numbers with their standard ordering ``$<$''.
An object is \emph{unordered}
if $V$ is a usual set (without a given order).
Partially ordered objects in which some, but not all, atoms are
comparable are not considered.
\begin{lem}[\cite{DBLP:conf/stoc/SchweitzerW19}]\label{lem:prec}
There is an ordering ``$\prec$'' on pairs of ordered objects that can be computed in polynomial time.
\end{lem}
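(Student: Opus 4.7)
The plan is to define $\prec$ by induction on the recursive structure of hereditarily finite sets, using the linear order on $V$ as the base case and relying on standard lexicographic comparisons at each composite level. First I would fix four syntactic types in a chosen order: vertex atom, labeling-coset atom, set, tuple. Given two ordered objects $\CX, \CY \in \obj(V)$, I declare $\CX \prec \CY$ whenever the type of $\CX$ precedes that of $\CY$; otherwise I compare within the type.

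For vertex atoms I simply use the given linear order ``$<$'' on $V$. For labeling-coset atoms $\Lambda_1, \Lambda_2 \leq \lab(V)$, I would canonically represent each coset by its lexicographically smallest element: since $V$ is linearly ordered as $v_1 < \cdots < v_n$, every labeling $\pi \in \lab(V)$ is identified with the tuple $(\pi(v_1), \ldots, \pi(v_n)) \in [n]^n$. The lex-smallest labeling in a coset $\Delta\rho$ can be found greedily by iteratively fixing the value at $v_i$ to the smallest integer achievable in the remaining pointwise-stabilizer orbit, which is a standard polynomial-time orbit computation on groups given by generators. I then compare two cosets by comparing their lex-smallest representatives in $[n]^n$.

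For composite objects I proceed recursively. For tuples $\CX = (X_1, \ldots, X_s)$ and $\CY = (Y_1, \ldots, Y_t)$, I use ordinary lexicographic comparison under $\prec$, with length used to break ties. For sets $\CX = \{X_1, \ldots, X_s\}$ and $\CY = \{Y_1, \ldots, Y_t\}$, I first recursively sort each set with respect to $\prec$ into sequences $(X_{i_1} \prec \cdots \prec X_{i_s})$ and $(Y_{j_1} \prec \cdots \prec Y_{j_t})$ and then compare the resulting sequences lexicographically, again breaking ties by length. Antisymmetry, transitivity, and totality follow inductively from the corresponding properties of the base cases and of lexicographic order; two ordered objects compare equal under $\prec$ iff they coincide as hereditarily finite sets.

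The main obstacle is the labeling-coset case, since naively enumerating $\Delta\rho$ is infeasible. This is handled by the greedy minimum-image computation above, which runs in polynomial time in $|V|$ and the generator encoding. Everything else is structural recursion whose overall cost is polynomial in the encoding size of $\CX$ and $\CY$, which by the bound on $|\tranCl(\cdot)| + |V| + t_{\max}$ yields a polynomial-time comparison procedure as required.
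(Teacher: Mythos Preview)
Your overall recursive scheme is the natural one and essentially matches what \cite{DBLP:conf/stoc/SchweitzerW19} does (the present paper does not reprove the lemma but merely cites it). However, there is a genuine gap in your treatment of the labeling-coset atoms.

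You propose to compare two labeling cosets by their lexicographically smallest elements. But the lex-smallest element of a coset does not determine the coset. Take $V=\{1,2\}$ with its natural order, $\Lambda_1:=\{\mathrm{id}\}$ (the trivial group times $\rho=\mathrm{id}$) and $\Lambda_2:=\sym(V)\cdot\mathrm{id}=\lab(V)$. Both cosets contain the identity labeling, and it is the lex-smallest element of each, so under your rule $\Lambda_1$ and $\Lambda_2$ compare equal even though $\Lambda_1\subsetneq\Lambda_2$. Your claimed antisymmetry therefore fails already at this base case, and the resulting $\prec$ is only a total preorder, not a total order. This is fatal for the uses of $\prec$ in the paper: for instance, the induced labeling $\rho[\CX]$ requires distinct elements of $\CX$ to be strictly comparable.

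The repair is cheap. Once the lex-smallest labelings agree and equal some $\pi$, one has $\Lambda_i=\Delta_i\pi$ with $\Delta_i\leq\sym(V)$, and it remains to totally order the groups $\Delta_i$ themselves. Since $V$ is linearly ordered, this is exactly what the canonical generating sets of \cref{lem:canGen} provide: compute the canonical generating set of each $\Delta_i$ and compare the resulting explicit tuples of permutations lexicographically. With this amendment your construction yields a polynomial-time total order as required.
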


\paragraph{Applying Functions to Unordered Objects}
Let $V$ be an unordered ground set and let $V'$
be a ground set that is either ordered or unordered.
The image of an unordered object $\CX\in\obj(V)$
under a bijection $\mu:V\to V'$
is an object $\CX^\mu\in\obj(V')$ that is defined as follows.
\begin{itemize}
  \item $v^\mu:=\mu(v)$,
  \item $(\Delta\rho)^\mu:=\mu^{-1}\Delta\rho$,
  \item $\{X_1,\ldots,X_t\}^\mu:=\{X_1^\mu,\ldots,X_t^\mu\}$ and
  \item $(X_1,\ldots,X_t)^\mu:=(X_1^\mu,\ldots,X_t^\mu)$.
\end{itemize}

\paragraph{Isomorphisms and Automorphisms of Unordered Objects}
The set of all isomorphisms
from an object $\CX\in\obj(V)$ and to an object $\CX'\in\obj(V')$ is denoted by
$\iso(\CX;\CX'):=\{\phi:V\to V'\mid \CX^\phi=\CX'\}$.
The set of all automorphisms of an object $\CX$ is denoted by
$\aut(\CX):=\iso(\CX;\CX)$. Both isomorphisms and
automorphisms are defined for objects that are unordered only.

For two unordered sets $V$ and $V'$, the set $\iso(V;V')$ is also used to denote the set of
all bijections from $V$ to $V'$.
This notation indicates and stresses
that both $V$ and $V'$ have to be unordered.
Additionally, it is used in a context where
$\phi\in\iso(V;V')$ is seen as an isomorphism $\phi\in\iso(\CX;\CX^\phi)$.

\paragraph{Induced Groups and Labeling Cosets}
In the following, let $\CX\in\obj(V)$ be a set
and $\Delta\leq\aut(\CX)\leq\sym(V)$ be a group consisting of automorphisms of $\CX$.
For a permutation $\delta\in\Delta$, we define the permutation \emph{induced} on $\CX$, denoted by $\delta[\CX]$,
as the permutation that maps $X\in\CX$ to $\delta[\CX](X):=X^\delta\in\CX$.
We define
the group $\Delta$ \emph{induced} on $\CX$, denoted by $\Delta[\CX]\leq\sym(\CX)$,
as the group consisting of the elements $\delta[\CX]\in\sym(\CX)$ for $\delta\in\Delta$.
Similarly, for a labeling $\rho$ of $V$, we define the labeling $\rho$ \emph{induced} on $\CX$, denoted by $\rho[\CX]:\CX\to\{1,\ldots,|\CX|\}$,
as the labeling that
orders the elements in $\CX$ according to the ordering ``$\prec$'' from \cref{lem:prec}, i.e.,
$\rho(X_i)<\rho(X_j)$ if and only if $X_i^\rho\prec X_j^\rho$.
Furthermore, for a given labeling cosets $\Delta\rho\leq\lab(V)$, we define the \emph{induced labeling coset} on $\CX$, denoted by
$(\Delta\rho)[\CX]\leq\lab(\CX)$,
as $\Delta[\CX]\rho[\CX]$.

\paragraph{Generating Sets and Polynomial-Time Library}
For the basic theory of handling
permutation groups given by generating sets, we refer to \cite{seress}.
Indeed, most algorithms are based on strong generating sets.
However, given an arbitrary generating set, the Schreier-Sims algorithm is used to compute a
strong generating set (of size quadratic in the degree) in polynomial time.
In particular, we will use that the following tasks can be performed efficiently
when a group is given by a generating set.

\begin{enumerate}
  \item Given a vertex $v\in V$ and a
  group $G\leq\sym(V)$, the Schreier-Sims algorithm can be used to compute
  the pointwise stabilizer $G_{(v)}$ in polynomial time.
  \item Given a group $G\leq\sym(V)$, a subgroup that has a polynomial time membership problem
  can be computed in time polynomial in the index and the degree of the subgroup.
  \item Let $\CS=\Delta_1\rho_1,\ldots,\Delta_t\rho_t\leq\lab(V)$ be
  a sequence of labeling cosets of $V$. We write
  $\langle \CS\rangle$ for the smallest labeling coset $\Lambda$
  such that $\Delta_i\rho_i\subseteq \Lambda$ for all $i\in[t]$.
  Given a representation for
  $\CS$, the coset $\langle \CS\rangle$ can
  be computed in polynomial time.
  Furthermore, the computation of $\langle \CS\rangle$ is
  isomorphism invariant w.r.t.~$\CS$, i.e.,
  $\phi^{-1}\langle \CS\rangle=
\langle \phi^{-1}\CS \rangle$ for
all bijections $\phi:V\to V'$.
\end{enumerate}

\begin{defn}[\cite{DBLP:conf/stoc/SchweitzerW19}]
Let $\CC$ be an isomorphisms-closed class of unordered objects, i.e.,
for all $\CX\in\CC$ over a ground set $V$ and all bijections $\phi:V\to V'$
it holds that $\CX^\phi\in\CC$.
A \emph{canonical labeling function} $\can$ is
a function that assigns each unordered object
$\CX\in\CC$ a labeling coset $\can(\CX)=\Lambda\leq\lab(V)$
such that:

\begin{enumerate}[(\textnormal{CL}1)]
  \item\label{ax:cl1} $\can(\CX)=\phi\can(\CX^\phi)$
  for all $\phi\in\iso(V;V')$, and
  \item\label{ax:cl2} $\can(\CX)=\aut(\CX)\pi$
  for some (and thus for all) $\pi\in\can(\CX)$.
\end{enumerate}
In this case, the labeling coset $\Lambda$ is also called a \emph{canonical
labeling} for $\CX$.
\end{defn}

\begin{lem}[\cite{DBLP:conf/stoc/SchweitzerW19}, Object
Replacement Lemma]\label{lem:rep} Let $\CX=\{X_1,\ldots,X_t\}$ be an object and
let $\can$ and $\canSet$ be canonical labeling functions.
Define $\CX^\set:=\{\Delta_1\rho_1,\ldots,\Delta_t\rho_t\}$
where $\Delta_i\rho_i:=\can(X_i)$ is a canonical labeling for $X_i\in\CX$.
Assume that
$X_i^{\rho_i}=X_j^{\rho_j}$ for all $i,j\in [t]$.
Then, $\canObj(\CX):=\canSet(\CX^\set)$ defines a canonical labeling
for $\CX$.
\end{lem}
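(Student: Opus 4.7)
The plan is to verify axioms (CL1) and (CL2) for the assignment $\canObj(\CX) := \canSet(\CX^\set)$. The key bridge for both axioms is the identity $(\CX^\phi)^\set = (\CX^\set)^\phi$ for any bijection $\phi$: by definition $(\CX^\phi)^\set = \{\can(X_i^\phi) : i \in [t]\}$, and (CL1) applied to $\can$ gives $\can(X_i^\phi) = \phi^{-1}\can(X_i) = \phi^{-1}\Delta_i\rho_i = (\Delta_i\rho_i)^\phi$. Chaining this identity with (CL1) for $\canSet$ immediately yields
$\canObj(\CX) = \canSet(\CX^\set) = \phi\,\canSet((\CX^\set)^\phi) = \phi\,\canSet((\CX^\phi)^\set) = \phi\,\canObj(\CX^\phi)$,
which establishes (CL1) for $\canObj$ with no further work.

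For (CL2), (CL2) applied to $\canSet$ yields $\canObj(\CX) = \canSet(\CX^\set) = \aut(\CX^\set)\pi$ for every $\pi \in \canObj(\CX)$, so the task reduces to proving the equality of permutation groups $\aut(\CX) = \aut(\CX^\set)$. The inclusion $\aut(\CX) \subseteq \aut(\CX^\set)$ is direct: if $\sigma \in \aut(\CX)$ induces a permutation $X_i^\sigma = X_{\kappa(i)}$ of the $t$ elements of $\CX$ for some $\kappa \in \sym([t])$, then (CL1) applied to $\can$ forces $\Delta_{\kappa(i)}\rho_{\kappa(i)} = \can(X_{\kappa(i)}) = \sigma^{-1}\can(X_i) = (\Delta_i\rho_i)^\sigma$, so $\sigma$ permutes the elements of $\CX^\set$ in the same way.

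The harder inclusion $\aut(\CX^\set) \subseteq \aut(\CX)$ is where the hypothesis $X_i^{\rho_i} = X_j^{\rho_j}$ becomes essential, and this will be the main obstacle: without it, $\sigma$ might permute the cosets $\Delta_i\rho_i$ while disturbing the underlying objects $X_i$. The plan is as follows. Given $\sigma \in \aut(\CX^\set)$ with $\sigma^{-1}\Delta_i\rho_i = \Delta_{\kappa(i)}\rho_{\kappa(i)}$, write $\rho_i = \sigma\delta\rho_{\kappa(i)}$ for some $\delta \in \Delta_{\kappa(i)}$. Rewriting the hypothesis as $X_i^{\sigma\delta\rho_{\kappa(i)}} = X_i^{\rho_i} = X_{\kappa(i)}^{\rho_{\kappa(i)}}$ and cancelling the bijection $\rho_{\kappa(i)}$ on the right yields $(X_i^\sigma)^\delta = X_{\kappa(i)}$. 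Since the coset $\Delta_{\kappa(i)}\rho_{\kappa(i)} = \can(X_{\kappa(i)})$ also equals $\aut(X_{\kappa(i)})\rho_{\kappa(i)}$ by (CL2) applied to $\can$, the direction group $\Delta_{\kappa(i)}$ coincides with $\aut(X_{\kappa(i)})$; hence $\delta$ fixes $X_{\kappa(i)}$, and consequently $X_i^\sigma = X_{\kappa(i)}$. Therefore $\sigma$ permutes $\CX$, i.e.\ $\sigma \in \aut(\CX)$, completing the equality of automorphism groups and hence (CL2).
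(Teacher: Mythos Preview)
Your proof is correct. The paper does not reproduce a proof of this lemma but cites \cite{DBLP:conf/stoc/SchweitzerW19}; your verification of (CL1) via the identity $(\CX^\phi)^\set=(\CX^\set)^\phi$ and of (CL2) via the equality $\aut(\CX)=\aut(\CX^\set)$, using the hypothesis $X_i^{\rho_i}=X_j^{\rho_j}$ together with $\Delta_{\kappa(i)}=\aut(X_{\kappa(i)})$ from (CL2) of $\can$, is exactly the intended argument.
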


\section{Handling Small Objects via String Canonization}\label{sec:smallObj}

We consider the canonical labeling problem for a pair $(E,\Delta\rho)$ consisting of an edge relation $E\subseteq V^2$ and a labeling coset $\Delta\rho\leq\lab(V)$.

\problem{\canGraph\label{prob:CL:Graph}}
{$(E,\Delta\rho)\in\obj(V)$ where $E\subseteq V^2$,
$\Delta\rho\leq\lab(V)$ and $V$ is an unordered set}
{(E,\Delta\rho)}
{(E^\phi,\phi^{-1}\Delta\rho)}
{\aut((E,\Delta\rho))}

The automorphism group of $(E,\Delta\rho)$ is precisely $\aut((E,\Delta\rho))=\{\delta\in\Delta\mid (v,w)\in E\iff
(\delta(v),\delta(w))\in E\}$.
For $\Delta\rho=\lab(V)$, this is exactly the canonical labeling problem for directed graphs.
However, for labeling cosets $\Delta\rho\leq\lab(V)$ in general, the
problem is equivalent to the string canonization problem (this can be shown by defining a string $\xx:V^2\to\{0,1\}$
with positions $V^2$ such that $\xx((v,w))=1$ if and only if $(v,w)\in E$).

\begin{theo}[\cite{DBLP:conf/stoc/Babai19}]\label{theo:canGraph}
A function $\canGraph$
for \cref{prob:CL:Graph} can be computed in time
$2^{\polylog|V|}$.
\end{theo}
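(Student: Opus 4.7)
The plan is to reduce the canonical labeling of $(E,\Delta\rho)$ to Babai's quasipolynomial-time string canonization algorithm~\cite{DBLP:conf/stoc/Babai19}, exactly the equivalence alluded to in the remark preceding the theorem. Pick any representative $\rho \in \Delta\rho$ and use it to transport the instance into the standard set $[n]$ with $n=|V|$: set $G := \rho^{-1}\Delta\rho \leq \sym([n])$ and encode $E$ as the binary string $x_E\colon[n]^2\to\{0,1\}$ with $x_E(i,j)=1$ iff $(\rho^{-1}(i),\rho^{-1}(j))\in E$. A routine check using the paper's left-to-right composition convention shows that the conjugation map $\delta\mapsto\rho^{-1}\delta\rho$ identifies $\aut((E,\Delta\rho))$ with the string stabilizer $\{g\in G : x_E^g = x_E\} \leq G$, where $G$ acts diagonally on $[n]^2$.

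Next, I would feed $(x_E,G)$ to Babai's algorithm, viewing $x_E$ as a string on the $G$-set $[n]^2$. Since $|[n]^2|=n^2$, the running time is $2^{\polylog(n^2)}=2^{\polylog|V|}$ as required. The algorithm returns a coset of the form $\{g\in G:x_E^g=x_E\}\cdot\sigma \subseteq \sym([n])$ that is canonical in the string-canonization sense. Translate the result back to $V$ by setting $\canGraph((E,\Delta\rho)) := \aut((E,\Delta\rho))\cdot\rho\sigma \leq \lab(V)$, which is a labeling coset of $V$ by the stabilizer identification above.

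For correctness, (CL2) is immediate from the identification of stabilizers. For (CL1), given $\phi\in\iso(V;V')$, running the algorithm on the isomorphic instance $(E,\Delta\rho)^\phi = (E^\phi,\phi^{-1}\Delta\rho)$ with the induced representative $\phi^{-1}\rho \in \phi^{-1}\Delta\rho$ produces the same transported data $(x_E,G)$ on $[n]$, hence the same $\sigma$. Combined with the standard conjugation identity $\aut((E,\Delta\rho)^\phi) = \phi^{-1}\aut((E,\Delta\rho))\phi$, this yields $\canGraph((E,\Delta\rho)^\phi) = \phi^{-1}\canGraph((E,\Delta\rho))$ as required.

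\textbf{Main obstacle.} Since Babai's canonization is used as a black box, no new combinatorial or group-theoretic ideas enter the proof; the task is almost entirely bookkeeping. The two mildly delicate points are (i) verifying that the output coset does not depend on the choice of representative $\rho\in\Delta\rho$, which holds because any two choices differ by some $\delta_0\in\Delta$ that modifies $\sigma$ by the corresponding element of $G$ and is then absorbed on the left by $\aut((E,\Delta\rho))$, and (ii) confirming that Babai's quasipolynomial bound applies to strings indexed by $[n]^2$ rather than $[n]$, which is immediate since $\polylog(n^2)=\polylog(n)$.
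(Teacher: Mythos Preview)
Your proposal is correct and is essentially what the paper intends: the paper provides no proof at all for this theorem, merely citing Babai and noting just above that ``for labeling cosets $\Delta\rho\leq\lab(V)$ in general, the problem is equivalent to the string canonization problem.'' You have spelled out precisely that equivalence and the bookkeeping needed to invoke Babai's black box, so your write-up is if anything more complete than the paper's treatment.
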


The next problem can be seen as a canonical intersection-problem for labeling cosets.

\problem{\canInt\label{prob:CL:Int}}
{$(\Theta\tau,\Delta\rho)\in\obj(V)$ where $\Theta\tau,\Delta\rho\leq\lab(V)$
and $V$ is an unordered set}
{(\Theta\tau,\Delta\rho)}
{(\phi^{-1}\Theta\tau,\phi^{-1}\Delta\rho)}
{(\Theta\cap\Delta)}

\begin{lem}\label{lem:canInt}
A function $\canInt$
solving \cref{prob:CL:Int} can be computed in time
$2^{\polylog|V|}$.
\end{lem}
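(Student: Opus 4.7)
The plan is to reduce $\canInt$ to the graph canonization procedure $\canGraph$ from \cref{theo:canGraph} via a standard Luks-style ``diagonal encoding'' on a doubled ground set. The key observation is that the group $\Theta \cap \Delta$ which we need to recover can be realized as the stabilizer of a simple edge relation inside the direct-product group $\Theta \times \Delta$ acting on two disjoint copies of $V$, so that a single call to $\canGraph$ produces the desired intersection coset.

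Concretely, I would take $W := V_1 \sqcup V_2$, two disjoint copies of $V$ with a fixed bijection $v \leftrightarrow v^*$, and place on $W$ the edge relation $E := \{(v, v^*) : v \in V_1\}$. On the group side I would consider $\Theta \times \Delta \leq \sym(W)$, where $\Theta$ acts on $V_1$ and $\Delta$ acts on $V_2$ (via the identification), each fixing the other copy pointwise. A direct check gives $\aut(E) \cap (\Theta \times \Delta) = \{(\sigma,\sigma) : \sigma \in \Theta \cap \Delta\}$, since $(\theta,\delta)$ preserves $E$ if and only if $\theta = \delta$ on $V$. To carry along the two labelings, I form the labeling coset $(\Theta \times \Delta)\mu \leq \lab(W)$, where $\mu$ uses $\tau$ on $V_1$ (mapping to $\{1,\dots,|V|\}$) and a shifted version of $\rho$ on $V_2$ (mapping to $\{|V|+1,\dots,2|V|\}$).

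I would then call $\canGraph(E, (\Theta \times \Delta)\mu)$ to obtain a labeling coset $\Lambda \leq \lab(W)$ of the diagonal of $\Theta \cap \Delta$. Because $\Theta \times \Delta$ preserves the bipartition $W = V_1 \sqcup V_2$ and $\mu$ sends $V_1$ bijectively onto $\{1,\dots,|V|\}$, every $\xi \in \Lambda$ restricts to a labeling $\xi|_{V_1} : V_1 \to \{1,\dots,|V|\}$. Identifying $V_1$ with $V$, the set of these restrictions is a coset $(\Theta \cap \Delta)\pi \leq \lab(V)$, which I output as $\canInt(\Theta\tau,\Delta\rho)$. This yields (CL2) immediately.

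For (CL1), any $\phi \in \iso(V;V')$ extends canonically to $\phi_W \in \iso(W;W')$ between the two doubled ground sets and pulls back $(E, (\Theta \times \Delta)\mu)$ to the analogous pair built from $(\phi^{-1}\Theta\tau, \phi^{-1}\Delta\rho)$. The (CL1)-property of $\canGraph$ then propagates through the restriction-to-$V_1$ map, which plainly commutes with the induced $\phi$-action. The main point requiring care is the bookkeeping around this symmetric setup of the bipartition and the shifted label range, so as to ensure that no asymmetry between $V_1$ and $V_2$ creeps in and breaks isomorphism-invariance. The running time is dominated by one call to $\canGraph$ on a ground set of size $2|V|$, giving $2^{\polylog(2|V|)} = 2^{\polylog|V|}$.
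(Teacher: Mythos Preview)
Your proposal is correct and is precisely the kind of reduction the paper has in mind: the paper's proof is a one-line citation to \cite{DBLP:conf/stoc/SchweitzerW19} stating that $\canInt$ reduces in polynomial time to $\canGraph$, and your diagonal encoding on the doubled ground set is a standard concrete realization of that reduction. One minor remark: your worry about ``asymmetry between $V_1$ and $V_2$'' is unnecessary, since the input $(\Theta\tau,\Delta\rho)$ is an \emph{ordered} pair and (CL1) only demands invariance under $\phi\in\iso(V;V')$, not under swapping the two cosets; the asymmetric role of $V_1$ versus $V_2$ is therefore harmless.
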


\begin{proof}
It is know that this problem reduces to graph canonization in polynomial time \cite{DBLP:conf/stoc/SchweitzerW19}.
\end{proof}

Next, we define the central problem of this paper which is
introduced in \cite{DBLP:conf/icalp/GroheNSW18},\cite{DBLP:conf/stoc/SchweitzerW19}.
This problem is a canonical version of the multiple-coset isomorphism problem.

\problem{\canSet\label{prob:CL:Set0}}
{$J\in\obj(V)$ where $J=\{\Delta_1\rho_1,\ldots,\Delta_t\rho_t\}$,
$\Delta_i\rho_i\leq\lab(V)$ for all $i\in[t]$ and $V$ is an unordered set}
{(J)}
{(J^\phi)}
{\aut(J)}

The automorphism group of $J$ is precisely $\aut(J)=\{\sigma\in\sym(V)\mid\exists\psi\in\sym(t)
\forall
i\in[t]:\sigma^{-1}\Delta_i\rho_i
=\Delta_{\psi(i)}\rho_{\psi(i)}\}$.
We explain why this problem is the central problem when dealing with graph decompositions.

\paragraph{The Intuition Behind this Central Problem}
We want to keep this subsection as simple as possible and do not want to introduce tree decompositions yet.
For our purpose, we consider a simplified
formulation of a graph decomposition.
In this subsection, a graph decomposition of a graph $G=(V,E)$ is a family of subgraphs $\{H_i\}_{i\in[t]}$
that covers the edges of the entire graph, i.e., $E(G)=E(H_1)\cup\ldots\cup E(H_t)$.
We say that a graph decomposition is defined in an isomorphism-invariant way
if for two isomorphic graphs $G,G'$ the decompositions $\{H_i\}_{i\in[t]},\{H_i'\}_{i\in[t]}$ are defined in such a way
that
each isomorphism $\phi\in\iso(G;G')$
also maps each subgraph $H_i$ of the decomposition of $G$ to a subgraph $H_j'$ of the decomposition of $G'$.
In particular, such a decomposition has to be invariant under automorphisms of the graph.

Assume we have given a graph $G$ for which we can construct a graph decomposition $\{H_i\}_{i\in[t]}$ in an isomorphism-invariant way
and our task is the computation of a canonical labeling for $G$.
A priori, it is unclear how to exploit our graph decomposition.
In a first step, we could compute canonical labelings $\Delta_i\rho_i:=\can(H_i)$ for each subgraph $H_i$ recursively.
The central question is how to merge these labeling cosets $\Delta_i\rho_i$ for $H_i$ in order to obtain a canonical labeling $\Delta\rho$
for the entire graph $G$.

The easy case occurs when all subgraphs $H_i,H_j$ are pairwise non-isomorphic.
In this case, the subgraphs cannot be mapped to each other and indeed $\aut(G)=\aut(H_1)\cap\ldots\cap\aut(H_t)$.
Therefore, the computation of $\Delta\rho$ reduces to a canonical intersection-problem.
In fact, the algorithm from \cref{lem:canInt} can be used to compute canonical labelings $\Delta_{ij}\rho_{ij}:=\canInt(\Delta_i\rho_i,\Delta_j\rho_j)$
with $\Delta_{ij}=\Delta_i\cap\Delta_j=\aut(H_i)\cap\aut(H_j)$.
By an iterated use of that canonical intersection-algorithm, we can finally compute $\Delta\rho$ with
$\Delta=\Delta_1\cap\ldots\cap\Delta_t=\aut(H_1)\cap\ldots\cap\aut(H_t)=\aut(G)$.
Actually, the order in which we ``intersect'' the canonical labelings $\Delta_i\rho_i$ does matter
and we need to be careful in order to ensure isomorphism invariance (CL1).
(For example, there might be a canonical labeling function with
$\canInt((\lab(V),\Delta\rho))=\Delta\rho$ and $\canInt((\Delta\rho,\lab(V)))=\Delta\rho\pi$
for $\Delta\rho\neq\lab(V)$
where $\pi$ is a permutation of $\{1,\ldots,|V|\}$ that swaps 1 and 2 and fixes all other elements.
Clearly, $\canInt$ can be extended to a canonical labeling function satisfying (CL1) and (CL2).
However, $\canInt((\lab(V),\Delta\rho))\neq\canInt((\Delta\rho,\lab(V)))$).

Let us consider the second extreme case in which all subgraphs $H_i,H_j$ are pairwise isomorphic.
In such a case, we have that $\aut(G)=\{\sigma\in\sym(V)\mid\exists\psi(t)\forall i\in[t]:\sigma\in\iso(H_i;H_{\psi(i)})\}$.
Equivalently, we have that $\aut(G)=\aut(\{\Delta_1\rho_1,\ldots,\Delta_t\rho_t\})$.
Therefore,
by the definition of \cref{prob:CL:Set0}, the canonical labeling $\Delta\rho:=\canSet(\{\Delta_1\rho_1,\ldots,\Delta_t\rho_t\})$
defines a canonical labeling for the entire graph $G$.
Alternatively, one can use object replacement (\cref{lem:rep}) which intuitively
says that for the purpose of canonization the subgraphs $H_i$ can be replaced with
their labeling cosets $\Delta_i\rho_i$.
This also shows that $\Delta\rho:=\canSet(\{\Delta_1\rho_1,\ldots,\Delta_t\rho_t\})$
define a canonical labeling for the entire graph $G$.
Roughly speaking, \cref{prob:CL:Set0} can be seen as the task of merging
the given labeling cosets.

The mixed case in which some (but not all) subgraphs $H_i,H_j$ are isomorphic can be handled by a mixture of
the above cases.

The main algorithm (\cref{theo:canSet}) solves \cref{prob:CL:Set0} in a running time of
$(|V|+|J|)^{\polylog|V|}$.
In \cref{sec:treewidth}, we apply this problem
to graphs $G$ with $n$ vertices of treewidth $k$. In fact, we are able to bound $|V|\leq k$
and $|J|\leq n$ in this application which leads to the desired running time of $n^{\polylog(k)}$.

But first of all, we give a simple algorithm that has a weaker running time which is quasipolynomial in $|V|+|J|$.

\begin{lem}\label{lem:canSet0}
A function $\canSet$
solving \cref{prob:CL:Set0} can be computed in time
$(|V|+|J|)^{\polylog(|V|+|J|)}$.
\end{lem}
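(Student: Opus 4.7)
The plan is to reduce \cref{prob:CL:Set0} to a single call of $\canGraph$ on an auxiliary instance whose ground set $W$ has size polynomial in $N:=|V|+|J|$ and then invoke \cref{theo:canGraph}. Since that theorem runs in time $2^{\polylog|W|}$ and $2^{\polylog(\operatorname{poly}(N))}=N^{\polylog N}$, this would yield the claimed bound of $(|V|+|J|)^{\polylog(|V|+|J|)}$. Note that the instance $J$ itself has an encoding of size polynomial in $N$: each coset $\Delta_i\rho_i$ is specified by a strong generating set of $\Delta_i$ (of size $\CO(|V|^2)$ by Schreier--Sims) together with the representative $\rho_i$.

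To set up the reduction, I write $J=\{\Delta_1\rho_1,\ldots,\Delta_t\rho_t\}$ and enlarge the ground set to $W = V\cupdot W_1\cupdot\cdots\cupdot W_t$, where each $W_i$ is a fresh copy of a fixed auxiliary set of size $\operatorname{poly}(|V|)$ used to encode one coset. I construct an edge relation $E\subseteq W^2$ that attaches, between $V$ and each $W_i$, a uniform isomorphism-invariant gadget representing $\Delta_i\rho_i$, together with colour markings that distinguish $V$ from the blocks $W_i$ (realized via additional edges or vertex-colour edges). The labeling coset $\Theta\tau\leq\lab(W)$ is chosen so that the permutations of $W$ it allows are exactly those acting on $V$ by some $\sigma\in\sym(V)$ while simultaneously permuting the blocks $W_1,\ldots,W_t$ as a whole, mimicking the wreath product $\sym(V)\wr\sym(t)$ restricted to this structure. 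I then call $\canGraph((E,\Theta\tau))$ to obtain a canonical labeling $\Lambda\leq\lab(W)$, and define $\canSet(J)$ by projecting $\Lambda$ down to $V$.

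Properties (CL1) and (CL2) should carry over from those of $\canGraph$: isomorphism invariance of the encoding in $\phi\in\iso(V;V')$ yields (CL1), while the fact that the gadget on $V\cupdot W_i$ has induced automorphism group $\Delta_i$ on $V$, combined with the wreath-product-like $\Theta\tau$, ensures that the automorphism group of $(E,\Theta\tau)$ projects precisely onto $\aut(J)\leq\sym(V)$, giving (CL2). The running time is dominated by the single call to $\canGraph$ on an input of size $\operatorname{poly}(N)$, giving $2^{\polylog N} = N^{\polylog N}$.

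The main obstacle is the correct design of the per-coset gadget: it must be such that two gadgets (for $\Delta_i\rho_i$ and $\Delta_j\rho_j$) are graph-isomorphic via a bijection $W_i\to W_j$ that restricts on $V$ to some $\sigma\in\sym(V)$ if and only if $\sigma^{-1}\Delta_i\rho_i=\Delta_j\rho_j$, and the isomorphisms realising this must be exactly the elements of $\Delta_j\rho_j(\sigma^{-1}\cdot)$. I expect to achieve this by a bipartite ``labelled permutation'' construction between $V$ and $W_i$ (with $W_i$ playing the role of $[|V|]$) whose edges together with colour classes force the induced automorphism group on $V$ to be $\Delta_i$ and force every isomorphism to restrict to an element of the coset $\Delta_i\rho_i$; such encodings are standard in the combinatorial-object framework of \cite{DBLP:conf/stoc/SchweitzerW19}.
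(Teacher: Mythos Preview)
Your high-level plan—blow up the ground set by a factor of roughly $|J|$, build an instance of $\canGraph$, and restrict the resulting canonical labeling back to $V$—is exactly what the paper does. Where you diverge is in \emph{where} you put the information about the cosets $\Delta_i\rho_i$.

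The paper does \emph{not} encode the cosets in the edge relation. It takes $U:=V\cupdot V_1\cupdot\ldots\cupdot V_t$ with each $V_i$ a fresh copy of $V$, uses the trivial edge set $E=\{(v,(v,i))\mid v\in V,\ i\in[t]\}$ that merely links each copy to the original, and pushes all the coset data into the labeling coset $\Delta_U\rho_U\leq\lab(U)$: on each block $V_i$ only labelings coming from $\Delta_i\rho_i$ (shifted by an offset) are allowed, and the offsets of the blocks are ordered by the ordered groups $\Delta_i^\ord:=\rho_i^{-1}\Delta_i\rho_i$. Because $\canGraph$ already accepts an arbitrary labeling coset as part of its input, this costs nothing, and verifying (CL1) and (CL2) becomes a short direct computation.

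Your route instead keeps $\Theta\tau$ generic (a wreath-product-type coset) and tries to encode each $\Delta_i\rho_i$ via a polynomial-size graph gadget between $V$ and $W_i$. You correctly flag the gadget design as ``the main obstacle,'' but you do not actually build it, and the difficulty is real: you need, for every labeling coset $\Delta_i\rho_i$, a graph on $V\cup W_i$ whose isomorphisms (over all $\sigma\in\sym(V)$) realize precisely the relation $\sigma^{-1}\Delta_i\rho_i=\Delta_j\rho_j$, and the construction itself must be isomorphism-invariant in $\phi\in\iso(V;V')$ so that (CL1) goes through. Saying this is ``standard'' overstates the case; it is exactly this step that the paper circumvents by exploiting the coset slot of $\canGraph$. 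Your proposal is not wrong in spirit, but as written it leaves the central construction undone, whereas the paper's encoding-in-the-coset trick dispatches the whole lemma in a few lines.
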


The proof is similar to the proof of Lemma 23 in the arXiv version of \cite{DBLP:conf/icalp/GroheNSW18}.
By increasing the permutation domain $V$ by a factor $|J|$, \cref{prob:CL:Set0} can actually be reduced to a graph canonization problem.
For the sake of completeness, we give the detailed proof in \cref{app}.

We consider the canonization problem for combinatorial objects.

\problem{\canObj\label{prob:CL:Obj}}
{$\CX\in\obj(V)$ where $V$ is an unordered set}
{(\CX)}
{(\CX^\phi)}
{\aut(\CX)}

For an object $\CX\in\obj(V)$, let $t_{\max}(\CX)$ be the size of the largest set involved in $\CX$, i.e.,
$t_{\max}(v)=0$ and $t_{\max}(\Delta\rho)=0$ for vertices $v\in V$ and labeling cosets $\Delta\rho\leq\lab(V)$
and inductively $t_{\max}((X_1,\ldots,X_s))=\max_{i\in[s]}t_{\max}(\CX_i)$
and $t_{\max}(\{X_1,\ldots,X_s\})=\max\{\max_{i\in[s]}t_{\max}(\CX_i),s\}$.
It is known that canonical labeling for combinatorial objects (on a ground set $V$) reduces to canonical labeling for
instances of \cref{prob:CL:Int} (on the same ground set $V$) and instances of
\cref{prob:CL:Set0} (on the same ground set $V$ and of size $t_{\max}$) in polynomial time \cite{DBLP:conf/stoc/SchweitzerW19}.
Therefore, \cref{prob:CL:Set0} is a central problem when canonizing
combinatorial objects in general.

\begin{cor}\label{cor:canObj0}
A function $\canObj$
solving \cref{prob:CL:Obj} can be computed in time
$2^{\polylog(|V|+t_{\max})}n^{\CO(1)}$
where $n$ is the input size (as defined in the preliminaries) and $t_{\max}\leq n$ is the size of the largest set involved in $\CX$.
\end{cor}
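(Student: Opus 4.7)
The plan is to follow the polynomial-time reduction from $\canObj$ to $\canInt$ and $\canSet$ mentioned in the paragraph above the statement (and attributed to \cite{DBLP:conf/stoc/SchweitzerW19}), and then bound the total cost using \cref{lem:canInt} and \cref{lem:canSet0}. Concretely, I recurse on the structure of the hereditarily finite set $\CX$. At an atom (a vertex $v\in V$ or a labeling coset $\Delta\rho\leq\lab(V)$) a canonical labeling is returned in polynomial time. At a tuple node $(X_1,\ldots,X_t)$, since $\aut((X_1,\ldots,X_t))=\bigcap_i\aut(X_i)$, I recursively compute $\Lambda_i:=\canObj(X_i)$ and iteratively merge the $\Lambda_i$ by $\canInt$ in the order given by the tuple. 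At a set node $\{X_1,\ldots,X_t\}$, I recursively compute $\Delta_i\rho_i:=\canObj(X_i)$, sort the entries into isomorphism classes by comparing the canonical forms $X_i^{\rho_i}$ via \cref{lem:prec}, and within each class apply the Object Replacement Lemma (\cref{lem:rep}) to reduce to a single call of $\canSet$ on the set of labeling cosets $\{\Delta_i\rho_i\}$ belonging to that class. The canonical labelings coming from the isomorphism classes are finally merged by $\canInt$ in the canonical order of the classes, so that (CL1) and (CL2) are preserved.

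For the running time, the recursion generates one call per element of $\{\CX\}\cup\tranCl(\CX)$, hence at most $n^{\CO(1)}$ calls in total (since the encoding size of $\CX$ is polynomial in $|\tranCl(\CX)|+|V|+t_{\max}$, each bounded by $n$). Every call to $\canInt$ costs $2^{\polylog|V|}$ by \cref{lem:canInt}, and every call to $\canSet$ is made on at most $t_{\max}$ labeling cosets over $V$ and therefore costs $(|V|+t_{\max})^{\polylog(|V|+t_{\max})}=2^{\polylog(|V|+t_{\max})}$ by \cref{lem:canSet0}. Summing over all recursive calls yields a total time of $n^{\CO(1)}\cdot 2^{\polylog(|V|+t_{\max})}$, which is bounded by $n^{\polylog(|V|+t_{\max})}$ as required.

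The main obstacle is the set case: \cref{lem:rep} only applies to sets of pairwise isomorphic entries, so one first has to split $\{X_1,\ldots,X_t\}$ into isomorphism classes via canonical sorting of the $X_i^{\rho_i}$, then canonize each class by $\canSet$, and finally glue the per-class canonical labelings together via $\canInt$ in a canonical order of the classes. Both the sorting and the gluing have to be organized carefully so as to preserve (CL1), but once this bookkeeping is in place, the remaining pieces — base cases, tuple intersections, and the running-time accounting — are direct applications of \cref{lem:canInt} and \cref{lem:canSet0}.
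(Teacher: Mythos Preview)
Your proposal is correct and follows the same approach as the paper: the paper merely cites the polynomial-time reduction of \cite{DBLP:conf/stoc/SchweitzerW19} to instances of \cref{prob:CL:Int} and \cref{prob:CL:Set0} and then invokes \cref{lem:canInt} and \cref{lem:canSet0}, while you spell out that reduction explicitly (atoms, tuples via iterated $\canInt$, sets via canonical sorting plus \cref{lem:rep} and $\canSet$) and carry out the same time accounting. One small point worth making explicit is that the recursion should be organized bottom-up along $\tranCl(\CX)$ (or memoized) so that shared subobjects are processed only once; otherwise your claim of ``one call per element of $\{\CX\}\cup\tranCl(\CX)$'' could fail for DAG-shaped objects.
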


A later algorithm (\cref{cor:canObj}) shows that canonical labelings for combinatorial objects can actually
be computed in
time $n^{\polylog|V|}$ (or more precise $(|V|+t_{\max})^{\polylog|V|}n^{\CO(1)}$).

\section{Canonization of $k$-ary Relations}\label{sec:rel}

In this section, we consider the canonization problem for $k$-ary relations.
As graphs can be seen as binary relations, this problem clearly generalizes the graph canonization problem.

\problem{\canRel\label{prob:CL:Rel}}
{$R\in\obj(V)$ where $R\subseteq V^k$ for some $k\in\NN$ and $V$ is an unordered set}
{(R)}
{(R^\phi)}
{\{\sigma\in\sym(V)\mid(x_1,\ldots,x_k)\in R\iff (\sigma(x_1),\ldots,\sigma(x_k))\in R\}}

One way to canonize $k$-ary relations is by using a well-known reduction to the graph canonization problem \cite{DBLP:journals/jcss/Miller79}.
Alternatively, the algorithm from \cref{cor:canObj0} for combinatorial objects in general could also be applied to
$k$-ary relations.
However, both approaches lead to a running time that is quasipolynomial in $|V|+|R|$, i.e., $2^{\polylog(|V|+|R|)}$.
In this section, we will give a polynomial-time reduction to the canonization problem
for combinatorial objects which are of input size polynomial in $|V|$ (which does not depend on $|R|$).
With this reduction, we obtain an improved algorithm that runs in time $2^{\polylog|V|}|R|^{\CO(1)}$.
Our bound improves the currently best algorithm from \cite{DBLP:conf/focs/GroheNS18}.
Moreover, our time bound is also optimal (when measured in $|V|$ and $|R|$) as long as the graph isomorphism problem can not be solved faster
than quasipolynomial time.

\paragraph{Partitions}
An (unordered) \emph{partition} of a set $\CX\in\obj(V)$ is a set $\CP=\{P_1,\ldots,P_p\}$ such that $\CX=P_1\cupdot\ldots\cupdot P_p$
where $\emptyset\neq P_i\subseteq\CX$ for all $P_i\in\CP$.
In the algorithms that follow, our constructions can lead to ``partitions'' with a non-empty part.
In such a case, we implicitly forget about the empty set in the partition.
We say that $\CP$ is the \emph{singleton partition} if $|\CP|=1$
and we say that $\CP$ is the \emph{partition into singletons} if $|P_i|=1$ for all $P_i\in\CP$.
A partition $\CP$ is called \emph{trivial} if $\CP$ is the singleton partition or the partition into singletons.

\paragraph{The Partitioning Technique}
We suggest a general technique for exploiting partitions.
In this setting, we assume that we are given some object $\CX\in\obj(V)$ for which we can construct a partition $\CP=\{P_1,\ldots,P_p\}$
in an isomorphism-invariant way such that
$2\leq|\CP|\leq 2^{\polylog|V|}$.
The goal is the computation of a canonical labeling for $\CX$ by using an efficient recursion.

Using recursion, we compute a canonical labeling $\Delta_i\rho_i$ for each part $P_i\subseteq\CX$ recursively
(assumed that we can define a partition for each part again).
So far, we computed canonical labelings for each part $P_i\subseteq\CX$ independently.
The main idea is to use our central problem (\cref{prob:CL:Set0})
to merge all these labeling cosets.
Let us restrict our attention to the case in which the parts $P_i,P_j\in\CP$ are pairwise isomorphic.
In this case, we define the set $\CP^\set:=\{\Delta_i\rho_i\mid P_i\in\CP\}$ consisting of the canonical labelings $\Delta_i\rho_i$
for each part.
Moreover, by object replacement (\cref{lem:rep}), a canonical labeling for $\CP^\set$ defines a canonical labeling
for $\CP$ as well.
A canonical labeling for $\CP$ in turn defines a canonical labeling
for $\CX$ since we assume the partition to be defined in an isomorphism-invariant way.
Therefore, it is indeed true that a canonical labeling for $\CP^\set$ would define a
canonical labeling for $\CX$.
For this reason, we can use the algorithm from \cref{lem:canSet0}
to compute a canonical labeling for $\CP^\set$.
The algorithm runs in the desired time bound since
$|\CP^\set|=|\CP|\leq 2^{\polylog|V|}$
is bounded by some quasipolynomial.

Let us consider the number of recursive calls $R(\CX)$ of this approach for a given object $\CX$.
Since we recurse on each part $P_i\in\CP$,
we have a recurrence of $R(\CX)=1+\sum_{P_i\in\CP} R(P_i)$ leading to at most $|\CX|^{\CO(1)}$ recursive calls.
The running time for one single recursive call is bounded by $2^{\polylog|V|}$.
For this reason, the total running time is bounded by $2^{\polylog|V|}|\CX|^{\CO(1)}$.

\begin{theo}\label{theo:canRel}
A function $\canRel$
solving \cref{prob:CL:Rel} can be computed in time
$2^{\polylog|V|}|R|^{\CO(1)}$.
\end{theo}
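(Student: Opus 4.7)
The plan is to apply the partitioning technique outlined in the paragraphs preceding the theorem. Given $R\subseteq V^k$, partition $R$ by its first coordinate: for each $v\in V$ occurring as the first coordinate of some tuple, set $R_v:=\{(v,x_2,\ldots,x_k)\in R\}$, and let $\CP:=\{R_v : R_v\neq\emptyset\}$. This yields a partition of $R$ with $|\CP|\leq|V|\leq 2^{\polylog|V|}$, and it is isomorphism-invariant, since any $\phi\in\iso(V;V')$ sends $R_v$ bijectively to $(R^\phi)_{\phi(v)}$.

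Next, I would recursively canonize each part. Each $R_v$ is essentially the pair $(v,R_v')$ where $R_v':=\{(x_2,\ldots,x_k):(v,x_2,\ldots,x_k)\in R\}\subseteq V^{k-1}$ is a $(k-1)$-ary relation, so canonizing $R_v$ reduces via the combinatorial-object framework to recursively canonizing $R_v'$ together with the distinguished vertex $v$. The base cases $k\in\{0,1\}$ are handled directly.

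Once canonical labelings $\Delta_v\rho_v:=\can(R_v)$ have been computed for each non-empty part, I would apply $\canSet$ from \cref{lem:canSet0} to the replacement set $\{\Delta_v\rho_v\}$; together with object replacement (\cref{lem:rep}) and the fact that the isomorphism-invariant partition $\CP$ determines $R$, this yields a canonical labeling for $R$.

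For the running time, the recursion tree has depth at most $k$ and at most $|R|$ leaves, giving $\CO(k\cdot|R|)$ internal nodes in total (the nodes at level $i$ correspond to the distinct length-$i$ prefixes of tuples of $R$). At each node, $\canSet$ is invoked on at most $|V|$ labeling cosets of $V$, which by \cref{lem:canSet0} costs $(|V|+|V|)^{\polylog|V|}=2^{\polylog|V|}$. The total running time is therefore $2^{\polylog|V|}|R|^{\CO(1)}$, as claimed. I expect the main obstacle to be the mixed case in which parts have different canonical forms: applying object replacement and $\canSet$ correctly then requires grouping the parts by canonical form before the replacement, and combining those groups in an isomorphism-invariant way so that (CL1) is maintained across the recursion.
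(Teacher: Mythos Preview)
Your approach is close to the paper's but differs in one key design choice: you recurse by reducing the arity $k$ (always partitioning on the first coordinate), whereas the paper recurses by reducing $|R|$ (partitioning on the first coordinate $r$ at which two tuples of $R$ actually differ). The paper's choice guarantees a non-trivial partition with at least two parts, so each part has strictly fewer tuples; the recurrence $N(R)=1+\sum_v N(P_v)$ with $\sum_v|P_v|=|R|$ then yields $N(R)\le|R|^2$ recursive calls, independent of $k$. Your depth-$k$ tree gives $\CO(k\cdot|R|)$ calls, which matches the stated bound $2^{\polylog|V|}|R|^{\CO(1)}$ only if $k$ is absorbed into the input-size polynomial---not wrong, but looser than what the theorem claims literally.

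On the mixed case you flag as the main obstacle: the paper's resolution is simpler than the grouping-by-canonical-form you sketch. Instead of invoking $\canSet$ on the bare set $\{\Delta_v\rho_v\}$, the paper forms
\[
\CP^\set:=\{(\Delta_v\rho_v,\,P_v^{\rho_v})\mid P_v\in\CP\}
\]
---pairing each labeling coset with the ordered canonical form of its part---and calls $\canObj$ (\cref{cor:canObj0}) on that. The ordered objects $P_v^{\rho_v}$ distinguish non-isomorphic parts automatically, and since they live over $\{1,\dots,|V|\}$ they can be encoded using only tuples, keeping $t_{\max}\le|V|$ so that $\canObj$ still runs in $2^{\polylog|V|}$. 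This sidesteps the precondition $X_i^{\rho_i}=X_j^{\rho_j}$ of the Object Replacement Lemma entirely; no separate grouping step is needed.
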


\begin{proof}
\algorithmDAN{\canRel(R)}
\begin{cs}

\case{$|R|\leq 1$}~\\
Compute and return $\Lambda:=\canObj(R)$ using \cref{cor:canObj0}.\\
\comment{Since the size of the largest set involved in $R$ is the set $R$ itself,
the algorithm from \cref{cor:canObj0} runs in time $2^{\polylog|V|}$}

\case{$|R|\geq 2$}~\\
\comment{In this case, it is possible to
define a partition $\CP$ of $R$ in an isomorphism-invariant way.
We will use this partition for a recursion as described in the partitioning technique}\\
Let $r$ be the first position in which
$R$ differs, i.e., the smallest $r\in[k]$
such that there are $(x_1,\ldots,x_k),(y_1,\ldots,y_k)\in R$
with $x_r\neq y_r$.\\
Define an (unordered) partition $\CP:=\{P_v\mid v\in V\}$ of $R=\bigcupdot_{v\in V}P_v$
where
$P_v:=
\{(x_1,\ldots,x_k)\in R\mid x_r=v\}$.\\
\comment{By the choice of $r\in[k]$, this is not the singleton partition. On the other side,
the size $|\CP|\leq|V|$ is obviously bounded by a quasipolynomial in $|V|$}\\
Compute $\Delta_v\rho_v:=\canRel(P_v)$
for each subrelation $P_v\in\CP$ recursively.\\
Define $\CP^\set:=\{(\Delta_v\rho_v,v)\mid P_v\in\CP\}$.\\
\comment{We define an ordering according to the isomorphism type of the subrelations $P_v,v\in V$}\\
Define an \emph{ordered} partition $\PP:=(\CP_1,\ldots,\CP_p)$ of $\CP^\set=\CP_1\cupdot\ldots\cupdot\CP_p$ such that:\\
$P_v^{\rho_v}\prec P_w^{\rho_w}$, if and only if $(\Delta_v\rho_v,v)\in\CP_i$ and $(\Delta_w\rho_w,w)\in\CP_j$ for some $i,j\in[p]$ with $i<j$.\\
Compute $\Lambda_i:=\canSet(\CP_i)$ for each $\CP_i,i\in[p]$ using \cref{lem:canSet0}.\\
\comment{
Since $|\CP_i|\leq|\CP^\set|=|V|$,
the algorithm from \cref{lem:canSet0} runs in the desired time bound, i.e., $2^{\polylog|V|}$}\\
Compute and return $\Lambda
:=\canObj((\Lambda_1,\ldots,\Lambda_p))$ using \cref{cor:canObj0}.\\
\comment{
Since $(\Lambda_1,\ldots,\Lambda_p)$ is a tuple consisting of atoms, no set is involved in this object.
Therefore, the algorithm from \cref{cor:canObj0} also runs in the desired time bound, i.e., $2^{\polylog|V|}$}
\end{cs}

\begin{cl1}
We consider the Case $|R|\geq 2$.
Assume we have $R^\phi$ instead of
$R$ as an input.
We obtain the partition $\CP^\phi$
instead of $\CP$.
By induction, we compute $\phi^{-1}\Delta_v\rho_v$ instead
of $\Delta_v\rho_v$ and
obtain $(\CP^\set)^\phi$ instead of $\CP^\set$.
By (CL1) of $\canSet$, we obtain
$\phi^{-1}\Lambda_i$ instead of
$\Lambda_i$.
By (CL1) of $\canObj$, we obtain
$\phi^{-1}\Lambda$ instead of
$\Lambda$, which was what we wanted to show.
\end{cl1}

\begin{cl2}
We consider the Case $|R|\geq 2$.
We return $\Lambda=\canObj((\Lambda_1,\ldots,\Lambda_p))$.
By object replacement (\cref{lem:rep}), the labeling coset
$\Lambda$ defines a canonical labeling for $(\CP_1,\ldots,\CP_p)$ as well.
The \emph{ordered} partition $\PP=(\CP_1,\ldots,\CP_p)$ of $\CP^\set$ is defined in an isomorphism-invariant way
and therefore $\Lambda$ defines a canonical
labeling for $\CP^\set$.
Since $\PP$ orders $\CP^\set$ according to the isomorphism type of the subrelations
the object replacement lemma (\cref{lem:rep}) implies that
$\Lambda$ defines a canonical labeling for $\CP$ as well.
The (unordered) partition $\CP=\{P_v\mid v\in V\}$ of $R$ is defined in an isomorphism-invariant way
and therefore $\Lambda$ defines a canonical
labeling for $R$, which was what we wanted to show.
\end{cl2}

\begin{runtime}
We claim that the number of recursive calls $N(R)$ is at most $T:=|R|^2$. By induction, it can be seen that
\begin{equation*}
N(R)= 1+\sum_{v\in V} N(P_v)
\overset{\text{induction}}{\leq}
1 + \sum_{v\in V} |P_v|^2\leq T.
\end{equation*}
We consider the running time of one single recursive call.
The algorithm from \cref{cor:canObj0} runs in time $2^{\polylog|V|}$.
Therefore, the total running time is bounded by $2^{\polylog|V|}|R|^{\CO(1)}$.
\end{runtime}
\end{proof}

\section{Canonization of Hypergraphs}\label{sec:hyper}

In this section, we consider hypergraphs and later so-called coset-labeled hypergraphs.

\problem{\canHyper\label{prob:CL:Hyper}}
{$H\in\obj(V)$ where
$H=\{S_1,\ldots,S_t\}$, $S_i\subseteq V$ for all $i\in[t]$ and $V$ is an unordered set}
{(H)}
{(H^\phi)}
{\{\sigma\in\sym(V)\mid S\in H\iff S^\sigma\in H\}}

We want to extend the previous partitioning technique to hypergraphs.
However, for hypergraphs a non-trivial isomorphism-invariant partition $H=H_1\cupdot\ldots\cupdot H_s$
of the edge set does not always exist, e.g.,
the hypergraph $(V,\{S\subseteq V\mid |S|=2\})$ does not have a non-trivial partition of the edge set that is preserved
under automorphisms.
Therefore, we can not apply the partitioning technique to this setting.
For this reason, we introduce a generalized technique in order to solve this problem.
This generalized technique results in a slightly weaker time bound of $(|V|+|H|)^{\polylog|V|}$ (where the dependency on $|H|$ is not polynomial).
Indeed, it is an open problem whether the running time for the hypergraph isomorphism problem can be improved to $2^{\polylog|V|}\cdot |H|^{\CO(1)}$ \cite{Babai2018GroupsG}.

\paragraph{Covers}
A \emph{cover} of a set $\CX\in\obj(V)$ is a set $\CC=\{C_1,\ldots,C_c\}$ such that $\CX=C_1\cup\ldots\cup C_c$
where $\emptyset\neq C_i\subseteq\CX$ for all $C_i\in\CC$.
In contrast to a partition, the sets $C_i,C_j$ are not necessarily disjoint for $i\neq j$.
We say that $\CC$ is the \emph{singleton cover} if $|\CC|=1$
and we say that $\CC$ is the \emph{cover into singletons} if $|C_i|=1$ for all $C_i\in\CC$.
A cover $\CC$ of $\CX$ is called \emph{sparse} if $|C_i|\leq\frac{1}{2}|\CX|$ for all $C_i\in\CC$.

\paragraph{The Covering Technique}
Extending the partitioning technique, we suggest a technique to handle covers.
In this setting, we assume that we have given some object $\CX\in\obj(V)$ for which we can define a cover $\CC=\{C_1,\ldots,C_c\}$
in an isomorphism-invariant way.
Also here, we assume that $2\leq|\CC|\leq 2^{\polylog|V|}$.
The goal is the computation of a canonical labeling of $\CX$ using an efficient recursion.

First, we reduce to the setting in which $\CC$ is a sparse cover of $\CX$. This can be done as follows.
We define $C_i^*:=C_i$ if $|C_i|\leq\frac{1}{2}|\CX|$ and we define $C_i^*:=\CX\setminus C_i$ if
$|C_i|>\frac{1}{2}|\CX|$.
By definition, we ensured that $|C_i^*|\leq\frac{1}{2}|\CX|$ for all $i\in[c]$.
Let $\CX^*:=\bigcup_{i\in[c]}C_i^*$.
Next, we consider two cases.

If $\CX^*\subsetneq \CX$, then we have found a non-trivial partition $\CX=\CX^*\cupdot \CX^\circ$ where $\CX^\circ:=\CX\setminus\CX^*$.
We proceed analogously as in the partitioning technique explained in \cref{sec:rel}.

Otherwise, if $\CX^*=\CX$, then $\CC^*:=\{C_1^*,\ldots,C_c^*\}$ is also a cover of $\CX$.
But more importantly, the cover $\CC^*$ is indeed sparse.
In the case of a sparse cover, we also proceed analogously as in the partition technique explained in \cref{sec:rel}.
However, the key difference of the covering technique
compared to the partitioning technique
lies in the recurrence for the number of recursive calls since the sets $C_i^*,C_j^*\in\CC^*$ are not necessarily pairwise disjoint.
The recurrence we have is $R(\CX)= 1+ \sum_{C_i^*\in\CC^*} R(C_i^*)$.
By using that $|\CC^*|=|\CC|\leq 2^{\polylog|V|}$
and that $|C_i^*|\leq\frac{1}{2}|\CX|$, we obtain at most $|\CX|^{\polylog|V|}$ recursive calls.
This is exactly the reason why the algorithm for relations is faster than the algorithm
for hypergraphs.

\begin{theo}\label{theo:canHyper}
A function $\canHyper$
for \cref{prob:CL:Hyper} can be computed in time
$(|V|+|H|)^{\polylog|V|}$.
\end{theo}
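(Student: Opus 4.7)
The plan is to apply the covering technique developed immediately before the theorem statement to the input hypergraph $(V,H)$. After handling trivial base cases ($|H|\leq 1$, or $H$ consisting only of the empty edge) directly via $\canObj$ from \cref{cor:canObj0}, the isomorphism-invariant cover I propose is $\CC:=\{C_v\mid v\in V\}$ where $C_v:=\{S\in H\mid v\in S\}$. Clearly $\CC$ covers $H$ (after optionally pulling the empty edge aside), and its isomorphism-invariance is routine: for $\phi\in\iso(V;V')$ one has $(C_v)^\phi=\{S^\phi\mid v\in S\}=C_{\phi(v)}'$, so, as a set of sets, $\CC^\phi$ agrees with the cover defined for $H^\phi$. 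Moreover $|\CC|\leq|V|\leq 2^{\polylog|V|}$, so the quantitative hypothesis of the covering technique is met.

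Next, I would carry out the reduction to the sparse case verbatim: set $C_v^*:=C_v$ if $|C_v|\leq\frac{1}{2}|H|$ and $C_v^*:=H\setminus C_v$ otherwise, and let $H^*:=\bigcup_{v\in V}C_v^*$. If $H^*\subsetneq H$ then $\{H^*,\,H\setminus H^*\}$ is an isomorphism-invariant partition of $H$ into two strictly smaller parts, and I would recurse on both and merge via $\canObj$ exactly as in the proof of \cref{theo:canRel}. Otherwise $\{C_v^*\}_{v\in V}$ is a sparse cover with $|C_v^*|\leq\frac{1}{2}|H|$. In this case I would recursively compute $\Delta_v\rho_v:=\canHyper(C_v^*)$ for each $v\in V$, form the object $\CC^\set:=\{(\Delta_v\rho_v,\,(C_v^*)^{\rho_v})\mid v\in V\}$, and return $\canObj(\CC^\set)$. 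Correctness of both (CL1) and (CL2) then follows the pattern of \cref{theo:canRel}: the tag $\Delta_v\rho_v$ records which parts are pairwise isomorphic, the Object Replacement Lemma (\cref{lem:rep}) converts canonicity of $\CC^\set$ into canonicity of $\CC^*$, and the isomorphism-invariance of the cover propagates the labeling back to $H$.

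The main obstacle is the cost of the merging call, since \cref{cor:canObj0} runs in time $n^{\polylog(|V|+t_{\max})}$. The decisive observation, already exploited in the proof of \cref{theo:canRel}, is that each $(C_v^*)^{\rho_v}$ is a hypergraph over the \emph{ordered} ground set $\{1,\ldots,|V|\}$ and can therefore be encoded using tuples only: sort the vertices inside each edge to obtain a tuple, then order the resulting tuples via the ordering $\prec$ from \cref{lem:prec}. Consequently the only set appearing in the transitive closure of $\CC^\set$ is $\CC^\set$ itself, giving $t_{\max}\leq|V|$, and each merge call runs in time $(|V|+|H|)^{\polylog|V|}$.

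For the running time the sparse-cover recurrence $N(|H|)\leq 1+|V|\cdot N(|H|/2)$ (the partition case being better) yields at most $|H|^{\CO(\log|V|)}$ recursive calls in total. Combined with the per-call cost of $(|V|+|H|)^{\polylog|V|}$ this gives the claimed bound $(|V|+|H|)^{\polylog|V|}$. The only subtlety I anticipate in a full write-up is the careful bookkeeping of (CL1) in the partition case (where $H^*$ and $H\setminus H^*$ may fail to be isomorphic) and the verification that all intermediate constructions — $C_v$, $C_v^*$, $H^*$, and the encoding underlying $\CC^\set$ — are manifestly isomorphism-invariant on the unordered set $V$.
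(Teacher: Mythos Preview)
Your proposal is correct and follows essentially the same approach as the paper: the same cover $\CC=\{C_v\mid v\in V\}$, the same reduction to a sparse cover via the sets $C_v^*$, the same two-case split on $H^*\subsetneq H$ versus $H^*=H$, the same merging via $\canObj$ with the ordered-encoding trick to keep $t_{\max}\leq|V|$, and the same recurrence giving $|H|^{\CO(\log|V|)}$ recursive calls. The only (minor) addition is your explicit handling of the empty hyperedge, which the paper tacitly assumes does not occur.
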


\begin{proof}
\algorithmDAN{\canHyper(H)}
\begin{cs}
\case{$|H|\leq 1$}~\\
Compute and return $\canObj(H)$ using \cref{cor:canObj0}.\\
\comment{Since $H$ consists of at most one hyperedge, the largest set involved in $H$ is bounded by $|V|$.
Therefore, the algorithm from \cref{cor:canObj0} runs in time $2^{\polylog|V|}$}

\case{$|H|\geq 2$}~\\
\comment{In this case, it is possible to define a cover of the hypergraph $H$
in an isomorphism-invariant way}\\
Define a cover $\CC:=\{C_v\mid v\in V\}$ of $H=\bigcup_{v\in V}C_v$ where
$C_v:=\{S_i\in H\mid v\in S_i\}$.\\
\comment{Since $|H|\geq 2$, this is not the singleton cover. On the other side,
the size $|\CC|\leq|V|$ is obviously bounded by a quasipolynomial in $|V|$.
However, the cover might not be sparse.
Next, we want to find a sparse cover}\\
Define $C_v^*:=\begin{cases}
C_v,&\text{if }|C_v|\leq\frac{1}{2}|H|\\
H\setminus C_v,&\text{otherwise if }|C_v|>\frac{1}{2}|H|.\\
\end{cases}$.\\
Define $H^*=\bigcup_{v\in V} C_v^*$.
\begin{cs}
\case{$H^*\subsetneq H$}~\\
\comment{In this case, we found an ordered partition of $H$ and proceed with the partitioning technique}\\
Define an \emph{ordered} partition $\CH=(H^*,H^\circ)$ of $H$ where $H^\circ:=H\setminus H^*$.\\
Compute $\Lambda_1:=\canHyper(H^*)$ recursively.\\
Compute $\Lambda_2:=\canHyper(H^\circ)$ recursively.\\
\comment{Next, we combine the two labeling cosets by using a canonical intersection-problem}\\
Compute and return $\Lambda:=\canObj((\Lambda_1,\Lambda_2))$ using \cref{lem:canInt} or \cref{cor:canObj0}.

\case{$H^*= H$}~\\
\comment{In this case, we found a sparse cover of $H$ and proceed with the covering technique}\\
Define a sparse cover $\CC^*:=\{C_v^*\mid v\in V\}$ of $H=\bigcup_{v\in V}C_v^*$.\\
Compute $\Delta_v\rho_v:=\canHyper(C_v^*)$ for each subhypergraph $C_v^*\in\CC^*$ recursively.\\
Define ${\CC^*}^\set:=\{(\Delta_v\rho_v,v)\mid C_v^*\in\CC^*\}$.\\
\comment{We define an ordering according to the isomorphism type of the subhypergraphs $C_v^*\in\CC^*$}\\
Define an \emph{ordered} partition $\PP:=(\CP_1,\ldots,\CP_p)$ of ${\CC^*}^\set=\CP_1\cupdot\ldots\cupdot\CP_p$ such that:\\
$(C_v^*)^{\rho_v}\prec (C_w^*)^{\rho_w}$, if and only if $(\Delta_v\rho_v,v)\in\CP_i$ and $(\Delta_w\rho_w,w)\in\CP_j$ for some $i,j\in[p]$ with $i<j$.\\
Compute $\Lambda_i:=\canSet(\CP_i)$ for each $\CP_i,i\in[p]$ using \cref{lem:canSet0}.\\
\comment{
Since $|\CP_i|\leq|\CP^\set|=|V|$,
the algorithm from \cref{lem:canSet0} runs in the desired time bound, i.e., $2^{\polylog|V|}$}\\
Compute and return $\Lambda
:=\canObj((\Lambda_1,\ldots,\Lambda_p))$ using \cref{cor:canObj0}.\\
\comment{
Since $(\Lambda_1,\ldots,\Lambda_p)$ is a tuple consisting of atoms, no set is involved in this object.
Therefore, the algorithm from \cref{cor:canObj0} also runs in the desired time bound, i.e., $2^{\polylog|V|}$}
\end{cs}
\end{cs}

The proof for conditions (CL1) and (CL2) is similar to the proof of \cref{theo:canRel}.

\begin{runtime}
We claim that the number of recursive calls $R(H)$ is at most $T:=|H|^{2\log_2|V|}$.
For the case $H^*\subsetneq H$, we have that
\begin{align*}
R(H)&=1+R(H^*)+R(H^\circ)\\
&\overset{\mathclap{\text{induction}}}{\leq}
1+|H^*|^{2\log_2|V|}+|H^\circ|^{2\log_2|V|}\leq T
\end{align*}
For the case $H^*= H$, we have that
\begin{align*}
R(H)&= 1+\sum_{v\in V} R(C_v^*)\\
&\overset{\mathclap{\text{induction}}}{\leq}
1 + \sum_{v\in V} |C_v^*|^{2\log_2|V|}\\
&\leq 1+ |V| \frac{|H|^{2\log_2|V|}}{|V|^2}\leq T&&\text{(using $|C_v^*|\leq\frac{1}{2}|H|$).}
\end{align*}
We consider the running time of one single recursive call.
The algorithm from \cref{cor:canObj0} runs in time $2^{\polylog|V|}$.
Therefore, the total running time is bounded by $(|V|+|H|)^{\polylog|V|}$.
\end{runtime}
\end{proof}

\paragraph{Giants, Johnsons and Cameron Groups}
The groups $\alt(V)$ and $\sym(V)$ are called \emph{giants}.
The groups $\alt(V)[\binom{V}{s}]$ and $\sym(V)[\binom{V}{s}]$ (the alternating group and the symmetric group on $V$ acting on the $s$-element subsets
of $V$) are called \emph{Johnson groups} where $s\leq\frac{1}{2}|V|$. The size $|V|$ is called the \emph{Johnson parameter}.
A group $\Delta\leq\sym(V)$ is called a Cameron group,
if $V=\binom{W}{s}^k$ for some set $W$ and some integers $k\geq 1\leq s\leq\frac{1}{2}|W|> 2$
and $(\alt(W)[\binom{W}{s}])^k\leq\Delta\leq\sym(W)[\binom{W}{s}]\wr\sym(k)$
(primitive wreath product action).
Additionally, we require that the induced homomorphism $h:\Delta\to\sym(k)$ is transitive
and that $s\neq\frac{1}{2}|W|$. These additional requirements ensure that Cameron groups are primitive.

\paragraph{Composition-Width}
For a group $\Delta\leq\sym(V)$, the composition-width of $\Delta$, denoted as $\cw\Delta$,
is the smallest integer $k$ such that all composition factors of $\Delta$ are isomorphic to a subgroup of $\sym(k)$.

\begin{prop}\label{lem:gammaD}
Let $\Delta\leq\sym(\CX)$ be a primitive group on a set $\CX$ with $\cw\Delta\leq d$.
Then at least one of the following is true.
\begin{enumerate}
  \item\label{lem:gammaD1} $|\Delta|\in |\CX|^{\CO(\log d)}$, or
  \item\label{lem:gammaD2} $d!< |\CX|$, or
  \item\label{lem:gammaD3} there is a sparse cover $\CC=\{C_1,\ldots,C_c\}$ of $\CX=C_1\cup\ldots\cup C_{c}$ with $2\leq|\CC|\leq d^3$ which is $\Delta$-invariant.
\end{enumerate}
Moreover, there is a polynomial-time algorithm that determines one of the options that is satisfied
and in case of the third option computes the corresponding cover $\CC$ of $\CX$.
\end{prop}

\begin{proof}
The well known O’Nan-Scott Theorem classifies primitive groups into the following types:
\RN{1}. Affine Groups, \RN{2}. Almost Simple Groups, \RN{3}. Simple Diagonal Action,
\RN{4}. Product Action, \RN{5}. Twisted Wreath Product Action.
For groups $\Delta\leq\sym(\CX)$ of Type \RN{1},\RN{3} or \RN{5} it holds that
$|\Delta|\in|\CX|^{\CO(\log d)}$ \cite{DBLP:conf/focs/GroheNS18}.

Assume that $\Delta$ is of Type \RN{2}.
Then, $|\Delta|\in|\CX|^{\CO(\log d)}$ or $\Delta$
is permutationally isomorphic to a Johnson group with parameter $|V|\leq d$ \cite{DBLP:conf/focs/GroheNS18}.
We identify $\CX=\binom{V}{s}$.
We define a cover $\CC:=\{C_v\mid v\in V\}$ of $\CX=\binom{V}{s}$
where $C_v:=\{X\in\binom{V}{s}\mid v\in X\subseteq V\}$.
Observe that $|\CC|=|V|\leq d$. 
Moreover, this cover is sparse and $\Delta$-invariant.

Assume that $\Delta\leq\sym(\CX)$ is of Type \RN{4}.
Then, $|\Delta|\in|\CX|^{\CO(\log d)}$ or $\Delta$ is a subgroup of a Cameron group $P \wr \Psi$
where $P$ is a Johnson group with parameter $|V|\leq d$
and $\Psi\leq\sym(k)$ is transitive with $\cw\Psi\leq d$.
We identify $\CX=\binom{V}{s}^k$.
We define a cover $\CC:=\{C_{v,i}\mid v\in V,i\in[k]\}$ of $\CX=\binom{V}{s}^k$
where $C_{v,i}:=\{(X_1,\ldots,X_k)\in\binom{V}{s}^k\mid v\in X_i\subseteq V\}$.
Again, $\CC$ is sparse and $\Delta$-invariant.
Observe that $|\CC|\leq|V|\cdot k$.
Since $|\CX|=\binom{|V|}{s}^k$, it follows that $k\leq\log_2|\CX|$.
Furthermore,
we can assume that $d!\geq |\CX|$ because otherwise Option \ref{lem:gammaD2} of the Lemma holds.
Therefore, $\log_2|\CX|\leq d\log_2(d)$.
This leads to $|\CC|\leq d\cdot k\leq d^3$.
\end{proof}

\paragraph{Canonical Generating Sets}
A canonical generating set can be seen as a unique encoding of a group $\Delta^\ord\leq\sym(V^\ord)$
over a linearly ordered set $V^\ord=\{1,\ldots,|V|\}$.

\begin{lem}[\cite{DBLP:journals/siamcomp/AllenderGMMM18}, Lemma 6.2, \cite{DBLP:conf/icalp/GroheNSW18}, Lemma
21 arXiv version]\label{lem:canGen}
There is a polynomial-time algorithm that, given a group $\Delta^\ord\leq\sym(\{1,\ldots,|V|\})$
via a generating set,
computes a generating set for $\Delta^\ord$.
The output only depends
on $\Delta^\ord$ (and not on the given generating set).
\end{lem}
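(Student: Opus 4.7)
The central idea is to exploit the fixed linear order on $\{1,\ldots,|V|\}$ to pin down a canonical stabilizer chain, and then canonically select one representative from each coset in that chain.

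The plan is as follows. First, apply the Schreier--Sims algorithm to the input generators to compute, in polynomial time, the stabilizer chain $\Delta^\ord = G_1 \geq G_2 \geq \cdots \geq G_{|V|+1} = \{1\}$, where $G_i$ denotes the pointwise stabilizer of $\{1,\ldots,i-1\}$ in $\Delta^\ord$. Because the base $(1,2,\ldots,|V|)$ is fixed by the ambient order, this chain is determined by $\Delta^\ord$ alone and does not depend on how the group was presented as input.

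Second, for each $i$ and each $j$ in the $G_i$-orbit $\Omega_i$ of $i$, the set $\{\tau \in G_i : \tau(i) = j\}$ is a coset of $G_{i+1}$ in $G_i$. From each such coset I would canonically pick the lexicographically smallest permutation $\tau_{i,j}$, identifying a permutation with its image sequence $(\tau(1),\ldots,\tau(|V|))$. This lex-smallest element can be computed greedily using the strong generating set: to determine $\tau_{i,j}(k)$ from its values on $\{1,\ldots,k-1\}$, pick the smallest value reachable by some permutation still lying in the coset, which reduces to an orbit computation in a pointwise stabilizer subgroup already present in the chain. I would then output $\Gamma := \bigcup_{i=1}^{|V|}\{\tau_{i,j} \mid j \in \Omega_i\}$. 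Since the $\tau_{i,j}$ form a full transversal of $G_{i+1}$ in $G_i$ at every level, an induction down the stabilizer chain shows that $\Gamma$ generates $\Delta^\ord$; and since both the chain and the representatives are intrinsic to $\Delta^\ord$ and the ambient order, the set $\Gamma$ is canonical.

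The main obstacle is justifying the greedy lex-smallest computation: one must argue that any lex-optimal prefix can be extended to a lex-optimal element inside the coset, and that each greedy step is tractable. Both are standard consequences of the Schreier--Sims toolbox --- feasibility of extending a prefix amounts to nonemptiness of the intersection of a coset with a partial-permutation constraint, which is a polynomial-time orbit query once a strong generating set is available --- but the bookkeeping must be carried out carefully so that the cumulative cost stays polynomial in $|V|$ and in the size of the input generating set.
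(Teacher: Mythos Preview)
The paper does not supply its own proof of this lemma; it is quoted as a known fact from \cite{DBLP:journals/siamcomp/AllenderGMMM18} and \cite{DBLP:conf/icalp/GroheNSW18} and used as a black box. So there is nothing in the paper to compare against beyond the bare statement.

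Your reconstruction is the standard one and is correct. Two small points worth tightening. First, in the paper's function-composition convention (apply the left factor first), the set $\{\tau\in G_i:\tau(i)=j\}$ is a \emph{right} coset $G_{i+1}\tau_0$, not a left coset; this does not affect the argument but you should state it consistently. Second, the justification of the greedy step can be made completely explicit: once the values on $1,\ldots,k-1$ are fixed, the surviving permutations form a right coset $G_k\sigma_0$, and the set of attainable values at position $k$ is exactly $\sigma_0(\Omega_k)$ with $\Omega_k$ the $G_k$-orbit of $k$. This is a single orbit lookup in the already-computed stabilizer chain, so each of the $O(|V|)$ greedy steps per representative costs polynomial time, and there are at most $\sum_i |\Omega_i|\le |V|^2$ representatives. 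With these clarifications the argument is complete and matches what the cited references do.
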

The applications of canonical generating sets to our framework are discussed in \cite{DBLP:conf/stoc/SchweitzerW19}.
Assume that we want to use an algorithm $\LA$ as a black box in our framework which gets as input
an encoding of a permutation group $\Delta\leq\sym(V)$
and produces some output $\LA(\Delta)\in\obj(V)$.
For example, the algorithm from \cref{lem:gammaD} gets as input a group $\Delta\leq\sym(V)$
and might produce a cover $\CC$ of $V$.
Another example could be an algorithm that gets as input a group $\Delta\leq\sym(V)$
and produces a minimal block system $\CB$ for $\Delta$.
When designing a canonization algorithm, it is important that
the subroutines that are used behave in an isomorphism-invariant way.
That means that for all bijections $\phi:V\to V'$ the algorithm satisfies $\LA(\Delta^\phi)=\LA(\Delta)^\phi$.
We can achieve this as follows.
We ensure that black box algorithms are applied to groups $\Delta^\ord\leq\sym(V^\ord)$
over the linearly ordered set $V^\ord=\{1,\ldots,|V|\}$ only. 
The benefit is that isomorphisms $\phi:V\to V'$ act trivially on ordered groups, i.e., $(\Delta^\ord)^\phi=\Delta^\ord$.
For this reason, it remains to ensure that $\LA(\Delta^\ord)$
only depends on $\Delta^\ord$ (and not on the representation of $\Delta^\ord$).
Here, we use canonical generating sets to represent a group uniquely.
We will use this trick in the proof of \cref{lem:canSetSet}.

After we consider canonization problem for explicitly given structures such as $k$-ary relations and hypergraphs,
we will now turn back to sets $J$ consisting of implicitly given labeling cosets.

\problem{\canSetSet\label{prob:CL:SetSet}}
{$(J,L,\alpha,\Delta\rho)\in\obj(V)$ where
$J=\{\Delta_1\rho_1,\ldots,\Delta_t\rho_t\}$,
$L=\{\Lambda_1,\ldots,\Lambda_t\}$,
$\Delta_i\rho_i,\Lambda_i\leq\lab(V)$,
$\alpha:J\to L$ is a function
with $\alpha(\Delta_i\rho_i)=\Lambda_i$,
$\Delta\rho\leq\lab(V)$
and $V$ is an unordered set.
We require that $\Delta\leq\aut(J)$}
{(J,L,\alpha,\Delta\rho)}
{(J^\phi,L^\phi,\alpha^\phi,\phi^{-1}\Delta\rho)}
{\aut((J,L,\alpha,\Delta\rho))}

In fact, a function $\alpha:J\to L$ can be seen as a set consisting of pairs $(\Delta_i\rho_i,\alpha(\Delta_i\rho_i))$ 
is an object in our framework.
In this problem, we assume that for each $\Delta_i\rho_i\in J$ a second labeling coset $\Lambda_i\in L$
is given. Moreover, we assume that the group $\Delta\leq\sym(V)$ already permutes
the set of labeling cosets $J=\{\Delta_1\rho_1,\ldots,\Delta_t\rho_t\}$, i.e., $\Delta\leq\aut(J)$.
The automorphisms of the instance $(J,L,\alpha,\Delta\rho)$ are all permutations $\delta\in\Delta$
such that if $(\Delta_i\rho_i)^\delta=\Delta_j\rho_j$, then $\delta$ also maps the corresponding labeling coset $\Lambda_i$
to the corresponding labeling coset $\Lambda_j$.
Formally, this means $\aut((J,L,\alpha,\Delta\rho))=\{\delta\in\Delta\mid
\forall i,j\in[t]:(\Delta_i\rho_i)^\delta
=\Delta_j\rho_j\implies\Lambda_i^\delta=\Lambda_j\}$

\begin{lem}\label{lem:canSetSet}
A function $\canSetSet$
solving \cref{prob:CL:SetSet} can be computed in time
$(|V|+|J|)^{\polylog|V|}$.
\end{lem}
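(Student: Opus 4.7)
The plan is to recurse on the induced action $\Delta[J]$ of $\Delta$ on the set $J$, combining the partitioning and covering techniques of \cref{sec:rel} and \cref{sec:hyper} with the primitive-group structure theorem \cref{lem:gammaD}. Throughout the recursion, canonical generating sets (\cref{lem:canGen}) are invoked so that every structural choice --- orbit partitions, block systems, sparse covers --- depends only on the ordered group $\rho^{-1}\Delta\rho\leq\sym(\{1,\ldots,|V|\})$ and not on its concrete representation, which is the key device for securing isomorphism invariance (CL1).

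As base cases, if $|J|\leq 1$ or $|\Delta|$ is bounded by a fixed quasipolynomial in $|V|$, I enumerate a transversal of $\Delta$ and test the automorphism condition $\Lambda_i^\delta=\Lambda_j\iff(\Delta_i\rho_i)^\delta=\Delta_j\rho_j$ directly, or I encode the instance into \cref{prob:CL:Graph} and apply \cref{theo:canGraph}. If $\Delta[J]$ acts intransitively on $J$, the orbit partition $J=J_1\cupdot\cdots\cupdot J_r$ is $\Delta$-invariant, so I recurse on each orbit with the same $\Delta\rho$ and the restriction $\alpha|_{J_i}$, obtaining canonical labelings $\Lambda^{(i)}$, and combine them by iterated canonical intersection via \cref{lem:canInt}, using the $\prec$-ordering from \cref{lem:prec} to fix the combining order canonically. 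If $\Delta[J]$ is transitive but non-primitive, I compute a canonical minimal block system $\CB$ from the canonical generating set of $\rho^{-1}\Delta\rho$ and reduce to smaller sub-instances: one handling the induced action on $\CB$ (with block contents first canonized inductively) and one handling a representative block.

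In the primitive case, I invoke \cref{lem:gammaD} on $\Delta[J]$ with $d:=|V|$, which is valid since $\cw(\Delta[J])\leq\cw(\Delta)\leq|V|$. Under Option 1, $|\Delta[J]|\leq|J|^{\CO(\log|V|)}$ is quasipolynomial in $|V|+|J|$, so the whole problem reduces to \cref{prob:CL:Graph}. Under Option 2, $|V|!<|J|$ together with the bounded composition length of primitive groups forces $|\Delta|$ to be quasipolynomial in $|J|$, enabling enumeration. Under Option 3, a $\Delta$-invariant sparse cover $\CC$ of $J$ with $2\leq|\CC|\leq|V|^3$ is produced; following the covering technique in the proof of \cref{theo:canHyper}, I either extract a non-trivial ordered partition of $J$ from $\CC$ (then apply the partitioning technique) or work with the genuinely sparse cover directly, recursing on each covering set (of size $\leq\tfrac{1}{2}|J|$) and aggregating the resulting labeling cosets via \cref{cor:canObj0}.

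The hardest part will be establishing isomorphism invariance (CL1) uniformly across all branches, since each structural decision has to be computed from the ordered group $\rho^{-1}\Delta\rho$ together with its canonical generating set rather than from the concrete input --- for instance, when selecting the block system in the non-primitive case or the sparse cover via \cref{lem:gammaD}, and when choosing the combining order for iterated intersections. The running time is governed by the recurrence $T(|J|)\leq|V|^{\CO(1)}T(|J|/2)+2^{\polylog|V|}$ of the covering step, which solves to $|V|^{\CO(\log|J|)}\cdot 2^{\polylog|V|}\leq(|V|+|J|)^{\polylog|V|}$, matching the claimed bound.
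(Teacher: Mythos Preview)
Your overall architecture --- intransitive case via orbit partition, transitive case via \cref{lem:gammaD}, sparse cover handled by the covering technique --- matches the paper's. But two branches contain genuine gaps.

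\textbf{The imprimitive branch.} You write that you ``reduce to smaller sub-instances: one handling the induced action on $\CB$ \ldots\ and one handling a representative block.'' Picking \emph{a} representative block is not canonical, and you do not explain how the block-level and quotient-level canonical labelings are merged. The paper sidesteps this entirely: it does \emph{not} treat the imprimitive case separately. It always computes a minimal block system $\CB^\ord$ (so that $\Delta^\ord[\CB^\ord]$ is primitive) and applies \cref{lem:gammaD} to that quotient action, regardless of whether $\Delta[J]$ itself is primitive.

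\textbf{Option~1.} You claim that when $|\Delta[J]|\leq|J|^{\CO(\log|V|)}$ ``the whole problem reduces to \cref{prob:CL:Graph}.'' This is not justified. Knowing that the \emph{induced} action $\Delta[J]$ is small does not give you a string over a fixed alphabet: the labels $\Lambda_i\in L$ are labeling cosets of $V$ and themselves transform under $\delta\in\Delta$ via $\Lambda_i\mapsto\delta^{-1}\Lambda_i$, so there is no evident encoding as an instance of \cref{prob:CL:Graph}. What the paper actually does in Option~1 is a Luks-style subgroup reduction: it decomposes $\Delta^\ord=\bigcup_\ell\delta_\ell^\ord\Psi^\ord$ into left cosets of the block-stabilizer $\Psi^\ord:=\stab_{\Delta^\ord}(B_1^\ord,\ldots,B_b^\ord)$, recurses on each instance $(J,L,\alpha,\rho\delta_\ell^\ord\Psi^\ord)$, and then aggregates by taking $\langle\hat J_{\min}\rangle$ over those $\Theta_\ell\tau_\ell$ giving the $\prec$-minimal form. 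The multiplicative cost is $(\Delta^\ord:\Psi^\ord)=|\Delta^\ord[\CB^\ord]|\leq|\CB^\ord|^{\CO(\log|V|)}$, and after the reduction the largest $\Psi^\ord$-orbit on $J^\ord$ has size $|J|/|\CB^\ord|$.

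\textbf{The recurrence.} Because of the subgroup reduction, the paper's progress measure is two-dimensional: it tracks both $|J|$ and the largest orbit size $k:=\orb_{J^\ord}(\Delta^\ord)$, with potential $T=k^{4c\log_2|V|}|J|^2$. Your single recurrence $T(|J|)\leq|V|^{\CO(1)}T(|J|/2)+2^{\polylog|V|}$ covers only the sparse-cover branch; in Option~1 the paper recurses with the \emph{same} $|J|$ but smaller $k$, which your analysis does not account for.
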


\begin{proof}
\algorithmDAN{\canSetSet(J,L,\alpha,\Delta\rho)}
\begin{cs}
\case{$|J|\leq 1$}~\\
Compute and return $\canObj((J,L,\alpha,\Delta\rho))$ using \cref{cor:canObj0}.
\end{cs}
\comment{Since $\Delta\leq\aut(J)$, the group $\Delta$ induces a permutation group $\Delta[J]\leq\sym(J)$}
\begin{cs}
\case{$\Delta[J]$ is intransitive}~\\
\comment{We proceed with the partitioning technique}\\
Define an \emph{ordered} partition $\CJ:=(J_1,J_2)$ of $J=J_1\cupdot J_2$
where $J_1$ is the $\Delta[J]$-orbit such that $J_1^\rho$ is minimal w.r.t.~to the ordering ``$\prec$" from \cref{lem:prec}.\\
Define an \emph{ordered} partition $\CL:=(L_1,L_2)$ of $L=L_1\cupdot L_2$ where $L_i:=J_i^{\alpha}$ for both $i=1,2$.\\
Compute $\Lambda_1:=\canSetSet(J_1,L_1,\alpha|_{J_1},\Delta\rho)$ recursively.\\
Compute $\Lambda_2:=\canSetSet(J_2,L_2,\alpha|_{J_2},\Delta\rho)$ recursively.\\
Compute and return $\Lambda:=\canObj((\Lambda_1,\Lambda_2))$ using \cref{lem:canInt} or \cref{cor:canObj0}.

\case{$\Delta[J]$ is transitive}~\\
\comment{We want to find a cover by using \cref{lem:gammaD}. However, the lemma requires a group that is primitive.
For this reason, we will define a minimal block system on which $\Delta$ acts as a primitive permutation group.
Moreover, we do not want that the cover found by \cref{lem:gammaD} depends on the representation of $\Delta$.
For this reason, we use the trick of canonical generating sets and apply the lemma to a group on a linearly ordered set}\\
Define $V^\ord:=\{1,\ldots,|V|\}$.\\
Define $\Delta^\ord:=(\Delta\rho)^\rho=\rho^{-1}\Delta\rho\leq\sym(V^\ord)$.\\
Define $J^\ord:=J^\rho\in\obj(V^\ord)$.\\
\comment{Both $\Delta^\ord$ and $J^\ord$ do not depend on the choice of the representative $\rho$ of $\Delta\rho$}\\
Compute a minimal block system $\CB^\ord:=\{B_1^\ord,\ldots,B_b^\ord\}$ for
$\Delta^\ord[J^\ord]$ acting on $J^\ord=B_1^\ord\cupdot\ldots\cupdot B_b^\ord$.\\
Apply the algorithm from \cref{lem:gammaD} to the primitive group $\Delta^\ord[\CB^\ord]\leq\sym(\CB^\ord)$
of composition-width at most $d:=|V|$.\\
\comment{By using a canonical generating set from \cref{lem:canGen} for $\Delta^\ord$, we ensure that the output of
that algorithm only depends on $\Delta^\ord$ (and not on the representation of $\Delta^\ord$)}\\
Depending on the cases \ref{lem:gammaD1}-\ref{lem:gammaD3} of \cref{lem:gammaD}, we do the following.
\begin{cs}

\case{$|\Delta^\ord[\CB^\ord]|\leq|\CB^\ord|^{\CO(\log_2|V|)}$}~\\
\comment{In this case, the group $\Delta\leq\sym(V)$ acting on the block system is small enough to iterate over all permutations of the blocks}\\
Define $\Psi^\ord:=\stab_{\Delta^\ord}(B_1^\ord,\ldots,B_b^\ord)\leq\sym(V^\ord)$.\\
Decompose $\Delta^\ord$ into left cosets of $\Psi^\ord$ and write
$\Delta^\ord=\bigcup_{\ell\in[s]}\delta^\ord_\ell\Psi^\ord$.\\
\comment{This composition can be computed in time polynomial $|V|$ and in the index $s=|\Delta^\ord[\CB^\ord]|\leq |\CB^\ord|^{\CO(\log_2|V|)}$}\\
Compute $\Theta_\ell\tau_\ell:=\canSetSet(J,L,\alpha,\rho\delta^\ord_\ell\Psi^\ord)$ for each
$\ell\in[s]$ recursively.\\
\comment{The multiplicative cost of the recursion
corresponds to the index $s$ of $\Psi^\ord$ in $\Delta^\ord$,
which is bounded by $s\leq|\CB^\ord|^{\CO(\log_2|V|)}$}\\
Define $\hat J:=\{\Theta_\ell\tau_\ell\mid \ell\in[s]\}$.\\
\comment{We collect the canonical labelings $\Theta_\ell\tau_\ell$ leading to minimal canonical forms of the input}\\
Define $\hat J_{\min}:=\arg\min_{\Theta_\ell\tau_\ell\in \hat J}(J,L,\alpha,\Delta\rho)^{\tau_\ell}\subseteq\hat J$ where the minimum is taken w.r.t.~the ordering ``$\prec$'' from \cref{lem:prec}.\\
Return $\Lambda:=\langle\hat J_{\min}\rangle$.\\
\comment{This is the smallest coset containing all labeling cosets in $\hat J_{\min}$ as defined
in the preliminaries.
The correctness proof for (CL2) is given below the algorithm}

\case{$|V|!< |\CB^\ord|$}~\\
\comment{This case can actually not occur. Since $\Delta^\ord\leq\sym(V^\ord)$, it follows
that $|\Delta^\ord[\CB^\ord]|\leq|\Delta^\ord|\leq |V|!$.
Since $\Delta^\ord$ is transitive on $\CB^\ord$, it follows that $|\CB^\ord|\leq|\Delta^\ord[\CB^\ord]|$.
Therefore, $|\CB^\ord|\leq|V|!$}

\case{there is a sparse cover $\CC_\CB^\ord$ of $\CB^\ord$ with $2\leq|\CC_\CB^\ord|\leq |V|^3$ which is $\Delta^\ord$-invariant}~\\
\comment{We proceed with the covering technique. Observe that the cover we found so far is a cover for $\CB^\ord$
(rather than a cover for $J^\ord$). However, we can easily define a cover for $J^\ord$ as well by taking unions of blocks}\\
Define a sparse cover $\CC^\ord:=\{C_1^\ord,\ldots,C_c^\ord\}$ of $J^\ord=C_1^\ord\cup\ldots\cup C_c^\ord$
where $C_i^\ord:=\bigcup C_{\CB,i}^\ord\subseteq J^\ord$ for each $C_{\CB,i}^\ord\in\CC_\CB^\ord$.\\
\comment{In the next step, we define the cover corresponding to $J$}\\
Define a sparse cover $\CC:=\{C_1,\ldots,C_c\}$ of $J=C_1\cup\ldots\cup C_c$
such that $C_i^\rho=C_i^\ord$ for each $C_i^\ord\in\CC^\ord$.\\
\comment{Observe that $\CC$ does not depend on the choice of the representative $\rho$ of $\Delta\rho$ and is defined in an isomorphism-invariant way}\\
\comment{Next, we will recurse on the cover $\CC$}
\begin{for}{$C_i\in\CC$}
Define $C_{i^*}^\ord\in\CC^\ord$ be the minimal (w.r.t.~``$\prec$'') image of $C_i$ under $\Delta\rho$.\\
Define $\Delta_{C_i}\rho_{C_i}:=\{\lambda\in\Delta\rho\mid C_i^\lambda=C_{i^*}^\ord\}$.\\
\comment{The labeling coset $\Delta_{C_i}\rho_{C_i}$ is essentially a canonical labeling for $(C_i,\Delta\rho)$. Moreover, $\Delta_{C_i}\rho_{C_i}\leq\Delta\rho$
can be computed in polynomial time since the index $(\Delta:\Delta_{C_i})$ is bounded by $|\CC|\leq |V|^3$}\\
Compute $\Theta_i\tau_i:=\canSetSet(C_i,C_i^\alpha,\alpha|_{C_i},\Delta_{C_i}\rho_{C_i})$ recursively.
\end{for}
Define $\CC^\set:=\{\Theta_i\tau_i\mid C_i\in\CC\}$.\\
\comment{We define an ordering according to the isomorphism type of $C_i\in\CC$}\\
Define an \emph{ordered} cover $\C:=(\CC_1,\ldots,\CC_{c'})$ of $\CC^\set=\CC_1\cup\ldots\cup\CC_{c'}$ such that:\\
$(C_k,C_k^\alpha,\alpha|_{C_k},\Delta_{C_k}\rho_{C_k})^{\tau_k}\prec
(C_{k'},C_{k'}^\alpha,\alpha|_{C_{k'}},\Delta_{C_{k'}}\rho_{C_{k'}})^{\tau_{k'}}$,
if and only if $\Theta_k\tau_k\in\CC_i$ and $\Theta_{k'}\tau_{k'}\in\CC_j$ for some $i,j\in[c']$ with $i<j$.\\
Compute $\Lambda_i:=\canSet(\CC_i)$ for each $\CC_i,i\in[c']$ using \cref{lem:canSet0}.\\
\comment{
Since $|\CC_i|\leq|\CC^\set|=|V|^3$,
the algorithm from \cref{lem:canSet0} runs in the desired time bound, i.e., $2^{\polylog|V|}$}\\
Compute and return $\Lambda
:=\canObj((\Lambda_1,\ldots,\Lambda_{c'}))$ using \cref{cor:canObj0}.\\
\comment{
Since $(\Lambda_1,\ldots,\Lambda_{c'})$ is a tuple consisting of atoms, no set is involved in this object.
Therefore, the algorithm from \cref{cor:canObj0} runs in the desired time bound, i.e., $2^{\polylog|V|}$}
\end{cs}
\end{cs}

Condition (CL1) holds as usual.

\begin{cl2}
We have to show that $\Lambda=\aut((J,L,\alpha,\Delta\rho))\pi$ for $\pi\in\Lambda$.
In the intransitive case, we have that $\Lambda_i=\aut((J_i,L_i,\alpha|_{J_i},\Delta\rho))\pi_i$ for both $i=1,2$ by induction.
Then, Condition (CL2) follows from Condition (CL2) of \cref{lem:canInt}.

Consider the case in which Option 1 of \cref{lem:gammaD} holds.
The inclusion $\aut((J,L,\alpha,\Delta\rho))\pi\subseteq\Lambda$ already follows from the isomorphism invariance (Condition (CL1))
of this algorithm, i.e.,
$\can(\CX)=\sigma\can(\CX^\sigma)=\sigma\can(\CX)$ for $\sigma\in\aut(\CX)$ implies
that $\aut(\CX)\pi\subseteq\can(\CX)$ for some $\pi\in\can(\CX)$.
We have to show the reversed inclusion.
By induction, we have that $\Theta_\ell=\aut((J,L,\alpha,\rho\delta^\ord_\ell\Psi^\ord))\subseteq\aut((J,L,\alpha,\Delta\rho))$.
Therefore, we also have the inclusion $\Lambda=\langle\hat J_{\min}\rangle\subseteq\aut((J,L,\alpha,\Delta\rho))\pi$.

The cover case (Option 3) is similar to the recursion in the algorithm of \cref{theo:canRel}.
\end{cl2}

\begin{runtime}
Let $k:=\orb_{J^\ord}(\Delta^\ord)$ be the size of the largest $\Delta^\ord[J^\ord]$-orbit.
Let $c\in\NN$ be the constant from \cref{lem:gammaD} that is hidden in the $\CO$-notation in the exponent.
We claim that the maximum number of recursive calls $R(J,\Delta^\ord)$ is at most $T:=k^{4c\log_2|V|}|J|^2$.
In the intransitive case, this is easy to see by induction:
\begin{align*}
R(J,\Delta^\ord)&= 1+R(J_1,\Delta^\ord)+R(J_2,\Delta^\ord)\\
&\overset{\mathclap{\text{induction}}}{\leq}
1+k^{4c\log_2|V|}(|J_1|^2+|J_2|^2)\leq T.
\end{align*}
We consider the transitive case in which Option 1 of \cref{lem:gammaD} holds.
Since $\Delta[J]$ is transitive, it holds $k=|J^\ord|$.
The recursive calls are done for the subgroup $\Psi^\ord\leq\Delta^\ord$
of index $s\leq |\CB^\ord|^{c\log_2|V|}$.
Moreover, we reduce orbit size for the recursive calls and have $\orb_{J^\ord}(\Psi^\ord)\leq\frac{|J^\ord|}{|\CB^\ord|}$.
This leads to the recurrence
\begin{align*}
R(J,\Delta^\ord)&= 1+ s\cdot R(J,\Psi^\ord)\\
&\overset{\mathclap{\text{induction}}}{\leq}
1+|\CB^\ord|^{c\log_2|V|}\cdot\left(\frac{|J^\ord|}{|\CB^\ord|}\right)^{4c\log_2|V|}|J|^2\leq T.
\end{align*}
In the cover case, we obtain
\begin{align*}
R(J,\Delta^\ord)&= 1+\sum_{C_i\in \CC} R(C_i,\Delta_{C_i}^\ord)\\
&\overset{\mathclap{\text{induction}}}{\leq}
1 + \sum_{C_i\in \CC} |C_i|^{4c\log_2|V|}|J|^2\\
&\leq 1+|V|^3 \frac{|J^\ord|^{4c\log_2|V|}}{|V|^4}|J|^2\leq T&&\text{(using $|\CC|\leq |V|^3$ and $|C_i|\leq\frac{1}{2}|J^\ord|$).}
\end{align*}

We consider the running time of one single recursive call.
The algorithm from \cref{cor:canObj0} runs in time $2^{\polylog|V|}$.
Therefore, the total running time is bounded by $(|V|+|J|)^{\polylog|V|}$.
\end{runtime}
\end{proof}

We consider coset-labeled hypergraphs, which were introduced in \cite{DBLP:conf/icalp/GroheNSW18}.
A coset-labeled hypergraph is essentially a hypergraph for which a labeling coset is given for each hyperedge.
This problem generalizes the canonical labeling problem for hypergraphs, but is not that general as \cref{prob:CL:Set}.

\problem{\canSetHyper\label{prob:CL:SetHyper}}
{$(H,L,\alpha)\in\obj(V)$ where
$H=\{S_1,\ldots,S_t\}$,
$L=\{\Lambda_1,\ldots,\Lambda_t\}$,
$S_i\subseteq V,\Lambda_i\leq\lab(V)$ for
all $i\in[t]$,
$\alpha:H\to L$ is a function with $\alpha(S_i)=\Lambda_i$
and $V$ is an unordered set}
{(H,L,\alpha)}
{(H^\phi,L^\phi,\alpha^\phi)}
{\{\sigma\in\sym(V)\mid\exists\psi\in\sym(t)
\forall
i\in[t]:(S_i,\Lambda_i)^\sigma
=(S_{\psi(i)},\Lambda_{\psi(i)})\}}

Remember that we already have an algorithm that canonizes hypergraphs.
Therefore, the previous lemma implies that we can also canonize hypergraphs for which a labeling coset
is given for each hyperedge.

\begin{lem}\label{lem:canSetHyper}
A function $\canSetHyper$
for \cref{prob:CL:SetHyper} can be computed in time
$(|V|+|H|)^{\polylog|V|}$.
\end{lem}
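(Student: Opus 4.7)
The plan is to reduce the coset-labeled hypergraph canonization problem to \cref{prob:CL:SetSet} by first canonizing the underlying hypergraph $H$ via \cref{theo:canHyper} and then canonically replacing each hyperedge $S_i \in H$ by an associated labeling coset $\Xi_i\sigma_i$, after which \cref{lem:canSetSet} does the remaining work.

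Concretely, I would first compute $\Delta\rho := \canHyper(H)$ using \cref{theo:canHyper}, so that $\Delta = \aut(H)$ setwise permutes the edges of $H$. Then, for each $S_i \in H$, I would compute the labeling coset $\Xi_i\sigma_i := \canHyper(\{S_i\})$ associated to the singleton hypergraph $\{S_i\}$. By (CL2) applied to $\{S_i\}$, this coset has the form $\stab_{\sym(V)}(S_i)\sigma_i$, and since the setwise stabilizer uniquely determines the set, distinct edges $S_i \neq S_j$ yield distinct cosets $\Xi_i\sigma_i \neq \Xi_j\sigma_j$. Moreover, by (CL1) of $\canHyper$, the assignment $S_i \mapsto \Xi_i\sigma_i$ is isomorphism-invariant: for $\delta \in \Delta$ with $S_i^\delta = S_j$ we have $(\Xi_i\sigma_i)^\delta = \Xi_j\sigma_j$. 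Setting $J := \{\Xi_i\sigma_i \mid i \in [t]\}$ and defining $\alpha' : J \to L$ by $\alpha'(\Xi_i\sigma_i) := \Lambda_i$, we thus obtain $\Delta \leq \aut(J)$, so $(J,L,\alpha',\Delta\rho)$ is a valid instance of \cref{prob:CL:SetSet}. I would then return $\Lambda := \canSetSet(J,L,\alpha',\Delta\rho)$ from \cref{lem:canSetSet}.

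For correctness, the key observation is that the automorphism group of the constructed instance agrees with that of $(H,L,\alpha)$: a $\delta \in \Delta$ satisfies $(\Xi_i\sigma_i)^\delta = \Xi_j\sigma_j$ exactly when $S_i^\delta = S_j$, so the additional condition $\Lambda_i^\delta = \Lambda_j$ demanded by $\canSetSet$ is precisely that $\delta$ respects the pairing of edges and labels in $(H,L,\alpha)$. Both (CL1) and (CL2) for $\canSetHyper$ therefore follow directly from the corresponding properties of $\canHyper$ and $\canSetSet$. The running time is dominated by the single call to $\canHyper(H)$, the $|H|$ calls to $\canHyper(\{S_i\})$ (each of cost $|V|^{\polylog|V|}$), and the final call to $\canSetSet$, all bounded by $(|V|+|H|)^{\polylog|V|}$. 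The only subtlety is verifying that the replacement $S_i \mapsto \Xi_i\sigma_i$ is isomorphism-invariant and injective on edges, but both properties follow cleanly from the canonicity of $\canHyper$.
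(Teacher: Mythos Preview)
Your proposal is correct and is essentially the paper's own proof: the paper also first computes $\Delta\rho:=\canHyper(H)$, then replaces each $S_i$ by a canonical labeling coset (noting directly that $\Delta_i=\sym(S_i)\times\sym(V\setminus S_i)$ rather than invoking $\canHyper(\{S_i\})$), and finally calls $\canSetSet(J,L,\alpha_J,\Delta\rho)$. The only cosmetic difference is that the paper observes the slightly stronger $\Delta=\aut(H)=\aut(J)$ where you state only $\Delta\leq\aut(J)$, but the weaker statement already suffices for the hypothesis of \cref{prob:CL:SetSet}.
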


\begin{proof}
Assume we are given an instance $(H,L,\alpha:H\to L)$.
First, we compute a canonical labeling $\Delta\rho:=\canHyper(H)$ using \cref{theo:canHyper}.
Let $J:=\{\Delta_1\rho_1,\ldots,\Delta_t\rho_t\}$ where $\Delta_i\rho_i$ is a canonical labeling for $S_i$
for each $i\in[t]$.
The set $J$ is polynomial-time computable since each $\Delta_i$ is a direct product of two symmetric groups $\sym(S_i)$
and $\sym(V\setminus S_i)$.
We define $\alpha_J:J\to L$ by setting $\alpha_J(\Delta_i\rho_i):=\alpha(S_i)=\Lambda_i$.
Observe that $\Delta=\aut(H)=\aut(J)$.
We compute and return the canonical labeling $\Lambda:=\canSetSet(J,L,\alpha_J,\Delta\rho)$ using \cref{lem:canSetSet}.
\end{proof}

\section{Canonization of Sets and Objects}\label{sec:sets}

We recall the central problem that we want to solve.

\setcounter{theo}{7}

\problem{\canSet\label{prob:CL:Set}}
{$J\in\obj(V)$ where $J=\{\Delta_1\rho_1,\ldots,\Delta_t\rho_t\}$,
$\Delta_i\rho_i\leq\lab(V)$ for all $i\in[t]$
and $V$ is an unordered set}
{(J)}
{(J^\phi)}
{\aut(J)}

\setcounter{theo}{21}

\paragraph{Giant Representations}
A homomorphism $h:\Delta\to\sym(W)$ is called a \emph{giant representation} if the image of $\Delta$ under $h$ is a giant, i.e., $\alt(W)\leq h(\Delta)\leq\sym(W)$

\begin{theo}\label{theo:canSet}
A function $\canSet$
solving \cref{prob:CL:Set} can be computed in time
$(|V|+|J|)^{\polylog|V|}$.
\end{theo}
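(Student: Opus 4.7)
The plan is to follow the general architecture of Babai's quasipolynomial string canonization algorithm, adapted to act on a set of labeling cosets rather than a string, and to reuse the partitioning/covering machinery developed for \cref{theo:canHyper} and \cref{lem:canSetSet}. The state throughout will be a pair $(J,\Delta\rho)$ in which $\Delta\rho\leq\lab(V)$ is the current coset of candidate labelings and we may assume by a preprocessing step that $\Delta\leq\aut(J)$ (if not, replace $J$ by its $\Delta$-closure using canonical intersection, \cref{lem:canInt}); initially $\Delta\rho=\lab(V)$. As usual we transport every group-theoretic decision to the ordered group $\Delta^\ord=\rho^{-1}\Delta\rho$ via canonical generating sets (\cref{lem:canGen}) so that the outcome of the decomposition is isomorphism-invariant.

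The recursion proceeds in a cascade. First, I would extend the partitioning technique: if the induced action $\Delta[J]$ is intransitive, split $J$ into its $\Delta[J]$-orbits, recurse on each orbit with the same $\Delta\rho$, and glue the results via \cref{lem:canInt}; this is the subroutine \recPart. Once $\Delta[J]$ is transitive, compute a minimal block system $\CB$ for $\Delta^\ord[J^\ord]$ and analyze the primitive induced group $\Delta^\ord[\CB]$ via the algorithm of \cref{lem:gammaD}. In the small case (Option~1) or in the Johnson/Cameron cover case (Option~3) we have a Luks-style subgroup reduction (\toGroup): replace $\Delta^\ord$ by a subgroup $\Psi^\ord$ of index at most $|\CB|^{\CO(\log|V|)}$ whose blocks are stabilized, or use a sparse $\Delta$-invariant cover of $J$ in the same way as \cref{lem:canSetSet}, then recurse on each coset/cover-piece and aggregate canonical labelings of the resulting set (by \cref{lem:canSetSet} again). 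This leaves only the barrier: the action $h\colon\Delta^\ord\to\sym(\CB)$ is a \emph{giant representation} onto $\alt(\CB)$ or $\sym(\CB)$ on $m:=|\CB|>\mathrm{poly}\log|V|$. This is the subroutine \toJohnson.

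At the barrier I would run the Babai local-certificates method in the version needed here. Fix a threshold $s=\CO(\log^2|V|)$. For every $s$-element subset $T\subseteq\CB$, compute the ``local certificate'': the largest subgroup $A_T\leq\alt(T)$ such that every $\pi\in A_T$ lifts under $h$ to an element of $\Delta^\ord$ that is an automorphism of the substructure of $J^\ord$ supported in the preimage of $T$. The existing canonization subroutines (\cref{lem:canSetSet}, \cref{lem:canSetHyper}, and for base objects \cref{theo:canGraph}) are strong enough to compute $A_T$, together with a canonical certifying labeling of the local substructure, in quasipolynomial time, because after restriction the support has size only $|V|/m\cdot s=|V|\polylog|V|$ and the labeling cosets on it have bounded index in a stabilizer that Babai's analysis already shows is small enough. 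This is the subroutine \processJohnson. Write $T$ as \emph{full} if $A_T=\alt(T)$ and \emph{non-full} otherwise.

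The final subroutine \processAut aggregates. Babai's dichotomy gives two cases. If at least one test set $T$ is non-full, then the orbit of $T$ under $h(\aut(J))$ together with the certified automorphism groups on full test sets intersecting this orbit yields a canonically defined, proper subset of $\CB$ of size at most $(1-\epsilon)m$; feeding this back as new structure on $J$ reduces the problem to a smaller instance via one more application of \recPart and \toGroup. If every $T$ is full, then all local certificates glue to exhibit $h$ as mapping onto a giant together with the list of $h$-preimages of all even permutations of $\CB$, which reduces the canonization of $J$ to canonization of an induced hypergraph/coset-labeled hypergraph of size polynomial in $|V|+|J|$ that is handled by \cref{lem:canSetHyper}. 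The recursion in either case shrinks the primitive quotient (in the non-full case) or reduces to a structurally simpler problem (in the full case), and Babai's bookkeeping gives the depth $\polylog|V|$ of the recursion tree; together with the per-call cost $(|V|+|J|)^{\polylog|V|}$ of the subroutines, this yields the claimed total bound.

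The main obstacle is the local certificate step \processJohnson. In Babai's original setting the object on which $\alt(T)$ is meant to act is a finite string, so ``locality'' restricts cleanly to a window. Here the object is a \emph{set of labeling cosets}, which do not restrict to sub-domains in any canonical way; one must define the restriction of each $\Delta_i\rho_i\in J$ to the support of $T$ (effectively via the induced labeling coset on the relevant block, \cref{lem:prec}) and show that the group of permutations of this restriction that extend back to $\Delta$ is amenable to the same ``agree/disagree with $\alt(T)$'' dichotomy Babai uses. Verifying that the lift-and-certify procedure stays isomorphism-invariant, uses only groups whose composition width is controlled by $|V|$ (so that \cref{lem:gammaD} continues to apply in the recursion), and can be canonicalized back through \cref{lem:canSetSet} is where the bulk of the technical work will sit. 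Everything else is a careful but essentially standard combination of the techniques already set up in \cref{sec:smallObj}--\cref{sec:hyper}.
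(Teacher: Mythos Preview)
Your proposed state $(J,\Delta\rho)$ with $\Delta\leq\aut(J)$ is precisely what the algorithm is supposed to compute, so you cannot assume it. Your suggested preprocessing (``replace $J$ by its $\Delta$-closure using \cref{lem:canInt}'') does not help: starting from $\Delta\rho=\lab(V)$, the $\sym(V)$-closure of $J$ may have size exponential in $|V|$, and canonical intersection does not give you a subgroup of $\aut(J)$ without already knowing $\aut(J)$. Note that \cref{lem:canSetSet} is only ever invoked in the paper \emph{after} a group inside $\aut(J)$ has been produced by other means (e.g.\ from $\canHyper$ in \cref{lem:canSetHyper}, or from a recursive call in $\processJohnson$); it is not a bootstrap.

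Because of this, the paper's recursion carries a completely different state $(J,A,\Delta^\ord,g^\ord)$. Here $\Delta^\ord\leq\sym(V^\ord)$ is \emph{not} a subgroup of $\aut(J)$; it is the common ordered group guaranteed by Property~(B), i.e.\ every $\Delta_i\rho_i\in J$ satisfies $\rho_i^{-1}\Delta_i\rho_i=\Delta^\ord$. The set $A\subseteq V$ records where the cosets may still differ (Property~(A)). The orbit/block analysis, the Cameron/Johnson reduction, and the giant representation $g^\ord$ all refer to the action of $\Delta^\ord$ on $A^\ord\subseteq V^\ord$, \emph{not} to any action on $J$. Progress is measured by shrinking $\orb_{A^\ord}(\Delta^\ord)$ (and by several other parameters in the bound \cref{runtime}), which lives entirely on the $|V|$-sized side.

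This also explains why your local-certificate sketch goes wrong. Your $\CB$ is a block system on $J^\ord$, so a test set $T\subseteq\CB$ selects a subset of labeling cosets, not a window in $V$; the estimate ``support of size $|V|/m\cdot s$'' is based on the false premise that the blocks partition $V$. In the paper, the single canonical test set $T^\ord\subseteq W^\ord$ determines affected points $S^\ord\subseteq V^\ord$, and the multiplicity of ``local views'' comes not from ranging over test sets but from pulling $S^\ord$ back through each $\Delta_i\rho_i\in J$ to get a hypergraph $H_i=\{S\subseteq V\mid S^{\rho_i\delta^\ord_\ell}=S^\ord\}$. The case analysis (the partition families $\CP,\PP,\QQ$ in $\processJohnson$) is on whether these $H_i$ and the restrictions $(\lambda_{i,S}\Psi^\ord)|_S$ coincide across $i$; only in the uniform case does one obtain the fullness certificate $G\leq\aut(J)$ needed for $\processAut$. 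The subgroup reduction $\toGroup$ is likewise nonstandard: it splits \emph{every} $\Delta_i\rho_i$ simultaneously into $\Psi^\ord$-subcosets and must repair both the blow-up $|\hat J|>|J|$ and the failure of Property~(A), which is why the relative base size $\pid(\Delta^\ord,\Psi^\ord)$ enters. None of this is visible from the $(J,\Delta\rho)$-with-$\Delta\leq\aut(J)$ viewpoint.
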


\paragraph{Proof Outline}
For the purpose of recursion, our main algorithm $\canSet$ needs some additional input parameters.
The input of the main algorithm is a tuple $(J,A,\Delta^\ord,g^\ord)$ consisting of the following input parameters.
\begin{itemize}
  \item $J$ is a set consisting of labeling cosets,
  \item $A\subseteq V$ is a subset which
is $\Delta_i$-invariant for all $\Delta_i\rho_i\in J$,
\item $\Delta^\ord\leq\sym(V^\ord)$ is a group over the linearly ordered set $V^\ord=\{1,\ldots,|V|\}$, and
\item $g^\ord:\Delta^\ord\to\sym(W^\ord)$ is a giant representation where $W^\ord=\{1,\ldots,|W^\ord|\}$
is a linearly ordered set.
\end{itemize}

We will define the additional parameters besides $J$ in an isomorphism-invariant way.
The additional parameters are used for recursion and can provide information, however,
canonical labelings for an instance $(J,A,\Delta^\ord,g^\ord)$
correspond to canonical labelings for $J$.

Initially, we set $A:=V$
and we let $g^\ord:=\bot$ be undefined.
Furthermore, we require three properties that hold for our input instance:
\begin{enumerate}
  \item[(A)]\label{prop:A} $(\Delta_i\rho_i)|_{V\setminus A}=(\Delta_j\rho_j)|_{V\setminus A}$ for all $\Delta_i\rho_i,\Delta_j\rho_j\in J$, and
  \item[(B)]\label{prop:D} for all $\Delta_i\rho_i\in J$
  it holds that $(\Delta_i\rho_i)^{\rho_i}=\Delta^\ord$
  and there is a subset $A^\ord\subseteq\{1,\ldots,|V|\}$ such that for all $\Delta_i\rho_i\in J$ it holds that
  $A^{\rho_i}=A^\ord$.
  \item[(g)]\label{prop:g} if $g^\ord\neq\bot$ (i.e., $g^\ord$ is defined), then $g^\ord:\Delta^\ord\to\sym(W^\ord)$ is a giant representation
where $|W^\ord|> 2+\log_2 |V|$ and $|W^\ord|$ is greater than some absolute constant
and $\Delta^\ord$ is transitive on $A^\ord$ and $\Delta^\ord_{(A^\ord)}\leq\ker(g^\ord)$ (the pointwise stabilizer of $A^\ord$ in $\Delta^\ord$).
\end{enumerate}

With the initial choice of $A:=V$ Property (A) holds.
Initially, $g^\ord:=\bot$ is undefined and therefore Property (g) also holds.
Furthermore, we can assume that Property (B) holds,
otherwise we can define an \emph{ordered} partition of $J$ and recurse on that, i.e.,

\begin{cs}
\case{Property (B) is not satisfied}~\\
Define $A_i^\ord:=A^{\rho_i}$ for some $\Delta_i\rho_i\in J$.\\
\comment{We will define an ordered partition of $J$ according to the ordering ``$\prec$'' from \cref{lem:prec}
that is defined on the elements $(A_i^\ord,\Delta_i^\ord)$}\\
Define an \emph{ordered} partition $\CJ:=(J_1,\ldots,J_s)$ of $J= J_1\cupdot\ldots\cupdot J_s$ such
that:\\
$(A_i^\ord,\Delta_i^\ord)\prec (A_j^\ord,\Delta_j^\ord)$,
if and only if
$\Delta_i\rho_i\in J_p$ and $\Delta_j\rho_j\in J_q$ for some $p,q\in[s]$ with $p<q$.\\
Recursively compute $\Lambda_i:=\canSet(J_i,A,\Delta^\ord,g^\ord)$ for each $i\in[s]$.\\
Return $\Lambda:=\canObj((\Lambda_1,\ldots,\Lambda_s))$ using \cref{cor:canObj0}.\\
\comment{Since there is no set involved in the tuple $(\Lambda_1,\ldots,\Lambda_s)$, the algorithm
from \cref{cor:canObj0} runs in time $2^{\polylog|V|}(|V|+|J|)^{\CO(1)}$}
\end{cs}

Property (B) also implies that $A$ can be defined out of $J$ in an isomorphism-invariant way.
In particular, $\aut(J,A)=\aut(J)$.

\paragraph{The Measurement of Progress}
By $\orb_{A^\ord}(\Delta^\ord)$, we denote the size of the largest $\Delta^\ord$-orbit on $A^\ord$.
Let $\delta(g^\ord)=1$ if $g^\ord$ is defined and let $\delta(g^\ord)=0$ if $g^\ord=\bot$ is undefined.
We will show that the number of recursive calls $R(J,A,\Delta^\ord,g^\ord)$ of our main algorithm is at most
\begin{align*}\label{runtime}\tag{$T$}
T:=2^{\log_2(|V|+2)^3(2\cdot\log_2(|V|+4)\cdot\log_2|J|+2\cdot\log_2(\orb_{A^\ord}(\Delta^\ord))}\cdot |J|^2\cdot |A|\cdot |V|^{2-2\delta(g^\ord)}.
\end{align*}

The function looks quite complicated, but there are only a few properties that are of importance.
We list these properties.
First, observe that $T\leq (|V|+|J|)^{\polylog|V|}$.
Moreover, if we can show that the number of recursive calls $R$ of our
main algorithm satisfies the recurrences listed below, then it holds that $R\leq T$.
We will allow the following types of recursions for the main algorithm
which we refer to as \emph{progress}.
\begin{itemize}
  \item We split $J$ while preserving $\Delta^\ord$ and $g^\ord$, i.e.,
  \begin{align}\label{prog:linJ}\tag{Linear in $J$}
  R(J,A,\Delta^\ord,g^\ord)&=1+\sum_{i\in[s]}R(J_i,A,\Delta^\ord,g^\ord),
  \end{align}
  where $J=J_1\cupdot\ldots\cupdot J_s$.
  \item We reduce the size of $A$ while preserving $\Delta^\ord$ and $g^\ord=\bot$, i.e.,
  \begin{align}\label{prog:linA}\tag{Linear in $A$}
  R(J,A,\Delta^\ord,\bot)&=1+R(J,A',\Delta^\ord,\bot),
  \end{align}
  where $|A'|<|A|$.
  \item At a multiplicative cost of $2^{\log_2(p)+\log_2(|V|)^4}$, we divide the size $|J|$ by $p$
  and at a multiplicative cost of $2^{\log_2(|V|)^4}$, we reduce the size of $|J|$ to $p$
  while resetting the other parameters $A:=V$ and $g:=\bot$, i.e.,
  \begin{align}\label{prog:J}\nonumber
  R(J,A,\Delta^\ord,g^\ord)&=1+2^{\log_2(p)+\log_2(|V|)^4}\cdot R(J',A,\Delta^\ord,g^\ord)\\
  &\hspace{0.6cm}+2^{\log_2(|V|)^4}\cdot R(J'',V,\Delta^\ord,\bot),\tag{In $J$}
  \end{align}
  where $|J'|\leq\frac{1}{p}|J|$ and $|J''|\leq p$ for some $p\in\NN$ with $1<p\leq\frac{1}{2}|J|$.
  \item At a multiplicative cost of $2^{\log_2(|V|)^3}$, we halve the size of the largest $\Delta^\ord$-orbit while resetting $g^\ord:=\bot$, i.e.,
  \begin{align}\label{prog:A}\tag{In $\Delta^\ord$}
  R(J,A,\Delta^\ord,g^\ord)=1+ 2^{\log_2(|V|)^3}\cdot R(\hat J,A,\Psi^\ord,\bot),
  \end{align}
  where $|\hat J|\leq|J|$ and $\orb_{A^\ord}(\Psi^\ord)\leq\frac{1}{2}\orb_{A^\ord}(\Delta^\ord)$.
  \item At a multiplicative cost of $|V|$, we find a giant representation, i.e.,
  \begin{align}\label{prog:giant}\tag{In $g^\ord$}
  R(J,A,\Delta^\ord,\bot)=1+|V|\cdot R(\hat J,A,\Psi^\ord,g^\ord),
  \end{align}
  where $|\hat J|\leq |J|$ and $\orb_{A^\ord}(\Psi^\ord)\leq \orb_{A^\ord}(\Delta^\ord)$ and $g^\ord$ is defined.
\end{itemize}

The main algorithm calls the subroutines $\toJohnson$, $\processJohnson$
and $\processAut$ described in \cref{lem:toJohnson}, \cref{lem:processJohnson} and \cref{lem:processAut}, respectively.
These subroutines in turn use the subroutines $\recPart$ and $\toGroup$ given in \cref{lem:recPart} and \cref{lem:toGroup}.
We will ensure that progress is achieved whenever the main algorithm is called recursively.
See \cref{fig:flow} for a flowchart diagram.

\begin{figure}[h]
\begin{center}
\scalebox{.98}{\tikzstyle{block} = [rectangle, draw, text width=4.8cm, text centered, rounded corners, minimum height=2.5em]
\tikzstyle{decision} = [diamond, draw]
\begin{tikzpicture}
\node (n1) at (0,0) [block]  {Main algorithm $\canSet$};
\node (d) at (0,-3) [decision,align=center] {$g^\ord$ is\\ defined?};
\node (n2) at (-3,-5) [block]  {$\toJohnson$};
\node (n3) at (3,-5) [block]  {$\processJohnson$};
\node (n3b) at (7,-8) [block]  {$\processAut$};

\draw[-{Triangle[scale=1.5]}] (n1) to (d);
\draw[-{Triangle[scale=1.5]}] (d) to node[yshift=0.4cm] {No} (n2);
\draw[-{Triangle[scale=1.5]}] (d) to node[yshift=0.4cm] {Yes} (n3);
\draw[-{Triangle[scale=1.5]}] (n3) to node[xshift=-1.5cm,align=center] {Certificate\\ found} (n3b);
\draw[-{Triangle[scale=1.5]},bend left=30] (n2) to node[xshift=-1.5cm,text width=2.8cm] {Progress via\\ \cref{prog:linJ}\\ or \cref{prog:linA}\\ or \cref{prog:J}\\ or \cref{prog:A}\\ or \cref{prog:giant}} (n1);
\draw[-{Triangle[scale=1.5]},bend right=30] (n3) to node[xshift=2cm,text width=3cm] {Progress via\\ \cref{prog:linJ}\\ or \cref{prog:J}\\ or \cref{prog:A}} (n1);
\draw[-{Triangle[scale=1.5]},bend right=50] (n3b) to node[xshift=2.1cm,text width=2.21cm] {Progress via\\ \cref{prog:linJ}\\ or \cref{prog:J}\\ or \cref{prog:A}} (n1);
\end{tikzpicture}}
\caption{Flowchart of the algorithm for \cref{theo:canSet}\label{fig:flow}.}
\end{center}
\end{figure}
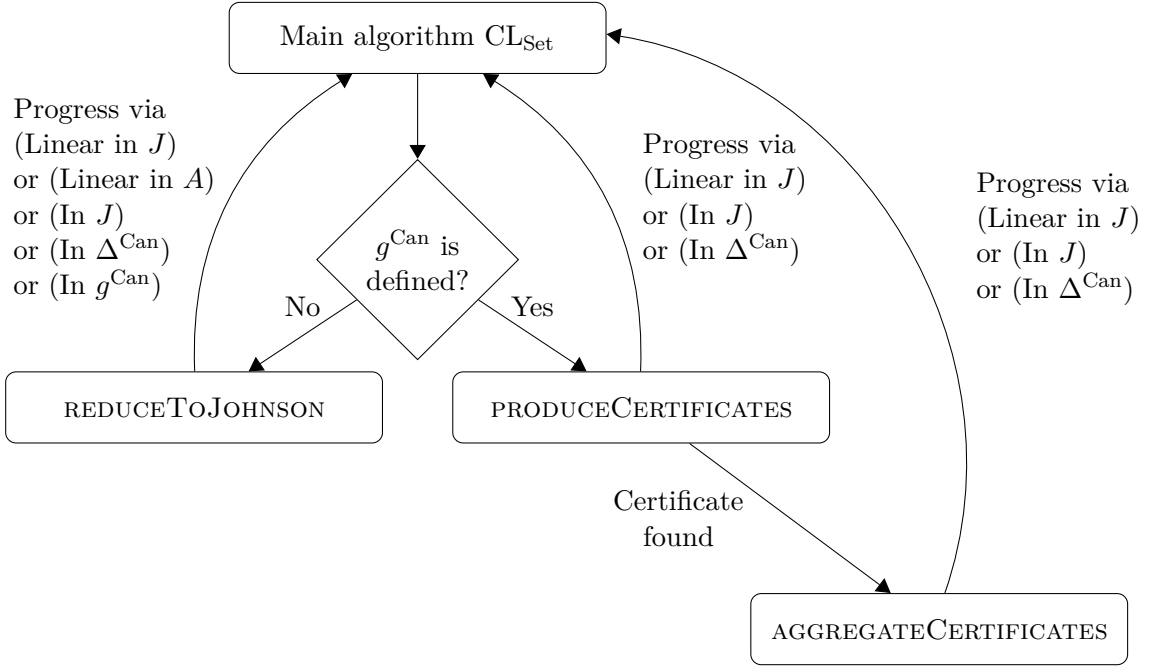

\paragraph{Equipartitions and Partition Families}
An \emph{equipartition} is a partition $\CP$ in which all parts $P_i\in\CP$ have the same size $|P_i|$.
A \emph{partition family} of $\CX\in\obj(V)$ is a family $\PP:=\{\CP_k\}_{k\in K}$ where each member $\CP_k=\{P_{k,1},\ldots,P_{k,p_k}\}$ is a partition of
$\CX=P_{k,1}\cupdot\ldots\cupdot P_{k,p_k}$.
A partition family $\PP$ is called \emph{trivial} if all partitions $\CP_k\in\PP$ are trivial.
The notion of partition families generalizes the notion of covers.
More precisely, for each cover $\CC=\{C_1,\ldots,C_c\}$ of $\CX$ we can define a partition family
$\PP:=\{\CP_i\}_{i\in [c]}$ by setting $\CP_i:=\{C_i,\CX\setminus C_i\}$ for each $i\in[c]$.
In this case, we say that $\PP$ is \emph{induced} by $\CC$.

The next lemmas shows that we can exploit partition families $\{\CP_k\}_{k\in K}$ of $J$ algorithmically.

\begin{lem}\label{lem:recPart}
There is an algorithm $\recPart$ that gets a input a pair $(\CX,\PP)$
where $\CX=(J,A,\Delta^\ord,g^\ord)$ is a tuple
for which Property (A), (B) and (g) hold
and $\PP=\{\CP_k\}_{k\in K}$ is a non-trivial partition family.
In time $2^{\polylog(|V|+|K|)}$, the algorithm reduces the canonical labeling problem of the instance $\CX$ to canonical labeling of either
\begin{enumerate}
  \item\label{lem:recPartTwo} two instances $(J_1,A,\Delta^\ord,g^\ord)$ and $(J_2,A,\Delta^\ord,g^\ord)$ with $|J_1|+|J_2|=|J|$, or
  \item\label{lem:recPartP} $|K|p$-many instances $(J_{k,i},A,\Delta^\ord,g)$
  of size $|J_{k,i}|\leq \frac{1}{p}|J|$ and to additionally $|K|$-many instances $(J_k,V,\Delta_k^\ord,\bot)$ of size $|J_k|\leq p$ for some $p\in\NN$ with $1<p\leq\frac{1}{2}|J|$.
\end{enumerate}
\end{lem}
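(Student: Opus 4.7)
The plan is to examine the partition family $\PP$ and distill from it an iso-invariant structural feature of $J$. Depending on the outcome, we either obtain a non-trivial two-way split of $J$ (matching \cref{lem:recPartTwo}) or are forced into the situation of uniform equipartitions, where \cref{lem:recPartP} becomes available via the standard partitioning technique extended to families.

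For each $x\in J$ and each $k\in K$ let $s_k(x)$ be the size of the unique part of $\CP_k$ containing $x$, and set $\sigma(x):=\{\!\{s_k(x): k\in K\}\!\}$, which is isomorphism-invariant. If $\sigma$ is non-constant on $J$, the $\sigma$-fibres form an iso-invariant partition of $J$ into at least two non-empty classes; using the ordering $\prec$ of \cref{lem:prec} I would take $J_1$ to be the $\prec$-smallest fibre and $J_2:=J\setminus J_1$. Keeping $A,\Delta^\ord,g^\ord$ unchanged trivially preserves Properties~(A),~(B),~(g), which yields \cref{lem:recPartTwo} with $|J_1|+|J_2|=|J|$.

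If instead $\sigma$ is constant, then every $\CP_k$ must be an equipartition with $p_k$ parts of size $|J|/p_k$, and non-triviality of $\PP$ ensures some $p_k\geq 2$. I would pick $p$ to be the canonically smallest value $\geq 2$ in the iso-invariant multiset $\{\!\{p_k:k\in K\}\!\}$ and let $K':=\{k\in K: p_k=p\}$; both are iso-invariant, and $2\leq p\leq |J|/2$. For each $k\in K'$ and each $i\in[p]$, the small sub-instance is $(J_{k,i},A,\Delta^\ord,g^\ord)$ with $J_{k,i}:=P_{k,i}$, which is $\Delta^\ord$-stable by iso-invariance of $\CP_k$ and of size exactly $|J|/p$; Properties~(A),~(B),~(g) descend because none of $A$, $\Delta^\ord$ or $g^\ord$ change. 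The merging instance for $k\in K'$ is $(J_k,V,\Delta^\ord_k,\bot)$ where $J_k$ is assembled from the canonical labelings returned by the $J_{k,i}$-subcalls, so $|J_k|\leq p$; the ambient group $\Delta^\ord_k$ is extracted iso-invariantly via the canonical generating set of \cref{lem:canGen}, and the resets $A:=V,\;g:=\bot$ make Properties~(A) and~(g) automatic. By \cref{lem:rep} together with iso-invariance of $\CP_k$, canonizing each $J_k$ produces a canonical labeling of $\CX$; padding the indices $k\in K\setminus K'$ with empty placeholders matches the cardinality bounds $|K|\cdot p$ and $|K|$ stated in \cref{lem:recPartP}.

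The main obstacle is enforcing isomorphism invariance through every selection step — the choice of $p$, the subset $K'$, and the groups $\Delta^\ord_k$ — so that (CL1) is preserved at the caller's level. This is secured by routing each decision through the ordered copy $V^\ord$, the ordering $\prec$ of \cref{lem:prec}, and the canonical generating sets of \cref{lem:canGen}, as has been done in the other subroutines of this section. Computing $\sigma$, sorting its fibres, extracting $K'$, preparing the sub-instances, and constructing each $\Delta^\ord_k$ are all polynomial in $|V|+|J|+|K|$, so together with the single invocation of \cref{lem:canGen} on a group over $V^\ord$ of degree $|V|$, the total running time is comfortably within $2^{\polylog(|V|+|K|)}$, as required.
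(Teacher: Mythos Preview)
Your dichotomy fails at the key step. The claim ``if $\sigma$ is constant, then every $\CP_k$ must be an equipartition'' is false. Take $J=\{1,\dots,6\}$ and for each $k\in\{1,\dots,6\}$ let $\CP_k=\{\{k\},\,J\setminus\{k\}\}$. Every element $x\in J$ satisfies $\sigma(x)=\{\!\{1,5,5,5,5,5\}\!\}$, so $\sigma$ is constant, yet no $\CP_k$ is an equipartition. Your Case~B therefore does not cover this situation, and you have no mechanism to proceed. The paper resolves exactly this by a different case split: it first asks whether \emph{some} $\CP_k$ is an equipartition. If yes, it restricts to the equipartitions of minimal size and runs essentially your Case~B. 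If no, it extracts from each non-equipartition $\CP_k$ a canonical part $P_k^*$ of size at most $\tfrac12|J|$ (the union of all smallest-cardinality parts, or the next size up), and then either $\bigcup_k P_k^*\subsetneq J$ gives an ordered two-part split (Option~\ref{lem:recPartTwo}), or $\bigcup_k P_k^*=J$ is a sparse cover on which the covering technique yields Option~\ref{lem:recPartP} with $p=2$. This entire branch is absent from your proposal.

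Two smaller points. First, even in your Case~B you stop short: canonizing each $J_k$ yields a canonical labeling of the pair $(J,\CP_k)$, not of $J$ itself, and you still need to aggregate across $k\in K'$. The paper does this by collecting the $\Theta_k\tau_k$, taking the $\prec$-argmin of the induced canonical forms $J^{\tau_k}$, and returning the coset they generate; invoking \cref{lem:rep} alone does not supply that step. Second, your assertion that each $J_{k,i}=P_{k,i}$ is ``$\Delta^\ord$-stable by iso-invariance of $\CP_k$'' is not needed and in general not true: the parts of an iso-invariant partition family need not be individually $\Delta$-invariant, and Property~(B) does not require them to be.
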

In case that $|K|$ is quasipolynomially bounded (or more precisely, bounded by $2^{\log_2(|V|)^4}$) the the lemma facilitates a recursion that leads to progress \cref{prog:J}.

In the following, we sketch the idea how to exploit a partition family.

\paragraph{The Partition-Family Technique}
Extending the covering technique, we suggest a technique for handling partition families that we use to prove \cref{lem:recPart}.
In this setting, we assume that we are given a set $J\in\obj(V)$ consisting of labeling cosets for which we can define a non-trivial partition family $\PP=\{\CP_k\}_{k\in K}$
in an isomorphism-invariant way.
We do not require any bound on the size of the partitions $\CP_k$.
The goal is the computation of a canonical labeling of $J$ using an efficient recursion.

Let $\PP':=\{\CP_k\in\PP\mid \CP_k\text{ is non-trivial}\}$ be the non-empty set of non-trivial partitions.
We can assume that $\PP'=\PP$, otherwise we continue with $\PP:=\PP'$.
We distinguish between two cases.

Case 1: There is a partition $\CP_k=\{P_{k,1},\ldots,P_{k,p_k}\}\in\PP$ that is an equipartition of $J$.
Again, we assume each $\CP_k\in\PP$ is an equipartition, otherwise consider the partition family $\PP:=\{\CP_k\in\PP\mid\CP_k\text{ is an equipartition}\}$.
Moreover, we can assume that all parts have the same size $|P_{k,i}|$ even across all equipartitions,
otherwise we would consider a subset $\PP:=\arg\min_{\CP_k\in\PP}|\CP_k|$.
Now, we use recursion and compute a canonical labeling $\Theta_{k,i}\tau_{k,i}$ for each part $P_{k,i}\subseteq J$.
For simplicity, we assume that all parts $P_{k,i}$ and $P_{k,j}$ are isomorphic.
Let $\CP_k^\set:=\{\Theta_{k,i}\tau_{k,i}\mid P_{k,i}\in\CP_k\}$.
By object replacement (\cref{lem:rep}), a canonical labeling for $\CP_k^\set$
also defines a canonical labeling for $\CP_k$.
To compute a canonical labeling $\Theta_k\tau_k$ for $\CP_k^\set$, we use recursion again.

Next, we compute a canonical labeling $\Theta\tau$ for $J$.
We choose the canonical labelings which lead to a minimal canonical form for $J$.
More precisely, let $K:=\arg\min_{\Theta_k\tau_k} J^{\tau_k}$ where the minimum is taken w.r.t.~the ordering ``$\prec$'' from \cref{lem:prec}.
We let $\Theta\tau$ be the labeling coset that is generated by all $\Theta_k\tau_k$ for $k\in K$.

We analyze the recurrence of this approach.
Let $p\in\NN$ be the size $p=|\CP_k|$ which is uniform over all partitions $\CP_k\in\PP$.
We have $|K|p$-many recursive calls for instances $P_{k,i}$ of size $\frac{1}{p}|J|$.
After that, we have $|K|$-many recursive calls for instances $\CP_k^\set$ of size $p$.
In case that $|K|\leq 2^{\polylog|V|}$,
this recurrence is progress via \cref{prog:J}.

Case 2: Each partition $\CP_k\in\PP$ is not an equipartition.
If a partition $\CP_k$ is not an equipartition of $J$, then $\CP_k$ induces a non-trivial \emph{ordered} partition
$\tilde\CP_k:=(P_k^1,\ldots,P_k^{|J|})$ of $J$ where
$P_k^x:=\bigcup_{P_{k,i}\in \CP_k,|P_{k,i}|=x}P_{k,i}$ for $x\in\{1,\ldots,|J|\}$.
Moreover, let $P_k^*:=P_k^x$ where $x\in\NN$ is the smallest number such that $1\leq |P_k^x|\leq\frac{1}{2}|J|$.
We define $J^*:=\bigcup_{\CP_k\in\PP}P_k^*$.

In the case in which $J^*\subsetneq J$, we found a non-trivial \emph{ordered} partition of
$J=J^*\cupdot J\setminus J^*$ and proceed with the partitioning technique.
This will lead to progress via \cref{prog:linJ}.

In the other case in which $J^*=J$, we found a cover of $J=\bigcup_{\CP_k\in\PP}P_k^*$ and proceed with the covering technique.
In case that $|K|\leq 2^{\polylog|V|}$, this will ensure progress via \cref{prog:J}.

\begin{proof}[Proof of \cref{lem:recPart}]
\algorithmDAN{\recPart(J,A,\Delta^\ord,g^\ord,\PP)}~\\
\comment{We simplify to the case in which all partitions $\CP_k\in\PP$ are non-trivial}\\
Define $\PP:=\{\CP_k\in\PP\mid\CP_k\text{ is non-trivial}\}$.
\begin{cs}
\case{there is a partition $\CP_k\in\PP$ that is an equipartition}~\\
\comment{We simplify to the case in which all $\CP_k\in\PP$ are equipartitions}\\
Define $\PP:=\{\CP_k\in\PP\mid\CP_k\text{ is an equipartition}\}$.\\
\comment{We simplify to the case in which $|\CP_k|$ are equal for all $\CP_k\in\PP$}\\
Define $\PP:=\arg\min_{\CP_k\in\PP}|\CP_k|$.\\
\comment{Now, there is a number $p\in\NN$ such that $p=|\CP_k|$ for all partitions $\CP_k\in\PP$}
\begin{for}{$\CP_k\in\PP$}
\comment{We show how to compute a canonical labeling $\Theta_k\tau_k$ for the pair $(J,\CP_k)$ for each partition $\CP_k\in\PP$.
Roughly speaking, the instance $(J,\CP_k)$ can be seen as an individualization of $J$ obtained by individualizing one partition
$\CP_k\in\PP$}\\
Recursively compute $\Theta_{k,i}\tau_{k,i}:=\canSet(P_{k,i},A,\Delta^\ord,g^\ord)$ for each part $P_{k,i}\in\CP_k$.\\
\comment{We have a multiplicative cost of $|\PP|\cdot p$
and recursive instances of size $|P_{k,i}|=|J|/p$}\\
Define $\CJ_k^\set:=\{(\Theta_{k,i}\tau_{k,i},P_{k,i}^{\tau_{k,i}})\mid P_{k,i}\in\CP_k\}$.\\
\comment{In previous algorithms, we computed a canonical labeling for $\CJ_k^\set$ by using \cref{cor:canObj0}.
However, in this case, the size $|\CJ_k^\set|=p$ might not be bounded by a quasipolynomial.
For this reason, we use a recursive approach to compute a canonical labeling for $\CJ_k^\set$.
First, we define an ordering according to the isomorphism type of the parts $P_{k,i}\in\CP_k$}\\
Define an \emph{ordered} partition $\JJ_k^\set:=(\CJ_{k,1}^\set,\ldots,\CJ_{k,m_k}^\set)$ of $\CJ_k^\set=\CJ_{k,1}^\set\cupdot\ldots\cupdot\CJ_{k,m_k}^\set$
such that:\\
$P_{k,i}^{\tau_{k,i}}\prec
P_{k,j}^{\tau_{k,j}}$, if and only if
$(\Theta_{k,i}\tau_{k,i},P_{k,i}^{\tau_{k,i}})\in\CJ_{k,p}^\set$ and $(\Theta_{k,j}\tau_{k,j},P_{k,j}^{\tau_{k,j}})\in \CJ_{k,q}^\set$ for some
$p<q\in[m_k]$.\\
Define $\pi_1(\CJ_{k,\ell}^\set):=\{\Theta_{k,i}\tau_{k,i}\mid (\Theta_{k,i}\tau_{k,i},P_{k,i}^{\tau_{k,i}})\in\CJ_{k,\ell}^\set\}$
for each $\ell\in[m_k]$.\\
\comment{The ordering ensures that Property (B) holds for all instances $\pi_1(\CJ_{k,p}^\set)$
for some group $\Theta_{k,p}^\ord$}\\
Recursively compute $\Lambda_{k,\ell}:=\canSet(\pi_1(\CJ_{k,\ell}^\set)),V,\Theta_{k,\ell}^\ord,\bot)$ for each $\ell\in[m_k]$.\\
\comment{
For our running time $T(|J|)$ given in \cref{runtime}, we have that $\sum_{\ell\in[m_k]} T(|\CJ_{k,\ell}^\set|)\leq
T(\sum_{\ell\in[m_k]}|\CJ_{k,\ell}^\set|)=T(|\CJ_k^\set|)$.
Therefore, in the worst case, we have $m_k=1$ and $|\CJ_{k,1}^\set|=|\CJ_k^\set|=p$.
Therefore, we have a multiplicative cost of $|\PP|\cdot m_k=|\PP|$
and recursive instances of size $p$}\\
Compute $\Theta_k\tau_k:=\canSet((\Lambda_{k,1},\ldots,\Lambda_{k,m_k}))$ using \cref{cor:canObj0}.\\
\comment{Observe that $\Theta_k\tau_k$ is a canonical labeling for $(J,\CP_k)$}
\end{for}
Define $\PP^\set:=\{\Theta_k\tau_k\mid\CP_k\in\PP\}$.\\
\comment{To obtain a canonical labeling $\Lambda$ for $J$ for the given $\Theta_k\tau_k$ we will proceed as follows.
We compare the canonical forms $J^{\tau_k}$ that we obtain for each individualized
partition $\CP_k\in\PP$.
Then, we collect the canonical labelings leading to a minimal canonical form w.r.t.~``$\prec$''}\\
Define $\PP_{\min}^\set:=\arg\min_{\Theta_k\tau_k\in\PP^\set}J^{\tau_k}\subseteq\PP^\set$ where the minimum is taken w.r.t.~the ordering ``$\prec$'' from \cref{lem:prec}.\\
Return $\Lambda:=\langle\PP_{\min}^\set\rangle$.\\
\comment{This is the smallest coset that contains all labeling cosets in $\PP_{\min}^\set$ as defined
in the preliminaries. The correctness proof for (CL2) is similar to the (CL2)-proof of \cref{lem:canSetSet}}
\end{cs}
\comment{Now, each partition $\CP_k\in\PP$ of $J$ is not an equipartition}
\begin{for}{$\CP_k\in\PP$}
Define an \emph{ordered} partition
$\tilde\CP_k:=(P_k^1,\ldots,P_k^{|J|})$ of $J$ where
$P_k^x:=\bigcup_{P_{k,i}\in \CP_k,|P_{k,i}|=x}P_{k,i}$ for $x\in\{1,\ldots,|J|\}$.\\
Define $P_k^*:=P_k^x\subseteq J$ where $x\in\NN$ is the smallest number such that $1\leq |P_k^x|\leq\frac{1}{2}|J|$.
\end{for}
Define $P^*:=\bigcup_{\CP_k\in\PP}P_k^*$.
\begin{cs}
\case{$P^*\subsetneq J$}~\\
\comment{We found an ordered partition of $J$ and proceed with the partitioning technique}\\
Define an \emph{ordered} partition $\CP=(P^*,P^\circ)$ of $J=P^*\cupdot P^\circ$ where $P^\circ:=J\setminus P^*$.\\
\comment{The partition is non-trivial since $P^*$ is non-empty by the definition of each part $P_k^*\subseteq J$}\\
Recursively compute $\Lambda_1:=\canSet(P^*,A,\Delta^\ord,g^\ord)$.\\
Recursively compute $\Lambda_2:=\canSet(P^\circ,A,\Delta^\ord,g^\ord)$.\\
\comment{We have that $|P^*|+|P^\circ|=|J|$ and therefore Option \ref{lem:recPartTwo} of \cref{lem:recPart} is satisfied}\\
Compute and return $\Lambda:=\canObj((\Lambda_1,\Lambda_2))$ using \cref{lem:canInt} or \cref{cor:canObj0}.\\
\comment{The algorithm from \cref{lem:canInt} and \cref{cor:canObj0} runs in time $2^{\polylog|V|}$}

\case{$P^*= J$}~\\
\comment{We found a sparse cover of $J$ and proceed with the covering technique}\\
Define a sparse cover $\CC:=\{C_k\mid \CP_k\in\PP\}$ of $J=\bigcup_{k\in K}C_k$ where $C_k:=P_k^*$.\\
For each $C_k\in\CC$, compute $\Theta_k\tau_k:=\canSet(C_k,A,\Delta^\ord,g^\ord)$ recursively.\\
\comment{We have a multiplicative cost of $|\PP|$
and recursive instances of size $|C_k|\leq\frac{1}{2}|J|$
and therefore Option \ref{lem:recPartP} of \cref{lem:recPart} is satisfied}\\
Define $\PP^\set:=\{\Theta_k\tau_k\mid \CP_k\in\PP\}$.\\
Define an \emph{ordered} cover $\C:=(\CC_1,\ldots,\CC_c)$ of $\PP^\set=\CC_1\cup\ldots\cup\CC_c$ such that:\\
$(C_k)^{\tau_k}\prec (C_{k'})^{\rho_{k'}}$, if and only if $\Theta_k\tau_k\in\CP_i$ and $\Theta_{k'}\tau_{k'}\in\CP_j$ for some $i,j\in[c]$ with $i<j$.\\
\comment{In fact, $\CC$ might not be a partition since $\Theta_k\tau_k=\Theta_{k'}\tau_{k'}$ for $k\neq k'$ might hold}\\
Compute $\Lambda_i:=\canSet(\CC_i)$ for each $\CC_i,i\in[c]$ using \cref{lem:canSet0}.\\
\comment{Since $|\CC_i|\leq|\PP^\set|=|\PP|$, the algorithm from \cref{lem:canSet0} runs in time $2^{\polylog(|V|+|\PP|)}$}\\
Compute and return $\Lambda
:=\canObj((\Lambda_1,\ldots,\Lambda_c))$ using \cref{cor:canObj0}.\\
\comment{
Since $(\Lambda_1,\ldots,\Lambda_c)$ is a tuple consisting of atoms, no set is involved in this object.
Therefore, the algorithm from \cref{cor:canObj0} runs in time $2^{\polylog|V|}(|V|+|\PP|)^{\CO(1)}$}
\end{cs}
\end{proof}

\paragraph{Relative Minimal Base Size}
Recall that the pointwise stabilizer of a subset $X\subseteq V$
in a group $\Delta\leq\sym(V)$ is denoted by $\Delta_{(X)}$.
The \emph{minimal base size} of a group $\Delta\leq\sym(V)$ relative to a subgroup $\Psi\leq\Delta$, denoted by $\pid(\Delta,\Psi)$, is
the smallest cardinality $|X|$ among all subsets $X\subseteq V$ such that $\Delta_{(X)}\leq\Psi$.

\begin{exa}\label{exa:pid}
We give some examples.
\begin{enumerate}
  \item The minimal base size of $\Delta$ is defined as $\pidb(\Delta):=\pid(\Delta,1)$ where $1\leq\sym(V)$ denotes the trivial group.
  It can easily be seen that $\pid(\Delta,\Psi)\leq \pidb(\Delta)\leq\log_2|\Delta|$.
  \item\label{exa:pid1} Let $\Psi:=\stab_{\Delta}(A)$ where $A\subseteq V$.
We show that $\pid(\Delta,\Psi)\leq \log_2(\Delta:\Psi)$.
We assume that $\Psi<\Delta$, otherwise $\pid(\Delta,\Psi)=0=\log_2(\Delta:\Psi)$.
Since the $\Psi$-orbit partition is a refinement of the $\Delta$-orbit, there is a
$\Psi$-orbit $U$ and a $\Delta$-orbit $W$ with $U\subsetneq W$ and $|U|\leq\frac{1}{2}|W|$.
Let $v\in U$. It holds that $(\Delta:\Psi)\cdot(\Psi:\Psi_{(v)})=(\Delta:\Delta_{(v)})\cdot(\Delta_{(v)}:\Psi_{(v)})$.
Moreover, $(\Psi:\Psi_{(v)})=|U|$ and $(\Delta:\Delta_{(v)})=|W|$.
Therefore, $(\Delta_{(v)}:\Psi_{(v)})\leq\frac{1}{2}(\Delta:\Psi)$.
By induction on the index, it holds that $\pid(\Delta,\Psi)\leq\pid(\Delta,\Delta_{(v)})+\pid(\Delta_{(v)},\Psi_{(v)})\leq
1+\log_2(\Delta_{(v)}:\Psi_{(v)})\leq 1+\log_2(\frac{1}{2}(\Delta:\Psi))=\log_2(\Delta:\Psi)$.
\item\label{exa:pid2} Let $\Psi:=\stab_{\Delta}(B_1,\ldots,B_b)$ where $\CB:=\{B_1,\ldots,B_b\}$ is a partition of $V=B_1\cupdot\ldots\cupdot B_b$.
We show that $\pid(\Delta,\Psi)\leq 2\cdot\log_2(\Delta:\Psi)$.
Let $\Theta:=\aut(\CB)\cap\Delta$.
Since $\Psi\leq\Theta\leq\Delta$, it follows that $\pid(\Delta,\Psi)\leq\pid(\Delta,\Theta)+\pid(\Theta,\Psi)$.
By the definition of $\Theta$, for each $\theta\in\Theta$ and each $B\in\CB$ it holds that $B^\theta\in\CB$
and therefore $B^\theta$ is equal to $B$ or disjoint from $B$ (if $\Theta$ would be transitive on $V$, then $\CB$ is a block system for $\Theta$).
Therefore, fixing a point $v\in B\in\CB$ also fixes the set $B\in\CB$, i.e., $\Theta_{(v)}[\CB]\leq\Theta[\CB]_{(B)}$ for all $v\in B\in\CB$.
Let $\CX\subseteq\CB$ and $X:=\bigcup_{B\in\CX}B\subseteq V$
and assume $\Theta[\CB]_{(X)}\leq\Psi[\CB]$.
Then, $\Theta_{(X)}[\CB]\leq\Theta[\CB]_{(X)}\leq\Psi[\CB]=1[\CB]$ and thus $\Theta_{(X)}\leq\Psi$.
This gives
$\pid(\Theta,\Psi)\leq\pid(\Theta[\CB],\Psi[\CB])$.
Moreover, $\pid(\Theta[\CB],\Psi[\CB])=\pidb(\Theta[\CB])\leq\log_2|\Theta[\CB]|=\log_2(\Theta:\Psi)$.
Next, we show $\pid(\Delta,\Theta)\leq 2\cdot\log_2(\Delta:\Theta)$.
Consider the permutation groups $\Theta[V^2]$ and $\Delta[V^2]$ induced on $V^2$.
Fixing two points $v,w$ in the domain of $\Delta$ also fixes the point $(v,w)$ in the domain of $\Delta[V^2]$, i.e.,
$\Delta_{(\{v,w\})}[V^2]\leq\Delta[V^2]_{((v,w))}$ for all $v,w\in V$.
Moreover, $\Delta_{(X)}[V^2]\leq\Theta[V^2]$ implies that $\Delta_{(X)}\leq\Theta$ for all $X\subseteq V$.
Therefore, $\pid(\Delta,\Theta)\leq 2\cdot\pid(\Delta[V^2],\Theta[V^2])$.
Let $A:=\{(v,w)\in V^2\mid \{v,w\}\subseteq B_i$ for some $B_i\in\CB\}$.
Then, $\Theta[V^2]=\stab_{\Delta[V^2]}(A)$.
Therefore, $2\cdot\pid(\Delta[V^2],\Theta[V^2])\leq 2\cdot\log_2(\Delta[V^2]:\Theta[V^2])=2\cdot\log_2(\Delta:\Theta)$.
\item\label{exa:pid3} Let $\Psi:=\alt(V)\leq\Delta:=\sym(V)$. This is an example where the relative base size is large compared to the index of the subgroup.
It is easy to see that $\pid(\Delta,\Psi)=|V|-1$.
\end{enumerate}
\end{exa}

The next lemma facilitate a subgroup reduction, similar as in Luks's framework.
The multiplicative cost of this recursion corresponds to the index of the subgroup.

\begin{lem}\label{lem:toGroup}
There is an algorithm $\toGroup$ that gets as input a pair $(\CX,\Psi^\ord)$
where $\CX=(J,A,\Delta^\ord,g^\ord)$ is a tuple
for which Property (A), (B) and (g) hold
and $\Psi^\ord\leq\Delta^\ord$ is a subgroup.
Let $c_\ind:=(\Delta^\ord:\Psi^\ord)$ and $c_\pid:=\pid(\Delta^\ord,\Psi^\ord)$.
In time polynomial in the input and output size, the algorithm either
\begin{enumerate}
  \item\label{lem:toGroupPart} finds a non-trivial partition family $\PP=\{\CP_k\}_{k\in K}$ of $J$
  with $|K|\leq c_\ind\cdot|V|^{c_\pid}$, or
  \item\label{lem:toGroupSub} reduces the canonical labeling problem of $\CX$ to the canonical labeling problem of $c_\ind$-many instances $(\hat J_i,A,\Psi^\ord,\bot)$ with $|\hat J_i|\leq |J|$
  for $i\in[c_\ind]$.
\end{enumerate}
\end{lem}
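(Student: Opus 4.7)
The plan is to adapt Luks's subgroup reduction to the setting of sets of labeling cosets. I would first canonically decompose $\Delta^\ord = \bigcup_{\ell \in [c_\ind]} \delta_\ell^\ord \Psi^\ord$ into left cosets, where the representatives $\delta_\ell^\ord$ are chosen intrinsically from the ordered data $(\Delta^\ord, \Psi^\ord)$ using canonical generating sets (\cref{lem:canGen}) together with a lex-minimality rule on $V^\ord$. This is computable in polynomial time via Schreier--Sims.

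For the subgroup-reduction branch, I would exploit that Property~(B) gives $\Delta_i \rho_i = \rho_i \Delta^\ord = \bigcup_\ell (\rho_i \delta_\ell^\ord) \Psi^\ord$ for every $\Delta_i \rho_i \in J$. Setting $\tau_{i,\ell} := \rho_i \delta_\ell^\ord$ and $\Psi_{i,\ell} := \tau_{i,\ell} \Psi^\ord \tau_{i,\ell}^{-1} \leq \sym(V)$, each summand becomes a labeling subcoset $\Psi_{i,\ell} \tau_{i,\ell}$. For each $\ell \in [c_\ind]$, assemble $\hat J_\ell := \{\Psi_{i,\ell} \tau_{i,\ell} : \Delta_i \rho_i \in J\}$, which has $|\hat J_\ell| \leq |J|$. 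Properties~(A), (B) and~(g) for $(\hat J_\ell, A, \Psi^\ord, \bot)$ are then straightforward to verify: (B) with $A^\ord_\ell := (A^\ord)^{\delta_\ell^\ord}$ is uniform over $i$ since $A^{\rho_i} = A^\ord$; (A) holds because the restriction of $\rho_i \delta_\ell^\ord \Psi^\ord$ to $V \setminus A$ depends only on $\rho_i|_{V \setminus A}$ and $\delta_\ell^\ord$; and (g) holds vacuously. A canonical labeling of the original instance can then be recovered from the canonical labelings of the $\hat J_\ell$'s by applying the $\prec$-minimality construction used in the transitive branch of the proof of \cref{lem:canSetSet}.

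For the partition-family branch, I would canonically select a base $X = (x_1, \ldots, x_{c_\pid})$ in $V^\ord$ with $\Delta^\ord_{(X)} \leq \Psi^\ord$, which exists by definition of $c_\pid$ and can again be chosen canonically from the ordered data $(\Delta^\ord, \Psi^\ord)$. For each pair $(\ell, v) \in [c_\ind] \times V^{c_\pid}$, define the partition $\CP_{(\ell, v)}$ of $J$ into two parts: those $\Delta_i \rho_i$ for which $\rho_i \delta_\ell^\ord \Psi^\ord$ contains a labeling sending $v$ pointwise to $X$, and its complement. The index set has $|K| = c_\ind \cdot |V|^{c_\pid}$, and at least one partition is non-trivial unless $J$ is itself degenerate, a case the main algorithm handles separately.

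The main obstacle is ensuring isomorphism invariance~(CL1) of both constructions: the coset representatives $\delta_\ell^\ord$ and the base $X$ must be intrinsic to $(\Delta^\ord, \Psi^\ord)$ rather than to their representations. This is precisely the reason for performing the permutation-group operations on the ordered set $V^\ord$ and invoking canonical generating sets. A secondary bookkeeping challenge is the simultaneous preservation of Properties~(A), (B), and~(g) across the reduced instances, together with checking that the tuples $\tau_{i,\ell} = \rho_i \delta_\ell^\ord$ and partition labels transform correctly under every isomorphism $\phi : V \to V'$.
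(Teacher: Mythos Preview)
Your proposed instances $\hat J_\ell := \{\rho_i\delta_\ell^\ord\Psi^\ord : \Delta_i\rho_i \in J\}$ are not well-defined: the subcoset $\rho_i\delta_\ell^\ord\Psi^\ord$ depends on the particular representative $\rho_i$ of $\Delta_i\rho_i$. Replacing $\rho_i$ by $\rho_i\delta^\ord$ for some $\delta^\ord\in\Delta^\ord$ turns this subcoset into $\rho_i\delta^\ord\delta_\ell^\ord\Psi^\ord$, a different left coset whenever $(\delta_\ell^\ord)^{-1}\delta^\ord\delta_\ell^\ord\notin\Psi^\ord$. Choosing the $\delta_\ell^\ord$ canonically from the ordered data $(\Delta^\ord,\Psi^\ord)$ does not help, because there is no canonical $\rho_i$ inside $\Delta_i\rho_i$; hence (CL1) fails for your reduction. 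The same defect contaminates your partition-family branch, whose parts are phrased through the same ill-defined subcoset $\rho_i\delta_\ell^\ord\Psi^\ord$. Independently, your verification of Property~(A) for $\hat J_\ell$ is incorrect: Property~(A) for $J$ only asserts equality of the \emph{cosets} $(\Delta_i\rho_i)|_{V\setminus A}$, not of the representatives $\rho_i|_{V\setminus A}$, so even granting a fixed indexing, $(\rho_i\delta_\ell^\ord\Psi^\ord)|_{V\setminus A}$ and $(\rho_j\delta_\ell^\ord\Psi^\ord)|_{V\setminus A}$ need not coincide.

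The paper's proof circumvents both problems by working with the \emph{set} $\hat J:=\{\rho_i\delta_\ell^\ord\Psi^\ord : i\in[t],\ell\in[s]\}$, which is independent of the representatives, and then partitioning $\hat J$ by an intrinsic equivalence: two subcosets lie in the same part $\hat J_k$ iff their restrictions to $V\setminus A$ agree \emph{and} some common tuple $X\in V^{c_\pid}$ identifies both (meaning each subcoset contains an element sending $X$ to the fixed ordered base $X^\ord$). The first clause forces Property~(A) on every $\hat J_k$; the second, together with $\Delta^\ord_{(X^\ord)}\leq\Psi^\ord$, ensures that at most one subcoset of any given $\Delta_i\rho_i$ lands in each $\hat J_k$, whence $|\hat J_k|\leq|J|$ and $|\hat\CJ|\leq c_\ind\cdot|V|^{c_\pid}$. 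The induced cover $C_k:=\{\Delta_i\rho_i\in J:\text{some subcoset of }\Delta_i\rho_i\text{ lies in }\hat J_k\}$ then yields the dichotomy: either the associated partition family of $J$ is non-trivial and is returned, or every $C_k=J$, in which case each $\hat J_k$ meets every $\Delta_i\rho_i$ exactly once, so $|\hat\CJ|\leq c_\ind$ and one recurses on the $\hat J_k$.
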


In contrast to Luks's subgroup reduction, the present reduction splits all labeling cosets in $J$
simultaneously.
We describe the idea of this algorithm.

\paragraph{Intuition of the Subgroup Recursion}
We consider the decomposition into left cosets of
$\Delta^\ord=\bigcup_{\ell\in[s]}\delta^\ord_\ell\Psi^\ord$
and define $\hat J:=\{\rho_i\delta^\ord_\ell\Psi^\ord\mid i\in[t],\ell\in[s]\}$.
Surprisingly, we can show that $\aut(\hat J)=\aut(J)$.
This means that a canonical labeling for $\hat J$ defines a canonical labeling for $J$
as well and vice versa.
Therefore, the first idea that comes to mind would be a recursion on the instance $(\hat J,A,\Psi^\ord,\bot)$.
However, there are two problems when recursing on $\hat J$.
First, the instance $\hat J$ does not necessarily satisfy Property (A).
To ensure, Property (A) for the recursive instance, one could reset $A:=V$, but this would not lead to the desired
recursion.
Second, it holds that $|\hat J|>|J|$ (assumed that $\Psi^\ord<\Delta^\ord$ is a proper subgroup).
Also this blow-up in the instance size would not lead to the desired recursion.
The given subroutine is designed to fix exactly these two problems.
In particular, we construct a decomposition of $\hat J=\hat J_1\cupdot\ldots\cupdot\hat J_r$
such that $r\leq c_\ind$ and $|\hat J_i|\leq|J|$ and such that Property (A) holds for each instance
$(\hat J_i,A,\Psi^\ord,\bot)$.

\begin{proof}[Proof of \cref{lem:toGroup}]
\algorithmDAN{\toGroup(J,A,\Delta^\ord,g^\ord,\Psi^\ord)}\\
Decompose $\Delta^\ord=\bigcup_{\ell\in[s]}\delta^\ord_\ell\Psi^\ord$ into left cosets of $\Psi^\ord$.\\
Define $\hat J:=\{\rho_i\delta^\ord_\ell\Psi^\ord\mid i\in[t],\ell\in[s]\}$.\\
\comment{We claim that $\aut(\hat J)=\aut(J)$. It is not difficult to see that $\aut(J)\subseteq\aut(\hat J)$ since
$\hat J$ is defined in an isomorphism-invariant way.
On the other side, let $\sigma\in\aut(\hat J)$. Therefore, for each labeling coset $\rho_i\delta_\ell^\ord\Psi^\ord\in\hat J$ there is a
labeling coset
$\rho_{i'}\delta_{\ell'}^\ord\Psi^\ord\in\hat J$ such that $(\rho_i\delta_\ell^\ord\Psi^\ord)^\sigma=\rho_{i'}\delta_{\ell'}^\ord\Psi^\ord$
or equivalently $\sigma\in\rho_i\delta_\ell^\ord\Psi^\ord{\delta_{\ell'}^\ord}^{-1}\rho_{i'}^{-1}$.
In particular, $\sigma\in\rho_i\Delta^\ord\rho_{i'}^{-1}$ or equivalently $(\rho_i\Delta^\ord)^\sigma=\rho_{i'}\Delta^\ord$.
Therefore, $\sigma\in\aut(J)$}\\
Let $X^\ord=(x_1^\ord,\ldots,x_{c_\pid}^\ord)\in (V^\ord)^{c_\pid}$ be the minimal (w.r.t.~the ordering ``$\prec$'') tuple
such that $\Delta^\ord_{(\{x_1^\ord,\ldots,x_{c_\pid}^\ord\})}\leq\Psi^\ord$.\\
We say that $X\in V^{c_\pid}$ \emph{identifies} the subcoset $\rho_i\delta^\ord_\ell\Psi^\ord\leq\Delta_i\rho_i$
if $X^{\rho_i\delta^\ord_\ell\psi^\ord}=X^\ord$ for some $\psi^\ord\in\Psi^\ord$.\\
\comment{We claim that each $X\in V^{c_\pid}$ identifies at most one subcoset $\rho_i\delta^\ord_\ell\Psi^\ord\leq\Delta_i\rho_i$
of each $\Delta_i\rho_i\in J$.
Assume that $X\in V^{c_\pid}$ identifies both
$\rho_i\delta^\ord_\ell\Psi^\ord,\rho_i\delta^\ord_{\ell'}\Psi^\ord\leq\Delta_i\rho_i$
for some $\Delta_i\rho_i\in J$.
We show that $\ell=\ell'$.
There are $\psi^\ord,{\psi^\ord}'\in\Psi^\ord$ such that $X^{\rho_i\delta^\ord_\ell\psi^\ord}=X^\ord=X^{\rho_i\delta^\ord_{\ell'}{\psi^\ord}'}$.
This implies $(\delta^\ord_\ell\psi^\ord)^{-1}\delta^\ord_{\ell'}{\psi^\ord}'\in\Delta^\ord_{(\{x_1^\ord,\ldots,x_{c_\pid}^\ord\})}\leq\Psi^\ord$
and therefore $(\delta^\ord_\ell)^{-1}\delta^\ord_{\ell'}\in\Psi^\ord$ and thus $\delta^\ord_\ell=\delta^\ord_{\ell'}$}\\
Define an (unordered) partition $\hat \CJ:=\{\hat J_1,\ldots,\hat J_r\}$ of 
$\hat J=\hat J_1\cupdot\ldots\cupdot\hat J_r$ such that:\\
$\rho_i\delta^\ord_\ell\Psi^\ord,\rho_{i'}\delta^\ord_{\ell'}\Psi^\ord\in\hat J_k$ for some $\hat J_k\in\hat\CJ$,
iff
$(\rho_i\delta^\ord_\ell\Psi^\ord)|_{V\setminus A}=(\rho_{i'}\delta^\ord_{\ell'}\Psi^\ord)|_{V\setminus A}$ and
there is a tuple $X\in V^{c_\pid}$ that identifies both
$\rho_i\delta^\ord_\ell\Psi^\ord$ and $\rho_{i'}\delta^\ord_{\ell'}\Psi^\ord$.\\
\comment{As already observed, each $X\in V^{c_\pid}$ identifies at most one subcoset 
$\rho_i\delta^\ord_\ell\Psi^\ord\leq\Delta_i\rho_i$ of $\Delta_i\rho_i$.
For this reason $|\hat J_k|\leq |J|$ for each $\hat J_k\in\hat \CJ$. On the other side, $|\hat\CJ|\leq c_\ind\cdot |V|^{c_\pid}$}\\
Define a cover $\CC:=\{C_1,\ldots,C_r\}$ of $J=C_1\cup\ldots\cup C_r$ such that:\\
$\Delta_i\rho_i\in C_k$
if there are $\ell\in[s]$ such that
$\rho_i\delta^\ord_\ell\Psi^\ord\in\hat J_k$.\\
Define $\PP:=\{\CP_k\}_{k\in[r]}$ as partition family induced by $\CC$, i.e.,
$\CP_k:=\{P_{k,1},P_{k,2}\}$ where
$P_{k,1}:=C_k$ and $P_{k,2}:=J\setminus C_k$ for $k\in[r]$.
\begin{cs}
\case{$\PP$ is non-trivial}
Return $\PP$.\\
\comment{In this case, Option \ref{lem:toGroupPart} of \cref{lem:toGroup} is satisfied}
\case{there is $\CP_k\in\PP$ that is the partition into singletons}~\\
Return $\Lambda:=\canObj(J)$ using \cref{cor:canObj0}.\\
\comment{Since $\CP_k\in\PP$ is the partition into singletons and has size $|\CP_k|\leq 2$, it follows that $|J|\leq 2$}

\end{cs}
\comment{Now, each $\CP_k\in\PP$ is the singleton partition. This means that for each $\hat J_k\in\hat\CJ$
and each $\Delta_i\rho_i\in J$ there is a subcoset $\rho_i\delta^\ord_\ell\Psi^\ord\leq\Delta_i\rho_i$
that is contained in $\hat J_k$.
The same argument that shows $\aut(\hat J)\leq\aut(J)$
also shows that $\aut(\hat J_k)\leq\aut(J)$ for each $\hat J_k\in\hat\CJ$.
Roughly speaking, this means that $\hat J_k$ can be seen as an individualization of $\hat J$}\\
Compute $\Theta_k\tau_k:=\canSet(\hat J_k,A,\Psi^\ord,\bot)$ for each
$\hat J_k\in\hat\CJ$ recursively.\\
\comment{In this case, we satisfy Option \ref{lem:toGroupSub} of \cref{lem:toGroup}}\\
Define $\hat\CJ^\set:=\{\Theta_k\tau_k\mid\hat J_k\in \hat\CJ\}$.\\
\comment{We collect the canonical labelings $\Theta_k\tau_k$ of $\hat J_k$ leading to minimal canonical forms of the input}\\
Define $\hat\CJ_{\min}^\set:=\arg\min_{\Theta_k\tau_k\in \hat \CJ^\set}J^{\tau_k}\subseteq\hat \CJ^\set$ where the minimum is taken w.r.t.~the ordering ``$\prec$'' from \cref{lem:prec}.\\
Return $\Lambda:=\langle\hat\CJ_{\min}^\set\rangle$.\\
\comment{This is the smallest coset containing all labeling cosets in $\hat\CJ_{\min}$ as defined
in the preliminaries. The correctness proof for (CL2) is similar to the (CL2)-proof of \cref{lem:canSetSet}}
\end{proof}

\begin{theo}[\cite{DBLP:journals/corr/Babai15}, Theorem 3.2.1.]\label{theo:cameron}
Let $\Delta\leq\sym(V)$ be a primitive group of order $|\Delta| \geq |V|^{1+\log_2 |V|}$ where $|V|$ is greater
than some absolute constant. Then $\Delta$ is a Cameron group and has a normal subgroup $N$ of index at most $|V|$ such that $N$
has a system of imprimitivity on which $N$ acts as a Johnson group.
Moreover, $N$ and the system of imprimitivity in question can be found in polynomial time.
\end{theo}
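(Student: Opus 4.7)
The plan is to combine the O'Nan-Scott classification of primitive permutation groups with explicit order bounds to force $\Delta$ into the Cameron case, and then read off the Johnson structure and the normal subgroup $N$ directly from the wreath-product representation.

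First, I would partition primitive groups by O'Nan-Scott type: (I) Affine, (II) Almost Simple, (III) Simple Diagonal, (IV) Product Action, (V) Twisted Wreath Product. For each of types (I), (III), (V), standard CFSG-based order bounds (of Liebeck and others) give $|\Delta| < |V|^{1+\log_2|V|}$ once $|V|$ exceeds an absolute constant, so these types are immediately excluded by the hypothesis on $|\Delta|$. For type (II), almost-simple groups of degree $n$ satisfy $|\Delta| \leq n^{O(\log n)}$ with the sole exception of $\alt(W)$ or $\sym(W)$ acting on $s$-subsets $\binom{W}{s}$ with $s \leq |W|/2$; again, to exceed the stated bound, $\Delta$ must lie in this Johnson family, which is the Cameron case with $k=1$. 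For type (IV), $\Delta \leq H \wr \sym(k)$ acts on $W_0^k$ with $H$ primitive on $W_0$; iterating the type-(II) analysis on the component $H$ forces $W_0 = \binom{W}{s}$ and $\alt(W)[\binom{W}{s}] \leq H \leq \sym(W)[\binom{W}{s}]$, so $\Delta$ is a Cameron group in the sense defined earlier in the excerpt.

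Having identified $V$ with $\binom{W}{s}^k$ and $\Delta$ as a Cameron group, let $h : \Delta \to \sym(k)$ be the canonical induced homomorphism permuting the $k$ wreath factors, and let $N$ be its kernel. Then $N$ is normal in $\Delta$ with $(\Delta : N) = |h(\Delta)| \leq k!$, and since $\binom{|W|}{s} \geq 2$ we have $|V| = \binom{|W|}{s}^k \geq 2^k \geq k!$ for all but very small $k$ (absorbed into the absolute-constant assumption on $|V|$), giving $(\Delta : N) \leq |V|$. A system of imprimitivity for $N$ is obtained by partitioning $V = \binom{W}{s}^k$ according to the first coordinate: the blocks are the sets $\{X\} \times \binom{W}{s}^{k-1}$ for $X \in \binom{W}{s}$. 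Because $N$ preserves the coordinate decomposition and its first-coordinate projection contains $\alt(W)[\binom{W}{s}]$, the induced action of $N$ on this block system is a Johnson group on $\binom{W}{s}$, as required.

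For the algorithmic part, the recipe is standard permutation-group machinery: from generators of $\Delta$ compute the socle and its minimal normal subgroups via Schreier-Sims in polynomial time, determine the O'Nan-Scott type (in the large-order regime only the Cameron case survives), recover $W$, $s$, $k$ from the socle structure, and read off $h$ and $N$ by membership testing on generators. The main obstacle is the order-bound case analysis ruling out types (I), (III), non-Cameron (IV), and (V); this is the heart of the argument and relies on the Classification of Finite Simple Groups together with Liebeck-type bounds on orders of primitive groups. The constructive recognition of the Cameron structure in polynomial time was worked out explicitly in the Babai paper from which the theorem is cited.
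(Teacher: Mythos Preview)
The paper does not give its own proof of this theorem; it is quoted as a black box from Babai's quasipolynomial-time paper (Theorem~3.2.1 there), so there is nothing in-paper to compare your argument against. Your overall strategy---O'Nan--Scott classification plus CFSG-based order bounds to force the Cameron case, then taking $N = \ker(h)$ for the coordinate-permutation map $h:\Delta\to\sym(k)$---is the standard route and matches Babai's derivation.

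There is, however, a genuine error in your index bound. You write that $|V| = \binom{|W|}{s}^k \geq 2^k \geq k!$ ``for all but very small $k$'', but the inequality $2^k \geq k!$ is false for every $k \geq 4$; by Stirling, $k!$ eventually dominates any fixed exponential. So the crude estimate $\binom{|W|}{s} \geq 2$ is far too weak to conclude $(\Delta:N)\leq |V|$. The correct argument re-uses the large-order hypothesis $|\Delta| \geq |V|^{1+\log_2|V|}$ at this point: together with $|\Delta| \leq (|W|!)^k \cdot k!$ it forces $|W|$ (and hence $\binom{|W|}{s}$) to grow with $|V|$---in particular $\binom{|W|}{s}$ must exceed $k$---whence $|V| = \binom{|W|}{s}^k \geq k^k \geq k! \geq |h(\Delta)| = (\Delta:N)$. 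Without this second appeal to the order bound, your index estimate does not go through.
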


\begin{lem}\label{lem:pid4}
Let $N\leq\Delta\leq\sym(V)$ be the group from \cref{theo:cameron}.
Then, $\pid(\Delta,N)\leq\log_2|V|$.
\end{lem}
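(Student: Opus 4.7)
The plan is to deduce the bound from \cref{exa:pid}(\ref{exa:pid1}) by realizing $N$ as the stabilizer of a single point in a suitable auxiliary $\Delta$-action, and then transferring the resulting base back to $V$. Since $(\Delta:N)\leq |V|$ by \cref{theo:cameron}, the target bound $\log_2|V|$ is exactly the halving bound $\log_2(\Delta:N)$ supplied by \cref{exa:pid}(\ref{exa:pid1}), so this strategy has a chance of being tight.

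First I would use the system of imprimitivity $\CB=\{B_1,\ldots,B_b\}$ for $N$ provided by \cref{theo:cameron}, on which $N$ acts as a (transitive) Johnson group. Since $\Delta$ is primitive on $V$ and $\CB$ is non-trivial ($b\geq 2$), $\CB$ is not $\Delta$-invariant as a partition. However, because $N\lhd\Delta$, the $\Delta$-translates of $\CB$ form an orbit $\Omega:=\{g\CB\mid g\in\Delta\}$ of partitions of $V$, and the setwise stabilizer $\stab_\Delta(\CB)$ contains $N$. Using the Cameron-group structure (the embedding $\Delta\leq \sym(W)[\binom{W}{s}]\wr\sym(k)$ and the fact that $N$ is, up to adjustments, the intersection of $\Delta$ with the socle-type subgroup $(\alt(W))^k$), I would argue $\stab_\Delta(\CB)=N$ exactly, so that $|\Omega|=(\Delta:\stab_\Delta(\CB))=(\Delta:N)\leq|V|$.

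Next I would apply \cref{exa:pid}(\ref{exa:pid1}) to the action of $\Delta$ on $\Omega$: with $\Psi:=N=\stab_\Delta(\CB)$ realized as the stabilizer of the single point $\CB\in\Omega$, we obtain a subset $Y\subseteq\Omega$ with $|Y|\leq \log_2(\Delta:N)\leq\log_2|V|$ and $\Delta_{(Y)}=N$.

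Finally I would transfer $Y\subseteq\Omega$ to a subset $X\subseteq V$ of the same cardinality. Each $g\CB\in Y$ is a partition of $V$, and the Cameron structure lets one attach to it a distinguished point of $V$ (e.g., a point whose containing block within $g\CB$ is uniquely determined by the partition) so that fixing this point in $V$ forces $\delta\in\stab_\Delta(g\CB)=gNg^{-1}$, hence after intersecting over $Y$ forces $\delta\in N$. The hard part will be this transfer: showing that each block-system individualization in $\Omega$ can be simulated by a single point-individualization in $V$ without losing the constant, which is where the full Cameron structure (and the hypothesis that the imprimitivity system is non-trivial) is essential; any overhead here would yield only a bound $c\log_2|V|$ rather than $\log_2|V|$, so the tight translation is the crux.
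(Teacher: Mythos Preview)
Your plan has two genuine gaps, and the paper's argument avoids both by taking a completely different, explicit route.

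First, the claim $\stab_\Delta(\CB)=N$ is not justified and is in fact doubtful for the natural candidate $\CB$. In the Cameron identification $V=\binom{W}{s}^k$ one has $N=\ker(h)$ for the coordinate-permutation map $h:\Delta\to\sym(k)$, and the block system on which $N$ acts as a Johnson group is the partition of $V$ according to one fixed coordinate, say the first. But then any $\delta\in\Delta$ with $h(\delta)\in\sym(k)_{(1)}$ still permutes these blocks among themselves, so $\stab_\Delta(\CB)=h^{-1}(\sym(k)_{(1)})\supsetneq N$ whenever $k\ge 2$. Your orbit $\Omega$ would then have size $k$, not $(\Delta:N)$, and a base in $\Omega$ only pins down the coordinate, not the full kernel. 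Second, even if you repaired this, the transfer step --- replacing each element of a base $Y\subseteq\Omega$ by a \emph{single} point of $V$ whose stabilizer in $\Delta$ already fixes that block system --- is exactly what you flag as ``the hard part''; you give no mechanism for it, and there is no reason to expect constant $1$ rather than some $c>1$.

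The paper bypasses all of this. It uses $N=\ker(h)$ directly: for each $i\in[k]$ it picks two explicit points $A_i,B_i\in\binom{W}{s}^k$ agreeing in coordinate $i$ and differing there from every other coordinate, so that any $\delta$ fixing both must satisfy $h(\delta)(i)=i$. Taking $X=\bigcup_i\{A_i,B_i\}$ gives $\Delta_{(X)}\le\ker(h)=N$, and $|X|=2k\le 2\log_2|V|/\log_2|W|\le\log_2|V|$ since $|W|\ge 4$. This is a direct construction in $V$, with no auxiliary action and no transfer step.
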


\begin{proof}
As $\Delta\leq\sym(V)$ is a Cameron group, we have
$(\alt(W)[\binom{W}{s}])^k\leq\Delta\leq\sym(W)[\binom{W}{s}] \wr \sym(k)$.
We identify $V=\binom{W}{s}^k$.
We have an induced homomorphism $h:\Delta\to\sym(k)$.
It follows from the proof of \cref{theo:cameron} that $N=\ker(h)$.
For each $i\in[k]$, we choose two points $A_i=(a_1,\ldots,a_k),B_i=(b_1,\ldots,b_k)\in\binom{W}{s}^k$ such that $a_i=b_i$
and $a_i\neq b_j$ for $i\neq j$. We define $X:=\bigcup_{i\in[k]}\{A_i,B_i\}$.
Observe that $|X|=2k\leq 2\frac{\log_2 |V|}{\log_2|W|}\leq \log_2|V|$.
We claim that $\Delta_{(X)}\leq N$.
Observe that $h(\Delta_{(\{A_i,B_i\})})\leq\sym(k)_{(i)}$ for all $i\in[k]$.
Therefore, $h(\Delta_{(X)})\leq\sym(k)_{(\{1,\ldots,k\})}=1$ and thus $\Delta_{(X)}\leq\ker(h)=N$.
\end{proof}

\begin{lem}\label{lem:toJohnson}
There is an algorithm $\toJohnson$ that gets as input an instance $(J,A,\Delta^\ord,\bot)$
for which Property (A), (B) and (g) hold.
In time $(|V|+|J|)^{\polylog|V|}$, the algorithm reduces the canonical labeling problem of $(J,A,\Delta^\ord,\bot)$ to
canonical labeling of either
\begin{itemize}
  \item (progress \cref{prog:linJ}) two instances $(J_1,A,\Delta^\ord,\bot)$ and $(J_2,A,\Delta^\ord,\bot)$ with $|J_1|+|J_2|=|J|$, or
  \item (progress \cref{prog:linA}) one instance $(J,A',\Delta^\ord,\bot)$ with $|A'|<|A|$, or
  \item (progress \cref{prog:J}) $2^{\log_2 p+\log_2(|V|)^4}$-many instances $(J_{k,i},A,\Delta^\ord,\bot)$
  of size $|J_{k,i}|\leq \frac{1}{p}|J|$ and to additionally $2^{\log_2(|V|)^4}$-many instances $(J_k,V,\Delta_k^\ord,\bot)$ of size $|J_k|\leq p$ for some $p\in\NN$
  with $1<p\leq\frac{1}{2}|J|$, or
  \item (progress \cref{prog:A}) $2^{\log_2(|V|)^3}$-many instances $(\hat J_i,A,\Psi^\ord,\bot)$ with $|\hat J_i|\leq |J|$ and
  such that
  $\orb_{A^\ord}(\Psi^\ord)\leq\frac{1}{2}\orb_{A^\ord}(\Delta^\ord)$, or
  \item (progress \cref{prog:giant}) $|V|$-many instances $(\hat J_i,A,\Psi^\ord,g^\ord)$ where
  $|\hat J_i|\leq |J|$ and such that $\orb_{A^\ord}(\Psi^\ord)\leq \orb_{A^\ord}(\Delta^\ord)$ and $g^\ord$ is defined.
\end{itemize}
\end{lem}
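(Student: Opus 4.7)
The algorithm $\toJohnson$ follows the Babai--Luks template, studying the action of $\Delta^\ord$ on $A^\ord$ and applying a case distinction based on the O'Nan--Scott classification combined with \cref{theo:cameron} to land in one of the five admissible progress types. So that every group-theoretic quantity extracted is isomorphism-invariant, all structural computations are performed on the linearly ordered $\Delta^\ord\leq\sym(V^\ord)$ via the canonical generating set from \cref{lem:canGen}. The algorithmic heavy lifting is then delegated to \cref{lem:recPart} (for partition families of quasipolynomial size) and \cref{lem:toGroup} (for subgroup reduction along a chosen $\Psi^\ord\leq\Delta^\ord$).

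First I dispatch on whether $\Delta^\ord$ is transitive on $A^\ord$. In the intransitive case I build a partition family of $J$ of size $|K|\leq|V|$ by partitioning $J$, for each $v\in A$, according to which $\Delta^\ord$-orbit of $A^\ord$ contains $\rho_i(v)$; \cref{lem:recPart} then yields progress \cref{prog:linJ} or \cref{prog:J}, and if every such per-vertex partition happens to be trivial the $\Delta^\ord$-orbit structure induces a canonical decomposition of $A$ on which all $\rho_i$ already agree set-wise, so after canonizing the restriction of $J$ to the smallest part that part can be absorbed into $V\setminus A$, giving progress \cref{prog:linA}. In the transitive case I compute a minimal block system $\CB^\ord$ of $\Delta^\ord[A^\ord]$, making $\Delta^\ord[\CB^\ord]$ primitive. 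If $|\Delta^\ord[\CB^\ord]|<|\CB^\ord|^{1+\log_2|\CB^\ord|}$, I take $\Psi^\ord$ to be the kernel of $\Delta^\ord\to\sym(\CB^\ord)$; then $(\Delta^\ord:\Psi^\ord)\leq 2^{\CO(\log^2|V|)}$ and by Item \ref{exa:pid2} of \cref{exa:pid} one has $\pid(\Delta^\ord,\Psi^\ord)\leq \CO(\log^2|V|)$. Feeding $\Psi^\ord$ into \cref{lem:toGroup}, either its partition family (of size $2^{\CO(\log^3|V|)}$) is fed into \cref{lem:recPart}, or its subgroup option produces $2^{\CO(\log^2|V|)}$ instances whose $\Psi^\ord$-orbits on $A^\ord$ lie inside single blocks and hence shrink by a factor at least $2$, which is exactly progress \cref{prog:A}.

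In the remaining case $|\Delta^\ord[\CB^\ord]|\geq|\CB^\ord|^{1+\log_2|\CB^\ord|}$ I invoke \cref{theo:cameron} to obtain a normal subgroup $N\trianglelefteq\Delta^\ord[\CB^\ord]$ of index at most $|V|$ together with a system of imprimitivity $\CB_N^\ord$ on which $N$ acts as a Johnson group. Let $\Psi^\ord\leq\Delta^\ord$ be the preimage of $N$, so $(\Delta^\ord:\Psi^\ord)\leq|V|$ and, combining \cref{lem:pid4} with Item \ref{exa:pid1} of \cref{exa:pid}, $\pid(\Delta^\ord,\Psi^\ord)\leq \CO(\log|V|)$. \cref{lem:toGroup} then returns either a partition family of quasipolynomial size (handled by \cref{lem:recPart}) or at most $|V|$ subgroup instances $(\hat J_i,A,\Psi^\ord,\bot)$; to each I attach the giant representation $g^\ord:\Psi^\ord\twoheadrightarrow N\twoheadrightarrow N[\CB_N^\ord]\cong\sym(W^\ord)$ extracted from the abstract description of the Johnson group, yielding progress \cref{prog:giant}. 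The main technical obstacle is this Cameron step, where one must verify all three parts of Property~(g) for the constructed $g^\ord$: transitivity of $\Psi^\ord$ on $A^\ord$ (from transitivity of $\Delta^\ord$ and the index bound on $N$), the inclusion $\Psi^\ord_{(A^\ord)}\leq\ker g^\ord$ (since $g^\ord$ factors through the block action on $\CB_N^\ord$, which is coarser than the action on $A^\ord$), and $|W^\ord|\geq 2+\log_2|V|$ (from the size lower bound $|N|\geq|\CB^\ord|^{\log_2|\CB^\ord|}$ forced by the Cameron threshold, which rules out small Johnson parameters). A subtler companion issue is that the whole pipeline---block detection, invocation of \cref{theo:cameron}, extraction of $N$ and the Johnson isomorphism---must be isomorphism-invariant via \cref{lem:canGen} so that $g^\ord$ does not depend on the chosen representative of $\Delta\rho$.
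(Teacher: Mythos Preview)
Your transitive branch is essentially the paper's: minimal block system, a size threshold on $|\Delta^\ord[\CB^\ord]|$, the kernel of the block action in the small case and the Cameron normal subgroup from \cref{theo:cameron} in the large case, each handed to \cref{lem:toGroup}, with the Johnson action supplying $g^\ord$ in the latter. (The paper takes the slightly larger threshold $|V|^{3+\log_2|V|}$ rather than $|\CB^\ord|^{1+\log_2|\CB^\ord|}$ precisely so that $|W^\ord|\geq 2+\log_2|V|$ in Property~(g) falls out immediately; with your threshold you would need a separate argument, but this is calibration, not a conceptual difference.)

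The intransitive branch, however, has a real gap. Suppose every per-vertex partition $\CP_v$ is the singleton partition, so that the $\Delta_i$-orbit decompositions of $A$ agree for all $i$, and let $A^*\subsetneq A$ be one common orbit. You assert that $A^*$ can now be ``absorbed into $V\setminus A$'' to obtain progress~\cref{prog:linA}. But recursing on $(J,A',\Delta^\ord,\bot)$ for any $A'\subsetneq A$ is only legal if Property~(A) holds for $A'$, i.e.\ if $(\Delta_i\rho_i)|_{V\setminus A'}$ is independent of $i$. Knowing that all $\rho_i$ send $A^*$ to the same $\Delta^\ord$-orbit tells you the restrictions agree \emph{set-wise on the image}, not that they agree as labeling cosets; Property~(A) can fail for every proper $A'$. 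The paper confronts exactly this obstruction with a second partition $\CQ$ of $J$, testing whether the restrictions $(\Delta_i\rho_i)|_{V\setminus A^*}$ coincide. Only when $\CQ$ is the singleton partition does one get progress~\cref{prog:linA} (with $A':=A^*$). In the other extreme sub-cases the paper does something your outline never invokes: when the $A_i^*$ are pairwise distinct it encodes the instance as a coset-labeled hypergraph and solves it outright via \cref{lem:canSetHyper}; when the $\Lambda_i:=(\Delta_i\rho_i)|_{V\setminus A^*}$ are pairwise distinct it first recurses on an auxiliary instance $J^\circ$ built by enlarging each $\Delta_i$ to $\Delta_i[V\setminus A^*]\times\sym(A^*)$ (this is what makes Property~(A) hold for $A\setminus A^*$) and then applies \cref{lem:canSetSet}. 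These two hypergraph subroutines are what actually closes the intransitive branch within the stated bounds; your ``canonize the restriction and absorb'' step does not go through without them.
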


\paragraph{Intuition of the Johnson Reduction}
First of all, we want to reduce to the case in which all $\Delta_i\leq\sym(V)$ are transitive on $A\subseteq V$.
To achieve transitivity, Babai's algorithm uses Luks's idea of orbit-by-orbit processing.
However, the orbit-by-orbit recursion is a tool that is developed for strings and
needs a non-trivial adaption when dealing with a set of labeling cosets $J$.
To achieve transitivity, the present algorithm uses an adaption of the orbit-by-orbit recursion that was developed
in \cite{DBLP:conf/stoc/SchweitzerW19}.
In the transitive case, we proceed similarly to Babai's algorithm.
First, we define a block system $\CB^\ord$ on which $\Delta^\ord$ acts primitively.
If the primitive group acting on $\CB^\ord$ is small, we use the subgroup reduction
from \cref{lem:toGroup} to reduce to a subgroup $\Psi^\ord\leq\Delta^\ord$ that is defined as the kernel of that action.
In case that the primitive group is large, we use Cameron’s classification of large primitive groups
which implies that the primitive group is a Cameron group.
Using \cref{theo:cameron}, we reduce the Cameron group to a Johnson group
by using the subgroup reduction from \cref{lem:toGroup}.
The Johnson group (acting on subsets of a set $W^\ord$)
in turn can be used to define a giant representation $g^\ord:\Delta^\ord\to\sym(W^\ord)$.

\begin{proof}[Proof of \cref{lem:toJohnson}]
\algorithmDAN{\toJohnson(J,A,\Delta^\ord,\bot)}
\begin{cs}
\case{$|A|$ is smaller than some absolute constant}~\\
Return $\canObj(J)$ using \cref{cor:canObj0}.\\
\comment{We claim that Property (A) and (B) imply that $|J|$ is smaller than some absolute constant.
By Property (B), it holds that $\Delta_i\rho_i=\rho_i\Delta^\ord$ for all $\Delta_i\rho_i\in J$.
Let $\Lambda:=\{\lambda\in\lab(V)\mid\lambda|_{V\setminus A}=\rho_1|_{V\setminus A}\}$.
By definition, $|\Lambda|\leq |A|!$.
By Property (A), for all $\rho_i\Delta^\ord$ there is a representative $\rho_i^*\in\rho_i\Delta^\ord$ with $\rho_i^*\in\Lambda$.
The representatives $\rho_i^*,\rho_j^*$ for $i\neq j$ are pairwise distinct since otherwise $\rho_i\Delta^\ord=\rho_i^*\Delta^\ord=\rho_j^*\Delta^\ord=\rho_j\Delta^\ord$.
Therefore, $|J|=|\{\rho_1^*,\ldots,\rho_t^*\}|\leq|\Lambda|\leq |A|!$ which proves the claim.
Therefore, the algorithm from \cref{cor:canObj0} runs in constant time}

\case{$\Delta_i$ is intransitive on $A$ for some (and because of (B) for
all) $\Delta_i\rho_i\in J$}~\\
Define ${A^\ord}^*\subsetneq A^\ord$ as the $\Delta^\ord$-orbit on $A^\ord$ that is minimal w.r.t.~the ordering ``$\prec$'' from \cref{lem:prec}.\\
For each $\Delta_i\rho_i\in J$, define $A_i^*\subsetneq A$ as the
$\Delta_i$-orbit such that
$(A_i^*)^{\rho_i}={A^\ord}^*$.\\
Define an (unordered) partition $\CP:=\{P_1,\ldots,P_p\}$ of $J= P_1\cupdot\ldots\cupdot P_p$ such
that:\\
$\Delta_i\rho_i,\Delta_j\rho_j\in P_\ell$ for some $P_\ell\in\CP$,
if and only if $A_i^*=A_j^*$.
\begin{cs}
\case{$\CP$ is non-trivial}~\\
\comment{The singleton $\{\CP\}$ can be seen as a non-trivial partition family consisting of one single partition}\\
Return $\Lambda:=\recPart(J,A,\Delta^\ord,\bot,\{\CP\})$ using \cref{lem:recPart}.\\
\comment{Since $|\{\CP\}|=1$, we make progress \cref{prog:J} or \cref{prog:linJ}}

\case{$\CP$ is the partition into singletons, i.e., $A_i^*\neq A_j^*$ for all $\Delta_i\rho_i\neq \Delta_j\rho_j\in J$}~\\
\comment{In this case, we can define a coset-labeled hypergraph $(H,J,\alpha)$}\\
Define the hypergraph $H:=\{A_1^*,\ldots,A_t^*\}$.\\
Define $\alpha:H\to J$ by setting $\alpha(A_i^*):=\Delta_i\rho_i$ for each $A_i^*\in H$.\\
Return $\Lambda:=\canSetHyper(J,H,\alpha)$ using \cref{lem:canSetHyper}.\\
\comment{The algorithm from \cref{lem:canSetHyper} runs in time $(|V|+|J|)^{\polylog|V|}$}

\case{$\CP$ is the singleton partition, i.e., $A_i^*= A_j^*$ for all $\Delta_i\rho_i,\Delta_j\rho_j\in J$}~\\
Define $A^*:=A_i^*$ for some $\Delta_i\rho_i\in J$.\\
\comment{The set $A^*$ is well-defined and does not depend on the choice of $\Delta_i\rho_i\in J$}\\
Define $\Lambda_i:=(\Delta_i\rho_i)|_{V\setminus A^*}$ for each $\Delta_i\rho_i\in J$.\\
Define an (unordered) partition $\CQ:=\{Q_1,\ldots,Q_q\}$ of $J= Q_1\cupdot\ldots\cupdot Q_q$ such
that:\\
$\Delta_i\rho_i,\Delta_j\rho_j\in Q_\ell$ for some $Q_\ell\in\CQ$,
if and only if
$\Lambda_i=\Lambda_j$.

\begin{cs}
\case{$\CQ$ is non-trivial}~\\
\comment{The singleton $\{\CQ\}$ can be seen as a non-trivial partition family consisting of one single partition}\\
Return $\Lambda:=\recPart(J,A,\Delta^\ord,\bot,\{\CQ\})$ using \cref{lem:recPart}.\\
\comment{Since $|\{\CQ\}|=1$, we make progress \cref{prog:J} or \cref{prog:linJ}}

\case{$\CQ$ is the singleton partition, i.e., $\Lambda_i=\Lambda_j$ for all $\Delta_i\rho_i,\Delta_j\rho_j\in J$}~\\
Recurse and return $\Lambda:=\canSet(J,A^*,\Delta^\ord,\bot)$.\\
\comment{By definition of the partition, Property (A) also holds with $A^*\subsetneq A$ in place of $A$. We have progress \cref{prog:linA}}

\case{$\CQ$ is the partition into singletons, i.e., $\Lambda_i\neq\Lambda_j$ for all $\Delta_i\rho_i\neq \Delta_j\rho_j\in J$}~\\
Define $\Delta_i^\circ:=\Delta_i[V\setminus A^*]\times\sym(A^*)\geq\Delta_i$
and define $J^\circ:=\{\Delta_1^\circ\rho_1,\ldots,\Delta_t^\circ\rho_t\}$.\\
\comment{Since $\CQ$ is the partition into singletons, $|J^\circ|=|J|$}\\
Define ${\Delta^\circ}^\ord:=\rho_i^{-1}\Delta_i^\circ\rho_i$ for some $i\in[t]$.\\
Recursively compute $\Delta\rho:=\canSet(J^\circ,A\setminus A^*,{\Delta^\circ}^\ord,\bot)$.\\
\comment{We claim that Property (A) holds for this instance with $A\setminus A^*$ in place of $A$.
Observe that $V\setminus (A\setminus A^*)=(V\setminus A)\cupdot A^*$.
Since $\Delta_i^\circ$ is a direct product, we can consider both direct factors separately and obtain
$(\Delta_i\rho_i)|_{V\setminus A}=(\Delta_j\rho_j)|_{V\setminus A}$
and $(\Delta_i\rho_i)|_{A^*}=\sym(A^*)\rho_i|_{A^*}=\sym(A^*)\rho_j|_{A^*}=(\Delta_j\rho_j)|_{A^*}$
for all $\Delta_i\rho_i\neq \Delta_j\rho_j\in J$.
Since $A\setminus A^*\subsetneq A$, we have progress \cref{prog:linA}}\\
Define $\alpha:J^\circ\to J$ by setting $\alpha(\Delta_i^\circ\rho_i):=\Delta_i\rho_i$ for each $\Delta_i^\circ\rho_i\in J^\circ$.\\
Return $\Lambda:=\canSetSet(J^\circ,J,\alpha,\Delta\rho)$ using \cref{lem:canSetSet}.\\
\comment{The algorithm from \cref{lem:canSetSet} runs in time $(|V|+|J|)^{\polylog|V|}$}
\end{cs}
\end{cs}

\case{$\Delta_i$ is transitive on $A$ for some
(and because of (B) for all)
$\Delta_i\rho_i\in J$}~\\
\comment{We reduce the group to the primitive case}\\
Compute a minimal block system for $\CB^\ord=\{B_1^\ord,\ldots,B_b^\ord\}$ for $\Delta^\ord$ acting on $A^\ord$.\\
\comment{By using a canonical generating set from \cref{lem:canGen} for $\Delta^\ord$, we can ensure that the block system $B^\ord$
only depends on $\Delta^\ord$ (and not on the representation of $\Delta^\ord$).
Observe that $\Delta^\ord[\CB^\ord]\leq\sym(\CB^\ord)$ is a primitive group}
\begin{cs}
\case{$\Delta^\ord[\CB^\ord]$ is smaller than or equal to $|V|^{3+\log_2 |V|}$}~\\
Define $\Psi^\ord:=\stab_{\Delta^\ord}(B_1^\ord,\ldots,B_b^\ord)$.\\
\comment{The group can be computed using a membership test as
stated in the preliminaries}\\
Apply $\toGroup(J,A,\Delta^\ord,\bot,\Psi^\ord)$ using \cref{lem:toGroup}.
\begin{cs}
\case{$\toGroup$ returns a non-trivial partition family $\PP$}~\\
Return $\Lambda:=\recPart(J,A,\Delta^\ord,\bot,\PP)$.\\
\comment{It holds that $|\PP|\leq c_\ind\cdot |V|^{c_\pid}$ where $c_\ind:=(\Delta^\ord:\Psi^\ord)\leq |V|^{3+\log_2|V|}$.
As in \cref{exa:pid}.\ref{exa:pid2}, we have $c_\pid:=\pid(\Delta^\ord,\Psi^\ord)\leq 2\cdot\log_2(c_\ind)$. This leads to progress
\cref{prog:J} or \cref{prog:linJ}}
\case{$\toGroup$ reduces to $c_\ind$-many instances $(\hat J_i,A,\Psi^\ord,\bot)$}~\\
Recurse on these $c_\ind$-many instances $(\hat J_1,A,\Psi^\ord,\bot),\ldots,(\hat J_{c_\ind},A,\Psi^\ord,\bot)$
as suggested by the subroutine.\\
\comment{We analyze the recurrence.
The multiplicative cost is $c_\ind=(\Delta^\ord:\Psi^\ord)\leq |V|^{3+\log_2|V|}$.
Moreover, $\orb_{A^\ord}(\Psi^\ord)=|A^\ord|/|\CB^\ord|\leq\frac{1}{2}\orb_{A^\ord}(\Delta^\ord)$. This leads to
progress \cref{prog:A}}
\end{cs}
\end{cs}
\begin{cs}
\case{$\Delta^\ord[\CB^\ord]$ is greater than $|V|^{3+\log_2|V|}$}~\\
\comment{Since $|\CB^\ord|!\geq |\Delta^\ord[\CB^\ord]|>|V|^{3+\log_2|V|}\geq |A|^{3+\log_2|A|}$
and $|A|$ is greater than some absolute constant we can apply \cref{theo:cameron}.
It follows that $\Delta^\ord[\CB^\ord]$ is a Cameron group.
Next, we will reduce the group to the Johnson case}\\
Define $N^\ord\trianglelefteq\Delta^\ord[\CB^\ord]\leq\sym(\CB^\ord)$ as the subgroup of index at most $b\leq|V|$ which has a system
of imprimitivity on which $N^\ord$ acts as a Johnson group as in \cref{theo:cameron}.\\
Define $\Psi^\ord\trianglelefteq\Delta^\ord\leq\sym(V^\ord)$ as the corresponding normal subgroup for which $\Psi^\ord[\CB^\ord]=N^\ord$ holds.\\
\comment{Also $\Psi^\ord\trianglelefteq\Delta^\ord$ is of index at most $b\leq|V|$ and has a system
of imprimitivity on which it acts as a Johnson group.
By using a canonical generating set from \cref{lem:canGen} for $\Delta^\ord[\CB^\ord]$, we can ensure that $N^\ord$
and $\Psi^\ord$ only depend on $\Delta^\ord[\CB^\ord]$ (and not on the representation of $\Delta^\ord[\CB^\ord]$)}\\
Apply $\toGroup(J,A,\Delta^\ord,\bot,\Psi^\ord)$ using \cref{lem:toGroup}.
\begin{cs}
\case{$\toGroup$ returns a non-trivial partition family $\PP$}~\\
Return $\Lambda:=\recPart(J,A,\Delta^\ord,\bot,\PP)$.\\
\comment{It holds that $|\PP|\leq c_\ind\cdot |V|^{c_\pid}$ where $c_\ind:=(\Delta^\ord:\Psi^\ord)\leq b\leq|V|$.
By \cref{lem:pid4}, it follows that $c_\pid:=\pid(\Delta^\ord,\Psi^\ord)\leq \log_2|V|$. This leads to progress
\cref{prog:J} or \cref{prog:linJ}}
\case{$\toGroup$ reduces to $b$-many instances $(\hat J_i,A,\Psi^\ord,\bot)$}~\\
Now, there is a system
of imprimitivity on which $\Psi^\ord$ acts as a Johnson group and therefore
there is a homomorphism $h^\ord:\Psi^\ord\to\sym(W^\ord)[\binom{W^\ord}{s}]$.\\
Define $g^\ord:\Psi^\ord\to\sym(W^\ord)$ as the giant representation obtained from $h^\ord$ whose image is acting on $W^\ord$
(rather than acting on subsets of $W^\ord$).\\
\comment{Since $|\Delta^\ord[\CB^\ord]|>|V|^{3+\log_2|V|}$, it follows from the proof of \cref{theo:cameron} that $|W^\ord|> 2+\log_2|V|$}\\
Recurse on the $b$-many instances $(\hat J_1,A,\Psi^\ord,g^\ord),\ldots,(\hat J_b,A,\Psi^\ord,g^\ord)$.\\
\comment{Observe that the algorithm recurses on
the instances
$(\hat J_i,A,\Psi^\ord,g^\ord)$
rather than
$(\hat J_i,A,\Psi^\ord,\bot)$.
We analyze the recurrence. We have a multiplicative cost of at most $b\leq |V|$ and recursive instances
where $g^\ord$ is defined. This leads to progress \cref{prog:giant}}
\end{cs}
\end{cs}
\end{cs}
\end{proof}

\begin{defn}[\cite{DBLP:conf/stoc/Babai16}]
Let $\Delta\leq\sym(V)$ and let $g:\Delta\to\sym(W)$ be a giant representation.
We say that $v\in V$ is \emph{affected} by $g$ if $g$ does not map $\Delta_{(v)}$, the pointwise stabilizer of $v$ in $\Delta$,
onto a giant, i.e., it does not hold $\alt(W)\leq g(\Delta_{(v)})\leq\sym(W)$.
A set $S\subseteq V$ consisting of affected points is called \emph{affected} set.
\end{defn}

\begin{theo}[\cite{DBLP:conf/stoc/Babai16}, Theorem 6]\label{theo:unaffected}
Let $\Delta\leq\sym(V)$ be a permutation group and let $k$ denote the length of the largest $\Delta$-orbit of $V$.
Let $g:\Delta\to\sym(W)$ be a giant representation. Let $U\subseteq V$ denote the set of all elements of
$V$ that are not affected by $g$. Then the following holds.

\begin{enumerate}
  \item\label{theo:unaffected1} (Unaffected Stabilizer Theorem)  Assume $|W|>\max\{8,2+\log_2 k\}$. Then $g$ maps $G_{(U)}$, the pointwise stabilizer of $U$ in $G$,
  onto $\alt(W)$ or $\sym(W)$ (so $g:G_{(U)}\to\sym(W)$ is still a giant representation). In particular, $U\subsetneq V$ (at least one element is affected). 
  \item\label{theo:unaffected2} (Affected Orbits Lemma) Assume $|W|\geq 5$. If $S$ is an affected $\Delta$-orbit, i.e., $S\cap U =\emptyset$,
then $\ker(g)$ is not transitive on $S$; in fact, each orbit of $\ker(g)$ in $S$ has length at most $|S|/|W|$.
\end{enumerate}
\end{theo}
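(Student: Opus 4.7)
My plan is to prove the two parts in order, since the Affected Orbits Lemma (part~\ref{theo:unaffected2}) is the main tool used in the Unaffected Stabilizer Theorem (part~\ref{theo:unaffected1}).

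For part~\ref{theo:unaffected2}, the key observation comes from orbit-stabilizer combined with the first isomorphism theorem applied to $g$ restricted to $\Delta_{(v)}$, which together give
\begin{equation*}
|\ker(g)\cdot v| \;=\; \frac{|\ker(g)|}{|\ker(g)\cap \Delta_{(v)}|} \;=\; \frac{|\Delta|/|g(\Delta)|}{|\Delta_{(v)}|/|g(\Delta_{(v)})|} \;=\; \frac{|S|}{[g(\Delta):g(\Delta_{(v)})]}
\end{equation*}
for any $v\in S$. So the lemma reduces to showing $[g(\Delta):g(\Delta_{(v)})]\geq |W|$ whenever $v$ is affected. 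Since $\alt(W)\leq g(\Delta)$ (giant representation) and $\alt(W)\not\leq g(\Delta_{(v)})$ (from the definition of affected), the intersection $\alt(W)\cap g(\Delta_{(v)})$ is a proper subgroup of $\alt(W)$. I would then invoke the classical minimum-index result for $\alt(W)$: for $|W|\geq 5$, every proper subgroup of $\alt(W)$ has index at least $|W|$ (realized by a point stabilizer). This yields $[\alt(W):\alt(W)\cap g(\Delta_{(v)})]\geq |W|$, which lifts to $[g(\Delta):g(\Delta_{(v)})]\geq |W|$ using that $\alt(W)\trianglelefteq g(\Delta)$.

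For part~\ref{theo:unaffected1}, I would argue by contradiction, assuming $g(\Delta_{(U)})\not\geq \alt(W)$. The plan is to fix an affected $\Delta$-orbit $S$ (which exists whenever $U\subsetneq V$) and apply part~\ref{theo:unaffected2} to guarantee that $\ker(g)$ has at least $|W|$ orbits on $S$. One then inductively refines the pointwise-stabilizer chain $\Delta\geq\Delta_{(v_1)}\geq\Delta_{(v_1,v_2)}\geq\cdots$ by peeling off one affected orbit at a time, applying the Affected Orbits Lemma at each step to obtain a fresh factor of $|W|$ in the orbit count. The condition $|W|\geq 2+\log_2 k$ (together with $|W|\geq 8$) is exactly what is needed so that this halving-style recursion survives through at most $\log_2 k$ many inductive steps before orbit lengths collapse to $1$, preserving the property that the image stays above $\alt(W)$ throughout; the constant offset $+2$ absorbs the boundary effects.

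The main obstacle is part~\ref{theo:unaffected1}: whereas part~\ref{theo:unaffected2} boils down to a single index computation together with the minimum-index fact for $\alt(W)$, the Unaffected Stabilizer Theorem requires delicate combinatorial bookkeeping that compares the growth of the $\ker(g)$-orbit structure on affected points against the potential index of $g(\Delta_{(U)})$ in $g(\Delta)$. Getting the threshold $\max\{8,\,2+\log_2 k\}$ sharp requires care at each induction step, in particular verifying that an affected $\Delta$-orbit stays affected after passing to a suitable stabilizer subgroup; this is where I would expect to spend most of the effort.
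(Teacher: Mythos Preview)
The paper does not prove this theorem at all; it is quoted verbatim from Babai's quasipolynomial paper (the bracketed citation in the theorem header) and used as a black box. So there is no ``paper's own proof'' to compare against---your proposal is effectively a sketch of Babai's original argument, not of anything in this paper.

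That said, a few remarks on the proposal itself. Your argument for part~\ref{theo:unaffected2} is correct and is exactly Babai's proof: the chain of equalities is a clean application of orbit--stabilizer plus the first isomorphism theorem, and the index bound $[\alt(W):\alt(W)\cap g(\Delta_{(v)})]\geq |W|$ for $|W|\geq 5$ is the standard minimum-index fact for the alternating group. (One small point: the paper's \emph{definition} of ``affected'' contains a typo---it says the stabilizer image \emph{is} a giant, which is the opposite of Babai's definition and of what you correctly use.)

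Your sketch for part~\ref{theo:unaffected1} is in the right spirit but substantially underspecifies the argument. Babai's proof does not proceed by ``peeling off one affected orbit at a time'' with a direct factor-of-$|W|$ bookkeeping. Rather, it shows that if $g(\Delta_{(U)})$ fails to be a giant then one can find a short sequence of affected points $v_1,\ldots,v_r$ with $r\leq 1+\log_2 k$ whose pointwise stabilizer already maps to a non-giant, and then derives a contradiction by comparing the index $[\Delta:\Delta_{(v_1,\ldots,v_r)}]\leq k^r$ against the index forced by the subgroup structure of $\sym(W)$ above a non-giant subgroup (which is at least $\binom{|W|}{2}\cdot(|W|-2)!$ or so). The threshold $|W|\geq 2+\log_2 k$ enters through a Stirling-type estimate comparing $k^{1+\log_2 k}$ to this factorial bound, not through an orbit-halving recursion. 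Your description of ``halving-style recursion'' and the role of the $+2$ offset does not match the actual mechanism, and the claim that ``an affected $\Delta$-orbit stays affected after passing to a suitable stabilizer subgroup'' is not how the induction is organized. This part would need to be rewritten to follow Babai's actual structure.
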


\begin{defn}[Certificates of Fullness]
A group $G\leq\sym(V)$ is called \emph{certificate of fullness} for an instance $(J,A,\Delta^\ord,g^\ord)$
if
\begin{enumerate}
  \item $G\leq\aut(J)$,
  \item $G^\ord:=G^{\rho_i}\leq\Delta^\ord$ does not depend on the choice of $\Delta_i\rho_i\in J$, and
  \item $g^\ord:G^\ord\to\sym(W^\ord)$ is still a giant representation.
\end{enumerate}
\end{defn}

\begin{lem}\label{lem:processJohnson}
There is an algorithm $\processJohnson$ that gets a input an instance $(J,A,\Delta^\ord,g^\ord)$
for which Property (A), (B) and (g) hold where $g^\ord$ is defined.
In time $(|V|+|J|)^{\polylog|V|}$, the algorithm reduces the canonical labeling problem of $(J,A,\Delta^\ord,g^\ord)$ to
canonical labeling of either
\begin{itemize}
  \item (progress \cref{prog:linJ}) two instances $(J_1,A,\Delta^\ord,g^\ord)$ and $(J_2,A,\Delta^\ord,g^\ord)$ with $|J_1|+|J_2|=|J|$, or
  \item (progress \cref{prog:J}) $2^{\log_2 p+\log_2(|V|)^4}$-many instances $(J_{k,i},A,\Delta^\ord,g^\ord)$
  of size $|J_{k,i}|\leq \frac{1}{p}|J|$ and to additionally $2^{\log_2(|V|)^4}$-many instances $(J_k,V,\Delta_k^\ord,\bot)$ of size $|J_k|\leq p$ for some $p\in\NN$
  with $1<p\leq\frac{1}{2}|J|$, or
  \item (progress \cref{prog:A}) $2^{\log_2(|V|)^3}$-many instances $(\hat J_i,A,\Psi^\ord,\bot)$ with $|\hat J_i|\leq |J|$ and such that
  $\orb_{A^\ord}(\Psi^\ord)\leq\frac{1}{2}\orb_{A^\ord}(\Delta^\ord)$, or
  \item (Fullness certificate) finds a certificate of fullness $G\leq\sym(V)$ for the input instance.
\end{itemize}
\end{lem}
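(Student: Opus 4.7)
The plan is to adapt Babai's method of local certificates to the coset setting, mirroring the subgroup-reduction pattern used in \cref{lem:toJohnson}. I first show that, under the hypotheses, every point of $A^\ord$ must be affected by $g^\ord$. Compute the canonical affected set $D^\ord\subseteq V^\ord$: it depends only on the ordered pair $(\Delta^\ord,g^\ord)$ and so can be obtained in an isomorphism-invariant way by first passing to canonical generating sets via \cref{lem:canGen}. Property (g) gives $\Delta^\ord_{(A^\ord)}\leq\ker g^\ord$ and $|W^\ord|\geq 2+\log_2|V|$, so the Unaffected Stabilizer Theorem (\cref{theo:unaffected}.\ref{theo:unaffected1}) forbids $A^\ord$ from lying in the unaffected set $U^\ord$: otherwise $\Delta^\ord_{(U^\ord)}\leq\Delta^\ord_{(A^\ord)}\leq\ker g^\ord$ would force $g^\ord(\Delta^\ord_{(U^\ord)})=1$, contradicting the giant image promised by the theorem. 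Since $\Delta^\ord$ is transitive on $A^\ord$ and ``affected'' is $\Delta^\ord$-invariant, this forces $A^\ord\subseteq D^\ord$.

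Next I attempt to build a local certificate. Take the lexicographically minimal test set $T^\ord\subseteq W^\ord$ of size $\lceil\log_2|V|\rceil$ and form the pullback $\Psi^\ord:=\Delta^\ord\cap(g^\ord)^{-1}(\stab_{\sym(W^\ord)}(T^\ord))$. Its index in $\Delta^\ord$ is at most $\binom{|W^\ord|}{|T^\ord|}\leq 2^{\polylog|V|}$, and by \cref{exa:pid}.\ref{exa:pid1} the relative minimal base size $\pid(\Delta^\ord,\Psi^\ord)$ is $O(\log^2|V|)$. Invoking $\toGroup(J,A,\Delta^\ord,g^\ord,\Psi^\ord)$ from \cref{lem:toGroup} returns either a non-trivial partition family of quasipolynomial size, which $\recPart$ (\cref{lem:recPart}) converts into progress \cref{prog:linJ} or \cref{prog:J}, or else a reduction to at most $2^{\polylog|V|}$ subinstances carrying the group $\Psi^\ord$. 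On those subinstances I iterate the same test-set refinement. One of two things happens. Either at some stage the current subgroup has a kernel orbit on $A^\ord$ of length at most $\orb_{A^\ord}(\Delta^\ord)/|W^\ord|\leq\tfrac12\orb_{A^\ord}(\Delta^\ord)$ by the Affected Orbits Lemma (\cref{theo:unaffected}.\ref{theo:unaffected2}), and a final $\toGroup$ call with this kernel yields progress \cref{prog:A}; or the iteration stabilises at a subgroup $G^\ord\leq\Delta^\ord$ whose image $g^\ord(G^\ord)$ still contains $\alt(W^\ord)$. In the latter case, setting $G:=\rho_i G^\ord\rho_i^{-1}\leq\sym(V)$ yields the certificate of fullness: Property (B) makes $G^{\rho_i}=G^\ord$ independent of $i$, the restriction $g^\ord|_{G^\ord}$ is a giant representation by construction, and $G\leq\aut(J)$ follows because $G^\ord$ is built exclusively from canonical ordered data, hence is stable under every $\sigma\in\aut(J)$ which by definition permutes the cosets $\Delta_i\rho_i\in J$ compatibly with that data.

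The main obstacle I anticipate is keeping the entire pipeline isomorphism-invariant, so that the final subgroup $G^\ord$ depends only on $(\Delta^\ord,g^\ord,A^\ord)$ and not on any particular $\rho_i$. This is handled exactly as in \cref{lem:canSetSet} and \cref{lem:toJohnson}: every group fed into a group-theoretic subroutine is first equipped with the canonical generating set of \cref{lem:canGen}, and the subroutines $\toGroup$, $\recPart$ and $\canSetSet$ are themselves already isomorphism-invariant. The running-time analysis mirrors that of \cref{lem:toJohnson}: at most $2^{\polylog|V|}$ iterations of the test-set pullback, each followed by one call to $\toGroup$ of index and relative base size bounded by $2^{\polylog|V|}$, together with recursive $\canSet$ calls that all fit the specified progress bookkeeping, give total cost $(|V|+|J|)^{\polylog|V|}$ per invocation of $\processJohnson$.
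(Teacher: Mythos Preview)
Your proposal contains a genuine gap at the most delicate point: the claim that $G:=\rho_iG^\ord\rho_i^{-1}$ is independent of $i$ and lies in $\aut(J)$.  Neither follows from the fact that $G^\ord$ is built ``exclusively from canonical ordered data''.  Indeed, $\Delta^\ord$ itself is canonically defined from the ordered data, yet $\rho_i\Delta^\ord\rho_i^{-1}=\Delta_i$ depends on $i$ and is \emph{not} contained in $\aut(J)$ in general (an element of $\Delta_i$ fixes the coset $\Delta_i\rho_i$ but need not permute the remaining $\Delta_j\rho_j$).  Property~(B) only says $\rho_i^{-1}\Delta_i\rho_i=\Delta^\ord$; it gives no control over $\rho_i^{-1}\rho_j\in\sym(V^\ord)$, which is what would have to normalise $G^\ord$ for your definition of $G$ to be well-posed.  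So ``canonical in the ordered world'' is not enough: you must actually prove that the unordered group you extract is an automorphism group of $J$, and this is the whole difficulty of the lemma.

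The paper's proof achieves this by a substantial case analysis that your sketch omits.  After fixing a single canonical test set $T^\ord$, it computes the affected set $S^\ord$ for $g_T^\ord$ and, for each $\Delta_i\rho_i$, the hypergraph $H_i$ of preimages of $S^\ord$.  It then reduces (via partition families and $\recPart$) to the situation $H_i=H_j=:H$ for all $i,j$, and further to the situation where, for every $S\in H$, the local restrictions $(\lambda_{i,S}\Psi^\ord)|_S$ coincide across all $i$.  Only under this last hypothesis does the paper define $G_S^\ord$ as the normal closure $\langle(\Delta^\ord_{T,(U^\ord)})^{\Psi^\ord}\rangle$ and set $G_S:=\lambda_{i,S}G_S^\ord\lambda_{i,S}^{-1}$; the coincidence of local restrictions on $S$, together with $G_S^\ord$ acting trivially on $U^\ord$, is exactly what makes $G_S$ independent of $i$ and forces $G_S\leq\aut(\hat J_S)\leq\aut(J)$.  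The certificate $G$ is then generated by all the $G_S$.  Your iterated test-set refinement and the appeal to $\toGroup$ do not produce this structure, and without it there is no reason for the resulting group to act on $J$ at all.
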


\paragraph{Intuition of the Certificate Producing Algorithm}
We describe the idea of the algorithm.
The algorithm picks a subset $T^\ord\subseteq W^\ord$ of logarithmic size.
We call this set $T^\ord$ a \emph{canonical test set}.
Next, we define the group $\Delta_T^\ord\leq\Delta^\ord$ which stabilizes $T^\ord$
in the image under $g^\ord$.
By doing so, we can define a giant representation $g_T^\ord:\Delta_T^\ord\to\sym(T^\ord)$.
Let $S^\ord,U^\ord\subseteq V^\ord$ be set of elements affected and unaffected by $g_T^\ord$, respectively.
We have a technical difference in our algorithm in contrast to Babai's method.
In Babai's method of local certificates, he processes a giant representation $g:\Delta\to\sym(W)$
and considers multiple test sets $T\subseteq W$ (one test set for each subset of logarithmic size).
In our framework, we define the giant representation for a group $\Delta^\ord$ over a linearly ordered set $V^\ord$.
This allows us to choose one single (canonical) test set $T^\ord\subseteq W^\ord$ only.
Here, canonical means that the subset is chosen minimal with respect to the ordering ``$\prec$''.
However, when we translate the ordered structures $V^\ord$ to unordered structures over $V$, we implicitly consider multiple test sets
and giant representations. More precise, by applying inverses of labelings in $\Delta_i\rho_i\in J$
to the ordered group $\Delta_T^\ord\leq\sym(V^\ord)$, we obtain a set of groups over $V$, i.e.,
$\{\lambda_i\Delta_T^\ord\lambda_i^{-1}\mid\lambda_i\in\Delta_i\rho_i\}$.
Similarly, we can define a set of giant representations
$\{(g_T^\ord)^{\lambda_i^{-1}}\mid\lambda_i\in\Delta_i\rho_i\}$ (where $(g_T^\ord)^{\lambda_i^{-1}}(\delta_i):=g_T^\ord(\lambda_i^{-1}\delta_i\lambda_i)$ for $\delta_i\in\Delta_i$) and a set of affected points
$H_i:=\{S\subseteq V\mid S^{\lambda_i}=S^\ord$ for some $\lambda_i\in\Delta_i\rho_i\}$.
Therefore, when dealing over unordered structures, we need to consider multiple groups and homomorphisms.
It becomes even more complex, since we are dealing with a set $J$ consisting of labeling cosets 
rather than one single group only.
In fact, we obtain a set of affected point sets $H_i$ for each labeling coset $\Delta_i\rho_i\in J$.
However, it turns out that the hardest case occurs when $H_i=H_j$ for all $\Delta_i\rho_i,\Delta_j\rho_j\in J$.
Roughly speaking, we will apply the following strategy.

We restrict each labeling coset in $J$ to some set of affected points $S\in H_i$
and define a set of local restrictions $J_S^*$ that ignore the vertices outside $S$.
The precise definition of $J_S^*$ is given in the algorithm.
Intuitively, the algorithms tries to analyze the labeling cosets locally.

Case 1: The local restrictions $J_S^*$ are pairwise distinct.
In this case, we canonize the
local restrictions $J_S^*$ recursively.
Observe that a canonical labeling $\Delta\rho$ for $J_S^*$ does not necessarily define a canonical labeling for $J$.
However, we can define a function $\alpha:J_S^*\to J$ that assigns each local restriction its corresponding labeling coset $\Delta_i\rho_i\in J$.
This function is well-defined since we assumed the local restrictions to be pairwise distinct.
Now, we can use the algorithm from \cref{lem:canSetSet} to canonize the instance $(J_S^*,J,\alpha,\Delta\rho)$.

Case 2: Some local restrictions in $J_S^*$ are pairwise different and some local restrictions in $J_S^*$
are pairwise equal.
In this case, we can define a non-trivial partition of $J$ in the following way.
We say that two labeling cosets $\Delta_i\rho_i,\Delta_j\rho_j$ are in the same part,
if and only if the corresponding local restrictions in $J_S^*$ coincide.
Actually, this leads to a family of partitions since we obtain one partition for each choice of an affected set $S\in H_i$.
We exploit this partition family by recursing using the subroutine $\recPart$ from \cref{lem:recPart}.

Case 3: The local restrictions $J_S^*$ are pairwise equal.
In this case, it is possible to find automorphisms $G_S\leq\sym(V)$ of $J$
which fix the unaffected points $V\setminus S$.
In fact, we can find such automorphisms for all choices of $S\in H_i$,
otherwise we are in a situation of a previous case.
Finally, we consider the group of automorphisms $G\leq\aut(J)$ generated by all $G_S$ for $S\in H_i$.
We can show that $G$ is indeed a certificate of fullness.

\begin{proof}[Proof of \cref{lem:processJohnson}]
\algorithmDAN{\processJohnson(J,A,\Delta^\ord,g^\ord)}~\\
Let $g^\ord:\Delta^\ord\to\sym(W^\ord)$ be the giant representation.\\
\comment{
By Property (g), the set $A^\ord$ is an orbit, $|W^\ord|>2+\log_2|V|\geq 2+\log_2|A^\ord|$,
$|W^\ord|$ is greater than some absolute constant
and $\Delta^\ord_{(A^\ord)}\leq\ker(g^\ord)$.
By the Unaffected Stabilizer Theorem \ref{theo:unaffected},
and since $\Delta^\ord_{(A^\ord)}\leq\ker(g^\ord)$, at least one element in $A^\ord$ is affected by $g^\ord$}\\
Define $\Pi^\ord$ as the kernel of $g^\ord$.\\
\comment{By the Affected Orbits Lemma \ref{theo:unaffected}, the orbits of $\Pi^\ord$ on $A^\ord$ have size at most $|A^\ord|/|W^\ord|$}\\
Define $T^\ord:=\{1,\ldots,3+\lfloor\log_2 |V|\rfloor\}\subseteq W^\ord$.\\
\comment{The set $T^\ord$ was referred to as canonical test set in the above paragraph}\\
Define $\Delta_T^\ord:=\{\delta^\ord\in\Delta^\ord\mid g^\ord(\delta^\ord)\in\stab_{\sym(W^\ord)}(T^\ord)\}$.\\
Define $g_T^\ord:\Delta_T^\ord\to \sym(T^\ord)$ as the giant representation that is obtained by restricting the image of $g^\ord$.\\
\comment{By the Unaffected Stabilizer Theorem \ref{theo:unaffected}, at least one element in $V^\ord$ is affected by $g_T^\ord$.
Moreover, since we assume that $\Delta^\ord_{(A^\ord)}\leq\ker(g^\ord)$, it follows that at least one element in $A^\ord$ is affected by $g_T^\ord$}\\
Decompose $V^\ord:=S^\ord\cupdot U^\ord$ where $S^\ord$ contains the points affected by $g_T^\ord$
and where $U^\ord:=V^\ord\setminus S$ contains the unaffected points.\\
\comment{By the Unaffected Stabilizer Theorem \ref{theo:unaffected}, $g_T^\ord:\Delta_{T,(U^\ord)}^\ord\to \sym(T^\ord)$ is still a giant representation}\\
\comment{We have the subgroup chain $\Pi^\ord,\Delta_{T,(U^\ord)}^\ord\leq\Delta_T^\ord\leq\Delta^\ord\leq\sym(V^\ord)$.
However, $\Pi^\ord$ and $\Delta_{T,(U^\ord)}^\ord$ might be incomparable under the subgroup relation}\\
Define $\Psi^\ord:=\stab_{\Delta^\ord}(S^\ord)$.\\
\comment{Observe that $\Delta_T^\ord\leq\Psi^\ord\leq\Delta^\ord$}\\
Decompose $\Delta^\ord=\bigcup_{\ell\in[s]}\delta^\ord_\ell\Psi^\ord$ into left cosets of $\Psi^\ord$.\\
\comment{This can be done in time polynomial in $|V|$ and $(\Delta^\ord:\Psi^\ord)\leq|V|^{3+\log_2|V|}$}\\
Define the hypergraph $H_i:=\{S\subseteq V\mid S^{\rho_i\delta^\ord_\ell}=S^\ord\text{ for some }\ell\in[s]\}$ for each $\Delta_i\rho_i\in J$.\\
\comment{The hypergraph $H_i$ can be seen as the preimages of affected points for each $\Delta_i\rho_i\in J$.
By definition of $\Psi^\ord$, the hypergraph $H_i$ does not depend on the choice of the representative $\rho_i$ of $\Delta_i\rho_i$.
However, $H_i$ might depend on the choice of the labeling coset $\Delta_i\rho_i\in J$.
We want to reduce to the case in which $H_i=H_j$ for all $\Delta_i\rho_i,\Delta_j\rho_j\in J$}\\
Define an (unordered) partition $\CP:=\{P_1,\ldots,P_p\}$ of $J=P_1\cupdot\ldots\cupdot P_p$ such
that:\\
$\Delta_i\rho_i,\Delta_j\rho_j\in P_\ell$ for some $P_\ell\in\CP$,
if and only if
$H_i=H_j$.
\begin{cs}
\case{$\CP$ is non-trivial}~\\
\comment{The singleton $\{\CP\}$ can be seen as a non-trivial partition family consisting of one single partition}\\
Compute and return $\Lambda:=\recPart(J,A,\Delta^\ord,g^\ord,\{\CP\})$ using \cref{lem:recPart}.\\
\comment{Since $|\{\CP\}|=1$, we make progress \cref{prog:J} or \cref{prog:linJ}}

\case{$\CP$ is the partition into singletons, i.e., $H_i\neq H_j$ for all $\Delta_i\rho_i\neq\Delta_j\rho_j\in J$}~\\
\comment{It holds that $|H_i|\leq (\Delta^\ord:\Psi^\ord)\leq |V|^{3+\log_2|V|}$.
We want to use the hypergraphs $H_i$ to define a partition family of $J$}\\
Define $K:=\{(k_1,k_2)\mid k_1,k_2\subseteq V, |k_1|,|k_2|\leq c\}$ as the set of pairs of subsets of $V$
of size at most $c:=\log_2(|V|^{3+\log_2|V|})$.\\
\comment{Observe that $|K|\leq 2^{\log_2(|V|)^4}$ since $|V|\geq|A|$ is greater than some absolute constant}\\
We say that $(k_1,k_2)\in K$ is \emph{compatible} with a set $S\subseteq V$ if $k_1\subseteq S$ and $k_2\subseteq V\setminus S$.\\
We say that $(k_1,k_2)\in K$ \emph{identifies} the hyperedge $S\in H_i$
in the hypergraph $H_i$ if $(k_1,k_2)$ is compatible with $S$ and $(k_1,k_2)$ is not compatible with each $S'\in H_i$ with $S'\neq S$.\\
\comment{We claim that for each hypergraph $H_i$ there is a $k\in K$ that identifies a hyperedge in $H_i$.
Let $H_i$ be a hypergraph with $\log_2(|H_i|)\leq c$.
We prove the claim by induction on $|H_i|$.
If $|H_i|=1$, then $(\emptyset,\emptyset)\in K$ identifies the hyperedge in $H_i$.
Assume that $|H_i|\geq 2$. Let $v\in V$ such that the partition $\{H_{i,v},H_{i,\bar v}\}$ of $H$
is non-trivial where
$H_{i,v}:=\{S\in H_i\mid v\in S\}$ and $H_{i,\bar v}:=\{S\in H_i\mid v\notin S\}$.
Assume that $1\leq|H_{i,v}|\leq\frac{1}{2}|H_i|$ and therefore $\log_2(|H_{i,v}|)\leq c-1$.
By induction, there is a $k=(k_1,k_2)$ with $|k_1|,|k_2|\leq c-1$ that identifies a hyperedge $S\in H_{i,v}$ in $H_{i,v}$.
Therefore, $(k_1\cup\{v\},k_2)\in K$ identifies the hyperedge $S\in H_i$ in $H_i$.
The other case in which $1\leq|H_{i,\bar v}|\leq\frac{1}{2}|H_i|$ is analogous}\\
\comment{We reduce to the case in which there is a $k\in K$ that identifies a hyperedge in each hypergraph $H_i$}\\
Define a cover $\CC:=\{C_k\mid k\in K\}$ of $J=\bigcup C_k$ such that:\\
$\Delta_i\rho_i\in C_k$
if $k\in K$ identifies a hyperedge $S\in H_i$ in the hypergraph $H_i$.\\
Define $\PP:=\{\CP_k\}_{k\in K}$ as partition family induced by $\CC$, i.e.,
$\CP_k:=\{P_{k,1},P_{k,2}\}$ where
$P_{k,1}:=C_k$ and $P_{k,2}:=J\setminus C_k$ for $k\in K$.
\begin{cs}
\case{$\PP$ is non-trivial}~\\
Return $\Lambda:=\recPart(J,A,\Delta^\ord,g^\ord,\PP)$ using \cref{lem:recPart}.\\
\comment{Since $|\PP|= |K|\leq 2^{\log_2(|V|)^4}$, we make progress \cref{prog:J} or \cref{prog:linJ}}
\case{there is a partition $\CP_k\in\PP$ that is the partition into singletons}~\\
Return $\Lambda:=\canObj(J)$ using \cref{cor:canObj0}.\\
\comment{Since $\CP_k\in\PP$ is the partition into singletons and has size $|\CP_k|\leq 2$, it follows that $|J|\leq 2$}
\end{cs}
\comment{Therefore, there is a singleton partition $\CP_k\in\PP$.
This means that there is a $k\in K$ that identifies a hyperedge in each hypergraph $H_i$.
We simplify to the case in which each $k\in K$ identifies a hyperedge in each hypergraph $H_i$}\\
Define $K:=\{k\in K\mid \CP_k=\{J\}=\{C_k\}$ is the singleton partition $\}$.\\
\comment{Now, each $k\in K$ identifies a hyperedge in each hypergraph $H_i$.
By definition, each $k\in K$ identifies exactly one hyperedge $S\in H_i$ in each hypergraph $H_i$}\\
Define $E_k:=\{S\mid k$ identifies the hyperedge $S\in H_i$ in some hypergraph $H_i\}$ for each $k\in K$.\\
\comment{By definition, $|E_k\cap H_i|= 1$ for all $k\in K$ and all hypergraphs $H_i$}\\
Define a partition family $\QQ:=\{\CQ_k\}_{k\in K}$
of $J=Q_{k,1}\cupdot\ldots\cupdot Q_{k,q_k}$
where $\CQ_k:=\{Q_{k,1},\ldots,Q_{k,q_k}\}$ such that:\\
$\Delta_i\rho_i,\Delta_j\rho_j\in Q_{k,x}$ for some $Q_{k,x}\in\CQ_k$, if and only if
$E_k\cap H_i =E_k\cap H_j$.
\begin{cs}
\case{$\CQ$ is non-trivial}~\\
Return $\Lambda:=\recPart(J,A,\Delta^\ord,g^\ord,\QQ)$ using \cref{lem:recPart}.\\
\comment{Since $|\QQ|=|K|\leq 2^{\log_2(|V|)^4}$, we make progress \cref{prog:J} or \cref{prog:linJ}}

\case{there is a partition $\CQ_k\in\QQ$ that is the singleton partition}~\\
\comment{This means that there is a $k\in K$ such that $E_k\cap H_i=E_k\cap H_j$ for all hypergraphs}\\
Define $\CS:=\{S\in E_k\cap H_i\mid\CQ_k\in\QQ$ is the singleton partition$\}$.\\
\comment{We have that $\CS\subseteq H_i$ for all hypergraphs $H_i$. The case that $\CS=H_i$ for all hypergraphs
cannot occur since we are
in a situation with $H_i\neq H_j$ for all $\Delta_i\rho_i\neq \Delta_j\rho_j\in J$}\\
Define $H_i':=H_i\setminus \CS$ for all hypergraphs $H_i$.\\
Go to the outer case with $H_i'$ in place of $H_i$.\\
\comment{Again, we have $H_i'\neq H_j'$ for all $\Delta_i\rho_i\neq \Delta_j\rho_j\in J$}

\case{all partitions $\CQ_k\in\QQ$ are partitions into singletons}~\\
\comment{This means that for all $k\in K$, the sets $E_k\cap H_1,\ldots,E_k\cap H_t$ are pairwise distinct}
\begin{for}{$k\in K$}
\comment{We will compute a canonical labeling for $(J,k)$.
We will define a coset-labeled hypergraph}\\
Define a function $\alpha_k:E_k\to J$ by setting $\alpha_k(S):=\Delta_i\rho_i$ for $\{S\}=E_k\cap H_i$.\\
\comment{This is well-defined, since $|H_i\cap E_k|=1$ and the sets $E_k\cap H_1,\ldots,E_k\cap H_t$ are pairwise distinct}\\
Compute $\Theta_k\tau_k:=\canSetHyper(E_k,J,\alpha_k)$ using \cref{lem:canSetHyper}.\\
\comment{The algorithm from \cref{lem:canSetHyper} runs in time $(|V|+|J|)^{\polylog|V|}$}
\end{for}
Define $K^\set:=\{\Theta_k\tau_k\mid k\in K\}$.\\
\comment{We collect the canonical labelings $\Theta_k\tau_k$ leading to minimal canonical forms of the input}\\
Define $K_{\min}^\set:=\arg\min_{\Theta_k\tau_k\in K^\set}J^{\tau_k}\subseteq K^\set$ where the minimum is taken w.r.t.~the ordering ``$\prec$'' from \cref{lem:prec}.\\
Return $\Lambda:=\langle K_{\min}^\set\rangle$.\\
\comment{This is the smallest coset containing all labeling cosets in $K_{\min}^\set$ as defined
in the preliminaries. The correctness proof for (CL2) is similar to the (CL2)-proof of \cref{lem:canSetSet}}
\end{cs}
\end{cs}
\comment{Now, the partition $\CP$ is the singleton partition, i.e., $H_i= H_j$ for all $\Delta_i\rho_i,\Delta_j\rho_j\in J$}\\
Define $H:=H_i$ for some $\Delta_i\rho_i\in J$.\\
\comment{This does not depend on the choice of $\Delta_i\rho_i\in J$}\\
For each $S\in H$, define a representative $\lambda_{i,S}\in\Delta_i\rho_i$ such that $S^{\lambda_{i,S}}=S^\ord$.\\
Define $\hat J:=\{\rho_i\delta^\ord_\ell\Psi^\ord\mid
i\in[t],\ell\in[s]\}=\{\lambda_{i,S}\Psi^\ord\mid i\in[t],S\in H\}$.\\
\comment{It follows that $\aut(\hat J)=\aut(J)$ using the same argument as in the $\toGroup$ subroutine}\\
Define $\hat J_S:=\{\lambda_{i,S}\Psi^\ord\mid i\in[t]\}$ for each $S\in H$ and define $\hat\CJ:=\{\hat J_S\mid S\in H\}$.\\
\comment{We claim that $\aut(\hat J_S)\leq\aut(J)$ for all $\hat J_S\in \hat\CJ$.
For all $\hat J_S\in \hat\CJ$ and all $\Delta_i\rho_i\in J$
there is a subcoset $\lambda_{i,S}\Psi^\ord\leq\Delta_i\rho_i$ in $\hat J_S$.
This proves the claim with the same
argument as in the $\toGroup$ subroutine}\\
Define a partition family $\PP:=\{\CP_S\}_{S\in H}$
of $J=P_{S,1}\cupdot\ldots\cupdot P_{S,p_S}$
where $\CP_S:=\{P_{S,1},\ldots,P_{S,p_S}\}$ such that:\\
$\Delta_i\rho_i,\Delta_j\rho_j\in P_{S,\ell}$ for some $P_{S,\ell}\in\CP_S$, if and only if
$(\lambda_{i,S}\Psi^\ord)|_{V\setminus A}=(\lambda_{j,S}\Psi^\ord)|_{V\setminus A}$.
\begin{cs}
\case{$\PP$ is non-trivial}~\\
Compute and return $\Lambda:=\recPart(J,A,\Delta^\ord,g^\ord,\PP)$ using \cref{lem:recPart}.\\
\comment{We have that $|\PP|=|H|\leq (\Delta^\ord:\Psi^\ord)\leq |V|^{3+\log_2|V|}$ and therefore we make progress
\cref{prog:J} or \cref{prog:linJ}}

\case{there is a partition $\CP_S\in\PP$ that is the partition into singletons}~\\
Return $\canObj(J)$ using \cref{cor:canObj0}.\\
\comment{Since $\CP_S$ is the partition into singletons and  $|\CP_S|\leq (\Delta^\ord:\Psi^\ord)\leq |V|^{3+\log_2|V|}$ is bounded, it follows that
$|J|\leq |V|^{3+\log_2|V|}$. Therefore, the algorithm from \cref{cor:canObj0} runs in time $2^{\polylog|V|}$}

\end{cs}
\comment{Now, all partitions $\CP_S\in\PP$ are singleton partitions.
This means that Property (A) holds for each instance $\hat J_S\in\hat\CJ$.
In the next steps, we analyze the sets $\lambda_{i,S}\Psi^\ord$ locally.
More precisely, we consider the restrictions $(\lambda_{i,S}\Psi^\ord)|_S$.
We consider different cases depending on whether these local restrictions coincide or not. We define the following partition family}\\
Define a partition family $\QQ:=\{\CQ_S\}_{S\in H}$
of $J=Q_{S,1}\cupdot\ldots\cupdot Q_{S,q_S}$
where $\CQ_S:=\{Q_{S,1},\ldots,Q_{S,q_S}\}$ such that:\\
$\Delta_i\rho_i,\Delta_j\rho_j\in Q_{S,\ell}$ for some $Q_{S,\ell}\in\CQ_S$, if and only if
$(\lambda_{i,S}\Psi^\ord)|_S=(\lambda_{j,S}\Psi^\ord)|_S$.
\begin{cs}
\case{$\QQ$ is non-trivial}~\\
Compute and return $\Lambda:=\recPart(J,A,\Delta^\ord,g^\ord,\QQ)$ using \cref{lem:recPart}.\\
\comment{We have that $|\QQ|=|H|\leq (\Delta^\ord:\Psi^\ord)\leq |V|^{3+\log_2|V|}$ and therefore we make progress
\cref{prog:J} or \cref{prog:linJ}}

\case{there is a partition $\CQ_S\in\QQ$ that is the partition into singletons}~\\
\comment{This means that there is $S\in H$
such that $(\lambda_{i,S}\Psi^\ord)|_S\neq(\lambda_{j,S}\Psi^\ord)|_S$ are pairwise distinct for all $\Delta_i\rho_i,\Delta_j\rho_j\in J$.
We simplify to the case in which $\CQ_S$ is the partition into singletons for all $\CQ_S\in\QQ$}\\
Define $H:=\{S\in H\mid \CQ_S\in\QQ$ is the partition into singletons$\}$.\\
\comment{Now, for all $S\in H$ the local restrictions are pairwise distinct}
\begin{for}{$S\in H$}
\comment{We will compute a canonical labeling for $(J,S)$}\\
Define ${\Psi^*}^\ord:=\Psi^\ord[S^\ord]\times\sym(V^\ord\setminus S^\ord)\geq\Psi^\ord$.\\
Define $\hat J_S^*:=\{\lambda_{i,S}{\Psi^*}^\ord\mid\lambda_{i,S}\Psi^\ord\in\hat J_S\}$.\\
\comment{Since $\CQ_S$ is the partition into singletons, it follows that $|\hat J_S^*|=|\hat J_S|$}\\
Define $A_S:=A\cap S$.\\
\comment{Since Property (A) holds for $\hat J_S$ with $A_S$ in place of $A$, it follows that
Property (A) also holds for the instance $\hat J_S^*$ (with $A_S$ in place of $A$)}\\
Define $\Theta^\ord\leq\Delta_T^\ord\leq\Psi^\ord$ to be the kernel of $g_T^\ord:\Delta_T^\ord\to\sym(T^\ord)$.\\
\comment{Observe that all points in $A_S^\ord\subseteq S^\ord$ are affected by $g_T^\ord$.
By the Affected Orbit Lemma \ref{theo:unaffected}, the $\Theta^\ord$-orbits of $A_S^\ord$
have size at most $|A_S^\ord|/|T^\ord|$}\\
Define ${\Theta^*}^\ord:=\stab_{{\Psi^*}^\ord}(A_1^\ord,\ldots,A_a^\ord)$ be the stabilizer of those orbits.\\
Let $c_\ind:=({\Psi^*}^\ord:{\Theta^*}^\ord)$ and let $c_\pid:=\pid({\Psi^*}^\ord,{\Theta^*}^\ord)$.\\
Apply the subroutine $\toGroup(\hat J_S^*,A_S,{\Psi^*}^\ord,{\Theta^*}^\ord)$ using \cref{lem:toGroup}.
\end{for}
\comment{We consider two cases depending on which option of \cref{lem:toGroup} is satisfied for the subroutine $\toGroup$}
\begin{cs}
\case{for all $S\in H$ the subroutine reduces to $c_\ind$-many instances with the subgroup ${\Theta^*}^\ord$}~\\
For each $S\in H$, define $\Delta_S\rho_S:=\toGroup(\hat J_S^*,A_S,{\Psi^*}^\ord,{\Theta^*}^\ord)$.\\
\comment{We analyze the recurrence. We have a multiplicative cost of $H\cdot c_\ind\leq \binom{|V|}{|T^\ord|}\cdot|T^\ord|!\leq
|V|^{|T^\ord|}\leq|V|^{3+\log_2|V|}$ and recursive
instances with $\orb_{A^\ord}({\Theta^*}^\ord)\leq|A_S^\ord|/|T^\ord|\leq\frac{1}{2}|A^\ord|$.
Therefore, we make progress \cref{prog:A}}\\
For each $S\in H$, define $\alpha_S:\hat J_S^*\to \hat J_S$ by setting $\alpha_S(\lambda_{i,S}{\Psi^*}^\ord):=\lambda_{i,S}\Psi^\ord$.\\
\comment{Observe that $\Delta_S\leq\aut(\hat J_S^*)$}\\
For each $S\in H$, compute $\Theta_S\tau_S:=\canSetSet(\hat J_S^*,\hat J_S,\alpha_S,\Delta_S\rho_S)$ using \cref{lem:canSetSet}.\\
\comment{The algorithm from \cref{lem:canSetSet} runs in time $(|V|+|\hat J_S^*|)^{\polylog|V|}$}\\
Define $H^\set:=\{\Theta_S\tau_S\mid S\in H\}$.\\
\comment{We collect the canonical labelings $\Theta_S\tau_S$ leading to minimal canonical forms of the input}\\
Define $H_{\min}^\set:=\arg\min_{\Theta_S\tau_S\in H^\set}J^{\tau_S}\subseteq H^\set$ where the minimum is taken w.r.t.~the ordering ``$\prec$'' from \cref{lem:prec}.\\
Return $\Lambda:=\langle H_{\min}^\set\rangle$.\\
\comment{This is the smallest coset containing labeling cosets in $H_{\min}^\set$ as defined
in the preliminaries. The correctness proof for (CL2) is similar to the (CL2)-proof of \cref{lem:canSetSet}}
\case{for some $S\in H$ the subroutine returns a non-trivial partition family $\hat\PP_S$ of $\hat J_S^*$}~\\
\comment{We simplify to the case in which we have a non-trivial partition family for all $S\in H$}\\
Define $H:=\{S\in H\mid$ the subroutine returns a partition family $\hat\PP_S$ for $\hat J_S^*\}$.\\
\comment{The partition family $\hat\PP_S$ for $\hat J_S$ also induces a partition family $\PP_S$ of $J$}\\
For each $S\in H$, define a non-trivial partition family $\PP_S:=\{\CP_S\mid\hat\CP_S\in\hat\PP_S\}$ of $J$ where
$\CP_S:=\{P_S\mid\hat P_S\in\hat\CP_S\}$ such that:
$\Delta_i\rho_i\in P_S$, if and only if
$\lambda_{i,S}{\Psi^*}^\ord\in\hat P_S$.\\
\comment{By taking a union, we combine all partition families into one single partition family $\PP$}\\
Define a non-trivial partition family $\PP:=\bigcup_{S\in H}\PP_S$ of $J$.\\
Return $\Lambda:=\recPart(J,A,\Delta^\ord,g^\ord,\PP)$.\\
\comment{We analyze the recurrence. In this case $|\PP|\leq |H|\cdot|\PP_S|\leq |H|\cdot c_\ind\cdot |V|^{c_\pid}$.
We have $c_\ind\leq |V|^{3+\log_2|V|}$ and by \cref{exa:pid}.\ref{exa:pid2}, we have
$c_\pid\leq 2\cdot\log_2(c_\ind)$.
In total, we have $|\PP|\leq 2^{\log_2(|V|)^4}$ which leads
to progress \cref{prog:J} or \cref{prog:linJ}}
\end{cs}
\end{cs}
\comment{Now it holds that $\CQ_S$ is the singleton partition for each $S\in H$.
This means that the local restrictions pairwise coincide.
More precisely, this means that
$(\lambda_{i,S}^{-1}\lambda_{j,S})[S^\ord]\in\Psi^\ord[S^\ord]$ for all $\lambda_{i,S}\Psi^\ord,\lambda_{j,S}\Psi^\ord\in \hat J_S$}
\begin{for}{$S\in H$}
\comment{Now, we compute automorphisms $G_S\leq\aut(\hat J_S)\leq\aut(J)$ for each $\hat J_S\in\hat\CJ$}\\
Define $G_S^\ord:=\langle(\Delta_{T,(U^\ord)}^\ord)^{\Psi^\ord}\rangle\trianglelefteq\Psi^\ord$, the normal closure of $\Delta_{T,(U^\ord)}^\ord$ in $\Psi^\ord$.\\
Define $G_S:=\lambda_{i,S} G_S^\ord\lambda_{i,S}^{-1}\leq\Delta_i\leq\sym(V)$ for some
$\Delta_i\rho_i\in J$.\\
\comment{We claim that $G_S$ depends neither on the choice of $\Delta_i\rho_i\in J$ nor on the choice of
the representative $\lambda_{i,S}\in\Delta_i\rho_i$.
First, we show that $G_S$ does not depend on the choice of the representative $\lambda_{i,S}\in\Delta_i\rho_i$.
Let $\lambda_{i,S}'\in\Delta_i\rho_i$ be a second representative.
Observe that $\lambda_{i,S}^{-1}\lambda_{i,S}'\in\Psi^\ord$ and since $G_S^\ord\trianglelefteq\Psi^\ord$
the permutation $\lambda_{i,S}^{-1}\lambda_{i,S}'$ normalizes $G_S^\ord$.
Equivalently, this means that $\lambda_{i,S}G_S^\ord\lambda_{i,S}^{-1}=\lambda_{i,S}'G_S^\ord{\lambda_{i,S}'}^{-1}$
which was what we wanted to show.
We show that $G_S$ does not depend on the choice of $\Delta_i\rho_i\in J$.
Let $\Delta_j\rho_j\in J$.
We have that $(\lambda_{i,S}^{-1}\lambda_{j,S})[S^\ord]\in\Psi^\ord[S^\ord]$.
and since $G_S^\ord[S^\ord]\trianglelefteq\Psi^\ord[S^\ord]$
the permutation $(\lambda_{i,S}^{-1}\lambda_{j,S})[S^\ord]$ normalizes $G_S^\ord[S^\ord]$.
Moreover, the permutation $(\lambda_{i,S}^{-1}\lambda_{j,S})[U^\ord]$ obviously
normalizes $G_S^\ord[U^\ord]=\Delta_{T,(U^\ord)}^\ord[U^\ord]=1$.
In total, $\lambda_{i,S}^{-1}\lambda_{j,S}$ normalizes
$G_S^\ord$ or equivalently
$\lambda_{i,S}G_S^\ord\lambda_{i,S}^{-1}=\lambda_{j,S}G_S^\ord\lambda_{j,S}^{-1}$}\\
\comment{In particular, $G_S\leq \aut(\hat J_S)\leq\aut(J)$}\\
\comment{In particular, $G_S^{\lambda_{i,S}}=G_S^\ord$ for all $\lambda_{i,S}\Psi^\ord\in\hat J$}
\end{for}
\comment{In the next step, we will consider the group of automorphisms $G$ generated by all groups $G_S$
and show that $G$ is a certificate of fullness}\\
Define $G\leq\sym(V)$ as the group generated by all $G_S$ for all $S\in H$.\\
\comment{We have $G\leq \aut(J)$ since $G_S\leq\aut(J)$ for all $S\in H$}\\
Define $G^\ord:=\langle(\Delta_{T,(U^\ord)}^\ord)^{\Delta^\ord}\rangle\trianglelefteq\Delta^\ord$, the normal closure of $\Delta_{T,(U^\ord)}^\ord$ in $\Delta^\ord$.\\
\comment{We claim that $G^{\rho_i}=G^\ord$ for all $\Delta_i\rho_i\in J$.
Let $\Delta_i\rho_i\in J$.
We have $G^{\rho_i}=\rho_i^{-1}\langle\lambda_{i,S} G_S^\ord\lambda_{i,S}^{-1}\mid S\in H\rangle\rho_i
=\langle\delta^\ord_\ell G_S^\ord{\delta^\ord_\ell}^{-1}\mid \ell\in[s]\rangle=\langle (G_S^\ord)^{\Delta^\ord}\rangle=G^\ord$}\\
\comment{We claim that $g^\ord:G^\ord\to\sym(W^\ord)$ is a giant representation.
Since $G^\ord\trianglelefteq\Delta^\ord$, it follows that $g^\ord(\Delta_{T,(U^\ord)}^\ord)\leq g^\ord(G^\ord)\trianglelefteq g^\ord(\Delta^\ord)$.
Moreover, each non-trivial normal subgroup of the giant $g^\ord(\Delta^\ord)$ is a giant as well}\\
Return the certificate of fullness $G$.
\end{proof}

\paragraph{Automorphism Lemma}
For an object $\CX\in\obj(V)$ and a group $G\leq\sym(V)$, we define $\CX^G:=\{\CX^g\mid g\in G\}\in\obj(V)$.

\begin{lem}[Automorphism Lemma]\label{lem:aut}
Let $\CX\in\obj(V)$ be an object,
let $G\leq\sym(V)$ be a group and let $\can$ be a canonical labeling function.
Assume that $\aut(\CX)\leq\aut(\CX^G)$.
Then, $\canObj(\CX^G):=G\can(\CX)$ defines a canonical labeling for $\CX^G$.
\end{lem}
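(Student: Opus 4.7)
The plan is to reduce both axioms of a canonical labeling to a single structural identity, namely
$$G\,\aut(\CX) \;=\; \aut(\CX)\,G \;=\; \aut(\CX^G) \quad \text{as subsets of } \sym(V).$$
For ``$\subseteq$'' I would combine the hypothesis $\aut(\CX)\leq\aut(\CX^G)$ with the trivial inclusion $G\leq\aut(\CX^G)$. For the reverse inclusion I would pick any $\sigma\in\aut(\CX^G)$, observe that $\CX^\sigma\in\CX^G$, choose $g\in G$ with $\CX^\sigma=\CX^g$, and conclude $\sigma g^{-1}\in\aut(\CX)$, i.e.\ $\sigma\in\aut(\CX)\,g\subseteq\aut(\CX)\,G$. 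Running the same argument with $\sigma^{-1}$ in place of $\sigma$ yields $\aut(\CX^G)\subseteq G\,\aut(\CX)$, closing the identity.

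Once the identity is established, (CL2) follows directly. By (CL2) for $\can$ I can write $\can(\CX)=\aut(\CX)\pi$ for any $\pi\in\can(\CX)$, and then
$$G\can(\CX) \;=\; G\,\aut(\CX)\,\pi \;=\; \aut(\CX^G)\,\pi,$$
which is a labeling coset of $V$ witnessing (CL2) for the object $\CX^G$.

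For (CL1) I would first unfold the definitions to check the compatibility $(\CX^G)^\phi=(\CX^\phi)^{G^\phi}$, where $G^\phi:=\phi^{-1}G\phi$. Applying the formula of the lemma to the decomposition $(\CX^\phi,G^\phi)$ and invoking (CL1) for $\can$ then gives
$$\canObj((\CX^G)^\phi) \;=\; G^\phi\can(\CX^\phi) \;=\; (\phi^{-1}G\phi)(\phi^{-1}\can(\CX)) \;=\; \phi^{-1}\,G\can(\CX),$$
which is exactly the required equivariance.

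The only real subtlety, and hence the main obstacle, is conceptual rather than computational: a priori $G\aut(\CX)$ is only a set of bijections, not a coset, since $G$ and $\aut(\CX)$ need not normalise each other. The hypothesis $\aut(\CX)\leq\aut(\CX^G)$ is precisely what forces this product to close up and coincide with the group $\aut(\CX^G)$, thereby upgrading $G\can(\CX)$ to a bona fide labeling coset and making both axioms fall out immediately.
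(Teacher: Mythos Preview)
Your proposal is correct and follows essentially the same approach as the paper: both hinge on the identity $\aut(\CX^G)=G\aut(\CX)$, proved by the same orbit argument (pick $\sigma\in\aut(\CX^G)$, find $g\in G$ with $\CX^\sigma=\CX^g$, conclude $\sigma\in G\aut(\CX)$). Your write-up is in fact more complete than the paper's, which records only this identity and leaves the verification of (CL1) and (CL2) implicit; your explicit check of (CL1) via $(\CX^G)^\phi=(\CX^\phi)^{G^\phi}$ and your remark that the hypothesis is exactly what makes the a priori mere set $G\aut(\CX)$ into a group are both on point.
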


\begin{proof}
We claim that $\aut(\CX^G)=G\aut(\CX)$.
The inclusion $G\aut(\CX)\leq\aut(\CX^G)$ follows by the assumption
Conversely, we show $\aut(\CX^G)\leq G\aut(\CX)$.
Let $\sigma\in\aut(\CX^G)$. Therefore, $\CX^{\sigma^{-1}}=\CX^{g^{-1}}$ for some $g\in G$.
This implies $g^{-1}\sigma\in\aut(\CX)$ and thus $\sigma\in G\aut(\CX)$.
\end{proof}

\begin{lem}\label{lem:processAut}
There is an algorithm $\processAut$ that gets as input a pair $(\CX,G)$
where $\CX=(J,A,\Delta^\ord,g^\ord)$ is a tuple
for which Property (A), (B) and (g) hold where $g^\ord$ is defined
and $G\leq\sym(V)$ is a fullness certificate.
In time $(|V|+|J|)^{\polylog|V|}$, the algorithm reduces the canonical labeling problem of $\CX$ to
canonical labeling of either
\begin{itemize}
  \item (progress \cref{prog:linJ}) two instances $(J_1,A,\Delta^\ord,g^\ord)$ and $(J_2,A,\Delta^\ord,g^\ord)$ with $|J_1|+|J_2|=|J|$, or
  \item (progress \cref{prog:J}) $2^{\log_2 p+\log_2(|V|)^4}$-many instances $(J_{k,i},A,\Delta^\ord,g^\ord)$
  of size $|J_{k,i}|\leq \frac{1}{p}|J|$ and to additionally $2^{\log_2(|V|)^4}$-many instances $(J_k,V,\Delta_k^\ord,\bot)$ of size $|J_k|\leq p$ for some $p\in\NN$
  with $1<p\leq\frac{1}{2}|J|$, or
  \item (progress \cref{prog:A}) $2^{\log_2(|V|)^3}$-many instances $(\hat J_i,A,\Psi^\ord,\bot)$ with $|\hat J_i|\leq |J|$ and such that
  $\orb_{A^\ord}(\Psi^\ord)\leq\frac{1}{2}\orb_{A^\ord}(\Delta^\ord)$.
\end{itemize}
\end{lem}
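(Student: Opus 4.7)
The plan is to exploit the fullness certificate $G$ in two complementary ways: first as a source of automorphisms that can partition $J$ (via $G$-orbits), and second to refine the $\Delta^\ord$-orbit on $A^\ord$ through the Affected Orbits Lemma applied to $g^\ord|_{G^\ord}$. This mirrors the aggregation step in Babai's original algorithm but is carried out entirely on the set $J$ of labeling cosets rather than on a single string.

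I would first form the $G$-orbit partition of $J$; since $G \leq \aut(J)$ this partition is canonically defined from the input. If it is non-trivial, I pass the singleton partition family $\{\CP\}$ to $\recPart$ of \cref{lem:recPart}, obtaining progress \cref{prog:linJ} or \cref{prog:J}. So assume $G$ acts transitively on $J$.

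In the transitive case I would use $g^\ord$ to halve the largest $\Delta^\ord$-orbit on $A^\ord$. Let $\Pi^\ord := \ker(g^\ord) \cap G^\ord$. Property~(g), together with the fact that $g^\ord\colon G^\ord\to\sym(W^\ord)$ is still a giant representation, forces all points of $A^\ord$ to be affected by $g^\ord$ restricted to $G^\ord$ (via \cref{theo:unaffected}.\ref{theo:unaffected1} applied to any affected orbit meeting $A^\ord$). Consequently, \cref{theo:unaffected}.\ref{theo:unaffected2} guarantees that each $\Pi^\ord$-orbit $O_i$ on $A^\ord$ has size at most $|A^\ord|/|W^\ord|\le \tfrac{1}{2}|A^\ord|$. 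The resulting partition $\CO=\{O_1,\dots,O_m\}$ of $A^\ord$ is canonically derivable from $(\Delta^\ord,G^\ord,g^\ord)$ through the canonical generating set of \cref{lem:canGen}. Let $\Psi^\ord\le\Delta^\ord$ be the subgroup stabilising every block of $\CO$ setwise. Every $\Psi^\ord$-orbit on $A^\ord$ lies inside some $O_i$, so $\orb_{A^\ord}(\Psi^\ord)\le\tfrac{1}{2}\orb_{A^\ord}(\Delta^\ord)$, which is exactly the orbit-halving needed for progress \cref{prog:A}. I then apply $\toGroup(J,A,\Delta^\ord,g^\ord,\Psi^\ord)$: if it returns a non-trivial partition family $\PP$, then $|\PP|\le c_\ind\cdot |V|^{c_\pid}$ with $c_\pid\le 2\log_2 c_\ind$ by \cref{exa:pid}.\ref{exa:pid2}, and a call to $\recPart$ yields progress \cref{prog:linJ} or \cref{prog:J}; otherwise it reduces to $c_\ind$ instances with subgroup $\Psi^\ord$ and $g^\ord$ reset to $\bot$, giving progress \cref{prog:A}.

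The main obstacle is controlling the index $c_\ind = (\Delta^\ord : \Psi^\ord)$, which a~priori can be as large as $|W^\ord|!$ and exceed the budget $2^{\log_2(|V|)^3}$ of progress \cref{prog:A}. To fit within the budget I would cascade the subgroup reduction through a chain of intermediate subgroups defined by successive point-stabilisers in the giant action on $W^\ord$: each individual reduction has index at most $|W^\ord|\le|V|$ and relative base size $O(\log|V|)$ via \cref{exa:pid}.\ref{exa:pid1}, so that at every level either a partition family of size $\le 2^{\log_2(|V|)^4}$ is produced (progress \cref{prog:J}) or the giant action is pinned down, and the composite effect is a single orbit-halving at total cost $\le 2^{\log_2(|V|)^3}$ as required.
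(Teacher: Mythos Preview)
Your proposal misses the central idea that makes the fullness certificate useful, and the workaround you sketch for the index problem does not fit the algorithm's recursion discipline.

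The paper does \emph{not} reduce to $\Psi^\ord$ via $\toGroup$ and pay the index $(\Delta^\ord:\Psi^\ord)$. Instead it exploits the Automorphism Lemma (\cref{lem:aut}): because $g^\ord$ restricted to $G^\ord$ is still a giant representation, one has $G^\ord\Psi^\ord=\Delta^\ord$ for $\Psi^\ord:=(g^\ord)^{-1}(M^\ord)$ with $M^\ord$ a two-point pointwise stabiliser in $\sym(W^\ord)$. Expanding $J$ to $\hat J=\{\rho_i\delta^\ord_\ell\Psi^\ord\}$ and forming a suitable $\aut$-invariant partition $\hat\CJ$ of $\hat J$, the group $G$ acts \emph{transitively} on $\hat\CJ$. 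One may then choose a block $\hat J_0\in\hat\CJ$ \emph{arbitrarily} (non-canonically!), recurse \emph{once} on $(\hat J_0,A,\Psi^\ord,\bot)$, and return $G\cdot\hat\Lambda$; \cref{lem:aut} guarantees canonicity because $(\hat J_0)^G=\hat\CJ$. The multiplicative cost is $1$ (or $2$), not $(\Delta^\ord:\Psi^\ord)$, and the orbit bound $\orb_{A^\ord}(\Psi^\ord)\le 2|A^\ord|/|W^\ord|\le\tfrac12\orb_{A^\ord}(\Delta^\ord)$ gives progress \cref{prog:A} directly. The remaining work in the paper is a case analysis ensuring that Property~(A) holds for $\hat J_0$ (or that a non-trivial partition family of $J$ can be extracted instead), which is where the partitions $\CP,\CQ$ and the cases on $\Pi^\ord[V^\ord\setminus A^\ord]$ enter.

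Your plan, by contrast, calls $\toGroup$ and must absorb the full index. Your ``cascade through point-stabilisers'' does not work within the framework: each call to $\toGroup$ hands control back to the main recursion with $g^\ord:=\bot$, so you cannot chain $|W^\ord|$ such steps before the progress accounting intervenes; and even if you could, the accumulated cost $|W^\ord|^{|W^\ord|}$ is not bounded by $2^{(\log_2|V|)^3}$ in general. Separately, your first step---partitioning $J$ by $G$-orbits---is not what the paper uses and is not obviously helpful: transitivity of $G$ on $J$ itself is neither needed nor typically available; what matters is transitivity of $G$ on the blocks of $\hat\CJ$, which follows from $G^\ord\Psi^\ord=\Delta^\ord$.
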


\paragraph{Intuition of Certificate Aggregation}

We describe the overall strategy of this subroutine.
Let us consider the less technical case in which $g^\ord(G^\ord)$ is the symmetric group (rather than the alternating group).
In this case, it holds that $G^\ord\Psi^\ord=\Delta^\ord$
where $\Psi^\ord$ is the kernel of $g^\ord$.
Similarly to the $\toGroup$ and $\processJohnson$ subroutine, we consider the decomposition of
$\Delta^\ord=\bigcup_{\ell\in[s]}\delta^\ord_\ell\Psi^\ord$ into left cosets of the kernel
and define $\hat J:=\{\rho_i\delta^\ord_\ell\Psi^\ord\mid i\in[t],\ell\in[s]\}$.
Again, we have $\aut(\hat J)=\aut(J)$.
The key observation is that $G$ is transitive on $\hat J$ since $(\rho_i\delta^\ord_\ell\Psi^\ord)^{g^{-1}}=\rho_ig^{\rho_i}\delta^\ord_\ell\Psi^\ord$
for all $g\in G$ and $G^\ord\Psi^\ord=\Delta^\ord$.

First, consider an easy case in which $J=\{\Delta_1\rho_1\}$ consists of one single labeling coset.
In this case, we have a set of automorphisms $G$ acting transitively on the subcosets $\hat J=\{\rho_1\delta^\ord_\ell\Psi^\ord\mid\ell\in[s]\}$.
Moreover, each subcoset satisfies $\aut(\rho_1\delta^\ord_\ell\Psi^\ord)\leq\aut(J)$ and can be seen as an individualization of $J$.
This means, we can choose (arbitrarily) a subcoset $\rho_1\delta^\ord_\ell\Psi^\ord\leq\Delta_1\rho_1$ and recurse on that.
Since the automorphisms in $G$ can map each subcoset to each other subcoset it does not matter which subcoset we choose.
By recursing on one single subcoset only, we can measure significant progress.
At the end, we return $G\hat\Lambda$ where $\hat\Lambda$ is a canonical labeling for the (arbitrarily) chosen subcoset
and $G$ is the group of automorphisms (acting transitively on the set of all subcosets).

However, the situation becomes more difficult when dealing with more labeling cosets $J=\{\Delta_1\rho_1,\ldots,\Delta_t\rho_t\}$ for $t\geq 2$.
The first idea that comes to mind is the following generalization.
We choose (arbitrarily) some $\ell\in[s]$ and define the set of subcosets $\hat J_\ell:=\{\rho_i\delta^\ord_\ell\Psi^\ord\mid i\in[t]\}\subseteq\hat J$.
The set $\hat J_\ell$ contains exactly one subcoset $\rho_i\delta^\ord_\ell\Psi^\ord\leq\Delta_i\rho_i$
of each $\Delta_i\rho_i\in J$.
However, the partition $\hat\CJ:=\{\hat J_\ell\mid \ell\in[s]\}$ might not be $G$-invariant
and $G$ might not be transitive on it.
The goal of the algorithm is to find a suitable partition $\hat\CJ:=\{\hat J_1,\ldots,\hat J_r\}$ of the subcosets $\hat J$ on which $G$ is transitive.

\begin{proof}[Proof of \cref{lem:processAut}]
\algorithmDAN{\processAut(J,A,\Delta^\ord,g^\ord,G)}~\\
Define $\Pi^\ord\trianglelefteq\Delta^\ord$ as the kernel of $g^\ord:\Delta^\ord\to\sym(W^\ord)$.\\
Define $M^\ord:=\sym({W^\ord})_{(\{3,\ldots,|W^\ord|\})}$, the pointwise stabilizer of all points excluding $1,2\in\NN$.\\
Define $\Psi^\ord:={g^\ord}^{-1}(M^\ord)\leq\Delta^\ord$.\\
\comment{It holds that $\Pi^\ord\leq\Psi^\ord\leq\Delta^\ord$, where the former subgroup relation is of index 2.
Moreover, $G^\ord\Psi^\ord=\Delta^\ord$}\\
We consider (but not compute) the decomposition of $\Delta^\ord=\bigcup_{\ell\in[s]}\delta^\ord_\ell\Psi^\ord$ into left cosets
and define $\hat J:=\{\rho_i\delta^\ord_\ell\Psi^\ord\mid \Delta_i\rho_i\in J,\ell\in[s]\}$.\\
\comment{This decomposition is for the analysis only and its computation is not part of the algorithm}
\begin{cs}
\case{$\Pi^\ord[V^\ord\setminus A^\ord]<\Delta^\ord[V^\ord\setminus A^\ord]$ is a subgroup of index greater than 2}~\\
Define the homomorphism $h:\Delta^\ord\to\Delta^\ord[V^\ord\setminus A^\ord]$ by restricting the image to $V^\ord\setminus A^\ord$.\\
Define $N^\ord:=\ker(h)\leq\Delta^\ord$ as the kernel of the homomorphism $h$.\\
\comment{We claim that $N^\ord\leq\Pi^\ord$.
Since $\Pi^\ord,N^\ord\trianglelefteq\Delta^\ord$, we have that $\Pi^\ord\leq \Pi^\ord N^\ord\trianglelefteq\Delta^\ord$.
Observe that $\Delta^\ord/\Pi^\ord$ is isomorphic to a giant
and all normal subgroups of a giant with index greater than 2 are trivial.
By assumption, $(\Delta^\ord:\Pi^\ord N^\ord)\geq (h(\Delta^\ord):h(\Pi^\ord N^\ord))> 2$.
By the Correspondence Theorem, $\Pi^\ord N^\ord=\Pi^\ord$ which proves the claim}\\
We consider (but not compute) the decomposition $\hat\CJ:=\{\hat J_1,\ldots,\hat J_r\}$ of $\hat J=\hat J_1\cupdot\ldots\cupdot \hat J_r$
such that:\\
$\rho_i\delta^\ord_\ell\Psi^\ord,\rho_{i'}\delta^\ord_{\ell'}\Psi^\ord\in \hat J_k$ for some $\hat J_k\in\hat\CJ$,
if and only if $(\rho_i\delta^\ord_\ell\Psi^\ord)|_{V\setminus A}$ equals $(\rho_{i'}\delta^\ord_{\ell'}\Psi^\ord)|_{V\setminus A}$.\\
\comment{We claim that $|\hat J_k|=|J|$ and $\aut(\hat J_k)\leq\aut(J)$.
We show a stronger statement, i.e.,
for each $k\in[r],\Delta_i\rho_i\in J$ there is exactly one $\ell\in[s]$ such that $\rho_i\delta^\ord_\ell\Psi^\ord\in \hat J_k$.
Property (A) implies that
for all $k\in[r],\Delta_i\rho_i\in J$ there is at least one $\ell\in[s]$ such that $\rho_i\delta^\ord_\ell\Psi^\ord\in \hat J_k$.
On the other side, let $\rho_i\delta^\ord_\ell\Psi^\ord,\rho_i\delta^\ord_{\ell'}\Psi^\ord\in\hat J_k$.
By definition of $\hat J_k$, it holds that $((\delta^\ord_\ell)^{-1}\delta^\ord_{\ell'})[V^\ord\setminus A^\ord]\in\Psi^\ord[V^\ord\setminus A^\ord]$
or equivalently $h((\delta^\ord_\ell)^{-1}\delta^\ord_{\ell'})\in h(\Psi^\ord)$.
Therefore, $(\delta^\ord_\ell)^{-1}\delta^\ord_{\ell'}\in\Psi^\ord N^\ord=\Psi^\ord$. Thus, $\ell=\ell'$ which proves the claim}\\
Define $\hat J_0:=\hat J_k$ for some arbitrarily chosen $k\in[r]$ (which can depend on the choice of $k\in[r]$).\\
\comment{To compute $\hat J_0$, one can use the Schreier-Sims algorithm as follows.
First, we pick $\rho_1\Psi^\ord$ and define $\hat J_0$ as the part $\hat J_k$ such that
$\rho_1\Psi^\ord\in\hat J_k$.
Then, we compute the coset $\Delta_i':=(\Delta_i\rho_i\rho_1^{-1})_{(V\setminus A)}$ (which is non-empty)
and pick an element $\delta'\in\Delta_i$ for each $\Delta_i\rho_i\in J$.
Then, $\delta'\rho_1\Psi^\ord\leq\Delta_i\rho_i$ and $\delta'\rho_1\Psi^\ord|_{V\setminus A}=\rho_1\Psi^\ord|_{V\setminus A}$
and therefore $\delta'\rho_1\Psi^\ord$ also belongs to $\hat J_0$. Therefore, we can compute the entire set $\hat J_0$.
We claim that $G$ is transitive on $\hat\CJ$ and therefore $(\hat J_0)^G=\hat\CJ$.
This follows from the fact that $G$ is transitive on $\hat J$ and that $\hat\CJ$ is an automorphism-invariant partition of $\hat J$}\\
Compute $\hat\Lambda:=\canSet(\hat J_0,A,\Psi^\ord,\bot)$ recursively.\\
\comment{As already observed, it holds that $\orb_{A^\ord}(\Psi^\ord)\leq
2\cdot\orb_{A^\ord}(\Pi^\ord)\leq 2|A^\ord|/|W^\ord|\leq\frac{1}{2}\orb_{A^\ord}(\Delta^\ord)$.
This leads to progress \cref{prog:A}}\\
Return $\Lambda:=G\hat\Lambda$.\\
\comment{Since $(\hat J_0)^G=\hat \CJ$, it follows that $\aut(\Lambda)=\aut(\hat\CJ)=\aut(\hat J)=\aut(J)$ by \cref{lem:aut}}
\end{cs}
\begin{cs}
\case{$\Pi^\ord[V^\ord\setminus A^\ord]\leq\Delta^\ord[V^\ord\setminus A^\ord]$ has index 1 or 2}~\\
For each $\Delta_i\rho_i\in J$, define $\Pi_i:=\rho_i\Pi^\ord\rho_i^{-1}\trianglelefteq\Delta_i\leq\sym(V)$.\\
\comment{The group $\Pi_i$ does not depend on the representative of $\Delta_i\rho_i$,
because the kernel $\Pi^\ord\trianglelefteq\Delta^\ord$ is a normal subgroup}\\
For each $\Delta_i\rho_i\in J$ define the $\Pi_i$-orbit partition $\CB_i:=\{B\subseteq A\mid B$ is a $\Pi_i$-orbit$\}$ of $A$.\\
Define an (unordered) partition $\CP:=\{P_1,\ldots,P_p\}$ of $J= P_1\cupdot\ldots\cupdot P_p$ such that:\\
$\Delta_i\rho_i,\Delta_j\rho_j\in P_\ell$ for some $P_\ell\in\CP$, if and only if
$\CB_i=\CB_j$.
\begin{cs}
\case{$\CP$ is a non-trivial partition}~\\
Return $\Lambda:=\recPart(J,A,\Delta^\ord,g^\ord,\{\CP\})$ using \cref{lem:recPart}.\\
\comment{We have $|\{\CP\}|=1$ which leads to progress
\cref{prog:J} or \cref{prog:linJ}}

\case{$\CP$ is the partition into singletons}~\\
\comment{This means that $\CB_i\neq \CB_j$ for all $\Delta_i\rho_i\neq\Delta_j\rho_j\in J$. We will define a non-trivial cover $\CC$}\\
Define a cover $\CC:=\{C_{vw}\mid (v,w)\in A^2\}$ of $J=\bigcup_{(v,w)\in A^2}C_{vw}$ such that:\\
$\Delta_i\rho_i\in C_{vw}$, if and only if $\{v,w\}\subseteq B$ for some $B\in\CB_i$.\\
Define $\PP:=\{\CP_{vw}\}_{(v,w)\in A^2}$ as partition family induced by $\CC$, i.e.,
$\CP_{vw}:=\{P_{vw,1},P_{vw,2}\}$ where
$P_{vw,1}:=C_{vw}$ and $P_{vw,2}:=J\setminus C_{vw}$ for $(v,w)\in A^2$.\\
Return $\Lambda:=\recPart(J,A,\Delta^\ord,g^\ord,\PP)$ using \cref{lem:recPart}.\\
\comment{We have $|\PP|=|A^2|\leq|V|^2$ which leads to
progress \cref{prog:J} or \cref{prog:linJ}}

\end{cs}
\comment{Now, the partition $\CP$ is the singleton partition. This means that $\CB_i=\CB_j$ for all $\Delta_i\rho_i,\Delta_j\rho_j\in J$}\\
Define $\CB:=\CB_i$ for some $\Delta_i\rho_i\in J$.\\
\comment{The partition $\CB$ does not depend on the choice of $\Delta_i\rho_i\in J$}\\
Define an (unordered) partition $\CQ:=\{Q_1,\ldots,Q_q\}$ of $J= Q_1\cupdot\ldots\cupdot Q_q$ such that:\\
$\Delta_i\rho_i,\Delta_j\rho_j\in Q_\ell$ for some $Q_\ell\in\CQ$, if and only if
$(\Delta_i\rho_i)[\CB]=(\Delta_j\rho_j)[\CB]$.

\begin{cs}
\case{$\CQ$ is a non-trivial partition}~\\
Compute and return $\Lambda:=\recPart(J,A,\Delta^\ord,g^\ord,\{\CQ\})$.\\
\comment{We have $|\{\CQ\}|=1$ which leads to progress \cref{prog:J} or \cref{prog:linJ}}

\case{$\CQ$ is a partition into singletons}~\\
\comment{This means that $\Delta_i\rho_i[\CB]\neq\Delta_j\rho_j[\CB]$ for all $\Delta_i\rho_i\neq\Delta_j\rho_j\in J$}\\
\comment{We will use the following strategy. We individualize a labeling coset $\Delta_1\rho_1\in J$
at a multiplicative cost of $|J|$. Then, we choose arbitrarily a subcoset $\rho_1\delta^\ord_\ell\Psi^\ord\leq\Delta_1\rho_1$
(no multiplicative cost since the group of automorphisms $G$ is transitive on the set of all possible chosen subcosets).
Again, we individualize a subcoset $\Gamma_k:=\rho_1\delta^\ord_\ell\psi^\ord_k\Pi^\ord\leq\rho_1\delta^\ord_\ell\Psi^\ord$
at a multiplicative cost of $2$.
With respect to the individualized subcoset $\Gamma_k$, we can define a linear ordering on $J$ and solve the canonization problem without further recursive calls}
\begin{for}{$\Delta_i\rho_i\in J$}
\comment{We will compute a canonical labeling for $(J,\Delta_i\rho_i)$}\\
We consider (but not compute) $\hat J_i:=\{\rho_i\delta^\ord_\ell\Psi^\ord\mid\ell\in[s]\}$ of $\Delta_i\rho_i$.\\
Define $\Gamma_{i,0}:=\rho_i\delta^\ord_\ell\Psi^\ord\in\hat J_i$ for some arbitrarily chosen $\ell\in[s]$ (which can depend on the choice of $\ell\in[s]$).\\
\comment{We will compute a canonical labeling for $(J,\Gamma_{i,0})$.
Again, $G$ is transitive on $\hat J_i$ and therefore $\Gamma_{i,0}^G=\hat J_i$}\\
Compute the decomposition of $\Psi^\ord=\psi^\ord_1\Pi^\ord\cupdot\psi^\ord_2\Pi^\ord$ into left cosets.\\
Decompose $\Gamma_{i,0}=\Gamma_{i,0,1}\cupdot\Gamma_{i,0,2}$ where $\Gamma_{i,0,2}:=\rho_i\delta^\ord_\ell\psi^\ord_k\Pi^\ord\leq\Delta_i\rho_i$ for $k=1,2$.
\begin{for}{$k=1,2$}
\comment{We will compute a canonical labeling for $(J,\Gamma_{i,0,k})$}\\
Compute $\Theta_{i,0,k,j}\tau_{i,0,k,j}:=\canSet(\Gamma_{i,0,k},\Delta_j\rho_j)$ using \cref{lem:canInt} or \cref{cor:canObj0} for each $\Delta_j\rho_j\in J$.\\
Rename indices $[t]$ such that:
$(\Gamma_{i,0,k},\Delta_1\rho_1)^{\tau_{i,0,k,1}}\prec\ldots\prec(\Gamma_{i,0,k},\Delta_t\rho_t)^{\tau_{i,0,k,t}}$.\\
\comment{We claim that the ordering is strict.
Assume that $(\Gamma_{i,0,k},\Delta_j\rho_j)^{\tau_{i,0,k,j}}=(\Gamma_{i,0,k},\Delta_{j'}\rho_{j'})^{\tau_{i,0,k,j'}}$.
On the one side, $\Gamma_{i,0,k}^{\tau_{i,0,k,j}}=\Gamma_{i,0,k}^{\tau_{i,0,k,j'}}$ implies $\tau_{i,0,k,j'}\tau_{i,0,k,j}^{-1}[\CB]=1$
and on the other side $(\Delta_j\rho_j)^{\tau_{i,0,k,j}}=(\Delta_{j'}\rho_{j'})^{\tau_{i,0,k,j'}}$ implies
$\tau_{i,0,k,j'}\tau_{i,0,k,j}^{-1}[\CB]\in\rho_{j'}\Delta^\ord\rho_j^{-1}[\CB]$.
Since $\Delta_j\rho_j[\CB]\neq\Delta_{j'}\rho_{j'}[\CB]$ for all $\Delta_j\rho_j\neq\Delta_{j'}\rho_{j'}\in J$, it follows
that $j=j'$ which proves the claim}\\
Define $\Theta_{i,0,k}\tau_{i,0,k}:=\canObj((\Delta_1\rho_1,\ldots,\Delta_t\rho_t))$ using \cref{cor:canObj0}.\\
\comment{Observe that $\Theta_{i,0,k}\tau_{i,0,k}$ defines a canonical labeling for $(J,\Gamma_{i,0,k})$}
\end{for}
Define $\Theta_{i,0}:=
\begin{cases}
\langle\Theta_{i,0,1}\tau_{i,0,1},\Theta_{i,0,2}\tau_{i,0,2}\rangle,&\text{if }J^{\tau_{i,0,1}}=J^{\tau_{i,0,2}}\\
\Theta_{i,0,k}\tau_{i,0,k},&\text{if }J^{\tau_{i,0,k}}\prec J^{\tau_{i,0,3-k}}.
\end{cases}$.\\
\comment{Observe that $\Theta_{i,0}\tau_{i,0}$ defines a canonical labeling for $(J,\Gamma_{i,0})$}\\
Define $\Theta_i\tau_i:=G\Theta_{i,0}\tau_{i,0}$.\\
\comment{We claim that $\Theta_i\tau_i$ defines a canonical labeling for $(J,\Delta_i\rho_i)$.
By \cref{lem:aut}, we have that $\Theta_i\tau_i$ defines a canonical labeling for $(J,\hat J_i)$
since $\Gamma_{i,0}^G=\hat J_i$. Moreover, $\hat J_i$ is an isomorphism-invariant partition of $\Delta_i\rho_i$ which proves the claim}
\end{for}
\comment{Next, we compute a canonical labeling $\Lambda$ for $J$}\\
Define $J^\set:=\{\Theta_i\tau_i\mid \Delta_i\rho_i\in J\}$\\
\comment{We collect the canonical labelings $\Theta_i\tau_i$ leading to minimal canonical forms of the input}\\
Define $J_{\min}^\set:=\arg\min_{\Theta_i\tau_i\in J^\set}J^{\tau_i}\subseteq J^\set$ where the minimum is taken w.r.t.~the ordering ``$\prec$'' from \cref{lem:prec}.\\
Return $\Lambda:=\langle J_{\min}^\set\rangle$.\\
\comment{This is the smallest coset containing labeling cosets in $J_{\min}^\set$ as defined
in the preliminaries. The correctness proof for (CL2) is similar to the (CL2)-proof of \cref{lem:canSetSet}}
\end{cs}
\comment{Now, $\CQ$ is the singleton partition.
This means that $\Delta_i\rho_i[\CB]=\Delta_j\rho_j[\CB]$ for all $\Delta_i\rho_i,\Delta_j\rho_j\in J$}\\
Define $\CB^\ord:=\{B_1^\ord,\ldots,B_b^\ord\}$ as the $\Pi^\ord$-orbit partition of $A^\ord$.\\
\comment{By definition, $\CB^{\rho_i}=\CB^\ord$ for all $\Delta_i\rho_i\in J$}\\
Define the homomorphism $h:\Delta^\ord\to\Delta^\ord[\CB^\ord]$.\\
Define $N^\ord:=\ker(h)\leq\Delta^\ord$ as the kernel of the homomorphism $h$.\\
\comment{Again, we claim that $N^\ord\leq\Pi^\ord$.
Since $\Pi^\ord,N^\ord\trianglelefteq\Delta^\ord$, we have that $\Pi^\ord\leq \Pi^\ord N^\ord\trianglelefteq\Delta^\ord$.
Observe that $\Delta^\ord/\Pi^\ord$ is isomorphic to a giant
and all normal subgroups of a giant with index greater than 2 are trivial.
We have $(\Delta^\ord:\Pi^\ord N^\ord)\geq (h(\Delta^\ord):h(\Pi^\ord N^\ord))=|h(\Delta^\ord)|$.
Since $\Delta^\ord$ is transitive on $A^\ord$ (Property (g)) it holds that $h(\Delta^\ord)=\Delta^\ord[\CB^\ord]$
is transitive on $\CB^\ord$ and therefore
$h(\Delta^\ord)|\geq |\CB^\ord|$.
By the Affected Orbits Lemma \ref{theo:unaffected},
each $B_i^\ord\in\CB^\ord$ has size at most $|A^\ord|/|W^\ord|$
and therefore $|\CB^\ord|\geq|W^\ord|$.
Moreover, $|W^\ord|\geq 3$ is assumed the be greater than some absolute constant.
By the Correspondence Theorem, $\Pi^\ord N^\ord=\Pi^\ord$ which proves the claim}\\
We consider (but not compute) the decomposition $\hat\CJ:=\{\hat J_1,\ldots,\hat J_r\}$ of $\hat J=\hat J_1\cupdot\ldots\cupdot \hat J_r$
such that:\\
$\rho_i\delta^\ord_\ell\Psi^\ord,\rho_{i'}\delta^\ord_{\ell'}\Psi^\ord\in \hat J_k$ for some $\hat J_k\in\CJ$,
iff $\rho_i\delta^\ord_\ell\Psi^\ord[\CB]=\rho_{i'}\delta^\ord_{\ell'}\Psi^\ord[\CB]$.\\
\comment{Again, we claim that $|\hat J_k|=|J|$ and $\aut(\hat J_k)\leq\aut(J)$.
We show a stronger statement, i.e.,
for each $k\in[r],\Delta_i\rho_i\in J$ there is at exactly one $\ell\in[s]$ such that $\rho_i\delta^\ord_\ell\Psi^\ord\in \hat J_k$.
Because of $\Delta_i\rho_i[\CB]=\Delta_j\rho_j[\CB]$, it holds for all $k\in[r],\Delta_i\rho_i\in J$ there is at least one $\ell\in[s]$ such that $\rho_i\delta^\ord_\ell\Psi^\ord\in \hat J_k$.
On the other side, let $\rho_i\delta^\ord_\ell\Psi^\ord,\rho_i\delta^\ord_{\ell'}\Psi^\ord\in\hat J_k$.
By definition of $\hat J_k$, it holds $((\delta^\ord_\ell)^{-1}\delta^\ord_{\ell'})[\CB^\ord]\in\Psi^\ord[\CB^\ord]$
or equivalently $h((\delta^\ord_\ell)^{-1}\delta^\ord_{\ell'})\in h(\Psi^\ord)$.
Therefore, $(\delta^\ord_\ell)^{-1}\delta^\ord_{\ell'}\in\Psi^\ord N^\ord=\Psi^\ord$. Thus, $\ell=\ell'$ which proves the claim}\\
Define $\hat J_0:=\hat J_k$ for some arbitrarily chosen $k\in[r]$ (which can depend on the choice of $k\in[r]$).\\
\comment{Again, $G$ is transitive on $\hat\CJ$ and therefore $(\hat J_0)^G=\hat\CJ$}\\
Define $K:=\{(\rho_i\delta^\ord_\ell\Psi^\ord)|_{V\setminus A}\mid \rho_i\delta^\ord_\ell\Psi^\ord\in\hat J_0\}$.
\begin{cs}
\case{$|K|=1$}~\\
\comment{In this case, Property (A) is satisfied for $\hat J_0$}\\
Compute $\hat\Lambda:=\canSet(\hat J_0,A,\Psi^\ord,\bot)$ recursively.\\
\comment{As before, it holds that $\orb_{A^\ord}(\Psi^\ord)\leq\frac{1}{2}\orb_{A^\ord}(\Delta^\ord)$.
This leads to progress \cref{prog:A}}\\
Return $\Lambda:=G\hat\Lambda$.\\
\comment{Since $(\hat J_0)^G=\hat\CJ$, follows that $\aut(\Lambda)=\aut(\hat\CJ)=\aut(\hat J)=\aut(J)$ by \cref{lem:aut}}
\case{$|K|\geq 2$}~\\
\comment{Actually, we are in a case in which $|K|=2$ since we are still in the case in which $\Pi^\ord[V^\ord\setminus A^\ord]\leq\Delta^\ord[V^\ord\setminus A^\ord]$ has index 1 or 2}\\
For both $k\in K$, define $\hat J_{0,k}:=\{\rho_i\delta^\ord_\ell\Psi^\ord\in\hat J_0\mid
(\rho_i\delta^\ord_\ell\Psi^\ord)|_{V\setminus A}=k\}$.\\
\comment{Now, Property (A) is satisfied for both $\hat J_{0,k}$}\\
Compute $\Theta_k\tau_k:=\canSet(\hat J_{0,k},A,\Psi^\ord,\bot)$ recursively for both $k\in K$.\\
\comment{Again, $\orb_{A^\ord}(\Psi^\ord)\leq\frac{1}{2}\orb_{A^\ord}(\Delta^\ord)$.
The multiplicative cost is 2 which leads to progress \cref{prog:A}}\\
Define $\hat\CJ_0^\set:=\{(\Theta_k\tau_k,(\hat J_{0,k})^{\tau_k})\mid k\in K\}$.\\
Compute $\hat\Lambda:=\canObj(\hat\CJ_0^\set)$ using \cref{cor:canObj0}.\\
\comment{By \cref{lem:rep},
it follows that $\hat\Lambda$ defines a canonical labeling for $\hat J_0$}\\
Return $\Lambda:=G\hat\Lambda$.\\
\comment{Since $(\hat J_0)^G=\hat\CJ$, follows that $\aut(\Lambda)=\aut(\hat\CJ)=\aut(\hat J)=\aut(J)$ by \cref{lem:aut}}
\end{cs}
\end{cs}
\end{proof}

We have all tools together to give the algorithm for \cref{theo:canSet}.

\begin{proof}[Proof of \cref{theo:canSet}]
\algorithmDAN{\canSet(J,A,\Delta^\ord,g^\ord)}
\begin{cs}
\case{Property (B) is not satisfied}~\\
We recurse as described in the beginning of this section.

\case{$g^\ord=\bot$ is undefined}~\\
Recurse and return $\Lambda:=\toJohnson(J,A,\Delta^\ord,\bot)$ using \cref{lem:toJohnson}.
\case{$g^\ord$ is defined}~\\
Apply the subroutine $\processJohnson(J,A,\Delta^\ord,g^\ord)$ using \cref{lem:processJohnson}.
\begin{cs}
\case{the subroutine returns a certificate of fullness $G\leq\sym(V)$}~\\
Return $\Lambda:=\processAut(J,A,\Delta^\ord,g^\ord,G)$ using \cref{lem:processAut}.

\case{the subroutine finds a canonical labeling $\Lambda$ using recursion}~\\
Return $\Lambda$.
\end{cs}
\end{cs}
\begin{runtime}
The number of recursive calls of the algorithm $\canSet$ is bounded $T\leq(|V|+|J|)^{\polylog|V|}$
where $T$ is the function given in \cref{runtime}.
Also each recursive call takes time bounded in $(|V|+|J|)^{\polylog|V|}$.
\end{runtime}
\end{proof}

By improving the running time of \cref{prob:CL:Set}, we also obtain an improved version of \cref{cor:canObj0}.

\begin{cor}\label{cor:canObj}
Canonical labelings for combinatorial objects can be computed in time $n^{\polylog|V|}$
where $n$ is the input size and $V$ is the ground set of the object.
\end{cor}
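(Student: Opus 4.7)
The plan is to replay the reduction from canonization of combinatorial objects to the canonical intersection problem (\cref{prob:CL:Int}) and to the multiple-coset canonization problem (\cref{prob:CL:Set}) that is given in \cite{DBLP:conf/stoc/SchweitzerW19} and was already used to prove \cref{cor:canObj0}, but this time plug in the sharper bound of \cref{theo:canSet} in place of \cref{lem:canSet0}. Concretely, I would recurse on the structure of $\CX\in\obj(V)$: for atoms (a vertex $v\in V$ or a labeling coset $\Delta\rho\le\lab(V)$) return a trivial canonical labeling; for a tuple $\CX=(X_1,\ldots,X_t)$ canonize each component $\Delta_i\rho_i:=\canObj(X_i)$ recursively and then iteratively combine the resulting labeling cosets using $\canInt$; for a set $\CX=\{X_1,\ldots,X_t\}$ compute $\Delta_i\rho_i:=\canObj(X_i)$ recursively and then, by the Object Replacement Lemma (\cref{lem:rep}), reduce the canonization of $\CX$ to the canonization of $\CX^{\set}:=\{\Delta_1\rho_1,\ldots,\Delta_t\rho_t\}$ via $\canSet$.

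The correctness of each step (the (CL1)/(CL2) bookkeeping) is exactly the same as in the proof of \cref{cor:canObj0} and carries over verbatim from \cite{DBLP:conf/stoc/SchweitzerW19}; the only change is quantitative. For the running time I would let $n$ denote the total encoding size of $\CX$, so that $\sum |X|\le n$ when $X$ ranges over all sets occurring in $\{\CX\}\cup\tranCl(\CX)$. A single recursive call at a set node of size $|J|\le n$ invokes $\canSet$ on an instance over the same ground set $V$ with $|J|$ labeling cosets, which by \cref{theo:canSet} runs in time $(|V|+|J|)^{\polylog|V|}\le n^{\polylog|V|}$. The tuple and intersection cases are handled in $2^{\polylog|V|}$ per step via \cref{lem:canInt}, again bounded by $n^{\polylog|V|}$. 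Summing over the at most $n$ nodes of the recursion tree yields the claimed bound $n^{\polylog|V|}$.

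The key point that makes this proposal actually improve on \cref{cor:canObj0} is that the exponent in \cref{theo:canSet} is $\polylog|V|$ rather than $\polylog(|V|+|J|)$; this is precisely what removes the extra $t_{\max}$-dependence from the exponent. I do not foresee a genuine obstacle here: the structural reduction from objects to $\canInt$ and $\canSet$ is already established, so once \cref{theo:canSet} is in hand the corollary is essentially an accounting argument. The only care required is to make sure that each recursive call feeds $\canSet$ an instance over the original unordered ground set $V$ (so that $|V|$ in the exponent never blows up), which is precisely how object replacement is set up in \cref{lem:rep}.
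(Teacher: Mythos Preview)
Your proposal is correct and matches the paper's approach exactly: the paper simply notes that the polynomial-time reduction from general combinatorial objects to \cref{prob:CL:Int} and \cref{prob:CL:Set} (already cited from \cite{DBLP:conf/stoc/SchweitzerW19} for \cref{cor:canObj0}) can now be combined with the improved bound of \cref{theo:canSet}, which is precisely what you do. The only small detail you gloss over is that the Object Replacement Lemma requires the elements of a set to have equal canonical form, so one first partitions by isomorphism type and handles the resulting ordered partition via iterated $\canInt$; but this is part of the cited reduction and does not affect the running-time analysis.
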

\section{Isomorphism of Graphs Parameterized by Treewidth}\label{sec:treewidth}

\paragraph{Graph Theory}
We write $N_G(v):=\{w\in V(G)\mid \{v,w\}\in E(G)\}$ to denote the (open) neighborhood of $v\in V(G)$
in a graph $G$.
We also write $N_G(S):=\bigcup_{v\in S}N_G(v)$ to denote the (open) neighborhood of a subset $S\subseteq V(G)$.
We write $G[U]$ to denote the subgraph induced by $U\subseteq V(G)$ in $G$.

\begin{defn}[Tree Decomposition]
A \emph{tree decomposition} of a graph $G$ is a pair $(T,\beta)$ where
$T$ is a tree and $\beta: V(T)\to 2^{V (G)}$ is a function
that assigns each node $t\in V(T)$ a subset $\beta(t)\subseteq V(G)$, called \emph{bag}, such that:
\begin{enumerate}[(\textnormal{T}1)]
\item for each vertex $v\in V (G)$, the induced subtree
$T[\{t \in V (T)\mid v \in \beta(t)\}]$
is non-empty and connected, and
\item for each edge $e\in E(G)$, there exists $t\in V(T)$
such that $e\subseteq\beta(t)$.
\end{enumerate}
\end{defn}
The sets $\beta(s)\cap\beta(t)$
for $\{s,t\} \in E(T)$ are called
the \emph{adhesion sets}.
The \emph{width} of a tree decomposition $T$ is equal to its maximum
bag size decremented by one, i.e. $\max_{t\in V (T)} |\beta(t)| - 1$.
The \emph{treewidth} of a graph,
denoted by $\tw G$, is equal to the minimum width
among all its tree decompositions.

\paragraph{Separations and Separators}
Let $G=(V,E)$ be a graph and let $v,w\in V(G)$.
A pair $(A,B)$ is called a \emph{$(v,w)$-separation} if $A\cup B=V(G)$ and $v\in A\setminus B,w\in B\setminus A$ and there are no edges with one vertex in $V(G)\setminus A$
and the other vertex in $V(G)\setminus B$.
In this case, $A\cap B$ is called a \emph{$(v,w)$-separator}.
A separator $A\cap B$ is called \emph{clique separator} if $A\cap B$ is a clique in $G$.
Among all $(v,w)$-separations $(A,B)$ with minimal $|A\cap B|$ there is a unique separation $(A^*,B^*)$ with an inclusion minimal $A^*$.
In this case, $S_{v,w}:=A^*\cap B^*$ is called the \emph{leftmost minimal $(v,w)$-separator}.
It is known that $S_{v,w}$ can be computed in polynomial time using the Ford-Fulkerson algorithm.

\paragraph{Improved Graphs}
The \emph{$k$-improvement} of a graph $G$ is the graph $G^k$ obtained from $G$
by connecting every pair of non-adjacent
vertices $v,w$ for which there are more than $k$ pairwise internally vertex disjoint
paths connecting $v$ and $w$\footnote{In \cite{DBLP:journals/siamcomp/LokshtanovPPS17}, a slightly different notion of improvement is used
where an edge is also added when there are exactly $k$ pairwise internally vertex disjoint paths connecting non-adjacent vertices.}.
The \emph{separability} of a graph $G$, denoted by $\sep G$, is the smallest integer
$k$ such that $G^k=G$.
Equivalently, $\sep G$ equals the maximum size $|S_{v,w}|$ of a leftmost minimal separator among all non-adjacent vertices $v,w\in V(G)$.

The next lemma says that one can $k$-improve a graph for some $k\geq \tw G$
and reduce the separability of that graph while preserving the treewidth of that graph. 

\begin{lem}[\cite{DBLP:journals/siamcomp/LokshtanovPPS17}]\label{lem:improve}
Let $G$ be a graph and $k\in\NN$.
\begin{enumerate}
  \item There is a polynomial-time algorithm that for a given $(G,k)$ computes $G^k$.
  \item It holds that $(G^k)^k=G^k$ and therefore $\sep G^k\leq k$.
  \item Every tree decomposition of $G$ of width at most $k$ is also a tree decomposition of $G^k$
and therefore $\tw G\leq k$ implies $\tw G^k=\tw G$.
\end{enumerate}
\end{lem}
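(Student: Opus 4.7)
The plan is to prove the three items in turn, relying only on Menger's theorem and the defining properties of tree decompositions.

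For (1), I would test non-adjacent pairs one by one: a pair $(v,w)\notin E(G)$ needs to be joined in $G^k$ iff the local vertex-connectivity between $v$ and $w$ exceeds $k$. This value can be computed by a standard max-flow on the vertex-split auxiliary graph, terminating after the $(k{+}1)$-st augmenting path. Iterating over the $O(|V|^2)$ non-adjacent pairs yields a polynomial-time algorithm that outputs $G^k$.

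For (2), the key step is to show that if $v,w$ are non-adjacent in $G^k$, then they already have at most $k$ internally vertex-disjoint paths in $G^k$. Since $G\subseteq G^k$ (with identical vertex set), the pair $(v,w)$ is also non-adjacent in $G$, hence by the definition of $G^k$ carries at most $k$ internally vertex-disjoint paths in $G$; Menger thus gives a $(v,w)$-separation $(A,B)$ in $G$ with $S:=A\cap B$ of size $|S|\le k$. I claim $S$ still separates $v$ from $w$ in $G^k$. Otherwise some edge $\{a,b\}\in E(G^k)\setminus E(G)$ has $a\in A\setminus S$ and $b\in B\setminus S$; but this edge was introduced precisely because $G$ carries more than $k$ internally vertex-disjoint $(a,b)$-paths, so every $(a,b)$-separator in $G$ has size greater than $k$, contradicting that $S$ separates $a$ from $b$ in $G$ with $|S|\le k$. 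Hence $S$ remains a $(v,w)$-separator in $G^k$, the connectivity of $(v,w)$ stays bounded by $k$, and $k$-improving $G^k$ adds no further edge, so $(G^k)^k=G^k$ and in particular $\sep(G^k)\le k$.

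For (3), let $(T,\beta)$ be a tree decomposition of $G$ of width at most $k$. I would verify that $(T,\beta)$ is still a tree decomposition of $G^k$; since (T1) refers only to vertices and $V(G^k)=V(G)$, only (T2) must be checked on the new edges. Suppose some $\{v,w\}\in E(G^k)\setminus E(G)$ is covered by no bag. The standard argument produces an edge $\{s,t\}\in E(T)$ whose adhesion $\beta(s)\cap\beta(t)$ is a $(v,w)$-separator in $G$ and contains neither $v$ nor $w$; since $v\in\beta(s)\setminus\beta(t)$, this adhesion has size at most $|\beta(s)|-1\le k$. But by construction of $G^k$ every $(v,w)$-separator in $G$ has size strictly greater than $k$, a contradiction. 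Therefore $(T,\beta)$ is a tree decomposition of $G^k$ of the same width, and combined with $\tw(G)\le\tw(G^k)$ (from $G\subseteq G^k$) we obtain $\tw(G^k)=\tw(G)$ whenever $\tw(G)\le k$.

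The main obstacle is the separator-stability claim in (2); it is the only place where the $k$-improvement operation interacts non-trivially with itself. Once that is established, (3) reduces to the clean observation that an adhesion missing a bag vertex drops in size by at least one, and (1) is routine max-flow bookkeeping.
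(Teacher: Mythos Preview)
The paper does not actually prove this lemma; it is stated with a citation to \cite{DBLP:journals/siamcomp/LokshtanovPPS17} and used as a black box, so there is no in-paper proof to compare against. Your argument is correct and is essentially the standard one: the separator-stability step in (2) (a size-$\le k$ separator in $G$ cannot be crossed by an improvement edge, since such an edge certifies connectivity $>k$) is exactly the heart of the matter, and your treatment of (3) via the adhesion of a tree edge incident to the $v$-subtree is the usual way to bound the separator size by $k$.
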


The next theorem says that one can decompose a graph into clique-separator-free graphs.
By possibly introducing new bags, we can assume that the adhesion sets inside each bag are either pairwise equal
or pairwise distinct. This ensures the third property in the following theorem.

\begin{theo}[\cite{DBLP:journals/dm/Leimer93},\cite{DBLP:conf/stacs/ElberfeldS16}]
\label{theo:cliqueDecomposition}
Let $G$ be a graph.
There is an algorithm that, given a graph $G$, computes a
tree decomposition $(T,\beta)$ with the following properties.

\begin{enumerate}
\item For every $t\in V(T)$ the graph $G[\beta(t)]$
is clique-separator free,
\item each adhesion set of $(T,\beta)$ is a clique in $G$, and
\item for each bag $\beta(t)$ either the adhesion sets are all equal and $|\beta(t)|\leq (\tw G)+1$ or
the adhesion sets are pairwise distinct.
\end{enumerate}
The algorithm runs in polynomial time and the output of the algorithm is isomorphism-invariant.
\end{theo}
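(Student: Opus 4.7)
The plan is to build $(T,\beta)$ in two canonical phases: first a clique-minimal separator decomposition in the style of Leimer, producing a tree of atoms satisfying conditions 1 and 2, and then a local normalization of the adhesions enforcing condition 3.

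For Phase~1 I would first compute the set $\CS$ of all minimal clique separators of $G$. For each non-adjacent pair $v,w$, the leftmost minimal $(v,w)$-separator $S_{v,w}$ is polynomial-time computable via Ford--Fulkerson; collecting those $S_{v,w}$ that happen to be cliques yields $\CS$. Since $\CS$ is definable purely from $G$, it is $\aut(G)$-invariant. I would then obtain the atoms of $G$ by iteratively splitting along any $S\in\CS$ that still separates the current piece, re-adding $S$ to each side; equivalently, the atoms are the maximal induced subgraphs of $G$ containing no clique separator, and Leimer's theorem guarantees that this multiset is uniquely determined by $G$. Form $T$ by taking the atoms as nodes and joining two atoms that share a common separator from $\CS$, and set $\beta(t)$ to the vertex set of the atom at $t$. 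Properties (T1), (T2), 1, and 2 then hold by construction.

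For Phase~2 I would normalize adhesions node by node. At each node $t$, I partition its adhesions by equality; for every equivalence class $\{A_{i_1},\ldots,A_{i_r}\}$ of size $r\geq 2$, I detach the $r$ corresponding neighbors $u_{i_1},\ldots,u_{i_r}$ from $t$ and reattach them through a fresh intermediate node $v$ with $\beta(v):=A_{i_1}$. After this surgery each original node has pairwise distinct adhesions, while each new node $v$ has all adhesions equal to $A_{i_1}$, of size at most $\tw(G)+1$ because any clique of $G$ has size at most $\tw(G)+1$. A short verification shows that (T1) is preserved (every vertex of $A_{i_1}$ remains supported by a connected subtree through $v$, while vertices outside $A_{i_1}$ have their supports unchanged), so the result is still a valid tree decomposition and conditions 1--3 now hold simultaneously.

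The hard part will be making Phase~2 strictly $\aut(G)$-equivariant, since at a node $t$ an equivalence class $\{u_{i_1},\ldots,u_{i_r}\}$ may be permuted by automorphisms of $G$ and the newly introduced star at $v$ must not encode any ordering among the $u_{i_j}$. I would handle this by treating the attached subtree rooted at $v$ as an unordered multiset of children, so that any $\phi\in\aut(G)$ permuting the $u_{i_j}$ lifts canonically to an automorphism of $(T,\beta)$. The running time is polynomial throughout: $\CO(|V(G)|^2)$ minimal clique separators obtained from as many max-flow computations, polynomially many atoms produced by iterated splitting, and linear-time local surgery in Phase~2.
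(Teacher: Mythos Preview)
The paper does not prove this theorem; it imports it from Leimer and Elberfeld--Schweitzer. So there is no in-paper proof to compare against, and your task is really to recover those results. Your two-phase outline is the right shape, but there are genuine gaps in both phases.

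In Phase~1, ``form $T$ by taking the atoms as nodes and joining two atoms that share a common separator from $\CS$'' does not in general produce a tree. If three or more atoms all contain the same clique $S$, your rule creates a clique among them. Leimer's theorem does say the multiset of atoms is canonical, but it does \emph{not} say the atom-intersection graph is a tree; one has to choose a spanning tree, and different choices give different decompositions. Making that choice canonically is precisely the contribution of Elberfeld--Schweitzer, and it is not addressed by your sketch. Relatedly, your proposed enumeration of minimal clique separators as $\{S_{v,w}\mid S_{v,w}\text{ is a clique}\}$ is not obviously complete: a minimal clique separator $S$ between some pair $(v,w)$ need not coincide with the \emph{leftmost} minimal $(v,w)$-separator, which may fail to be a clique even when $S$ is.

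In Phase~2, the local surgery is basically fine for conditions~1--3, but your treatment of isomorphism invariance is too light. Declaring that the children of the new node $v$ form ``an unordered multiset'' does not by itself make the global object $(T,\beta)$ canonical: you still need the underlying tree from Phase~1 to be canonical, and you need the order in which you process nodes and equivalence classes not to matter. Since Phase~1 as written may output non-isomorphic trees for isomorphic inputs, Phase~2 cannot repair this. The substantive work you are missing is exactly the canonical tree construction from \cite{DBLP:conf/stacs/ElberfeldS16}; once that is in place, your Phase~2 normalization (which the paper also alludes to just before the theorem statement) goes through.
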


We make use of the bounded-degree graph isomorphism algorithm given by Grohe, Neuen and Schweitzer.
In fact, they proved a stronger statement and designed a string isomorphism algorithm
for groups of bounded composition-width.
This implies the following result.

\begin{theo}[\cite{DBLP:conf/focs/GroheNS18}]\label{theo:boundedDegree}
Let $G_1,G_2$ be two graphs and let $\Delta\phi\leq\iso(V(G_1);V(G_2))$.
There is an algorithm that, given a triple $(G_1,G_2,\Delta\phi)$,
computes the set of isomorphisms $\iso(G_1;G_2)\cap\Delta\phi$ in time $|V(G_1)|^{\polylog(\cw\Delta)}$.
\end{theo}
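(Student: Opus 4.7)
The plan is to adapt Babai's quasipolynomial-time string isomorphism algorithm so that the crucial structural dependence is on $d := \cw(\Delta)$ rather than on $|V(G_1)|$. First I would reduce the coset graph isomorphism problem to string isomorphism under the group $\Delta$: encode the edge indicator of $G_1$ as a $\{0,1\}$-string on the index set $V(G_1)\times V(G_1)$, do the same for $G_2$, and seek $\sigma \in \Delta\phi$ mapping one string to the other. This reduction preserves $\Delta$, and hence the composition-width bound, and fits into Luks's framework.

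Next I would run Luks's orbit-by-orbit and block-by-block recursion. On intransitive actions, split into orbits and recurse; on transitive actions, pass to a minimal block system and recurse separately on the kernel and on the primitive quotient. Both operations preserve composition-width, so every subproblem again comes with a group of composition-width at most $d$.

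The heart of the argument is the primitive case. Using the O'Nan--Scott classification together with Cameron's theorem (the same ingredients used in \cref{theo:cameron} and \cref{lem:gammaD}), but now applied under the hypothesis $\cw(\Delta)\leq d$, one can show that a primitive $\Delta\leq\sym(V)$ with $\cw(\Delta)\leq d$ is either of order $|V|^{O(\log d)}$, in which case it can be enumerated, or it is (a subgroup of) a Cameron group $P \wr \Psi$ where $P$ is a Johnson group with parameter at most $d$ and $\Psi$ has composition-width at most $d$. On this structure one applies Babai's method of local certificates: choose canonical test sets $T$ of size $O(\log n)$ inside the Johnson component, analyze the unaffected and affected points via \cref{theo:unaffected}, and aggregate local certificates of fullness just as in Babai's original proof, using the Unaffected Stabilizer Theorem and Affected Orbits Lemma in the setting of the Johnson parameter $\leq d$.

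The main obstacle will be carefully propagating the composition-width bound through every reduction step so that the Johnson parameter appearing in the aggregation stage is always $O(d)$ and the branching factor at each giant-representation reduction is bounded in terms of $d$, not $n$. Once this bookkeeping is set up, the Babai-style recurrence yields depth $\polylog(d)$ and per-level cost $n^{O(1)}$, giving the stated $|V(G_1)|^{\polylog(d)}$ running time. Because the statement is already known and cited from \cite{DBLP:conf/focs/GroheNS18}, in the paper itself the natural move is simply to invoke that reference as a black box rather than re-derive the bound.
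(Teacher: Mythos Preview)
Your final sentence is exactly right: the paper does not prove this theorem at all. It is stated with the citation \cite{DBLP:conf/focs/GroheNS18} and used purely as a black box; the only accompanying text is the remark that Grohe, Neuen and Schweitzer actually prove a string isomorphism algorithm for groups of bounded composition-width, from which the graph version follows. Your sketch of how that proof goes (reduction to string isomorphism, Luks recursion preserving $\cw$, O'Nan--Scott/Cameron analysis bounding the Johnson parameter by $d$, Babai's local certificates) is a faithful outline of the Grohe--Neuen--Schweitzer argument, but none of it appears in the present paper.
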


We give an isomorphism algorithm
for the clique-separator-free graphs.
The algorithm uses the ideas from \cite{DBLP:conf/icalp/GroheNSW18}.

\begin{lem}\label{lem:isoBasic}
Let $G_1,G_2$ be two clique-separator-free graphs.
There is an algorithm that, given a pair $(G_1,G_2)$,
computes the set of isomorphisms $\iso(G_1;G_2)$ in time $|V(G_1)|^{\polylog(\tw G_1+\sep G_1)}$.
Moreover, there is a vertex $v_1\in V(G_1)$ such that $\cw\aut(G_1)_{(v_1)}\leq\max(\tw G_1,\sep G_1)$.
\end{lem}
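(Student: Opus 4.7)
The plan is to individualize a single vertex of $G_1$ and exploit the rigidity that clique-separator-freeness imposes. The structural ``moreover'' clause and the isomorphism algorithm will both fall out of the same analysis.

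First I would prove the following structural claim: for any $v_1\in V(G_1)$, individualizing $v_1$ and then running colour refinement (for instance 2-dimensional Weisfeiler-Leman) on $G_1$ produces a vertex colouring whose colour classes each have size at most $k:=\max(\tw G_1,\sep G_1)$. The argument starts from a width-$\tw G_1$ tree decomposition $(T,\beta)$ of $G_1$. Walking along $T$ from the bag containing $v_1$, one shows inductively that in every bag all vertices are distinguished up to symmetric-group actions on sets of size at most $k$. The clique-separator-free hypothesis enters at each step: every adhesion is a separator of $G_1$ and, by assumption, not a clique, so it contains a non-edge that colour refinement can exploit to propagate the distinguishing information from one bag into the next; meanwhile $\sep G_1$ bounds the size of such an adhesion and $\tw G_1 +1$ bounds the size of a bag. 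The ``moreover'' clause is then immediate: $\aut(G_1)_{(v_1)}$ is contained in the product $\Delta=\prod_C\sym(C)$ of symmetric groups on the refined colour classes $C$, and $\cw(\Delta)\leq k$.

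The isomorphism algorithm is then straightforward. Fix any vertex $v_1\in V(G_1)$. For each $v_2\in V(G_2)$, individualize $v_1$ in $G_1$ and $v_2$ in $G_2$, run colour refinement on both, and reject the choice of $v_2$ if the two colourings are incompatible. Otherwise pick any bijection $\phi_{v_2}\colon V(G_1)\to V(G_2)$ respecting the colours (computable in polynomial time) and let $\Delta=\prod_C\sym(C)$ as above. Then $\iso(G_1;G_2)\cap\Delta\phi_{v_2}$ is exactly the set of isomorphisms $G_1\to G_2$ sending $v_1$ to $v_2$; since $\cw(\Delta)\leq k$, \cref{theo:boundedDegree} computes this intersection in time $|V(G_1)|^{\polylog k}$. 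Taking the union over the at most $|V(G_2)|$ choices of $v_2$ yields $\iso(G_1;G_2)$ in total time $|V(G_1)|^{\polylog(\tw G_1+\sep G_1)}$.

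The main obstacle is the structural claim. Translating the combinatorial clique-separator-free property into a bound on colour-class sizes after individualization requires a careful propagation argument along the tree decomposition. Clique-separator-freeness ensures that every small separator contains a non-edge visible to 2-WL, allowing refinement to cross the adhesion; the parameter $\sep G_1$ controls the size of the bottleneck across which refinement must propagate, while $\tw G_1$ controls the residual freedom inside each bag. Arranging the two bounds so that they combine into a single bound $\max(\tw G_1,\sep G_1)$ on colour-class size, rather than a product or a sum, is the delicate point; here one has to be careful to treat separately bags that share a common adhesion with several neighbours (and are therefore small) and bags whose adhesions are pairwise distinct (which are rigid once the adhesions are distinguished).
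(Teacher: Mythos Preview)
Your approach differs substantially from the paper's, and its core structural claim---that individualizing any single vertex and running 2-dimensional Weisfeiler--Leman yields colour classes of size at most $k=\max(\tw G_1,\sep G_1)$---is neither proved nor evidently true. The sketch already contains a concrete confusion: adhesion sets of a tree decomposition are bounded by the bag size $\tw G_1+1$, not by $\sep G_1$; the parameter $\sep G_1$ is the maximum size of a \emph{leftmost minimum} $(v,w)$-separator over non-adjacent pairs $v,w$, and an arbitrary adhesion need not be such a separator. Beyond this, the mechanism by which a single non-edge in an adhesion would let 2-WL shatter the next bag into pieces of size $\leq k$ is never spelled out, and you yourself flag the combination of the two bounds as ``the delicate point'' without resolving it. The claim is moreover stated for \emph{every} $v_1$, which is stronger than the lemma requires and correspondingly harder to believe.

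The paper proceeds quite differently and never bounds colour-class sizes. It picks a minimum-degree vertex $v_1$ (so $|N(v_1)|\leq\tw G_1$) and iteratively grows a set $S\subseteq V(G_1)$, starting from $S=N(v_1)$ and at each step adjoining every leftmost minimal separator $S_{v,w}$ for non-adjacent $v,w\in S$; clique-separator-freeness guarantees that each component $Z$ of $G_1-S$ has a non-clique neighbourhood $N(Z)\subseteq S$, so some $S_{v,w}$ meets $Z$ and $S$ strictly grows until it equals $V(G_1)$. Throughout, one maintains a coset $\Delta\phi$ containing $\iso(G_1,S;G_2,S')[S]$ with $\cw(\Delta)\leq k$: the extension step forms a wreath product with the symmetric groups $\sym(S_{v,w})$ (each of degree at most $\sep G_1$), then identifies the multiple copies of each vertex by solving a bounded-$\cw$ string-isomorphism instance via \cref{theo:boundedDegree}. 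The resulting bound $\cw(\Delta)\leq k$ is a composition-series statement coming from the wreath structure; it does \emph{not} say that the orbits of $\Delta$, or of $\aut(G_1)_{(v_1)}$, on $V(G_1)$ are small, so the paper's argument would not yield your 2-WL claim as a by-product even if you tried to extract it.
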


\begin{proof}
Let $\mindeg G_i:=\min_{v\in V(G_i)}|N_{G_i}(v)|$ be the minimal degree among all vertices.
It is well-known that $\mindeg G_i\leq \tw G_i$ for all graphs $G_i$.
Let $\CS_i:=\{N_{G_i}(v)\subseteq V(G_1)\mid v\in V(G_i),|N_{G_i}(v)|=\mindeg G_i\}$ be the non-empty set of minimal size neighborhoods for both $i=1,2$.
We assume $|\CS_1|=|\CS_2|$, otherwise we reject isomorphism.
Since $G_i$ does not have clique separators, it follows that each $S_i\in\CS_i$ is not a clique for both $i=1,2$.
Since $\iso(G_1;G_2)=\bigcup_{S_1\in\CS_1,S_2\in\CS_2}\iso(G_1,S_1;G_2,S_2)$, it suffices to compute the isomorphisms
from $G_1$ to $G_2$ that map $S_1$ to $S_2$ for all possible choices of $S_1\in\CS_1,S_2\in\CS_2$.

We give an algorithm that gets as input $(G_1,S_1,G_2,S_2,\Delta\phi)$
where $S_i\subseteq V(G_i)$ is not a clique for both $i=1,2$ and $\Delta\phi\geq\iso(G_1,S_1;G_2,S_2)[S_1]$ with $\cw\Delta\leq \max(\tw G_1,\sep G_1)$.
The algorithm outputs $\iso(G_1,S_1;G_2,S_2)$.
Initially, we call the algorithm for some $S_1\in\CS_1,S_2\in\CS_2$
and $\Delta\phi:=\sym(S_1)\phi$ for some bijection $\phi:S_1\to S_2$.

\algorithmDAN{\isoBasic(G_1,S_1,G_2,S_2,\Delta\phi)}
\begin{cs}
\case{$S_1\subsetneq V(G_1)$}~\\
Let $S_i':=S_i\cup\bigcup_{v,w\in S_i,\{v,w\}\nin E(G_i)} S_{v,w}$ for both $i=1,2$. We claim that $S_i'\supsetneq S_i$ for both $i=1,2$.
Let $Z_i\subseteq V(G_i)$ be the vertex set of a connected component of $G_i-S_i$.
Since $G_i$ does not have clique separators, it follows that $N_{G_i}(Z_i)$ is not a clique.
Therefore, there are $v,w\in N_{G_i}(Z_i)\subseteq S_i$ with $\{v,w\}\nin E(G_i)$.
Moreover, there is a path from $v$ to $w$ with all internally vertices lying in $Z_i$.
Therefore, $S_{v,w}\cap Z_i\neq\emptyset$
and thus $S_i'\supseteq S_i\cupdot(S_{v,w}\cap Z_i)\supsetneq S_i$.
Observe that $|S_{v,w}|\leq \sep G_i$ for all $v,w\in S_i$ and both $i=1,2$.

First, we ensure for all $\phi\in\Delta\phi$ that $S_{v,w}$ and $S_{\phi(v,w)}$ have the same cardinality.
To do so, we define an edge relation $X_i^k:=\{(v,w)\mid |S_{v,w}|=k\}$ for each $k\leq n:=|V(G_i)|$ and both $i=1,2$.
We compute $\Delta\phi:=\iso(X_1^1,\ldots,X_1^n;X_2^1,\ldots,X_2^n)\cap\Delta\phi$ using \cref{theo:boundedDegree}.

Second, we define a wreath product with  $\sym(S_{v,w})$ and $\Delta$.
More precisely, we define $\hat S_i:=S_i\cupdot\bigcupdot_{v,w\in S_i,\{v,w\}\nin E(G_i)} \hat S_{v,w}$
where $\hat S_{v,w}:=S_{v,w}\times\{(v,w)\}$ is a disjoint copy of $S_{v,w}$ for both $i=1,2$.
We define $\hat\Delta\hat\phi\leq\iso(\hat S_1;\hat S_2)$ as
$\{\hat\phi:\hat S_1\to\hat S_2\mid\hat\phi[S_1]\in\Delta\phi,\forall v,w\in S_i:
\hat\phi(\hat S_{v,w})=\hat S_{\hat\phi(v),\hat\phi(w)}\}$.
Observe that $\cw\hat\Delta\leq\max(\max_{v,w\in S_1}|S_{v,w}|,\cw\Delta)\leq\max(\tw G_1,\sep G_1)$.

Third, we define the group $\Delta'\phi'\geq\iso(G_1,S_1';G_2,S_2')[S_1']$
by identifying the corresponding vertices.
More precisely, we define an edge relation $X_i:=\{((s,v,w),(s,v',w'))\in\hat S_{v,w}\times\hat S_{v',w'}\}\cup\{((s,v,w),s)\in \hat S_{v,w}\times S_i\}$ for both $i=1,2$.
Observe that $S_i'$ can be identified with the equivalence classes of $X_i$ for both $i=1,2$.
Now, compute $\Delta'\phi':=(\iso(X_1;X_2)\cap\hat\Delta\hat\phi)[S_1']$
using \cref{theo:boundedDegree}.

Finally, we compute and return $\isoBasic(G_1,S_1',G_2,S_2',\Delta'\phi')$ recursively.

\case{$S_1=V(G_1)$}~\\
Compute and return $\iso(G_1;G_2)\cap\Delta\phi$ using \cref{theo:boundedDegree}.

\end{cs}
\begin{runtime}
The number of recursive calls is bounded by $|V(G_1)|$.
In each call, we use the algorithm from \cref{theo:boundedDegree} which runs in time $|V(G_1)|^{\polylog(\tw G_1+\sep G_1)}$.
\end{runtime}
\end{proof}

With the above algorithm it is possible to compute the isomorphisms between the clique-separator-free parts
of the decomposition from \cref{theo:cliqueDecomposition}.
The adhesion sets (which are the intersections between two clique-separator-free graphs)
are cliques in the graph.
The next lemma is used in order to respect the adhesion sets of the clique-separator-free parts.
Also this lemma uses an idea similar to \cite{DBLP:conf/icalp/GroheNSW18}, Lemma 14 arXiv version.

\begin{lem}\label{lem:isoBasicClique}
Let $G_1,G_2$ be two clique-separator-free graphs
and let $H_1\subseteq 2^{V(G_1)},H_2\subseteq 2^{V(G_2)}$
be sets that contain cliques in the graphs $G_1,G_2$, respectively.
There is an algorithm that, given a tuple $(G_1,H_1,G_2,H_2)$,
computes the set of isomorphisms $\iso(G_1,H_1;G_2,H_2)$ in time $|V(G_1)|^{\polylog(\tw G_1+\sep G_1)}$.
\end{lem}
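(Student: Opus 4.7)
\emph{Plan.} The plan is to reduce to \cref{lem:isoBasic} and \cref{theo:boundedDegree}. First, invoke \cref{lem:isoBasic} on $(G_1,G_2)$ to compute $\Delta\phi := \iso(G_1;G_2)$ together with the distinguished vertex $v_1 \in V(G_1)$ satisfying $\cw(\aut(G_1)_{(v_1)}) \leq d := \max(\tw G_1, \sep G_1)$; the running time is $|V(G_1)|^{\polylog(\tw G_1+\sep G_1)}$. If $\Delta\phi = \emptyset$ or $|H_1| \neq |H_2|$, return $\emptyset$.

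Otherwise, enumerate the at most $|V(G_1)|$ possible images $v_2 \in v_1^{\Delta\phi}$ and descend to the coset $\Delta_{(v_1)}\phi_{v_1,v_2}$, whose underlying group now has composition-width at most $d$. Within this coset, I need the sub-coset $\{\pi : \pi(H_1) = H_2\}$. Encode the hypergraph constraint by forming extended graphs $G_i^+$ from $G_i$ by attaching, for each clique $C \in H_i$, a distinctively colored vertex $x_C$ adjacent exactly to the vertices of $C$, with the coloring realized via an auxiliary edge relation that \cref{theo:boundedDegree} can consume. A color-preserving isomorphism $G_1^+ \to G_2^+$ restricts exactly to an element of $\iso(G_1, H_1; G_2, H_2)$, and conversely lifts uniquely whenever the induced map on cliques is defined. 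Lift $\Delta_{(v_1)}\phi_{v_1,v_2}$ to a coset of bijections $V(G_1^+) \to V(G_2^+)$ by pairing $x_C$ with $x_{\pi(C)}$; this lift has composition-width at most $d$ since the induced action on added vertices is a quotient of the action on $V(G_1)$. Running \cref{theo:boundedDegree} on $(G_1^+, G_2^+, \text{lifted coset})$ and projecting to $V(G_1) \to V(G_2)$ yields the desired isomorphisms in time $|V(G_1^+)|^{\polylog d}$, and taking the union over $v_2$ completes the computation.

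The main obstacle is that the lift $x_C \mapsto x_{\pi(C)}$ is only well-defined on $\pi$ already satisfying $\pi(H_1) \subseteq H_2$, which is essentially the property being searched. This is resolved by iteratively individualizing one clique $C \in H_1$ at a time: branch over its at most $|H_2|$ candidate images $D \in H_2$ of matching size, and for each branch apply \cref{theo:boundedDegree} to an edge relation that marks $C$ in $G_1$ against $D$ in $G_2$, restricting the coset to the stabilizer of that pair. Because every clique has size at most $\tw G_1 + 1 \leq d+1$, each individualization contributes only a $(d+1)!$-factor per round; a careful accounting (paralleling the subgroup reductions of \cref{lem:toGroup}) shows that the accumulated multiplicative overhead is absorbed into the target bound $|V(G_1)|^{\polylog(\tw G_1+\sep G_1)}$.
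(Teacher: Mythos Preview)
Your proposal has a genuine gap in the running-time analysis of the iterative individualization step. You claim that ``each individualization contributes only a $(d+1)!$-factor per round,'' but this is not what your description actually does: you branch over the at most $|H_2|$ candidate images $D\in H_2$ for each clique $C\in H_1$, and there are $|H_1|$ such cliques. Neither $|H_1|$ nor $|H_2|$ is bounded in terms of $d=\max(\tw G_1,\sep G_1)$; in the intended application they are the adhesion sets of a tree decomposition and can be as large as $|V(G_1)|$. The accumulated branching is therefore of order $|H_2|^{|H_1|}$, which is not absorbed by $|V(G_1)|^{\polylog d}$. The $(d+1)!$ bound you invoke would at best control the cost of fixing a pointwise bijection $C\to D$ \emph{after} $D$ has been chosen, not the cost of choosing $D$; and the appeal to \cref{lem:toGroup} does not help, since that lemma trades index for orbit shrinkage on a fixed domain, whereas here fixing one clique's image need not constrain the images of disjoint cliques at all.

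The idea you are missing, and which the paper supplies, is a \emph{localization} of the clique constraint. By iteratively peeling off minimum-degree vertices one obtains an isomorphism-invariant map $\alpha:V(G)\to K$ with $|\alpha(v)|\le\tw G+1$ such that \emph{every} clique of $G$ lies inside some $\alpha(v)$. Thus each clique of $H_i$ is captured by one of only $|V(G_i)|$ buckets, each of size $\le d+1$. The paper then computes, for every pair of vertices $(v_1,v_2)$, the isomorphisms between the small local hypergraphs $H_{v_1}=\{S\in H_1: S\subseteq\alpha_1(v_1)\}$ and $H_{v_2}$ via \cref{theo:canHyper} (cheap, since these live on $\le d+1$ points), assembles these local coset data into a wreath product over $\Delta\phi$, and finally identifies the disjoint copies via an edge relation processed by \cref{theo:boundedDegree}. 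This avoids ever branching over $H_1$ or $H_2$ globally. Your first instinct---attaching a vertex $x_C$ per clique---fails for exactly the reason you noted; the paper's clique-cover construction is precisely the device that makes a version of that idea work.
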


\begin{proof}
In the first step, we define a cover capturing all cliques.
More precisely, we claim that there is a function $\alpha:V(G)\to K$
where $K\subseteq 2^{V(G)}$
such that 
\begin{enumerate}
  \item $|S|\leq(\tw G)+1$ for all $S\in K$, and
  \item if $C\subseteq V(G)$ is a clique, then there is a $S\in K$ with $C\subseteq S$.
\end{enumerate}
Moreover, this function is polynomial-time computable and is defined in an isomorphism-invariant way.
This can be shown by induction on $|V(G)|$.
If $|V(G)|\leq 1$ the statement is trivially true.
So assume $|V(G)|\geq 2$.
Let $\mindeg G:=\min_{v\in V(G)}|N_G(v)|$ be the minimal degree among all vertices.
It is well-known that $\mindeg G\leq\tw G$.
Let $U:=\arg\min_{v\in V(G)}|N_G(v)|\subseteq V(G)$ be the non-empty set of vertices of minimal degree.
Let $G':=G[V(G)\setminus U]$ and let $\alpha':V(G')\to K'$ be the function obtained inductively.
Now, define $\alpha:V(G)\to K$ as $\alpha(v):=\alpha'(v)$ if $v\in V(G')$
and as $\alpha(v):=N_G(v)\cup\{v\}$ if $v\in U$.
It easily follows that $|\alpha(v)|\leq \tw G_1+1$.
Moreover, if $C\subseteq V(G)$ is a clique, then
either $C\subseteq V(G')$ or there is a vertex $v\in C\cap U$.
In the latter case, $C\subseteq N_G(v)\cup\{v\}=\alpha(v)$.
This proves the claim.
We will use this claim in order to compute the isomorphisms between the instances.

First of all, we compute the isomorphism $\Delta\phi:=\iso(G_1;G_2)$ between the two graphs $G_1$ and $G_2$
using \cref{lem:isoBasic}.

It remains to respect the hyperedges.
We compute $\alpha_i:V(G_i)\to K_i$ for both $i=1,2$.
We define hypergraphs $H_{v_i}:=\{S\in H_i\mid S\subseteq\alpha_i(v_i)\}\subseteq H_i$ for $v_i\in V(G_i)$ and both $i=1,2$.
For each pair of vertices $(v_1,v_2)\in V(G_1)\times V(G_2)$,
we compute the isomorphisms $\Delta_{v_1}\phi_{v_1,v_2}:=\iso(H_{v_1};H_{v_2})$ (seen as hypergraphs on $\alpha_1(v_1),\alpha_2(v_2)$, respectively)
using \cref{theo:canHyper}.
Since $|\alpha_1(v_1)|\leq(\tw G_1)+1$, the algorithm runs in the desired time bound.

First, we ensure for all $\phi\in\Delta\phi$ that the hypergraphs $H_{v_1}$ and $H_{\phi(v_1)}$ are isomorphic.
To do so, we define a vertex-colored graph $X_i$ that colors a vertex $v_i\in V(X_i)$ according to the isomorphism type
of $H_{v_i}$ for both $i=1,2$.
We compute $\Delta\phi:=\iso(X_1;X_2)\cap\Delta\phi$ using \cref{theo:boundedDegree}.
To analyze the running time of the algorithm from \cref{theo:boundedDegree},
we observe that $\cw\Delta$ is not necessarily bounded by $\max(\tw G_1,\sep G_1)$.
However, there is a point $v_1\in V(G_1)$ such that $\cw\Delta_{(v_1)}\leq\max(\tw G_1,\sep G_1)$.
By applying \cref{theo:boundedDegree} to each coset of the subgroup $\Delta_{(v_1)}$, we still achieve a running time
of $|V(G_1)|^{\polylog(\tw G_1+\sep G_1)}$.

Second, we define a wreath product $\Delta_{v_1}\phi_{v_1,v_2}$ with $\Delta\phi$.
More precisely, we define $U_i:=V(G_i)\cupdot\bigcupdot_{v\in V(G_i)} S_v$ where $S_v:=\alpha_i(v)\times\{v\}$ is
a disjoint copy of $\alpha_i(v)$ for both $i=1,2$.
We define $\hat\Delta\hat\phi\leq\iso(U_1;U_2)$ as $\{\hat\phi:U_1\to U_2\mid\hat\phi[V(G_1)]\in\Delta\phi,
\forall v_1\in V(G_1)\exists\phi_{v_1}\in\Delta_{v_1}\phi_{v_1,\hat\phi(v_1)}:\hat\phi(u,v_1)=(\phi_{v_1}(u),\hat\phi(v_1))\}$.
Again, there is a point $v_1\in V(G_1)$ such that $\cw\hat\Delta_{(v_1)}\leq \max((\tw G_1)+1,\sep G_1)$, which
can be seen as follows.
Consider the homomorphism $h$ that restricts $\hat\Delta_{(v_1)}$ to the set $V(G_1)$.
By construction, the image of $h$ is $\Delta_{(v_1)}$ which composition-width is bounded by $\max(\tw G_1,\sep G_1)$.
Moreover, the kernel of $h$ has orbits bounded by $|S_v|=|\alpha(v)|\leq(\tw G_1)+1$ for some $v\in V(G_1)$, which gives us the bound.

Third, we define the group $\Delta\phi:=\iso(G_1,H_1;G_2,H_2)$
by identifying the corresponding vertices.
More precisely, we define an edge relation $X_i:=\{((u,v),(u,v'))\in S_v\times S_{v'}\}\cup\{((u,v),u)\in S_v\times V(G_i)\}$ for both $i=1,2$.
Observe that $V_i$ can be identified with the equivalence classes of $X_i$ for both $i=1,2$.
Then, we compute and return $\Delta\phi:=(\iso(X_1;X_2)\cap\hat\Delta\hat\phi)[V(G_1)]$
using \cref{theo:boundedDegree}.
By applying \cref{theo:boundedDegree} to each coset of $\Delta_{(v)}\leq\Delta$, we still achieve a running time of $|V(G_1)|^{\polylog(\tw G_1+\sep G_1)}$.
\end{proof}

We recall the coset-labeled hypergraphs from
\cref{lem:canSetHyper}.
A coset-labeled hypergraph is a tuple $(V,H,J,\alpha)$ where $H$ is a set of hyperedges $S_i\subseteq V$
and $J$ is a set of labeling cosets $\Delta_i\phi_i\leq\lab(V)$ and $\alpha:H\to J$ is a function with $\alpha(S_i)=\Delta_i\phi_i$.

\begin{lem}[\cite{DBLP:conf/icalp/GroheNSW18}, Lemma 9, see \cite{neuenDiss}]\label{lem:cosetHyper}
Let $\CH_1=(V_1,H_1,J_1,\alpha_1),\CH_1=(V_2,H_2,J_2,\alpha_2)$ be two coset-labeled hypergraphs and let $\Delta\phi\leq\iso(H_1;H_2)$
be a coset that maps the hyperedges in $H_1$ to hyperedges in $H_2$.
There is an algorithm that, given a triple $(\CH_1,\CH_2,\Delta\phi)$,
computes the set of isomorphisms $\iso(\CH_1;\CH_2)\cap\Delta\phi$ in time $(|V_1|+|H_1|)^{\polylog(\cw\Delta)}$.
\end{lem}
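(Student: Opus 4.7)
\medskip

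\noindent\textbf{Proof Proposal.}
The plan is to reduce the coset-labeled hypergraph isomorphism problem to an instance that can be fed into the bounded composition-width algorithm \cref{theo:boundedDegree}, since the target running time has $\cw\Delta$ in the $\polylog$ exponent. Since $\Delta\phi \leq \iso(H_1;H_2)$ already permutes the hyperedge set, the action of $\Delta$ lifts canonically to $H_1$, and so the composition-width of the extended group on $V_1 \cupdot H_1$ is still bounded by $\cw\Delta$.

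First, I would construct a bipartite encoding on $V_1 \cupdot H_1$ (and similarly on $V_2 \cupdot H_2$): add an incidence edge between a hyperedge-vertex $S$ and each $v \in S$. Extend $\Delta\phi$ to a coset $\hat\Delta\hat\phi$ acting on $V_1\cupdot H_1 \to V_2\cupdot H_2$, where the action on $H_1$ is forced by the action of $\Delta$ on subsets of $V_1$. By construction $\cw(\hat\Delta) = \cw(\Delta)$ and any isomorphism of the coset-labeled hypergraphs lies in $\hat\Delta\hat\phi$. The remaining task is to further restrict to those $\sigma\in\hat\Delta\hat\phi$ that respect the label function $\alpha$, i.e.\ satisfy $\alpha_1(S)^\sigma = \alpha_2(\sigma(S))$ for every $S\in H_1$.

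Second, I would enforce the label conditions in a Luks-style orbit-by-orbit recursion on $\hat\Delta$ acting on $H_1$. Compute the $\hat\Delta$-orbits $O_1,\dots,O_r$ on $H_1$ and process each in turn. For a single orbit $O_j$, individualize a representative $S_j\in O_j$ and iterate over the candidate images $S_j'\in\phi(O_j)$ at a multiplicative cost of $|O_j|$; the resulting stabilizer-coset $\hat\Delta_{S_j\to S_j'}\hat\phi_{S_j\to S_j'}$ has composition-width still bounded by $\cw\Delta$ (Luks's lemma on subgroups of bounded composition-width groups). On each such branch, the constraint $\alpha_1(S_j)^\sigma = \alpha_2(S_j')$ defines a labeling coset in $\lab(V_1;V_2)$ that can be intersected with the current coset in polynomial time using the Schreier--Sims library; this yields the refined coset and may trivially reject. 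Recurse on the remaining orbits with this refined coset.

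Third, whenever the recursion needs to decide which branches survive (i.e.\ to compute canonical forms of the pairs $(S,\alpha_1(S))$, $(S',\alpha_2(S'))$ for matching), I would compute the canonical labelings via \cref{lem:canSetHyper} and use them as vertex colors on the hyperedge side $H_1,H_2$; two hyperedges with different canonical types cannot be identified. Feeding the resulting edge-colored incidence structure into \cref{theo:boundedDegree} with the coset $\hat\Delta\hat\phi$ then computes the desired coset of label-preserving isomorphisms in time $(|V_1|+|H_1|)^{\polylog(\cw\Delta)}$.

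The main obstacle is ensuring that the composition-width bound is preserved through the recursion: each individualization and each label-coset intersection produces a subgroup or a subcoset of $\hat\Delta$, and standard arguments give $\cw$ monotonicity for subgroups, but the bookkeeping of how \cref{theo:boundedDegree} is invoked at each stage (so that the multiplicative costs combine to a quasi-polynomial in $\cw\Delta$ rather than in the degree) is the technical heart of the proof and is where the argument from \cite{DBLP:conf/icalp/GroheNSW18} and \cite{neuenDiss} must be invoked in full detail.
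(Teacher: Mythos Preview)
The paper does not prove this lemma; it is quoted from \cite{DBLP:conf/icalp/GroheNSW18} (and \cite{neuenDiss}) and used as a black box in \cref{theo:isoGraph}. So there is no ``paper's own proof'' to compare against. I will therefore comment on your sketch on its own merits and against what the cited references actually do.

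Your overall plan --- lift $\Delta\phi$ to $V_1\cupdot H_1$, keep $\cw$ bounded, and feed the problem into the bounded--composition-width machinery of \cref{theo:boundedDegree} --- is the right shape. Two steps, however, do not work as written.

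\textbf{Coset intersection is not a Schreier--Sims operation.} In your second step you write that the constraint $\alpha_1(S_j)^\sigma=\alpha_2(S_j')$ ``defines a labeling coset \dots\ that can be intersected with the current coset in polynomial time using the Schreier--Sims library''. This is false: intersecting two arbitrary cosets of permutation groups is at least as hard as graph isomorphism, and the constraint coset $\Lambda_{S_j}(\Lambda'_{S_j'})^{-1}$ need not have bounded composition-width even though $\Delta$ does. What the cited proof does instead is \emph{encode} each labeling coset $\Lambda_i$ as part of the input structure (essentially as a string over an enlarged domain, using canonical generating sets as in \cref{lem:canGen}) so that ``$\sigma$ respects $\alpha$'' becomes ``$\sigma$ preserves an explicit relational/edge structure''. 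Then one single call to the $\Gamma_d$-string-isomorphism algorithm (the engine behind \cref{theo:boundedDegree}) handles everything; no separate coset intersections are performed.

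\textbf{The colouring step has the wrong running time.} In your third step you propose to colour hyperedges by the canonical form of $(S,\alpha_i(S))$ computed via \cref{lem:canSetHyper}. That routine runs in time $(|V|+|H|)^{\polylog|V|}$, not $(|V|+|H|)^{\polylog(\cw\Delta)}$, so it would destroy the bound you are aiming for whenever $\cw\Delta\ll|V|$. The cited argument avoids canonising the labels altogether: once the labeling cosets are encoded into the structure, equality of labels is tested \emph{inside} the $\Gamma_d$ algorithm, which already runs in the correct time.

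In short: replace ``intersect with the label coset'' and ``colour by canonical form'' by ``encode the labeling cosets as auxiliary structure on disjoint copies of $V$ attached to each hyperedge, then run the $\Gamma_d$ string-isomorphism algorithm once on the enlarged instance''. The composition-width is preserved because the action on the copies is the diagonal/wreath action induced by $\Delta$, and the domain grows only by a factor $|H_1|$, which is absorbed into the base of the running time.
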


\begin{theo}\label{theo:isoGraph}
Let $G_1,G_2$ be two connected graphs.
There is an algorithm that, given a pair $(G_1,G_2)$,
computes the set of isomorphisms $\iso(G_1;G_2)$ in time $|V(G_1)|^{\polylog(\tw G_1)}$.
\end{theo}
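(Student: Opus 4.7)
The plan is to root a canonical clique-separator-free tree decomposition of $G_i^k$ and process it bottom-up, at each bag reducing to either the multiple-coset isomorphism algorithm (\cref{theo:canSet})---the reason the exponent is $\polylog(k)$ rather than simply-exponential in $k$---or its coset-labeled hypergraph variant (\cref{lem:cosetHyper}).

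First I would set $k := \tw(G_1)$ (computable within a constant factor in FPT time, or guessed dyadically) and replace each $G_i$ by $G_i^k$ via \cref{lem:improve}; this preserves $\iso(G_1;G_2)$ and $\tw(G_i)$ while ensuring $\sep(G_i^k) \leq k$. I would then apply \cref{theo:cliqueDecomposition} to obtain canonical tree decompositions $(T_i,\beta_i)$ whose bags are clique-separator-free and whose adhesions are cliques of size $\leq k+1$; each bag is either ``equal-adhesion'' (with $|\beta(t)| \leq k+1$) or ``pairwise-distinct-adhesion''. Each $T_i$ would be canonically rooted at its center, branching on the two orientations if the center is an edge.

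Next I would process bags in post-order, inductively maintaining, for each bag $t$ with parent-adhesion $A_t$, a labeling coset $\Delta_t\rho_t \leq \lab(A_t)$ encoding the isomorphisms of the subgraph of $G_i^k$ below $t$ projected to $A_t$. At a bag $t$ with children $c_1,\ldots,c_s$ and their cosets $\Delta_{c_j}\rho_{c_j} \leq \lab(A_{c_j})$, I would first compute the iso coset of $G_1^k[\beta(t)]$ to $G_2^k[\beta(t)']$ respecting the cliques $A_{c_j}$ via \cref{lem:isoBasicClique}, individualizing one canonical vertex to bring $\cw$ down to $\leq k$ (per \cref{lem:isoBasic}). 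I would then intersect with the children's data: in the pairwise-distinct case, by forming the coset-labeled hypergraph $(\beta(t),\{A_{c_j}\}_j,\{\hat\Delta_{c_j}\hat\rho_{c_j}\}_j,\alpha_t)$ (where $\hat\Delta_{c_j}\hat\rho_{c_j}$ denotes the lift to $\lab(\beta(t))$ by free permutation outside $A_{c_j}$) and applying \cref{lem:cosetHyper}, which runs in $(|\beta(t)|+s)^{\polylog(k)} \leq n^{\polylog(k)}$; in the equal-adhesion case, by applying \cref{theo:canSet} to $J = \{\Delta_{c_j}\rho_{c_j}\}_j$ on a ground set of size $\leq k+1$, in time $(k+s)^{\polylog(k)}$. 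Projecting the outcome to $A_t$ yields $\Delta_t\rho_t$; at the root this unfolds to $\iso(G_1;G_2)$.

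The hard part will be threading isomorphism-invariance through every choice made in the recursion: the canonical rooting of $T_i$, the ordering of children at each bag, the vertex individualized to control composition width, and the way each child's coset is lifted from $\lab(A_{c_j})$ to $\lab(\beta(t))$. All these must be fixed canonically via the ordering ``$\prec$'' (\cref{lem:prec}) and canonical generating sets (\cref{lem:canGen}), and one must verify that the composed algorithm satisfies the CL1-analogue (isomorphism-covariance) at every recursive step---this is where the labeling-coset framework pays off over plain automorphism groups. Once this is arranged, summing per-bag cost over $O(n)$ bags yields the desired total time $n^{\polylog(k)}$.
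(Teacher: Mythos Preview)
Your architecture matches the paper's: improve, take the canonical clique-separator-free decomposition of $G_i^k$, process bottom-up, and at each bag invoke either \cref{lem:isoBasicClique}+\cref{lem:cosetHyper} (distinct adhesions) or the multiple-coset machinery (equal adhesions). There is, however, a genuine gap.

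The claim that replacing $G_i$ by $G_i^k$ ``preserves $\iso(G_1;G_2)$'' is false. Take $G_2=K_{2,n}$ with sides $\{a,b\}$ and $\{c_1,\dots,c_n\}$, $n\ge 3$, and $G_1=G_2+\{a,b\}$. Both have treewidth $2$, yet $G_1^2=G_2^2=G_1$ (there are $n>2$ internally disjoint $a$--$b$ paths in $G_2$), so $\iso(G_1^2;G_2^2)=\aut(G_1)\ne\emptyset$ while $\iso(G_1;G_2)=\emptyset$. The paper keeps \emph{both} graphs in play: it uses $Y_i:=\hat G_i^k[\beta(t)]$ only to get a clique-separator-free bag graph for \cref{lem:isoBasicClique}, and then at every bag intersects the resulting coset with $\iso(\hat G_1[\beta(t)];\hat G_2[\beta(t)])$ via \cref{theo:boundedDegree}. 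This per-bag intersection is affordable precisely because the coset coming out of \cref{lem:isoBasicClique} has a one-point stabilizer of composition-width $\le k$ (\cref{lem:isoBasic}); there is no cheap global fix afterwards, since $\cw(\aut(G_1^k))$ is not bounded in terms of $k$.

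A smaller gap sits in your equal-adhesion case: the bare set $J=\{\Delta_{c_j}\rho_{c_j}\}_j\subseteq\lab(A_t)$ loses both the isomorphism type of each child subtree and the multiplicity with which a given coset occurs---non-isomorphic children can restrict to the same coset on the common adhesion, and isomorphic children contribute the same element repeatedly. The paper attaches to each $\Lambda_t$ a pair $(k(t^*),m(t))$ recording isomorphism class and multiplicity, and invokes \cref{cor:canObj} on $(J,\alpha)$ rather than \cref{theo:canSet} on $J$ alone.
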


\begin{proof}
It is known that the treewidth can be approximated (up to a logarithmic factor) in polynomial time \cite{DBLP:conf/uai/Amir01}.
Let $k\in(\tw G_1)^{\CO(1)}$ be such an upper bound on the treewidth of $G_1$.
We compute the $k$-improved graphs $G_i^k$ using \cref{lem:improve} for both $i=1,2$.
We compute the tree decompositions $(T_i,\beta_i)$ from \cref{theo:cliqueDecomposition} for
the $k$-improvements $G_i^k$ for both $i=1,2$.
In particular, $(T_i,\beta_i)$ is also a tree decomposition for $G_i$ for both $i=1,2$.
Let $(T_i,\beta_i,r_i)$ denote the tree decomposition of $G_i$ rooted at $r_i\in V(T_i)$ for both $i=1,2$.
It suffices to give an algorithm for rooted tree decompositions.
We give an algorithm that gets as input $(\hat\CX_1,\hat\CX_2)$
where $\hat\CX_i=(\hat G_i,T_i,\beta_i,r_i,S_i)$ for both $i=1,2$
and outputs the set of isomorphisms $\iso(\hat\CX_1;\hat\CX_2)$.

\algorithmDAN{\isoTree(\hat\CX_1,\hat\CX_2)}~\\
Define $G_i:=\hat G_i[\beta_i(r_i)]$ as the induced subgraph for both $i=1,2$.\\
Define $Y_i:=\hat G_i^k[\beta_i(r_i)]$ as the induced subgraph of the $k$-improved graph $\hat G_i^k$ for both $i=1,2$.\\
Define $C_i:=\{t_{i,1},\ldots,t_{i,|C_i|}\}\subseteq V(T_i)$ as the set of children of $r_i$ for both $i=1,2$.\\
Define $H_i:=\{\beta_i(r_i)\cap \beta_i(t)\mid t\in C_i\}$ as the set of adhesion sets of the root $r_i$ for both $i=1,2$.\\
Let $(T_{i,t},\beta_{i,t})$ be the tree decomposition of the subtree rooted at $t\in C_i$ and let
$\hat G_{i,t}$ be the subgraph of $\hat G_i$ corresponding to $(T_{i,t},\beta_{i,t})$.\\
Define $\hat\CX_{i,t}:=(\hat G_{i,t},T_{i,t},\beta_{i,t},t,\beta_i(r_i)\cap\beta_i(t))$ for all $t\in C_i$ and both $i=1,2$.\\
Recursively, compute $\hat\Delta_{t_1}\hat\phi_{t_1,t_2}:=\isoTree(\hat\CX_{1,t_1},\hat\CX_{2,t_2})$ for all $t_1\in C_1,t_2\in C_2$.\\
Define $\Delta_{t_1}\phi_{t_1,t_2}:=(\hat\Delta_{t_1}\hat\phi_{t_1,t_2})[V(G_i)]$ as the set of isomorphisms restricted to the root bags for all $t_1\in C_1,t_2\in C_2$.\\
Define $\rho_1:V(G_2)\to\{1,\ldots,|V(G_2)|\}$ as some arbitrary labeling.\\
For each $t\in C_1\cup C_2$, we choose a representative $t^*\in C_2$ in the isomorphism class, i.e.,
it holds that $\CX_{i,t}\cong\CX_{1,t^*}$ are isomorphic
and that for two isomorphic $\CX_{1,t_1}\cong\CX_{2,t_2}$ it holds that $\CX_{1,t_1^*}=\CX_{1,t_2^*}$.\\ 
Define $J_i:=\{\Lambda_t\mid t\in C_i\}$ where $\Lambda_t:=\Delta_t\phi_{t,t^*}\rho_1$ for both $i=1,2$.
\begin{cs}
\case{the adhesion sets in $H_i$ are all equal for both $i=1,2$}~\\
Assign each representative $t^*\in C_2$ a different natural number $k(t^*)\in\NN$
and assign each $t\in C_i$ the number $m(t):=|\{t'\in C_i\mid t'^*=t^*$ and $\Lambda_t=\Lambda_{t'}\}|$.\\
Define a function $\alpha_i:J_i\to\NN$ that assigns each $\Lambda_t\in J_i$ a pair $(k(t^*),m(t))\in \NN^2$ for both $i=1,2$.\\
\comment{The number $k(t^*)$ encodes the isomorphism type of the subgraph corresponding to $t\in C_i$ and the number $m(t)$ encodes the multiplicity of
$\Lambda_t$ in its isomorphism class}\\
\comment{We claim that $\iso(G_1,J_1,\alpha_1;G_2,J_2,\alpha_2)=\iso(\hat\CX_1;\hat\CX_2)[V(G_1)]$. Let $\phi\in\iso(G_1,J_1,\alpha_1;G_2,J_2,\alpha_2)$.
Therefore, for each pair in $\Lambda_{t_1}\in J_1$ there is a pair $\Lambda_{t_2}\in J_2$
such that $\Lambda_{t_1}^\phi=\Lambda_{t_2}$.
Since $\alpha_1^\phi=\alpha_2$, it follows that $\CX_{1,t_1}\cong\CX_{2,t_2}$ and thus
$t_1^*=t_2^*$.
Therefore, $\phi^{-1}\Delta_{t_1}\phi_{t_1,t_1^*}\rho_1=\Lambda_{t_1}^\phi=\Lambda_{t_2}=\Delta_{t_2}\phi_{t_2,t_1^*}\rho_1$.
Equivalently, for all $\Lambda_{t_1}\in J_1$ there is a $\Lambda_{t_2}\in J_2$ such that $\phi\in\Lambda_{t_1}\Lambda_{t_2}^{-1}=\phi_{t_1,t_1^*}\Delta_{t_1^*}\phi_{t_2,t_1^*}^{-1}=\iso(\hat\CX_{1,t_1};\hat\CX_{2,t_2})[V(G_1)]$.
In other words, there is a function $\psi:J_1\to J_2$ such that for all $\Lambda_{t_1}\in J_1$
it holds that $\phi\in\Lambda_{t_1}\psi(\Lambda_{t_1})^{-1}$.
Moreover, $\psi:J_1\to J_2$ is injective (and thus bijective), which can be seen as follows.
Assume that $\psi(\Lambda_{t_1})=\psi(\Lambda_{t_1'})=\Lambda_{t_2}$. Then, $\phi\in\iso(\hat\CX_{1,t_1};\hat\CX_{2,t_2})[V(G_1)]\cap\iso(\hat\CX_{1,t_1'};\hat\CX_{2,t_2})[V(G_1)]$.
Therefore, $\iso(\hat\CX_{1,t_1};\hat\CX_{1,t_1'})[V(G_1)]$ contains the identity,
which means that $\Lambda_t,\Lambda_{t'}$ intersect non-trivially.
Since $\CX_{1,t},\CX_{1,t'}$ are isomorphic, $\Lambda_t,\Lambda_{t'}$ are cosets of the same group and thus $\Lambda_t=\Lambda_{t'}$,
which shows that $\psi$ is bijective.
Since $\psi(\Lambda_{t_1})=\Lambda_{t_2}$ implies $m(t_1)=m(t_2)$,
there is a bijective function $\tilde\psi:C_1\to C_2$ with $\Lambda_{\tilde\psi(t_1)}=\psi(\Lambda_{t_1})$.
Therefore, $\phi\in\iso(\hat\CX_1;\hat\CX_2)[V(G_1)]$}\\
Compute $\Delta\phi:=\iso(G_1,J_1,\alpha_1;G_2,J_2,\alpha_2)$ using \cref{cor:canObj}.\\
\comment{By the properties of the decomposition from \cref{theo:cliqueDecomposition}, it holds that $|V(G_1)|=|\beta_1(r_1)|\leq (\tw G_1^k)+1=(\tw G_1)+1\leq k+1$.
Moreover, $|J_1|\leq|V(T_1)|\in|V(\hat G_1)|^{\CO(1)}$.
For this reason, the algorithm from \cref{cor:canObj} runs in time $|V(\hat G_1)|^{\polylog(k)}$}

\case{the adhesion sets in $H_i$ are pairwise distinct for both $i=1,2$}~\\
Define a function $\alpha_i:H_i\to J_i$ that assigns each adhesion set $\beta_i(r_i)\cap\beta_i(t_i)\in H_i$
the coset $\Lambda_{t_i}\in J_i$ for both $i=1,2$.\\
\comment{Again, it holds that $\iso(G_1,J_1,\alpha_1;G_2,J_2,\alpha_2)=\iso(\hat\CX_1;\hat\CX_2)[{V(G_1)}]$}\\
Compute $\Delta\phi:=\iso(Y_1,H_1;Y_2,H_2)$ using \cref{lem:isoBasicClique}.\\
\comment{The lemma can be applied since \cref{theo:cliqueDecomposition} ensures
that the adhesion sets in $H_1,H_2$ are cliques in $Y_1,Y_2$, respectively.
By \cref{lem:improve}, it holds that $\sep Y_1,\tw Y_1\leq k$. For this reason, the algorithm from \cref{lem:isoBasicClique} runs in time $n^{\polylog(k)}$}\\
Compute $\Delta\phi:=\iso(V(G_1),H_1,J_1,\alpha_1;V(G_2),H_2,J_2,\alpha_2)\cap\Delta\phi$ using \cref{lem:cosetHyper}.\\
\comment{We have $|H_1|\leq|V(T_1)|\in|V(\hat G_1)|^{\CO(1)}$.
Furthermore, there is a point $v_1\in V(G_1)$ such that $\cw\Delta_{(v_1)}\leq\max(\sep Y_1,\tw Y_1\leq k)\leq k$.
For this reason, the algorithm runs in time $|V(\hat G_1)|^{\polylog(k)}$}\\
Compute $\Delta\phi:=\iso(G_1;G_2)\cap\Delta\phi$ using \cref{theo:boundedDegree}.\\
\comment{The algorithm from \cref{theo:boundedDegree} runs in time $|V(\hat G_1)|^{\polylog(k)}$}
\end{cs}
\comment{In both cases, we found the isomorphisms restricted to
the root bag, i.e., $\Delta\phi=\iso(\hat\CX_1;\hat\CX_2)[V(G_1)]$}\\
We define $\hat\Delta\hat\phi:=\iso(\hat\CX_1;\hat\CX_2)$.\\
\comment{This can be computed as follows.
We consider the homomorphism $h:\hat\Delta\to\Delta$ that maps $\hat\delta\in\hat\Delta$ to $\hat\delta[V(G_1)]\in\Delta$.
First, we explain how to compute the kernel $\hat K:=\ker(h)\leq\hat\Delta$.
The pointwise stabilizers $\hat\Theta_{t_1}:=(\hat{\Delta}_{t_1})_{(\beta_1(r)\cap\beta_1(t_1))}$ for $t_1\in C_1$ are polynomial-time computable.
Let $\hat\Theta_{t_1}'\leq\sym(V(\hat G_1))$ be the group that acts like $\hat\Theta_{t_1}$ on $\hat V(G_{i,t_1})$
and fixes all points in $V(\hat G_1)\setminus \hat V(G_{i,t_1})$.
Then, the kernel $\hat K\leq\sym(V(\hat G_1))$ is the group generated by all groups $\hat\Theta_{t_1}'\leq\sym(V(\hat G_1))$
for $t_1\in C_1$.
Next, we can compute a subgroup $\hat\Theta\leq\hat\Delta$ with $\hat\Theta[V(G_1)]=\Delta$
by extending each generator of $\Delta$ in an arbitrary way.
In the same way, we can compute an isomorphism $\hat\phi\in\hat\Delta\hat\phi$
with $\hat\phi[V(G_1)]=\phi$.
Finally, we define $\hat\Delta$ as the group generated by the groups $\hat K,\hat\Theta\leq\hat\Delta$}
\end{proof}

\section{Outlook and Open Questions}

One could ask the question whether our isomorphism algorithm
for graphs can be improved to a FPT-algorithm that runs in time $2^{\polylog(k)}n^{\CO(1)}$
where $n$ is the number of vertices and $k$ is the maximum treewidth of
the given graphs.
There are various reasons why this might be difficult.
One reason is that our approach would require a FPT-algorithm
for the isomorphism problem of graphs of maximum degree $d$ that runs in time $2^{\polylog(d)}n^{\CO(1)}$.
However, it is an open question whether any FPT-algorithm for the graph isomorphism problem parameterized by maximum degree
exists.
Another reason is that an algorithm for graphs running in time $2^{\polylog(k)}n^{\CO(1)}$
would imply an isomorphism algorithm for hypergraphs $(V,H)$ running in
time $2^{\polylog|V|}|H|^{\CO(1)}$.
It is also an open question, whether such a hypergraph isomorphism algorithm exists \cite{Babai2018GroupsG}.
If this were indeed the case, one could hope for an improvement of our canonization algorithm for a set $J$ consisting of labeling cosets that runs in time 
$2^{\polylog|V|}|J|^{\CO(1)}$.

Recently, Babai extended his quasipolynomial-time algorithm to the canonization problem for graphs \cite{DBLP:conf/stoc/Babai19}.
With Babai's result, it is a natural question
whether the bounded-degree
isomorphism algorithm of \cite{DBLP:conf/focs/GroheNS18}
extends to canonization as well.
The present isomorphism
algorithm for graphs parameterized by treewidth
should then be amenable to
canonization as well.

Another question that arises is about permutation groups $G\leq\sym(V)$.
The canonical labeling problem for permutation groups is of great interest because it also solves the normalizer problem.
In our recent work, we gave a canonization algorithm for explicitly given
permutation groups running in time $2^{\CO(|V|)}|G|^{\CO(1)}$ \cite{DBLP:conf/stoc/SchweitzerW19}.
Recently, the framework was extended to permutation groups
that are implicitly given and the running time was improved
to $2^{\CO(|V|)}$ \cite{DBLP:conf/soda/Wiebking20}.
The present work implies a canonization algorithm running in time $(|V|+|G|)^{\polylog|V|}$.
An important question is whether the present techniques can be combined with
the canonization techniques for implicitly given permutation groups to obtain a canonization algorithm running in time
$2^{\polylog|V|}$.

Finally, we ask whether the isomorphism problem
can be solved in time $n^{\polylog(|V(H)|)}$
where $n$ is the number of vertices
and $H$ is an excluded topological subgraph $H$ of the given graphs.
Even for excluded minors $H$, we do not have such an algorithm.

\appendix

\section{Proof of Lemma 9} 
\label{app}

\begin{proof}[Proof of \cref{lem:canSet0}]
Let $U:=V\cupdot V_1\cupdot\ldots\cupdot V_t$
where $V_i:=V\times\{i\}$.
The sets $V_i$ can be seen as disjoint copies of $V$.
Let $\Delta_i^\ord:=\rho_i^{-1}\Delta_i\rho_i$ for all $\Delta_i\rho_i\in J$
(this is well defined and does not depend on the representative $\rho_i$ of $\Delta_i\rho_i$).
We define a labeling coset $\Delta_U\rho_U\leq\lab(U)$.
Informally, the labeling coset $\Delta_U\rho_U$ orders the set of components $\{V_1,\ldots,V_t\}$ according to the ordering ``$\prec$'' defined on $\Delta_i^\ord$ for $i\in[t]$,
and it orders the component $V_i$ according to a labeling in $\Delta_i\rho_i$.
More formally, we define
$\Delta_U\rho_U:=\{\lambda_U\in\lab(U)\mid
\lambda_U|_V\in\lab(V)\text{ and }
\exists\lambda_1\in\Delta_1\rho_1\ldots\exists\lambda_t\in\Delta_t\rho_t\exists k_1,\ldots,k_t\in\NN\forall i,j\in[t],v\in V:
\Delta_i^\ord\prec\Delta_j^\ord\implies k_i<k_j\text{ and }
\lambda_U(v,i)=\lambda_i(v)+k_i\cdot|V|\}$.
We define a graph $G=(U,E)$ which identifies the vertices in $V$ with their corresponding copies
in $V_i$ for all $i\in[t]$.
More formally, we define
$E=\{(v,(v,i))\mid v\in V, (v,i)\in V_i,i\in[t]\}
\subseteq U^2$.
We compute $\Theta_U\tau_U:=\canGraph(E,\Delta_U\rho_U)$ using \cref{theo:canGraph}.
We claim that the labeling coset $\Delta\rho:=(\Theta_U\tau_U)[V]$ induced on $V$ defines a canonical labeling for $J$.

\begin{cl1}
Assume we have $J^\phi$
instead of $J$ as an input.
We have to show that the algorithm outputs $\phi^{-1}\Delta\rho$ instead of $\Delta\rho$.
The group $\Delta_i^\ord$ does not depend on $\phi$
since $\rho_i^{-1}\Delta_i\rho_i=(\phi^{-1}\rho_i)^{-1}\phi^{-1}\Delta_i\rho_i$.
By construction, we obtain $\phi_U^{-1}\Delta_U\rho_U,E^{\phi_U}$
instead of $\Delta_U\rho_U,E$ where $\phi_U$ is a bijection with $\phi_U|_V=\phi$.
By (CL1) of $\canGraph$, we obtain $\phi^{-1}\Theta_U\tau_U$ instead of $\Theta_U\tau_U$.
Finally, we obtain $(\phi_U^{-1}\Theta_U\tau_U)[V^\phi]=\phi^{-1}\Delta\rho$
instead of $\Delta\rho$.
\end{cl1}

\begin{cl2}
We have to show that $\Delta=(\aut(E)\cap\Delta_U)|_V=\aut(J)$.
The inclusion $\aut(J)\subseteq\Delta$
follows from (CL1) of this reduction. We thus need to show the reversed inclusion
$(\aut(E)\cap\Delta_U)|_V\subseteq\aut(J)$.
So let $\sigma_U\in\aut(E)\cap\Delta_U$.
Since $\sigma_U\in\Delta_U$,
it follows that there are $\lambda_1,\ldots,\lambda_t,\lambda'_1,\ldots,\lambda'_t$
and $k_1,\ldots,k_t,k'_1,\ldots,k'_t\in\NN$
such that for all $i\in[t],(v,i)\in V_i$
it holds
that $\sigma_U(v,i)=({\lambda'_j}^{-1}(\lambda_i(v)+k_i\cdot |V|-k_j'\cdot |V|),j)$ for some $j\in[t]$.
It must hold that $k_i=k_j'$
and therefore $\sigma_U(v,i)=({\lambda'_j}^{-1}(\lambda_i(v)),j)$.
In particular, $j\in[t]$ only depends on the choice of $i\in[t]$ (and not on the choice of $v\in V_i$).
Therefore, there is a $\psi\in\sym(t)$ such that
for all $(v,i)\in U$ it holds that
$\sigma_U(v,i)=(w,\psi(i))$ for some $w\in V$.
Since $\sigma_U\in\aut(E)$, it follows that
for all $i\in[t]$ there are $\lambda_i\in\Delta_i\rho_i=\rho_i\Delta_i^\ord$
and $\lambda_{\psi(i)}\in\Delta_{\psi(i)}\rho_{\psi(i)}=\rho_{\psi(i)}\Delta_{\psi(i)}^\ord$
such that
$\sigma_U|_V=\lambda_i\lambda_{\psi(i)}^{-1}$.
Since $\Delta_i^\ord=\Delta_{\psi(i)}^\ord$ this is equivalent to
$(\sigma_U|_V)^{-1}\Delta_i\rho_i=\Delta_{\psi(i)}\rho_{\psi(i)}$.
This implies $\sigma_U|_V\in\aut(J)$.
\end{cl2}
\end{proof}

% ********* BIBLIOGRAPHY ********* 
\bibliographystyle{alpha}
\bibliography{references}

\addcontentsline{toc}{section}{Bibliography}

\end{document}